%% file: wyner_ziv_v6.tex
\documentclass[onecolumn,10pt]{IEEEtran}
\input{preamble.tex}
\hypersetup{linkcolor=black,citecolor=black,urlcolor=black,anchorcolor=black}
\providecommand{\Pcn}{P_{\mathrm{c}}^{(n)}}

\renewcommand{\jW}[1]{\jmath_{\mathrm{WZ}}(#1|D)}
\renewcommand{\labelenumi}{(\roman{enumi})}
\renewcommand{\theenumi}{\roman{enumi}}
\allowdisplaybreaks[2]
\begin{document}
\title{Exact Second-Order Asymptotics for the Discrete Wyner--Ziv Problem}

\author{Lin Zhou}

\maketitle
\begin{abstract}
We revisit the discrete Wyner--Ziv problem and establish exact second-order asymptotics. Our main contribution is a second-order converse bound, which is derived using posterior decomposition, concentration inequalities, and a conditional Gaussian approximation for martingales. Furthermore, we extend the previous best known achievability result of Li and Li (arXiv 2025) by allowing the test channel to depend on the type of the observed source sequence. Exact second-order asymptotics are established by combining our achievability and converse bounds. In particular, we show that the achievability bound of Li and Li (arXiv 2025) is not optimal in general. Specifically, we provide two numerical examples to illustrate our results: a binary asymmetric source and a quaternary source. For the first example, the achievability bound of Li and Li (arXiv 2025) achieves the optimal second-order asymptotics. However, in the second example, we show that our achievability result is optimal, which reduces the second-order coding rate of Li and Li (arXiv 2025) by $10.77\%$ at the excess-distortion probability of $0.1$ and by $21.35\%$ at the excess-distortion probability of $0.2$.
\end{abstract}
\begin{IEEEkeywords}
Lossy source coding, Rate-distortion, Dispersion, Side information, Discrete memoryless sources.
\end{IEEEkeywords}

\section{Introduction}
\label{sec:intro}
Shannon~\cite{shannon1959coding} pioneered the study of lossy source coding, a.k.a., rate-distortion, to allow compression with imperfection, which is valid for compression of continuous sequences and could further reduce the transmission rate of lossless compression of discrete sequences. In particular, Shannon's rate-distortion theorem established the minimal asymptotic rate achieved by an optimal code under the expected distortion constraint as the blocklength tends to infinity, which is termed the rate-distortion function. The same asymptotic theoretical benchmark holds when one requires a vanishing excess-distortion probability. Shannon's result is known as first-order asymptotics for lossy source coding since it characterizes the first dominant term for the theoretical benchmark. Several decades later, in 2010s, the finite-blocklength performance and its second-order asymptotic approximation was derived by Ingber and Kochman~\cite{ingber2011dispersion} and by Kostina and Verd\'u~\cite{kostina2012}, which refined Shannon's result by further characterizing the second dominant term of the theoretical benchmark. In particular, second-order asymptotics was derived under the excess-distortion probability criterion, showing how the performance of optimal codes depends on the blocklength at a lead to n non-vanishing excess-distortion probability. It is well understood that second-order asymptotics provides good approximation for the finite-blocklength performance of optimal codes for lossy source coding when the blocklength is moderately large.

Motivated by practical applications including compression of correlated sensor measurements~\cite{kashyap}, multi-view imaging, and distributed video coding~\cite{girodvideo,morbee}, Wyner and Ziv~\cite{wyner1976rate} generalized the point-to-point setting of Shannon to the case with decoder side information. Specifically, the encoder compresses the observed source sequence and the decoder generates a reconstruction sequence using both compressed information and a side-information sequence that is correlated with the source sequence. Wyner and Ziv derived the first-order asymptotics under the expected distortion criterion, showing that the asymptotic minimal compression rate is lead to n by an optimization problem that involves an auxiliary random variable that forms a Markov chain with the source sequence and the side-information sequence. The same result also holds if one requires a vanishing excess-distortion probability. We remark that the Wyner--Ziv problem has a first-order characterization involving an auxiliary random variable that forms a Markov chain with the source and side information. Analogous to the Wyner--Ziv problem, there is a lossless source coding problem with a rate-limited helper named the Wyner--Ahlswede--K\"orner problem~\cite{Ahl75}, where the first-order asymptotics also involves an auxiliary random variable and a Markov chain. Recently, the second-order asymptotics for the Wyner--Ahlswede--K\"orner problem was established with the achievability result of Liu~\cite[Theorem~2]{liu2021wak} and the converse result of Cao~\cite{cao2026wak}.

In the past decades, the result of Wyner and Ziv was refined in several aspects, including a strong converse result~\cite{oohama,tyagiwatanabe}, a strong converse exponent result~\cite{watanabe2025}, and finite-blocklength achievability results and corresponding second-order asymptotic results~\cite[Theorems~8,~18, and~19]{watanabe2015}, \cite[Theorem~3]{poisson}, \cite[Theorem~15]{lili}. Specifically, the strong converse result shows that the excess-distortion probability tends to one when the compression rate is below the first-order asymptotics while the strong converse exponent result further derives the exact speed at which the excess-distortion probability tends to one. The best known second-order achievability result was~\cite[Theorem~15]{lili}, which was conjectured to be optimal by the authors.

However, a matching second-order converse and exact second-order asymptotics for the discrete Wyner--Ziv problem remain open. A critical question is whether the achievability bound of Li and Li~\cite[Theorem~15]{lili} is optimal. In this paper, we answer this question negatively: the result in~\cite[Theorem~15]{lili} is not second-order asymptotically optimal in general.

\subsection{Main Contributions}
\label{sec:contributions}
Our main contribution is the characterization of the exact second-order asymptotics for the discrete Wyner--Ziv problem, including a converse bound and an achievability bound that refines Li and Li~\cite[Theorem~15]{lili}.

Our first contribution is a second-order converse bound, whose proof proceeds in three steps. Firstly, we use a posterior decomposition and an optimality-gap bound to identify the optimal test channels for the first-order asymptotics and upper bound the non-excess-distortion probability of any code by the probability of a joint event for the compression rate and distortion level, together with a vanishing term. Secondly, we use concentration inequalities and a modification of the code to replace the thresholds and conditional covariance in the above event by quantities determined by the source sequence. Finally, conditioned on the source sequence, we use Gaussian approximation for martingales to bound the dominant term of the non-excess-distortion probability by the probability of an event involving a two-dimensional Gaussian random vector and subsequently average over the distribution of the source sequence to obtain the desired converse bound.

Our second contribution is a generalized achievability result of~\cite[Theorem~15]{lili}. In contrast to the code construction in~\cite[Theorem~15]{lili} that fixes a first-order optimal test channel while adapting its local perturbation to the type of the source sequence, our construction selects the optimal test channel itself using the type of the source sequence. Combining our achievability and converse results leads to the exact second-order asymptotics for the discrete Wyner--Ziv problem. Furthermore, we provide two numerical examples to illustrate our results and answer the question on the optimality of the result in~\cite[Theorem~15]{lili}. In our first example, we consider a binary source and a binary asymmetric channel that describes the correlation of the source and side information; for this example, we show that the achievability bound of~\cite[Theorem~15]{lili} is second-order asymptotically optimal. However, in our second example of a quaternary source, we show that the achievability result of~\cite[Theorem~15]{lili} is strictly suboptimal. In particular, our optimal result reduces the second-order coding rate of~\cite[Theorem~15]{lili} by $10.77\%$ at the excess-distortion probability of $0.10$ and by $21.35\%$ at the excess-distortion probability of $0.20$.

\subsection{Organization for the Rest of the Paper}
\label{sec:organization}
The rest of the paper is organized as follows. In Section~\ref{sec:formulation}, we formulate the problem, define codes and theoretical benchmarks and present existing results on first-order asymptotics and second-order achievability result. In Section~\ref{sec:results}, we present our main results with discussions and provide two illustrating numerical examples. The converse and achievability proofs of our results are presented in Sections~\ref{sec:converse} and~\ref{sec:achievability}, respectively. Finally, we conclude the paper and discuss future directions in Section~\ref{sec:conclusion}. For smooth presentation of our main results and discussions, the proofs of supporting lemmas and calculations for numerical examples are deferred to appendices.

\section{Problem Formulation, Assumptions, and Existing Results}
\label{sec:formulation}
\subsection*{Notation}\label{sec:notation}
Random variables and realizations use uppercase and lowercase letters. Bold symbols denote vectors, while distributions use ordinary uppercase symbols. Logarithms and $\exp$ use base $e$. The standard Gaussian cumulative distribution function is $\Phi$. We use calligraphic letters to denote sets. We use $\bbR$, $\bbR_+$ and $\bbN$ to denote the sets of real numbers, nonnegative real numbers, and positive integers, respectively. For any positive integer $n\in\bbN$, we use $[n]$ to denote the set of integers $\{1,\ldots,n\}$ and use $X^n:=(X_1,\ldots,X_n)$ to denote a collection of random variables; furthermore, lead to n any two positive integers $(i,j)\in[n]^2$ such that $i\leq j$, we use $X_i^j$ to denote $(X_i,\ldots,X_j)$. For any real number $a\in(0,1)$, we use $h_2(a):=-a\log a-(1-a)\log(1-a)$ to denote binary entropy, and use $\Ber(a)$ to denote the Bernoulli distribution with probability of one being $a$. For any positive integer $k\in\bbN$, we use $\bzero_k:=(0,\ldots,0)^{\mathsf T}\in\bbR^k$ to denote the all-zero vector of length $k$. For any nonnegative real number ${{\sigma^2}}\in\bbR_+$, we use ${\calN}(0,{{\sigma^2}})$ to denote the Gaussian probability distribution function with mean zero and variance ${\sigma^2}$; for any positive integer $k\in\bbN$ and a positive semidefinite matrix $\Sigma\in\bbR^{k\times k}$, we use ${\calN}(\bzero_k,\Sigma)$ to denote the $k$-dimensional Gaussian distribution with zero mean vector and covariance matrix $\Sigma$. We use $\bbI\{\cdot\}$ for indicators, $H$ for entropy, $I$ to denote mutual information, and ${\operatorname{KL}}(\cdot\|\cdot)$ for relative entropy (KL divergence). We use $\oplus$ to denote addition modulo two. We use $\langle\cdot,\cdot\rangle$, $\|\cdot\|$, and $\|\cdot\|_2$ to denote Euclidean inner products, Euclidean norms, and matrix spectral norms, respectively. We use ${{\mathrm{Id}_2}}$ to denote the $2\times2$ identity matrix, and $\Rightarrow$ to denote weak convergence. lead to n finite alphabets $\calX$ and $\calY$, we use $\calP(\calX)$ and $\calP(\calY|\calX)$ to denote the sets of all probability mass functions and conditional distributions, respectively. The symbols $P\{\cdot\}$ and $\bbP\{\cdot\}$ denote probability under the distribution specified locally; $\bbE_P$, $\var_P$, and $\cov_P$ denote expectation, variance, and covariance, with the subscript distribution omitted when it is clear. We write ${\calS}(\cdot)$ for the sigma-field generated by the specified random variables.

\begin{figure}[!t]
\centering
\begin{tikzpicture}[>=Latex,thick,block/.style={draw,rounded corners=2pt,minimum width=2.0cm,minimum height=1.05cm,align=center}]
\node (x) at (-1.6,0) {$X^n$};
\node[block] (enc) at (0.6,0) {Encoder\\$f_n$};
\node[block] (dec) at (5.0,0) {Decoder\\$\phi_n$};
\node (z) at (7.6,0) {$Z^n$};
\node (y) at (-1.6,-1.55) {$Y^n$};
\node[font=\small] at (-1.6,-.775) {$P_{XY}^{n}$};
\draw[->] (x) -- (enc);
\draw[->] (enc) -- node[above,font=\small] {$f_n(X^n)$} (dec);
\draw[->] (dec) -- (z);
\draw[->] (y) -| (dec.south);
\end{tikzpicture}
\caption{System model for the Wyner--Ziv problem of lossy source coding with decoder side information.}\label{fig:system}

\end{figure}

\subsection{Problem Formulation}
\label{sec:problem}
Let $\calX$, $\calY$, and $\calZ$ be three finite alphabets, from which the source sequence, side-information sequence, and reproduction sequence take values, respectively. We consider a discrete memoryless source with joint distribution $P_{XY}$ such that for each $n\in\bbN$ and $(x^n,y^n)\in\calX^n\times\calY^n$,
\begin{align}
P_{XY}^n(x^n,y^n)=\prod_{i\in[n]} P_{XY}(x_i,y_i).\label{eq:product}
\end{align}
Fix a positive real number $0<d_{\max}<\infty$. Let $d:\calX\times\calZ\to[0,d_{\max}]$ be a bounded single-letter distortion measure. For any $(x^n,z^n)\in\calX^n\times\calZ^n$, define the following additive distortion function:
\begin{align}
{d^{(n)}}(x^n,z^n):=\frac1n\sum_{i\in[n]}d(x_i,z_i).\label{eq:blockdist}
\end{align}
As shown in Fig.~\ref{fig:system}, in the Wyner--Ziv problem, the encoder $f_n$ compresses the source sequence $X^n$ to one of $M$ messages, and the decoder $\phi_n$ generates a source sequence reconstruction $Z^n$ using the compressed message and the side-information sequence $Y^n$.

Fix two integers $(n,M)\in\bbN^2$.
\begin{definition}\label{def:code}
An $(n,M)$-code for the Wyner--Ziv problem consists of an encoder $f_n$ and a decoder $\phi_n$, defined as follows:
\begin{align}
f_n&:\calX^n\to[M],\\
\phi_n&:[M]\times\calY^n\to\calZ^n.
\end{align}
The reconstruction sequence is $Z^n=\phi_n(f_n(X^n),Y^n)$. lead to n any distortion level $D>0$, the excess-distortion probability is defined as
\begin{align}
\Pen(D):=\bbP\{{d^{(n)}}(X^n,Z^n)>D\}.\label{eq:error}
\end{align}
\end{definition}

With Definition~\ref{def:code}, we can now define the finite-blocklength theoretical benchmarks. Fix any distortion level $D>0$ and tolerable excess-distortion probability $\veps\in(0,1)$. Let
\begin{align}
M^*(D,\veps)&:=\min\{M\in\bbN:\;\text{an $(n,M)$-code has }\Pen(D)\leq\veps\},\label{eq:fundamental}\\
R_n^*(D,\veps)&:=\frac1n\log M^*(D,\veps).
\end{align}
Note that $M^*(D,\veps)$ is the minimal message size to compress a length-$n$ sequence $X^n$ with decoder side information $Y^n$ subject to an excess-distortion constraint of at most $\veps$ with respect to the distortion level $D$. Correspondingly, $R_n^*(D,\veps)$ is the corresponding compression rate per source symbol.

To define the second-order asymptotics, let $R^*(D,\veps)$ denote the first-order asymptotic rate with excess-distortion probability $\veps$, i.e.,
\begin{align}
R^*(D,\veps):=\lim_{n\to\infty}R_n^*(D,\veps).\label{eq:first-order-limit}
\end{align}

\begin{definition}\label{def:second}
A real number $L\in\bbR$ is said to be second-order $(D,\veps)$-achievable if there exists a sequence of $(n,M)$-codes such that
\begin{align}
\limsup_{n\to\infty}\frac{\log M-nR^*(D,\veps)}{\sqrt n}&\leq L,\\
\limsup_{n\to\infty}\Pen(D)&\leq\veps.
\end{align}
The infimum of all such numbers is denoted by $L_{\mathrm{WZ}}^*(D,\veps)$.
\end{definition}
The quantity $L_{\mathrm{WZ}}^*(D,\veps)$ is the smallest backoff from the first-order asymptotics $R^*(D,\veps)$. In this paper, our main contribution is the exact characterization of the second-order asymptotics $L_{\mathrm{WZ}}^*(D,\veps)$.

\subsection{Assumptions}
\label{sec:assumptions}
To present our results, we need the following assumptions on the distortion level, the source distribution and the transition matrix (channel) that describes the correlation between the source sequence and the side information.

The side-information channel is $P_{Y|X}$. Define the following two distortion values
\begin{align}
D_{\min}&:=\sum_{x\in\calX}P_X(x)\min_{z\in\calZ} d(x,z),\label{eq:Dmin}\\
D_0&:=\sum_{y\in\calY}\min_{z\in\calZ}\sum_{x\in\calX}P_X(x)P_{Y|X}(y|x)d(x,z).\label{eq:D0}
\end{align}
The quantity $D_{\min}$ is the minimum possible expected distortion while $D_0$ is the minimum expected distortion attainable using only the side information. Thus, any targeted distortion level should satisfy that $D_{\min}<D<D_0$.

The following assumptions are needed.
\begin{enumerate}
\renewcommand{\theenumi}{A\arabic{enumi}}
\renewcommand{\labelenumi}{(\theenumi)}
\item\label{ass:finite} The alphabets $\calX$, $\calY$, and $\calZ$ are finite and the joint distribution $P_{XY}$ satisfies that $\min_{x\in\calX}P_X(x)>0$ and $\min_{y\in\calY}P_Y(y)>0$, where $P_Y$ is the induced marginal distribution of $P_{XY}$
\item\label{ass:overlap} The channel $P_{Y|X}$ satisfies that
\begin{align}
\kappa_W&:=\min_{(x,x')\in\calX^2}\sum_{y\in\calY}P_{Y|X}(y|x)P_{Y|X}(y|x'),>0.\label{eq:overlap}
\end{align}
Equivalently, no two source symbols can be distinguished without error from one side information symbol.
\item\label{ass:interior} The distortion threshold and error probability are fixed independently of $n$ and satisfy $D_{\min}<D<D_0$.
\end{enumerate}

We provide two examples of the channel and distortion function that satisfy the above conditions, which are also used to illustrate our second-order asymptotics in subsequent sections.
\begin{enumerate}
\item \emph{Binary asymmetric source.} Let $\calX=\calY=\calZ=\{0,1\}$. Fix $(q,{{p_{01}}},{{p_{10}}})\in(0,1)^3$. Let $P_X=\Ber(q)$ be the Bernoulli distribution with parameter $q$, and let $d(x,z)=\bbI\{x\ne z\}$. The channel matrix $P_{Y|X}$ is lead to n by
\begin{align}
W_{\rm BAC}=\begin{pmatrix}1-{{p_{01}}}&{{p_{01}}}\\{{p_{10}}}&1-{{p_{10}}}\end{pmatrix}.\label{eq:bac-channel-general}
\end{align}
The induced marginal distribution $P_Y$ satisfies
\begin{align}
P_Y(0)&=(1-q)(1-{{p_{01}}})+q{{p_{10}}}>0,\label{eq:bac-output}\\
P_Y(1)&=(1-q){{p_{01}}}+q(1-{{p_{10}}})>0.\label{eq:bac-output-one}
\end{align}
Thus, Assumption~(\ref{ass:finite}) holds. It follows from the definition of $\kappa_W$ in~\eqref{eq:overlap} that
\begin{align}
\kappa_{W_{\rm BAC}}&=\min\{(1-{{p_{01}}})^2+{{p_{01}}}^2,\ {{p_{10}}}^2+(1-{{p_{10}}})^2,\ (1-{{p_{01}}}){{p_{10}}}+{{p_{01}}}(1-{{p_{10}}})\},>0.\label{eq:bac-overlap-general}
\end{align}
Thus, Assumption~(\ref{ass:overlap}) holds. It follows from the definitions of $D_{\min}$ and $D_0$ in~\eqref{eq:Dmin}--\eqref{eq:D0} that
\begin{align}
D_{\min}&=0,\label{eq:bac-distortion-endpoints}\\
D_0&=\min\{(1-q)(1-{{p_{01}}}),q{{p_{10}}}\}+\min\{(1-q){{p_{01}}},q(1-{{p_{10}}})\}>0.
\end{align}
Thus, any $D\in(0,D_0)$ satisfies Assumption~(\ref{ass:interior}).

\item \emph{Quaternary source.} Let $X=({{X_{\rm p}}},{{X_{\rm b}}})\in\{0,1\}^2$ be a quaternary source, where ${{X_{\rm p}}}\sim\Ber(1/3)$ and ${{X_{\rm b}}}\sim\Ber(1/2)$ are independent Bernoulli random variables. Let $Y={{X_{\rm b}}}\oplus {{N}}$, where ${{N}}\sim\Ber(2/5)$ is another Bernoulli random variable independent of $({{X_{\rm b}}},{{X_{\rm p}}})$. The reproduction alphabet is $\calZ=\{0,1,*\}\times\{0,1\}$. With source symbols ordered as $(0,0),(0,1),(1,0),(1,1)$ and side-information symbols as $0,1$, the channel $P_{Y|X}$ is lead to n by
\begin{align}
W_4=\begin{pmatrix}3/5&2/5\\2/5&3/5\\3/5&2/5\\2/5&3/5\end{pmatrix}.\label{eq:four-symbol-channel}
\end{align}
As a result, the source distribution is $P_X=(1/3,1/3,1/6,1/6)$ and the side-information distribution is $P_Y=(1/2,1/2)$. Thus, Assumption~(\ref{ass:finite}) holds. Furthermore, the definition in~\eqref{eq:overlap} leads to
\begin{align}
\kappa_{W_4}&=\min\{13/25,12/25\},=12/25>0,\label{eq:four-symbol-overlap}
\end{align}
verifying Assumption~(\ref{ass:overlap}). Assumption~(\ref{ass:interior}) is verified in Appendix~\ref{app:four-symbol-model}.
\end{enumerate}

\subsection{Existing Results}
\label{sec:first-order}
Fix an integer $m_0\in[|\calX|+1]$. Let $U$ be an auxiliary random variable defined on the set $\calU=[m_0]$. Define the set of reconstruction functions by
\begin{align}
{{\calG}}_{{m_0}}:=\{{{g}}:[{m_0}]\times\calY\to\calZ\},\label{eq:reconstruction-map-set}
\end{align}
For any function $g\in\calG_{m_0}$, $g(u,y)$ is the reproduction associated with $u\in\calU$ and $y\in\calY$. Fix any distortion level $D\in(D_{\min},D_0)$. Define the following rate-distortion function
\begin{align}
R_{\mathrm{WZ}}(D):=\min_{\substack{{m_0}\in[|\calX|+1],\;P_{{{U}}|X}\in\calP([{m_0}]|\calX),\;{{g}}\in{{\calG}}_{{m_0}}\\
\bbE[d(X,{{g}}({{U}},Y))]\leq D}}I(X;{{U}}|Y),\label{eq:wzclassic}
\end{align}
where the expectation and mutual information are calculated with respect to the joint distribution $P_{XY}P_{U|X}$ and $U-X-Y$ forms a Markov chain.

Recall the definition of the first-order asymptotic rate $R^*(D,\veps)$ in~\eqref{eq:first-order-limit}. Wyner and Ziv~\cite{wyner1976rate} proved that
\begin{align}
\lim_{\veps\to0}R^*(D,\veps)=R_{\mathrm{WZ}}(D).\label{eq:classical-first-order-limit}
\end{align}
The strong converse result further shows that for any $\veps\in(0,1)$,
\begin{align}
R^*(D,\veps)=R_{\mathrm{WZ}}(D).\label{eq:first-order-strong-converse}
\end{align}
Thus, the first-order asymptotic rate remains the same for any excess-distortion probability $\veps\in(0,1)$.

We next recall the second-order achievability result of~\cite[Theorem~15]{lili}. Let $\widetilde{\calU}:=[|\calZ|^{|\calY|}]$ index all functions from $\calY$ to $\calZ$, and fix an enumeration $\widetilde g:\widetilde{\calU}\times\calY\to\calZ$. The auxiliary $\widetilde U$ is the reconstruction-rule index, so $\widetilde g(\widetilde u,\cdot)$ is the rule indexed by $\widetilde u\in\widetilde{\calU}$. The rate-distortion function $R_{\mathrm{WZ}}(D)$ in~\eqref{eq:wzclassic} has the equivalent expression
\begin{align}
R_{\mathrm{WZ}}(D)=\min_{\substack{P_{{\widetilde U}|X}\in\calP({\widetilde{\calU}}|\calX)\\
\bbE_P[d(X,{\widetilde g}({\widetilde U},Y))]\leq D}}I_P(X;{\widetilde U}|Y),\label{eq:fixed-alphabet}
\end{align}
where $P=P_{XY}P_{{\widetilde U}|X}$ denotes the joint distribution where the expectation and mutual information are calculated with respect to, which is proved in Appendix~\ref{app:rule-alphabet}. Define the set of optimizers for $R_{\mathrm{WZ}}(D)$ by
\begin{align}
\calP^*(D):=\{P=P_{XY}P_{{\widetilde U}|X}:\;P_{{\widetilde U}|X}\in\calP({\widetilde{\calU}}|\calX),\;
I_P(X;{\widetilde U}|Y)=R_{\mathrm{WZ}}(D),\;\bbE_P[d(X,{\widetilde g}({\widetilde U},Y))]\leq D\}.\label{eq:optimal-set}
\end{align}
Define the optimal dual parameter as
\begin{align}
\lambda^*:=-R_{\mathrm{WZ}}'(D).\label{eq:multiplier-definition}
\end{align}
Proposition~\ref{prop:optimal-properties} below shows that $\lambda^*$ is well defined and strictly positive under Assumption~\ref{ass:interior}.

Fix any $P\in\calP^*(D)$, let $P_{\widetilde U}$ be the induced marginal distribution on ${\widetilde{\calU}}$ by $P_{{\widetilde U}|X}$ and $P_{XY}$. For any $(x,y,{\widetilde u})\in\calX\times\calY\times{\widetilde{\calU}}$ such that $P_{XY}(x,y)P_{{\widetilde U}|X}({\widetilde u}|x)>0$, define the following quantities:
\begin{align}
P_{{\widetilde U}|Y}({\widetilde u}|y)&:=\frac{\sum_{x'\in\calX}P_X(x')P_{{\widetilde U}|X}({\widetilde u}|x')P_{Y|X}(y|x')}{P_Y(y)},\label{eq:optimal-decoder-law}\\
\imath_P(x,{\widetilde u},y)&:=\log\frac{P_{{\widetilde U}|X}({\widetilde u}|x)}{P_{{\widetilde U}|Y}({\widetilde u}|y)},\label{eq:optimal-information}\\
\jmath_P(x|D)&:=\bbE_P[\imath_P(X,{\widetilde U},Y)+\lambda^*(d(X,{\widetilde g}({\widetilde U},Y))-D)|X=x],\label{eq:existing-tilted}\\
V(P)&:=\var_{P_X}[\jmath_P(X|D)],\label{eq:existing-source-variance}\\
\Sigma(P)&:=\bbE_P\left[\cov_P\left(
\begin{pmatrix}\imath_P(X,{\widetilde U},Y)\\d(X,{\widetilde g}({\widetilde U},Y))\end{pmatrix}\middle|X,{\widetilde U}\right)\right].\label{eq:cov}
\end{align}
Consistent with~\cite[Definition~6]{kostina2012}, $\jmath_P(\cdot|D)$ can be understood as the distortion-tilted information density of the Wyner-Ziv problem. The quantity $V(P)$ is the variance of the distortion-tilted information density, which results from the randomness of the source sequences at distortion level $D$. Furthermore, the covariance matrix $\Sigma(P)$ reflects the expected conditional covariance matrix of the compression rate and the achievable distortion level as a function of side information, reflecting the uncertainty caused by side information lead to n the source sequence and the compressed message defined via ${\widetilde U}$.

Let ${{\Lambda}}\sim {\calN}(0,V(P))$ be a Gaussian random variable with mean zero and variance $V(P)$. lead to n a positive semidefinite $2\times2$ matrix $\Sigma$, let $(J_1,J_2)\sim {\calN}(\bzero_2,\Sigma)$ be a two-dimensional Gaussian vector and define the following Gaussian cdf for any $({{\ell_1}},{{\ell_2}})\in\bbR^2$:
\begin{align}
\Phi_\Sigma({{\ell_1}},{{\ell_2}}):=\bbP\{J_1\leq {{\ell_1}},J_2\leq {{\ell_2}}\}.\label{eq:gaussian-cdf}
\end{align}
Furthermore, for any $L\in\bbR$ and $\veps\in(0,1)$, let
\begin{align}
F_{\mathrm{f}}(L)&:=\sup_{P\in\calP^*(D)}\bbE_{{\Lambda}}\Bigl[\sup_{t\in\bbR}
\Phi_{\Sigma(P)}(L-{{\Lambda}}-\lambda^*t,t)\Bigr],\label{eq:fixed-versus-adaptive}\\
L_{\mathrm{f}}(\veps)&:=\inf\{L\in\bbR:F_{\mathrm{f}}(L)\geq1-\veps\}.\label{eq:fixed-quantile-definition}
\end{align}
In~\eqref{eq:fixed-versus-adaptive}, the parameter $t$ describes the tradeoff between the compression rate and the achievable distortion level. As presented in the next lemma, $L_{\mathrm{f}}(\veps)$ was proved to be the achievable second-order asymptotics in~\cite[Theorem~15]{lili} while $F_{\mathrm{f}}(L)$ is the asymptotic non-excess-distortion probability with a second-order coding rate of $L$.

\begin{lemma}\label{lem:lili}
Under Assumptions~(\ref{ass:finite})--(\ref{ass:interior}), for every $\veps\in(0,1)$ such that $L_{\mathrm{f}}(\veps)\geq0$,
\begin{align}
L_{\mathrm{WZ}}^*(D,\veps)\leq L_{\mathrm{f}}(\veps).\label{eq:lili-coefficient}
\end{align}
\end{lemma}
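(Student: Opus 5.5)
This lemma restates \cite[Theorem~15]{lili} in the notation of this paper, so the plan is either to invoke that result or to reconstruct its achievability argument. Here we reconstruct it; the target is, for a fixed $P\in\calP^*(D)$, a sequence of $(n,M)$-codes with $\log M=nR_{\mathrm{WZ}}(D)+\sqrt n L+o(\sqrt n)$ whose non-excess-distortion probability tends to $\bbE_{\Lambda}[\sup_t\Phi_{\Sigma(P)}(L-\Lambda-\lambda^*t,t)]$. Taking the supremum over $P$ then shows that every $L$ with $F_{\mathrm{f}}(L)>1-\veps$ is second-order $(D,\veps)$-achievable. Since $F_{\mathrm{f}}$ is monotone, letting $L\downarrow L_{\mathrm{f}}(\veps)$ gives \eqref{eq:lili-coefficient}. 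The condition $L_{\mathrm{f}}(\veps)\ge0$ is used only so that the rate $R_{\mathrm{WZ}}(D)+L/\sqrt n$ does not fall below the first-order value in a way that would break the binning analysis.

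\textbf{Code construction (random coding and binning with type-adaptive perturbation).} For each type $Q$ of $x^n$, I will pick a local perturbation $P^{Q}_{\widetilde U|X}$ of $P_{\widetilde U|X}$ and a tilt parameter $t=t(Q)$. The perturbation tilts the distortion budget by $t/\sqrt n$, so the target distortion becomes $D-t/\sqrt n$, and correspondingly shifts the rate by about $\lambda^* t/\sqrt n$ through the derivative $\lambda^*=-R_{\mathrm{WZ}}'(D)$. The steps are as follows.
\begin{enumerate}
\item Generate an i.i.d.\ codebook of $\widetilde U^n$ sequences from $P_{\widetilde U}$ (or from a constant-composition distribution), of size $\exp(nI(X;\widetilde U)+O(\log n))$, so that with high probability the encoder finds a sequence jointly typical with $x^n$ under $P^{Q}_{\widetilde U|X}$. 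This covering step follows by standard type counting.
\item Bin the codebook into $M$ bins and send the bin index. The decoder selects the codeword in the bin that maximizes the empirical information density $\sum_i\imath_P(x_i,\widetilde u_i,y_i)$ relative to $P_{\widetilde U|Y}$, and outputs $\widetilde g(\widetilde u_i,y_i)$ componentwise.
\end{enumerate}
A one-shot Wyner--Ziv/Slepian--Wolf style bound then shows that decoding succeeds whenever
\[
\sum_i\imath_P(X_i,\widetilde U_i,Y_i)\le \log M - nR_{\mathrm{WZ}}(D)+nI_P(X;\widetilde U|Y)+\text{lower-order terms},
\]
and that the distortion condition $\sum_i d(X_i,\widetilde g(\widetilde U_i,Y_i))\le nD$ is required simultaneously.

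\textbf{Second-order analysis.} Condition on $X^n$ and on the chosen $\widetilde U^n$. The pairs $(\imath_P(X_i,\widetilde U_i,Y_i),d(X_i,\widetilde g(\widetilde U_i,Y_i)))$ are then independent across $i$, since $Y_i$ is drawn from $P_{Y|X}(\cdot|X_i)$ by the Markov chain. Their centered sums, normalized by $\sqrt n$, are asymptotically $\calN(\bzero_2,\Sigma(P))$ by a multidimensional Berry--Esseen bound. The conditional means carry the source randomness. Expanding around $P$ via the envelope theorem for the Lagrangian, $I(X;\widetilde U|Y)+\lambda^*(\bbE d-D)$, gives the conditional rate mean $n R_{\mathrm{WZ}}(D)+\sum_i(\jmath_P(X_i|D)-R_{\mathrm{WZ}}(D))+O(1)$, and the central limit theorem makes its normalized fluctuation $\Lambda\sim\calN(0,V(P))$. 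The perturbation contributes the trade-off $t$ between the two coordinates. Optimizing $t$ per type (hence per $\Lambda$) yields the inner supremum $\sup_t\Phi_{\Sigma(P)}(L-\Lambda-\lambda^*t,t)$, and averaging over $\Lambda$ yields $F_{\mathrm{f}}(L)$.

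\textbf{Main obstacle.} The hardest step is controlling the binning error uniformly while the perturbation depends on the type. The decoder does not know $Q$, so the competing codewords in a bin must be handled by a union bound using the fixed $P_{\widetilde U|Y}$. I also need the Taylor expansions in the perturbation to be uniform over the $O(1/\sqrt n)$ neighborhood of types. I would first try to send the type index as an extra $O(\log n)$-bit header, so that each type class is handled separately, and then use the continuity and strict positivity guaranteed by Assumptions~(\ref{ass:finite})--(\ref{ass:overlap}) to bound the remainder terms uniformly.
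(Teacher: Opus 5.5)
\textbf{Comparison with the paper.} The paper does not rederive a code for this lemma. It invokes \cite[Theorem~15]{lili}, and its own work is to check that theorem's hypotheses under (A1)--(A3): $R_{\mathrm{WZ}}$ is differentiable at $D$ with $\lambda^*\in(0,\infty)$, and $\widetilde V(P)>0$ (Proposition~\ref{prop:optimal-properties}(i),(v)). It also notes that Li--Li's covariance equals $\Sigma(P)$, because $-\log(P_{\widetilde U|Y}/P_{\widetilde U})$ and $\imath_P$ differ by a function of $(X,\widetilde U)$. The lemma also follows from Theorem~\ref{thm:achievability} and Lemma~\ref{lem:comparison}, since $F_{\mathrm f}\le F_*$ gives $L_*(\veps)\le L_{\mathrm f}(\veps)$. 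Section~\ref{sec:achievability} is essentially your plan, with the exponential-race bound of Lemma~\ref{lem:oneshot} in place of covering and binning.

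\textbf{Main gap: missing regularity.} Your reconstruction silently assumes exactly the facts the paper verifies. Your envelope-theorem expansion of the conditional rate mean is the stationarity identity \eqref{eq:stationarity}, namely $\bbE_P[\imath_P(X,\widetilde U,Y)+\lambda^* d(X,\widetilde g(\widetilde U,Y))\,|\,X=x,\widetilde U=\widetilde u]=\jW{x}+\lambda^*D$ for every active $\widetilde u$, and this needs $\lambda^*$ to be a well-defined multiplier. Moreover, a perturbation of $P_{\widetilde U|X}$ that preserves rows and support and has nonzero distortion derivative exists if and only if $\widetilde V(P)>0$. If $\widetilde V(P)=0$, every admissible perturbation leaves $\bbE[D_n|X^n,U^n]$ unchanged to first order, so the allocation is pinned to $t=-\langle\bxi,\bard_P\rangle$ by the source type and the inner $\sup_t$ cannot be realized. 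Both facts are nontrivial consequences of the overlap assumption (A2), via strong convexity of $\rho$, full-support contacts, and uniqueness of the dual. They are the real content of the lemma as stated.

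\textbf{Two construction steps fail as written.} First, the decoder cannot maximize $\sum_i\imath_P(x_i,\widetilde u_i,y_i)$. That metric depends on $x_i$, and its $y$-part enters with the wrong sign; the metric must be $\sum_i\log P_{\widetilde U|Y}(\widetilde u_i|y_i)/P_{\widetilde U}(\widetilde u_i)$. Second, a single codebook of size $e^{nI_P(X;\widetilde U)+O(\log n)}$ cannot cover all types. The covering requirement $-\log P_{\widetilde U}^n(\calT_{P^Q}(x^n))=\sum_i\log P^Q(U_i|X_i)/P_{\widetilde U}(U_i)+O(\log n)$ generally moves by order $\sqrt n$ with $\bxi$ and $t$. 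Covering therefore fails wherever this exceeds $nI_P(X;\widetilde U)$, while enlarging the codebook uniformly charges every type the worst-case $\sqrt n$ term in the binning condition.

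Your displayed decoding condition holds only with type-dependent codebook sizes. With the type header and a per-type sub-codebook $\mathsf{C}_Q$, the union bound over about $|\mathsf{C}_Q|/M$ competitors gives $\sum_i\log P^Q(U_i|X_i)/P_{\widetilde U|Y}(U_i|Y_i)\le\log M-\gamma-O(\log n)$. This is the event of Lemma~\ref{lem:oneshot} with $P_n^{\rm a}$ uniform on the conditional type class. Lemma~\ref{lem:types} and Section~\ref{sec:ach-source-selection}, with a singleton weight set, then finish the argument.

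\textbf{Smaller points.}
\begin{enumerate}
\item Multivariate Berry--Esseen needs $\Sigma(P)\succ0$, which can fail (e.g., the quaternary endpoints). Use the CLT with an open-set Portmanteau bound and threshold slack instead.
\item The last step needs $F_{\mathrm f}(L)>1-\veps$ for all $L>L_{\mathrm f}(\veps)$ (strict increase, as in Lemma~\ref{lem:gaussian}(ii)), or a diagonal argument attaining $F_{\mathrm f}(L)$. Monotonicity alone does not suffice.
\item The hypothesis $L_{\mathrm f}(\veps)\ge0$ plays no role in binning, since $|\mathsf{C}_Q|/M\approx e^{nI(\widetilde U;Y)}$ whatever the sign of $L$.
\end{enumerate}
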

Lemma~\ref{lem:lili} was adapted from~\cite[Theorem~15]{lili}. Too see this, for any $P\in\calP^*(D)$ and $(x,{\widetilde u})\in\calX\times{\widetilde{\calU}}$, define
\begin{align}
\bard(x,{\widetilde u})&:=\sum_{y\in\calY}P_{Y|X}(y|x)d(x,{\widetilde g}({\widetilde u},y)),\label{eq:mean-distortion-rule}\\
\bard_P(x)&:=\sum_{{\widetilde u}\in{\widetilde{\calU}}}P_{{\widetilde U}|X}({\widetilde u}|x)\bard(x,{\widetilde u}).\label{eq:mean-distortion-source}
\end{align}
The conditions of~\cite[Theorem~15]{lili} include i) the parameter $\lambda^*$ is well defined, strictly positive, and finite and ii) the following expected variance is strictly positive:
\begin{align}
{{{\widetilde V}}}(P)&:=\bbE_{P_X}\left[\var_{P_{{\widetilde U}|X}(\cdot|X)}[\bard(X,{\widetilde U})]\right]
=\sum_{\substack{x\in\calX\\{\widetilde u}\in{\widetilde{\calU}}}}P_X(x)P_{{\widetilde U}|X}({\widetilde u}|x)\bigl(\bard(x,{\widetilde u})-\bard_P(x)\bigr)^2.\label{eq:existing-direction}
\end{align}
Under Assumptions~(\ref{ass:finite})--(\ref{ass:interior}), both conditions are shown satisfied in our Proposition~\ref{prop:optimal-properties} below. One might find that the covariance matrix in~\cite[Theorem~15]{lili} takes a different form from $\Sigma(P)$ in~\eqref{eq:cov}. However, note that conditioned on $(X,{\widetilde U})$, the variance of $-\log(P_{{\widetilde U}|Y}({\widetilde U}|Y)/P_{\widetilde U}({\widetilde U}))$ is the same as $\imath_P(X,{\widetilde U},Y)-\log(P_{{\widetilde U}|X}({\widetilde U}|X)/P_{\widetilde U}({\widetilde U}))$, which shows that the covariance matrix of~\cite[Theorem~15]{lili} is exactly $\Sigma(P)$.

Furthermore, we remark that Lemma~\ref{lem:lili} requires that $L_{\mathrm{f}}(\veps)\geq0$, whose corresponding range of excess-distortion probabilities $\veps$ depends on $F_{\mathrm{f}}$ and may not cover the whole range $(0,1)$. In this paper, our derived achievability result is valid for any $\veps\in(0,1)$.

\section{Main Results and Discussions}
\label{sec:results}
\subsection{Properties of Optimal Distributions and Distortion-Tilted Information Density}
\label{sec:optimal}
Recall the definitions of $R_{\mathrm{WZ}}(D)$, $\calP^*(D)$, $\lambda^*$, $\imath_P(\cdot)$, $\jmath_P(\cdot)$, and $\Sigma(P)$ in~\eqref{eq:fixed-alphabet},~\eqref{eq:optimal-set},~\eqref{eq:multiplier-definition},~\eqref{eq:optimal-information},~\eqref{eq:existing-tilted}, and~\eqref{eq:cov}, respectively.

\begin{proposition}\label{prop:optimal-properties}
Under Assumptions~(\ref{ass:finite})--(\ref{ass:interior}), the following results hold.
\begin{enumerate}
\item The rate-distortion function $R_{\mathrm{WZ}}(D)$ is differentiable at $D$ and $\lambda^*=-R_{\mathrm{WZ}}'(D)$ is strictly positive and finite.
\item The set $\calP^*(D)$ is nonempty and compact, and for each $P\in\calP^*(D)$,
\begin{align}
\bbE_P[d(X,{\widetilde g}({\widetilde U},Y))]=D.\label{eq:optimal-active-constraint}
\end{align}
\item The distortion-tilted information density $\jmath_\cdot(\cdot|D)$ in~\eqref{eq:existing-tilted} has a common value for any $P\in\calP^*(D)$, i.e., for any $(P,P')\in\calP^*(D)^2$ and $x\in\calX$,
\begin{align}
\jmath_P(x|D)=\jmath_{P'}(x|D){=:\jW{x}}.\label{eq:common-optimal-density}
\end{align}
\item The expected value of the distortion-tilted information density equals the first-order asymptotics, i.e., for any $P\in\calP^*(D)$,
\begin{align}
\bbE_P[\jW{X}]=R_{\mathrm{WZ}}(D).\label{eq:tilted-mean}
\end{align}
\item For any $P\in\calP^*(D)$, the variance ${{{\widetilde V}}}(P)$ in~\eqref{eq:existing-direction} is strictly positive.
\end{enumerate}
\end{proposition}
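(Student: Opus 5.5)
The plan is to prove the five items in the order (ii), (i)-positivity/finiteness, (v), (i)-differentiability, (iv), (iii). Item (v) is the tool that makes differentiability work, and (iii) is the delicate part.

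\emph{Preliminaries and item (ii).} I first record that $R_{\mathrm{WZ}}(\cdot)$ in~\eqref{eq:fixed-alphabet} is convex and nonincreasing on $(D_{\min},\infty)$. Convexity comes from time sharing: given $(Q,U)$ with $Q$ a time-sharing variable, replace the pair by the reconstruction rule it induces. This rule lies in $\widetilde{\calU}$, leaves the distortion unchanged, and by data processing does not increase $I(X;\cdot|Y)$.

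Since $D<D_0$, no rule that ignores $X$ meets the constraint, so $R_{\mathrm{WZ}}(D)>0$. Moreover $R_{\mathrm{WZ}}(D_0)=0$, so convexity forces $R_{\mathrm{WZ}}$ to be strictly decreasing on $(D_{\min},D_0]$.

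For (ii), the feasible set $\{P_{\widetilde U|X}:\bbE_P[d]\le D\}$ is compact (and nonempty since $D>D_{\min}$), and $I_P(X;\widetilde U|Y)$ is continuous on it. Hence the set of minimizers $\calP^*(D)$ is nonempty and compact. If some $P\in\calP^*(D)$ had $\bbE_P[d]=D'<D$, then $R_{\mathrm{WZ}}(D')\le R_{\mathrm{WZ}}(D)$, contradicting strict decrease. This gives~\eqref{eq:optimal-active-constraint}.

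\emph{Items (i), (v), (iv).} Finiteness and positivity of one-sided derivatives follow from convexity: $D$ is interior to $(D_{\min},D_0)$, so both one-sided derivatives are finite. Strict decrease together with convexity makes both of them strictly negative.

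For (v), suppose $\widetilde V(P)=0$ for some $P\in\calP^*(D)$. I will use the KKT conditions for minimizing $I(X;\widetilde U|Y)+\lambda\bbE[d]$ over $P_{\widetilde U|X}$. The partial derivative with respect to $P_{\widetilde U|X}(u|x)$ is
\[
P_X(x)\Big(\sum_y P_{Y|X}(y|x)\log\frac{P_{\widetilde U|X}(u|x)}{P_{\widetilde U|Y}(u|y)}+\lambda\bard(x,u)\Big).
\]
By Assumption~(\ref{ass:overlap}), every $u$ with $P_{\widetilde U}(u)>0$ has $P_{\widetilde U|Y}(u|y)>0$ for all $y$. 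Hence if $P_{\widetilde U|X}(u|x)=0$, this derivative is $-\infty$, and moving mass onto $u$ strictly improves the objective. So every $x$ has the same support $\calS$. Then $\widetilde V(P)=0$ means $\bard(x,u)=c(x)$ for all $u\in\calS$. Now take the deterministic channel $\widetilde U\equiv u_0\in\calS$: it has distortion $\sum_x P_X(x)c(x)=D$ but rate $0<R_{\mathrm{WZ}}(D)$, a contradiction.

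For differentiability in (i), fix $P\in\calP^*(D)$ and consider the exponential tilt $P_\theta(u|x)\propto P_{\widetilde U|X}(u|x)e^{\theta\bard(x,u)}$. The map $\theta\mapsto\bbE_{P_\theta}[d]$ is smooth with derivative $\widetilde V(P)>0$ at $\theta=0$. By the implicit function theorem, $D'\mapsto f(D'):=I_{P_{\theta(D')}}(X;\widetilde U|Y)$ is smooth near $D$. It satisfies $f\ge R_{\mathrm{WZ}}$ and $f(D)=R_{\mathrm{WZ}}(D)$. A convex function touched from above by a differentiable function at an interior point is differentiable there, which gives $\lambda^*$.

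Item (iv) is then immediate: taking expectations in~\eqref{eq:existing-tilted} gives $I_P(X;\widetilde U|Y)+\lambda^*(\bbE_P[d]-D)=R_{\mathrm{WZ}}(D)$ by~\eqref{eq:optimal-active-constraint}.

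\emph{Item (iii), the main obstacle.} For each $P\in\calP^*(D)$, the stationarity condition above with $\lambda=\lambda^*$ reads $\sum_y P_{Y|X}(y|x)\,\imath_P(x,u,y)+\lambda^*\bard(x,u)=\nu_x$ on the (common) support. Averaging over $u$ shows $\jmath_P(x|D)=\nu_x-\lambda^*D$. So I must show that $\nu$ does not depend on which optimizer is used. I would view $R_{\mathrm{WZ}}$ as a function $\mathcal R(Q_X)$ of the source marginal, with $P_{Y|X}$ fixed and $\lambda^*$ held fixed via the Lagrangian $\min_{P_{\widetilde U|X}}\{I+\lambda^*(\bbE d-D)\}$.

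For each optimizer $P$, freezing $P_{\widetilde U|X}$ gives a smooth upper bound on $\mathcal R$ that touches at $P_X$. Its gradient is $\jmath_P(\cdot|D)$ up to an additive constant: differentiating $I_Q(X;\widetilde U|Y)$ in $Q_X(x)$, the terms from $Q_{\widetilde U|Y}$ cancel by normalization. If $\mathcal R$ is shown to be differentiable at $P_X$, all these gradients coincide, and the additive constants are fixed by (iv).

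The hard part is a matching lower support, since $I(X;\widetilde U|Y)$ is not convex in $P_{\widetilde U|X}$. I would first try the variational (dual) representation of the Wyner--Ziv function, writing $-\log Q_{\widetilde U|Y}$ as a minimum over auxiliary decoder-side distributions. This exhibits the Lagrangian as an infimum of functions linear in $Q_X$, hence concave in $Q_X$. A concave function squeezed under a smooth upper support at $P_X$ is differentiable there, which would yield~\eqref{eq:common-optimal-density}.
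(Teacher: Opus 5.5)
\emph{The gap is in the final step of (iii).} You claim that a concave function lying below a smooth function that touches it at $P_X$ must be differentiable there. This is false: $-|q|\le 0$, with contact at $q=0$. Concavity of $\mathcal R$ gives supports from above, which is the same side as your frozen-channel bounds $g_P$. So it cannot exclude exactly what (iii) has to rule out: two optimizers $P,P'$ whose $g_P,g_{P'}$ touch $\mathcal R$ at $P_X$ with different gradients. That would leave a concave kink in $\mathcal R\le\min\{g_P,g_{P'}\}$, which is perfectly compatible with concavity.

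The missing ingredient is a convex structure, and it comes from the posterior decomposition of Lemma~\ref{lem:posterior}. Write $I_Q(X;\widetilde U|Y)=H_Q(X|Y)-H_Q(X|\widetilde U,Y)$ and replace each rule by its Bayes rule. Then, for $Q_X$ near $P_X$, $\mathcal R(Q_X)+\rho(Q_X)+\lambda^*D$ equals the lower convex envelope of $\rho+\lambda^*\delta$ evaluated at $Q_X$, so it is convex in $Q_X$. Since $\rho$ is smooth at the interior point $P_X$, each $g_P+\rho$ touches this convex function from above at $P_X$. That does force differentiability; it is the multivariate version of the fact you used in (i). All gradients $\jmath_P(\cdot|D)$ then agree up to a constant, and your (iv) fixes the constant.

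The paper reaches the same conclusion through the dual~\eqref{eq:posterior-dual}. A single supporting affine function $\beta_0+\langle\bbeta,Q\rangle$ touches $\rho+\lambda^*\delta$ at every posterior used by every optimizer. Tangent stationarity at these full-support contacts then gives $\jmath_P(x|D)=\beta(x)+\beta_0-[\nabla\rho(P_X)]_x-\lambda^*D$ for all $P\in\calP^*(D)$.

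The rest of your plan is sound and takes a different, lighter route than the paper in places. The paper gets differentiability in (i) from uniqueness of the dual multiplier in the posterior linear program. You instead use an exponential tilt as a smooth upper support to the convex $R_{\mathrm{WZ}}$, which works because $\widetilde V(P)>0$. The paper proves (v) via strong convexity of $\Delta_g$; your constant-channel contradiction with $\widetilde U\equiv u_0$ replaces it. Three small repairs are needed:
\begin{enumerate}
\item Positivity of $R_{\mathrm{WZ}}(D)$ should be argued as follows: $I(X;\widetilde U|Y)=0$ gives $X\perp\widetilde U\mid Y$, so the reconstruction is a randomized function of $Y$ and its distortion is at least $D_0$. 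The phrase ``no rule that ignores $X$'' does not cover this.
\item Assumption~(\ref{ass:overlap}) does not give $P_{\widetilde U|Y}(u|y)>0$ for every $y$; a label used only at $x'$ vanishes wherever $P_{Y|X}(y|x')=0$. It does give some $y$ with $P_{Y|X}(y|x)P_{Y|X}(y|x')>0$, and that is enough for the $-\infty$ derivative.
\item ``Moving mass strictly improves the objective'' requires $P$ to minimize $I+\lambda\bbE[d]$. This holds for $\lambda\in-\partial R_{\mathrm{WZ}}(D)$ by convexity of $R_{\mathrm{WZ}}$, but not automatically, since $I$ is nonconvex in $P_{\widetilde U|X}$.
\end{enumerate}
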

The proof of Proposition~\ref{prop:optimal-properties} is provided in Appendix~\ref{app:optimal-properties}. As a result of Claim (iii), we can write $\jW{x}$ to denote the distortion-tilted information density by dropping the subscript denoting the optimal distribution $P\in\calP^*(D)$, and the source dispersion $V(P)$ has the same value for any $P\in\calP^*(D)$, which can be written as
\begin{align}
V=\sum_{x\in\calX}P_X(x)\bigl(\jW{x}-R_{\mathrm{WZ}}(D)\bigr)^2.\label{eq:source-variance}
\end{align}
As a result of Claim (v), we have
\begin{align}
{{\underline{\widetilde V}}}:=\min_{P\in\calP^*(D)}{{{\widetilde V}}}(P)>0,\label{eq:uniform-direction}
\end{align}
which is used in our achievability proof in Section~\ref{sec:ach-code}.

Furthermore, define the set of covariance matrices
\begin{align}
\calK_D:=\{\Sigma(P):P\in\calP^*(D)\}.\label{eq:covariance-family-definition}
\end{align}
The following proposition leads to an alternative form of $F_{\mathrm{f}}(L)$ in~\eqref{eq:fixed-versus-adaptive}.
\begin{proposition}\label{prop:covariance-family}
For any real number every $L\in\bbR$,
\begin{align}
F_{\mathrm{f}}(L)=\sup_{\Sigma\in\calK_D}\bbE_{{\Lambda}}\Bigl[\sup_{t\in\bbR}
\Phi_\Sigma(L-{{\Lambda}}-\lambda^*t,t)\Bigr].\label{eq:fixed-covariance-representation}
\end{align}
\end{proposition}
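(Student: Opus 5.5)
This is essentially a reparametrization of the supremum. The expression inside the supremum in~\eqref{eq:fixed-versus-adaptive} depends on $P$ only through $\Sigma(P)$. This is because by Claim~(iii) of Proposition~\ref{prop:optimal-properties}, $V(P)=V$ in~\eqref{eq:source-variance} is common to all $P\in\calP^*(D)$, so the law of $\Lambda\sim\calN(0,V)$ does not vary with $P$. Likewise, $\lambda^*$ in~\eqref{eq:multiplier-definition} is defined from $R_{\mathrm{WZ}}$ alone and so also does not depend on $P$.

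The plan is to define, for any positive semidefinite $2\times 2$ matrix $\Sigma$,
\begin{align}
G(\Sigma):=\bbE_{\Lambda}\Bigl[\sup_{t\in\bbR}\Phi_\Sigma(L-\Lambda-\lambda^*t,t)\Bigr],
\end{align}
with $\Lambda\sim\calN(0,V)$ fixed once and for all. Since neither $V$ nor $\lambda^*$ depends on $P$, the summand in~\eqref{eq:fixed-versus-adaptive} equals $G(\Sigma(P))$ for every $P\in\calP^*(D)$. Hence
\begin{align}
F_{\mathrm{f}}(L)=\sup_{P\in\calP^*(D)}G(\Sigma(P))=\sup_{\Sigma\in\calK_D}G(\Sigma),
\end{align}
where the last equality holds because $\calK_D$ in~\eqref{eq:covariance-family-definition} is exactly the image of $\calP^*(D)$ under the map $P\mapsto\Sigma(P)$. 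The supremum of a function composed with a map, taken over the domain, equals the supremum over the image; both inequalities follow directly from the definition of the image.

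The only step needing care is well-definedness. First, $\calP^*(D)$ is nonempty by Claim~(ii), so $\calK_D$ is nonempty and neither supremum is over an empty set. Second, $\Sigma(P)$ is a genuine positive semidefinite matrix: it is an expectation of conditional covariance matrices of bounded random variables. Boundedness holds because $\imath_P$ only takes finitely many values on the support, by Assumption~(\ref{ass:finite}), and $d\le d_{\max}$. Hence $\Phi_{\Sigma(P)}$ is defined, including in degenerate cases through the distribution of the (possibly singular) Gaussian vector. Third, the inner function $\lambda\mapsto\sup_t\Phi_\Sigma(L-\lambda-\lambda^*t,t)$ is measurable in $\lambda$: it is nonincreasing in $\lambda$, as a supremum of nonincreasing functions, and bounded in $[0,1]$, so the expectation exists. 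None of this is a real obstacle. The main point to state explicitly is the $P$-independence of $V(P)$, which comes from Claim~(iii).
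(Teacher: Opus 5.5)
Your proof is correct and matches the paper's: the paper likewise concludes by invoking Claim~(iii) of Proposition~\ref{prop:optimal-properties} to make $\Lambda\sim\calN(0,V)$ common to all $P\in\calP^*(D)$, and then substitutes the covariance image $\calK_D$ into~\eqref{eq:fixed-versus-adaptive}. The rest of that appendix identifies $\calK_D$ with $\{\Sigma(\btheta):\btheta\in\calA\}$ and with the covariances of all optimal finite auxiliaries; that material is used in the converse and in the comparison with~\cite{lili}, but it is not needed for this identity, where $\calK_D$ is by definition the image of $\calP^*(D)$.
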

The proof of Proposition~\ref{prop:covariance-family} is provided in Appendix~\ref{app:geometry}. The set $\calK_D$ is the image of the same optimal set $\calP^*(D)$ used in~\eqref{eq:fixed-versus-adaptive} while~\eqref{eq:fixed-covariance-representation} expresses that supremum in terms of covariance matrices. As shown below, Proposition~\ref{prop:covariance-family} allows us to compare the achievability result of~\cite[Theorem~15]{lili} with the exact second-order asymptotics in this paper.

\subsection{Main Results and Discussions}
\label{sec:main-result}
Recall the definitions of $V(P)$ and $\Phi_\Sigma$ in~\eqref{eq:existing-source-variance} and~\eqref{eq:gaussian-cdf}, respectively. By Claim~(iii) of Proposition~\ref{prop:optimal-properties}, $V(P)=V$ for every $P\in\calP^*(D)$. Let ${{\Lambda}}\sim {\calN}(0,V)$ be the Gaussian random variable with mean zero and variance $V$. For any real number $L\in\bbR$, define the following two quantities:
\begin{align}
F_*(L)&:=\bbE\Bigl[\sup_{(P,t)\in\calP^*(D)\times\bbR}
\Phi_{\Sigma(P)}(L-{{\Lambda}}-{\lambda^*} t,t)\Bigr],\label{eq:Fstar}\\
L_*(\veps)&:=\inf\{L\in\bbR:\;F_*(L)\geq1-\veps\}.\label{eq:quantile}
\end{align}
The parameter $t$ here has the same meaning as in~\eqref{eq:fixed-versus-adaptive}.

Our first result is the following second-order converse bound.
\begin{theorem}\label{thm:main}
Under Assumptions~(\ref{ass:finite})--(\ref{ass:interior}),
\begin{align}
L_{\mathrm{WZ}}^*(D,\veps)\geq L_*(\veps).\label{eq:main-lower}
\end{align}
\end{theorem}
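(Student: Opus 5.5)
The plan is to show that any sequence of $(n,M)$-codes with $\log M\le nR_{\mathrm{WZ}}(D)+L\sqrt n+o(\sqrt n)$ satisfies
\begin{align}
\limsup_{n\to\infty}\bigl(1-\Pen(D)\bigr)\le F_*(L).
\end{align}
Since $F_*$ is nondecreasing and continuous in $L$, this gives $L_{\mathrm{WZ}}^*(D,\veps)\ge L_*(\veps)$. The argument has three stages, matching the outline in Section~\ref{sec:contributions}.

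\textbf{Stage 1: reduce to a joint rate--distortion event.} For a given code, let $W=f_n(X^n)$, and view the decoder through its induced reconstruction rules: for each $(w,i)$, the map $y\mapsto Z_i$ given $Y_{i+1}^n$, $Y^{i-1}$ and $W$ plays the role of $\widetilde U_i$. This representation is possible because $\widetilde{\calU}$ indexes all maps $\calY\to\calZ$. I would then use a posterior decomposition of $\log P_{W|Y^n}(W|Y^n)$ into per-letter terms $\log\bigl(P_{\widetilde U_i|X_i,\cdot}/P_{\widetilde U_i|Y_i,\cdot}\bigr)$ plus remainder terms, which yields
\begin{align}
\log M\ge \sum_i \imath_{P_i}(X_i,\widetilde U_i,Y_i)-\text{(error terms)}
\end{align}
on a high-probability event. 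Here $P_i$ is the conditional test channel given the past. An optimality-gap bound, derived from the Lagrangian characterization in Proposition~\ref{prop:optimal-properties} with $\lambda^*$, shows that test channels whose conditional Lagrangian is at distance $\delta$ from $\calP^*(D)$ incur a rate penalty of order $\delta^2 n$. Consequently, only channels within $O(n^{-1/4})$ of $\calP^*(D)$ matter, up to a vanishing probability. Together these give
\begin{align}
1-\Pen(D)\le \bbP\Bigl\{\textstyle\sum_i\imath_{P_i}\le \log M+o(\sqrt n),\ \sum_i d(X_i,Z_i)\le nD\Bigr\}+o(1).
\end{align}

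\textbf{Stage 2: replace thresholds and covariances by functions of $X^n$.} Conditioned on $X^n$ and $W$, the summands are martingale differences with respect to the side-information filtration. Their conditional means are $\bbE[\imath_{P_i}\mid X_i,\widetilde U_i]$ and $\bard(X_i,\widetilde U_i)$, respectively. Since each $P_i$ is near $\calP^*(D)$, Proposition~\ref{prop:optimal-properties}(iii) gives
\begin{align}
\textstyle\sum_i \bigl(\text{mean rate} + \lambda^*(\text{mean distortion}-D)\bigr)=\sum_i\jW{X_i}+o(\sqrt n).
\end{align}
This produces the $\Lambda$ term. Setting $t$ equal to the centered mean-distortion deficit divided by $\sqrt n$ leaves the free one-dimensional tradeoff, which explains the supremum over $t$. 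The accumulated conditional covariance is $\sum_i\Sigma(P_i)/n$. I would use Azuma/McDiarmid-type concentration, together with a modification of the code (for example, time-sharing or the removal of atypical messages), to ensure that this covariance concentrates near a point of $\mathrm{conv}$ of nearby $\Sigma(P)$.

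\textbf{Stage 3: Gaussian approximation and averaging.} Conditioned on $X^n$, I would apply a martingale CLT with a Berry--Esseen-type rate. The denominator $\kappa_W>0$ keeps the information densities bounded, so such a bound is available. It controls the joint event by
\begin{align}
\sup_{(P,t)}\Phi_{\Sigma(P)}\bigl(L-\Lambda_n-\lambda^*t,\,t\bigr)+o(1),
\qquad \Lambda_n=\tfrac{1}{\sqrt n}\sum_i\bigl(\jW{X_i}-R_{\mathrm{WZ}}(D)\bigr).
\end{align}
The supremum absorbs the fact that the code may choose its message, and hence $t$ and $P$, adaptively as a function of $X^n$. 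Averaging over $X^n$ with the ordinary CLT, $\Lambda_n\Rightarrow\Lambda$, then gives $F_*(L)$, using continuity of the supremum map in $\Lambda$ and compactness of $\calP^*(D)$.

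I expect the main obstacle to lie in Stage 2. The realized quadratic covariation depends on the code in a path-dependent way, and $\Phi_\Sigma$ is not concave in $\Sigma$. So a mixture of covariances close to $\calK_D$ must be shown to do no better than the best single $\Sigma(P)$. My first attempt would be to use the fact that averaging $\Sigma(P_i)$ over $i$ corresponds to a time-sharing distribution $\bar P$. By convexity of $\calP^*(D)$, which follows from the convexity of the Wyner--Ziv program in $P_{\widetilde U|X}$ and the fact that $I(X;\widetilde U|Y)$ is linear along optimal faces, $\bar P$ lies in $\calP^*(D)$. It would then remain to check that $\Sigma$ is affine along this face, so that $\tfrac1n\sum_i\Sigma(P_i)\approx\Sigma(\bar P)\in\calK_D$.
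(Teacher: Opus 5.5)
Your outline has the right three-stage shape, but two load-bearing steps are missing. The obstacle you single out is also not the real one.

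\textbf{First gap: no martingale structure.} The martingale structure you invoke in Stage~2 does not exist for your auxiliaries. With $\widetilde U_i$ a function of $(W,Y^{i-1},Y_{i+1}^n)$, every summand $\imath_{P_i}(X_i,\widetilde U_i,Y_i)$ and $d(X_i,Z_i)$ depends on all of $Y^n$. Given $X^n$, it is centered only conditionally on $Y_{-i}$. That is not a martingale-difference property in any filtration, and sums of such leave-one-out-centered bounded terms need not be asymptotically Gaussian with covariance $\frac1n\sum_i\Sigma(P_i)$.

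The paper instead uses the sequential posterior $Q_i^{\rm s}=P(\tilX_i\mid S,\tilX^{i-1},\tilY_{i+1}^n,\Upsilon)$, which conditions on past \emph{source} letters and future side information. This has three consequences:
\begin{itemize}
\item it gives the exact chain rule $-\log P(S\mid\tilY^n,\Upsilon)=\sum_i\imath_{Q_i^{\rm s}}(\tilX_i,\tilY_i)$;
\item each term is adapted to the reverse filtration revealing $\tilY_n,\ldots,\tilY_1$ given $\tilX^n$;
\item the distortion, which your rule representation ties to $Y^{i-1}$, is handled separately through the Bayes rule $g_{Q_i^{\rm s}}$, a nonnegative regret, and a forward-martingale residual.
\end{itemize}

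Relatedly, your claim that the conditional means sum to $\sum_i\jW{X_i}+o(\sqrt n)$ ``since each $P_i$ is near $\calP^*(D)$'' fails as stated. Posteriors at distance $O(n^{-1/4})$ from the contacts contribute first-order terms of size $O(n^{-1/4})$ per letter, i.e.\ up to $O(n^{3/4})$ in total. The paper gets $o(\sqrt n)$ only because these terms are $Q$-centered. Via the resolvent/two-filtration decomposition~\eqref{eq:twofiltration}, they become martingale residuals whose variance is bounded by the accumulated gap $\Delta_n=O(\sqrt n)$.

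\textbf{Second gap: the rectangle supremum is unjustified.} This is the more fundamental problem. In any such decomposition, the allocation $t$ and the contact weights (hence the covariance) are functions of the posteriors, and therefore of the side information. The bound $\sup_t\Phi_\Sigma(\ell-\lambda^*t,t)$ holds only if $t$ is asymptotically fixed before the channel fluctuations $(J_n,G_n)$ are revealed. A side-information-dependent choice such as $t=G_n$ turns the event into the half-space $\{J_n+\lambda^*G_n\le L-\Lambda_n+\eta\}$, whose Gaussian probability is strictly larger in general.

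Saying the supremum ``absorbs'' adaptivity in $X^n$ assumes exactly what has to be proved. Azuma/McDiarmid or removing atypical messages does not deliver it, because for a general code the posterior-based allocation does fluctuate with $Y^n$. The paper proves source-measurability (Lemma~\ref{lem:selectors}) in four steps:
\begin{itemize}
\item the expected gap dominates the conditional total correlation (Lemma~\ref{lem:tc-deficit});
\item this total correlation controls the conditional variance of posterior averages given the source, through a strong-data-processing Efron--Stein bound (Lemma~\ref{lem:concentration});
\item a priori the gap is $O(\sqrt n)$ only on the success event, not in expectation, so a source-only repair (flag, disclosure of $\lceil n^{3/4}\rceil$ permuted source letters, Hamming-list decoding) makes the expected gap $o(n)$;
\item the allocation is projected onto the source-conditional mean of the message likelihood and transferred back to the original code.
\end{itemize}
Lemma~\ref{lem:completion} then performs the Gaussian comparison against a source-measurable covariance target by stopping and completing the predictable covariance.

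By contrast, the issue you flag as the main obstacle is benign. $\Sigma(\btheta)$ is linear in the contact weights and $\calA$ is convex, so an averaged covariance near $\calK_D$ lies in $\calK_D$ after projection. That is essentially your time-sharing argument.
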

The proof of Theorem~\ref{thm:main} is provided in Section~\ref{sec:converse} and proceeds in three steps. Firstly, we use posterior decomposition and an optimality-gap bound to identify the first-order optimal test channels and upper bound the non-excess-distortion probability by the probability of a joint rate and distortion event, together with a vanishing term. Secondly, we use concentration inequalities and a source-only code modification to replace the thresholds and conditional covariance in the above event by quantities depending on the source sequence. Thirdly, we use a conditional Gaussian approximation for martingales and average over the source distribution to show that $F_*(L)$ is the asymptotic upper bound of the non-excess-distortion probability.

Our second result is an achievability result.
\begin{theorem}\label{thm:achievability}
Under Assumptions~(\ref{ass:finite})--(\ref{ass:interior}),
\begin{align}
L_{\mathrm{WZ}}^*(D,\veps)\leq L_*(\veps).\label{eq:main-upper}
\end{align}
\end{theorem}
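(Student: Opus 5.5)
We prove Theorem~\ref{thm:achievability} with a type-adaptive random-binning code. The encoder first describes the type of $X^n$ with $O(\log n)$ nats, which costs nothing at the $\sqrt n$ scale. It then picks, as a function of that type, a first-order optimal test channel $P\in\calP^*(D)$ together with a local rate--distortion tradeoff parameter $t$. The goal is for the code's asymptotic non-excess-distortion probability to equal
\[
F_*(L)=\bbE\bigl[\sup_{(P,t)}\Phi_{\Sigma(P)}(L-\Lambda-\lambda^*t,t)\bigr].
\]
Here the supremum sits \emph{inside} the expectation over $\Lambda$, whereas in Lemma~\ref{lem:lili} it sits outside. Conditioned on the type $T_{x^n}$, write
\[
\Lambda_n:=\sqrt n\Bigl(\textstyle\sum_x T_{x^n}(x)\jW{x}-R_{\mathrm{WZ}}(D)\Bigr)\Rightarrow\calN(0,V).
\]
This quantity is a deterministic function of the type, so the encoder can choose $(P,t)$ approximately maximizing $\Phi_{\Sigma(P)}(L-\Lambda_n-\lambda^*t,t)$. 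The same $(P,t)$ is also known to the decoder through the transmitted type index.

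\textbf{Code construction.} Fix $L$ with $F_*(L)>1-\veps$; the case of a generic $L$ follows from the continuity and monotonicity of $F_*$ in $L$. For each type $Q$, pick $(P_Q,t_Q)$ within $\delta$ of the supremum. Compactness of $\calP^*(D)$ (Proposition~\ref{prop:optimal-properties}(ii)) and continuity of $\Sigma(\cdot)$ let us restrict $(P,t)$ to a finite grid with $|t|\le T$, losing only $\delta$. We then perturb $P_{\widetilde U|X}$ locally in the direction that shifts the mean distortion by $t/\sqrt n$, in the same way as Li and Li. This is possible because $\underline{\widetilde V}>0$ by \eqref{eq:uniform-direction}, which guarantees a perturbation of $\bard(x,\widetilde U)$ with nonzero first-order effect. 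The perturbation simultaneously shifts the rate by $-\lambda^* t/\sqrt n$ to first order, since $\lambda^*=-R_{\mathrm{WZ}}'(D)$ by Proposition~\ref{prop:optimal-properties}(i).

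For each type class we generate a codebook of $\widetilde u^n$ sequences i.i.d.\ (or constant-composition) according to the perturbed marginal, and bin them at rate $\log M/n$. The decoder performs typicality or information-density decoding of $\widetilde u^n$ within the bin using $y^n$, and outputs $\widetilde g(\widetilde u_i,y_i)$. We invoke a type-conditioned version of the nonasymptotic one-shot Wyner--Ziv bound used by Li and Li, e.g.\ a Poisson-matching or likelihood-encoder bound. For a given source sequence $x^n$ of type $Q$, it gives
\[
\bbP\{\text{no excess}\mid x^n\}\ge \bbP\Bigl\{\textstyle\sum_i\imath(x_i,\widetilde U_i,Y_i)\le \log M-\gamma,\ \sum_i d(x_i,\widetilde g(\widetilde U_i,Y_i))\le nD\Bigr\}-\text{(vanishing terms)}.
\]

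\textbf{Gaussian approximation.} Conditioned on $x^n$, the pair of sums consists of independent non-identical terms, so a multidimensional Berry--Esseen theorem applies. The conditional means are $n\sum_x Q(x)\jW{x}$-type quantities: the centering uses $\jmath_P(x|D)$ and the optimal active constraint~\eqref{eq:optimal-active-constraint}. The conditional covariance is $n\Sigma(P_Q)+o(n)$, by continuity of $\Sigma$ in the type near $P_X$. With $\log M=nR_{\mathrm{WZ}}(D)+L\sqrt n$, the conditional success probability is therefore at least
\[
\Phi_{\Sigma(P_Q)}(L-\Lambda_n-\lambda^*t_Q,\,t_Q)-o(1),
\]
uniformly over types in a $\sqrt{\log n/n}$-neighbourhood of $P_X$. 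Proposition~\ref{prop:optimal-properties}(iii)--(iv) guarantees that this centering is independent of the choice of $P$. Averaging over types, using $\Lambda_n\Rightarrow\Lambda$, and applying Fatou's lemma to the near-optimal selection gives $\liminf_n \bbP\{\text{success}\}\ge F_*(L)-2\delta$. Letting $\delta\to0$ yields the claim. Unlike Lemma~\ref{lem:lili}, this argument requires no restriction $L\ge0$.

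\textbf{Main obstacle.} The hardest step is the uniformity of the expansion. The Berry--Esseen remainder and the Taylor expansion of the rate under the $t/\sqrt n$ perturbation must hold uniformly over the finite grid of $(P,t)$ and over types near $P_X$. In particular, when $\Sigma(P)$ is singular, the multidimensional Berry--Esseen theorem requires smoothing. In that case we would bound the probability of the corresponding degenerate event directly, or add a small independent Gaussian perturbation to $\Sigma$ and then use continuity of $\Phi_\Sigma$ in $\Sigma$ at the boundary points.
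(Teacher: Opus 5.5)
\textbf{The gap: the Gaussian step as you set it up.} Your architecture matches the paper's: describe the type, pick an optimal test channel and an allocation as a function of it, apply a Poisson-matching one-shot bound, use a conditional Gaussian approximation, and average over types. The Gaussian step fails, however, and the failure is exactly what separates an optimal construction from a lossy one.

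You apply the perturbed test channel i.i.d. This is visible in your single-letter sum $\sum_i\imath(x_i,\widetilde U_i,Y_i)$ and in your claim that the summands are independent given $x^n$. You then assert that the conditional covariance given $x^n$ is $n\Sigma(P_Q)+o(n)$, which is false. In \eqref{eq:cov}, $\Sigma(P)$ is the covariance conditional on \emph{both} $X$ and $\widetilde U$. Given $x^n$ alone, the randomness of the labels $\widetilde U_i$ also contributes.

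By \eqref{eq:stationarity}, $\bar\imath(x,j)+\lambda^*\bard(x,j)$ does not depend on $j$, so this extra randomness enters along the direction $(-\lambda^*,1)$. Set $W_n:=n^{-1/2}\sum_i(\bard(x_i,\widetilde U_i)-\bard_P(x_i))$. Then $W_n\Rightarrow W\sim\calN(0,\widetilde V(P))$, asymptotically independent of the side-information fluctuation $(J,G)\sim\calN(\bzero_2,\Sigma(P))$. Your success event becomes $\{J\le L-\Lambda-\lambda^*(t-W),\ G\le t-W\}$.

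Its probability is $\bbE_W[\Phi_{\Sigma(P)}(L-\Lambda-\lambda^*(t-W),t-W)]$, an average over a randomized allocation. This is at most, and in general strictly below, $\sup_t\Phi_{\Sigma(P)}(L-\Lambda-\lambda^*t,t)$, because $\widetilde V(P)\ge\underline{\widetilde V}>0$. That is the same positivity you invoke to build the perturbation direction. So this code does not reach $F_*(L)$. Drawing a constant-composition codebook from the marginal does not fix it either: that fixes the type of $\widetilde u^n$, not its joint type with $x^n$.

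\textbf{The missing idea: fix the conditional type of the auxiliary sequence given $x^n$.} The paper makes this type deterministic as follows.
\begin{enumerate}
\item $P_n^{\rm a}(\cdot|x^n)$ is uniform on the conditional type class $\calT_{\hatP_n}(x^n)$ of the rounded perturbed kernel \eqref{eq:perturb}. This kernel also cancels the type-induced distortion shift $\langle\bxi,\bard_{\btheta}\rangle$.
\item The decoder keeps the product reference $\prod_iP_{\btheta}(u_i|y_i)$.
\item The mismatch $\log P_n^{\rm a}(U^n|X^n)-\sum_i\log P_{\btheta}(U_i|X_i)$ is controlled by conditional-type cardinality bounds. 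It is an $n\operatorname{KL}$ term of order $1+\|\bxi\|^2+t^2+\log n$.
\item The CLT is proved conditionally on $(X^n,U^n)$, where only $Y^n$ is random. The conditional means are then deterministic functions of the joint type, and the limit covariance is exactly $\Sigma(\btheta)$ (Lemma~\ref{lem:types}).
\end{enumerate}

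\textbf{A secondary issue: singular covariances.} These genuinely occur. Both endpoint channels in the quaternary example are singular, and they are the ones attaining $F_*$. So Berry--Esseen is unavailable exactly where it matters. Your finite-grid step ``losing only $\delta$'' uniformly in the threshold is also unjustified there, because $\Phi_\Sigma$ is not uniformly continuous in $\Sigma$ at singular $\Sigma$.

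The paper handles this instead with:
\begin{itemize}
\item weak convergence with a positive threshold slack $\eta$, paid for by the rate $L+(1+\lambda^*)\eta$, together with the open-set Portmanteau bound;
\item the pointwise monotone finite approximation of Lemma~\ref{lem:finite-approximation}, with monotone convergence over $\Lambda$;
\item cells whose endpoints have zero Gaussian probability, to average the lattice-valued $\Lambda_n$;
\item a diagonal choice of designs. Your ``letting $\delta\to0$'' also needs this, since your code depends on $\delta$.
\end{itemize}
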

The proof of Theorem~\ref{thm:achievability} is provided in Section~\ref{sec:achievability} and proceeds in three steps. Firstly, we derive a finite-blocklength upper bound on the excess-distortion probability via the one-shot coding method (cf. Lemma~\ref{lem:oneshot}) by choosing an optimal test channel for each source type and perturbing its mean distortion. Subsequently, we use the conditional means and Gaussian approximation (cf. Lemma~\ref{lem:types}) to bound lower bound the non-excess-distortion probability by the probability of a joint rate and distortion event. Finally, we use finite approximations and average over source types to show that $F_*(L)$ is the asymptotic lower bound for the non-excess distortion probability, which yield the desired result of Theorem~\ref{thm:achievability} by using the quantile properties (cf. Lemma~\ref{lem:gaussian}).

Combining Theorems~\ref{thm:main} and~\ref{thm:achievability} yields the following exact second-order asymptotics for the discrete Wyner-Ziv problem.
\begin{corollary}\label{cor:secondorder}
Under Assumptions~(\ref{ass:finite})--(\ref{ass:interior}),
\begin{align}
L_{\mathrm{WZ}}^*(D,\veps)=L_*(\veps).\label{eq:main-coefficient}
\end{align}
\end{corollary}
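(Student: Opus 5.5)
Corollary~\ref{cor:secondorder} follows directly once Theorems~\ref{thm:main} and~\ref{thm:achievability} are in hand. The plan is simply to chain the two inequalities. Both are stated under the same Assumptions~(\ref{ass:finite})--(\ref{ass:interior}), for the same fixed $D$ and $\veps\in(0,1)$, and for the same quantity $L_{\mathrm{WZ}}^*(D,\veps)$ from Definition~\ref{def:second}. Hence
\begin{align}
L_*(\veps)\leq L_{\mathrm{WZ}}^*(D,\veps)\leq L_*(\veps),
\end{align}
and equality follows.

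Before combining, I would check that the two bounds are well posed, so that the comparison is not between possibly infinite or mismatched objects. There are three points.

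\begin{enumerate}
\item \emph{Same first-order term.} Both theorems must measure the second-order term relative to the same first-order rate. By the strong converse~\eqref{eq:first-order-strong-converse}, $R^*(D,\veps)=R_{\mathrm{WZ}}(D)$ for every $\veps\in(0,1)$, so Definition~\ref{def:second} is centered at $R_{\mathrm{WZ}}(D)$ in both cases.
\item \emph{$F_*$ is a proper distribution function.} $F_*$ is nondecreasing in $L$, because $\Phi_{\Sigma}(\ell_1,\ell_2)$ is nondecreasing in $\ell_1$. It tends to $0$ as $L\to-\infty$ and to $1$ as $L\to\infty$. The limit $1$ holds because, for fixed $t$ and any $P$, the integrand tends to $\Phi$ evaluated at $(+\infty,t)$, and then $t$ can be sent to $+\infty$. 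The limit $0$ holds because $\calK_D$ is compact by Claim~(ii) of Proposition~\ref{prop:optimal-properties}, so the covariances are uniformly bounded. Then $\sup_t\Phi_\Sigma(L-\Lambda-\lambda^*t,t)\leq\sup_t\min\{\bbP\{J_1\leq L-\Lambda-\lambda^* t\},\bbP\{J_2\leq t\}\}\to0$ uniformly, using $\lambda^*>0$ from Claim~(i). Consequently the quantile $L_*(\veps)$ in~\eqref{eq:quantile} is a finite real number for every $\veps\in(0,1)$.
\item \emph{No residual restriction on $\veps$.} Unlike Lemma~\ref{lem:lili}, Theorem~\ref{thm:achievability} carries no sign condition such as $L_*(\veps)\geq0$. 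So no case split on $\veps$ is needed.
\end{enumerate}

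With these checks, the sandwich above gives $L_{\mathrm{WZ}}^*(D,\veps)=L_*(\veps)$.

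There is no real obstacle at the level of the corollary, since all the difficulty lives in the two theorems. The only step requiring care is the finiteness of $L_*(\veps)$, which relies on compactness of $\calP^*(D)$ and on $\lambda^*\in(0,\infty)$.
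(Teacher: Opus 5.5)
Your proposal is correct and matches the paper, which obtains Corollary~\ref{cor:secondorder} by combining the converse $L_{\mathrm{WZ}}^*(D,\veps)\geq L_*(\veps)$ from Theorem~\ref{thm:main} with the achievability bound $L_{\mathrm{WZ}}^*(D,\veps)\leq L_*(\veps)$ from Theorem~\ref{thm:achievability}. Your extra checks (finiteness of $L_*(\veps)$ and common centering at $R_{\mathrm{WZ}}(D)$) are harmless; the paper already establishes the finiteness of the quantile in the proof of Lemma~\ref{lem:gaussian}.
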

The authors of~\cite{lili} conjectured that the achievability result in~\cite[Theorem~15]{lili} is second-order asymptotically optimal for the discrete Wyner-Ziv problem. In the following lemma, we show that the conjecture is generally not true and also provide sufficient conditions for the result in~\cite[Theorem~15]{lili} to be optimal for second-order asymptotics.

\begin{lemma}\label{lem:comparison}
Under Assumptions~(\ref{ass:finite})--(\ref{ass:interior}), for every $L\in\bbR$,
\begin{align}
F_{\mathrm{f}}(L)\leq F_*(L),\label{eq:comparison-bound}
\end{align}
with equality if and only if there exists a matrix $\Sigma_0\in\calK_D$ such that
\begin{align}
\sup_{t\in\bbR}\Phi_{\Sigma_0}(L-{{\Lambda}}-{\lambda^*}t,t)
=\sup_{(\Sigma,t)\in\calK_D\times\bbR}\Phi_\Sigma(L-{{\Lambda}}-{\lambda^*}t,t)
\quad\text{almost surely}.\label{eq:comparison-equality}
\end{align}
In particular, the equality holds under the following two conditions.
\begin{enumerate}
\item If $\calK_D$ is a singleton with the element $\Sigma^*$, i.e., $\Sigma(P):=\Sigma^*$ for any $P\in\calP^*(D)$,
\begin{align}
F_{\mathrm{f}}(L)=F_*(L)=\bbE_{{{\Lambda}}}\Bigl[\sup_{t\in\bbR}\Phi_{\Sigma^*}(L-{{\Lambda}}-{\lambda^*}t,t)\Bigr].\label{eq:common-covariance-case}
\end{align}
\item If the source dispersion equals zero, i.e., $V=0$,
\begin{align}
F_{\mathrm{f}}(L)=F_*(L)=\sup_{(\Sigma,t)\in\calK_D\times\bbR}\Phi_\Sigma(L-{\lambda^*}t,t).\label{eq:zero-source-variance-case}
\end{align}
\end{enumerate}
\end{lemma}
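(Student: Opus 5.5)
Writing $G(\Sigma,\lambda):=\sup_{t\in\bbR}\Phi_\Sigma(L-\lambda-\lambda^*t,t)$ and $G^*(\lambda):=\sup_{\Sigma\in\calK_D}G(\Sigma,\lambda)$, the plan is to reduce everything to elementary properties of these two functions. By Proposition~\ref{prop:covariance-family}, $F_{\mathrm{f}}(L)=\sup_{\Sigma\in\calK_D}\bbE_\Lambda[G(\Sigma,\Lambda)]$. Since the supremum over $(P,t)$ in~\eqref{eq:Fstar} depends on $P$ only through $\Sigma(P)$, we also have $F_*(L)=\bbE_\Lambda[G^*(\Lambda)]$. The inequality~\eqref{eq:comparison-bound} is then immediate: for each fixed $\Sigma\in\calK_D$ we have $G(\Sigma,\lambda)\le G^*(\lambda)$ pointwise, so taking expectations and then the supremum over $\Sigma$ gives $F_{\mathrm{f}}(L)\le F_*(L)$. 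All quantities lie in $[0,1]$, so measurability and integrability are harmless. I will check measurability of $G^*$ by noting that it is monotone nonincreasing in $\lambda$: each $\Phi_\Sigma$ is nondecreasing in its first argument.

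For the characterization of equality, sufficiency is direct. If some $\Sigma_0$ satisfies~\eqref{eq:comparison-equality}, then $F_*(L)=\bbE[G(\Sigma_0,\Lambda)]\le F_{\mathrm{f}}(L)$, and equality follows.

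The necessity direction is the main obstacle, because $F_{\mathrm{f}}$ is a supremum that need not be attained. The plan is as follows.
\begin{enumerate}
\item Show that $\calK_D$ is compact. Proposition~\ref{prop:optimal-properties}(ii) gives compactness of $\calP^*(D)$, and $P\mapsto\Sigma(P)$ is continuous on $\calP^*(D)$. For the continuity I would check that $\imath_P$ enters only on the support and that terms with vanishing mass contribute vanishing amounts; this may require some care near the boundary of the support.
\item Show that $\Sigma\mapsto\bbE[G(\Sigma,\Lambda)]$ is upper semicontinuous on $\calK_D$. I would argue this from the estimate
\begin{align}
\sup_{t\in\bbR}\bigl|\Phi_\Sigma(a-\lambda^* t,t)-\Phi_{\Sigma'}(a-\lambda^* t,t)\bigr|\to0\quad\text{as }\Sigma'\to\Sigma,
\end{align}
which holds because Gaussian cdfs converge uniformly under weak convergence. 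Degenerate covariance matrices give discontinuous cdfs, so I would handle them via upper semicontinuity at closed orthants, obtained from the portmanteau theorem.
\item Conclude that the supremum defining $F_{\mathrm{f}}(L)$ is attained at some $\Sigma_0\in\calK_D$.
\end{enumerate}
Once $\Sigma_0$ is available, equality gives $\bbE[G^*(\Lambda)-G(\Sigma_0,\Lambda)]=0$ with a nonnegative integrand, so the integrand vanishes almost surely, which is exactly~\eqref{eq:comparison-equality}.

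Both special cases follow from the equality criterion.
\begin{itemize}
\item In case (i), $\calK_D=\{\Sigma^*\}$, so $G=G^*$ identically and~\eqref{eq:comparison-equality} holds trivially. The formula~\eqref{eq:common-covariance-case} is then just the definition.
\item In case (ii), $\Lambda=0$ almost surely, so the expectations collapse. Both sides equal $\sup_{\Sigma}\sup_t\Phi_\Sigma(L-\lambda^*t,t)$, since $F_{\mathrm{f}}(L)=\sup_\Sigma G(\Sigma,0)=G^*(0)=F_*(L)$. This needs no attainment argument at all.
\end{itemize}
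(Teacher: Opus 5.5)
Your proposal is correct and follows essentially the same route as the paper. The inequality comes from exchanging the supremum over $\calK_D$ with the expectation via Proposition~\ref{prop:covariance-family}, and the equality characterization comes from a maximizing $\Sigma_0\in\calK_D$ together with the fact that the nonnegative difference $\Theta_L(\Lambda)$ has mean $F_*(L)-F_{\mathrm{f}}(L)$; the two special cases are handled identically. The maximizer exists by compactness of $\calK_D$ and upper semicontinuity of $\Sigma\mapsto\bbE[\sup_t\Phi_\Sigma(L-\Lambda-\lambda^*t,t)]$. Appendix~\ref{app:geometry} carries out your step~2 for degenerate covariances just as you sketch, and it also supplies the one detail you leave implicit: near-maximizing allocations are bounded using $\lambda^*>0$ and uniform Gaussian tails, followed by the closed-set portmanteau inequality and bounded reverse Fatou. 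That appendix gets compactness more cheaply, as the image of the compact set $\calA$ under the linear map $\btheta\mapsto\Sigma(\btheta)$.
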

\begin{proof}
Fix any real number $L\in\bbR$. It follows from the definitions of $F_{\mathrm{f}}(L)$ and $F_*(L)$ in~\eqref{eq:fixed-versus-adaptive} and~\eqref{eq:Fstar}, and Proposition~\ref{prop:covariance-family} that
\begin{align}
F_{\mathrm{f}}(L)
&=\sup_{\Sigma\in\calK_D}\bbE_{{{\Lambda}}}\Bigl[\sup_{t\in\bbR}
\Phi_\Sigma(L-{{\Lambda}}-{\lambda^*}t,t)\Bigr],\label{eq:comparison-fixed-family}\\
&\leq\bbE_{{{\Lambda}}}\Bigl[\sup_{\Sigma\in\calK_D}\sup_{t\in\bbR}
\Phi_\Sigma(L-{{\Lambda}}-{\lambda^*}t,t)\Bigr],\label{eq:comparison-expectation}\\
&=F_*(L),\label{eq:comparison-full-family}
\end{align}
where~\eqref{eq:comparison-fixed-family} follows from~\eqref{eq:fixed-versus-adaptive} and Proposition~\ref{prop:covariance-family}, and~\eqref{eq:comparison-full-family} follows from~\eqref{eq:Fstar}.

To prove the equality condition, note that $\calK_D$ is compact since $\calP^*(D)$ is compact and $\Sigma(P)$ is continuous. For any ${{\ell}}\in\bbR$, the function $\Sigma\mapsto\sup_{t\in\bbR}\Phi_\Sigma({{\ell}}-\lambda^*t,t)$ is upper semi-continuous. For a maximizer $\Sigma_0$ of the covariance supremum in~\eqref{eq:fixed-covariance-representation}, lead to n any ${{\ell_{\rm s}}}\in\bbR$, define
\begin{align}
{{{\Theta}}}_L({{\ell_{\rm s}}}):=\sup_{(\Sigma,t)\in\calK_D\times\bbR}\Phi_\Sigma(L-{{\ell_{\rm s}}}-\lambda^*t,t)
-\sup_{t\in\bbR}\Phi_{\Sigma_0}(L-{{\ell_{\rm s}}}-\lambda^*t,t)\label{eq:comparison-difference}
\end{align}
It follows from the definition of $F_{\mathrm{f}}(L)$ and $F_*(L)$ in~\eqref{eq:fixed-versus-adaptive} and~\eqref{eq:Fstar} that
\begin{align}
\bbE_{{\Lambda}}[{{{\Theta}}}_L({{\Lambda}})]=F_*(L)-F_{\mathrm{f}}(L).\label{eq:comparison-difference-mean}
\end{align}
Thus, $F_*(L)=F_{\mathrm{f}}(L)$ when ${{{\Theta}}}_L({{\Lambda}})=0$ almost surely, which proves~\eqref{eq:comparison-equality}. Furthermore, when singleton $\calK_D$ is the singleton, the optimization over the covariance disappears in both $F_*(L)$ and $F_{\mathrm{f}}(L)$, which proves~\eqref{eq:common-covariance-case}; if $V=0$, then ${{\Lambda}}=0$ almost surely, which implies the result in~\eqref{eq:zero-source-variance-case}.
\end{proof}

It follows from Lemma~\ref{lem:comparison} and the definitions of $L_{\mathrm{f}}(\cdot)$ and $L_*(\cdot)$ in~\eqref{eq:fixed-quantile-definition} and~\eqref{eq:quantile} that for any $\veps\in(0,1)$ such that $L_{\mathrm{f}}(\veps)\geq0$,
\begin{align}
L_{\mathrm{f}}(\veps)\geq L_*(\veps),\label{eq:comparison-quantiles}
\end{align}
with equality under the same conditions of Lemma~\ref{lem:comparison}. Furthermore, we provide two numerical examples in Section~\ref{sec:numerical}, where the binary asymmetric example numerically demonstrates the second-order asymptotic optimality of the achievability bound in~\cite[Theorem~15]{lili}, while the other quaternary example shows that~\cite[Theorem~15]{lili} is strictly suboptimal, i.e., $L_{\mathrm{f}}(\veps)>L_*(\veps)$.

\subsection{Numerical Examples}
\label{sec:numerical}
In this section, we provide two numerical examples: a binary asymmetric source and a quaternary source. For the first example, our numerical evaluation shows that the achievability bound in~\cite[Theorem~15]{lili} is second-order asymptotically optimal. The numerical optimization identifies a singleton optimal covariance family, which is a sufficient condition for optimality by Lemma~\ref{lem:comparison}. For the second example, the set $\calK_D$ is not singleton and the achievability result in~\cite[Theorem~15]{lili} is strictly suboptimal compared to our exact second-order asymptotic result in Corollary~\ref{cor:secondorder}.

\begin{table}[!htbp]
\centering
\caption{Comparison of second-order asymptotics}\label{tab:nonunique}
\begin{tabular}{cccc}
\toprule
$\veps$ & \shortstack{Ach result~\cite[Theorem~15]{lili}\\$L_{\mathrm{f}}(\veps)$} & \shortstack{Exact Second-Order\\(Corollary~\ref{cor:secondorder}): $L_*(\veps)$} & \shortstack{Second-order rate reduction\\$\Delta L$ (percentage)}\\
\midrule
$0.05$ & $0.55$ & $0.51$ & $0.04$ ($6.78\%$)\\
$0.10$ & $0.43$ & $0.38$ & $0.05$ ($10.77\%$)\\
$0.20$ & $0.28$ & $0.22$ & $0.06$ ($21.35\%$)\\
\bottomrule
\end{tabular}
\end{table}

\subsubsection{Binary Asymmetric Source}
\label{sec:bac-example}
Let $\calX=\calY=\calZ=\{0,1\}$, $P_X=(0.65,0.35)$, and $d(x,z)=\bbI\{x\ne z\}$ for $(x,z)\in\calX\times\calZ$. Set $D=0.05$ and $\veps=0.1$. The side-information channel, with rows and columns indexed by $(x,y)\in\{0,1\}^2$, is
\begin{align}
P_{Y|X}=\begin{pmatrix}0.92&0.08\\0.20&0.80\end{pmatrix}.\label{eq:bac-channel}
\end{align}
It follows from the definitions of $R_{\mathrm{WZ}}(\cdot)$, $L_{\mathrm{f}}(\cdot)$, and $L_*(\cdot)$ from~\eqref{eq:fixed-alphabet},~\eqref{eq:fixed-quantile-definition}, and~\eqref{eq:quantile} that
\begin{align}
R_{\mathrm{WZ}}(0.05)&\approx0.21,\label{eq:bac-main-values}\\
L_{\mathrm{f}}(0.1)\approx L_*(0.1)&\approx1.69,
\end{align}
where calculation details are provided in Appendix~\ref{app:bac}.

\subsubsection{A Quaternary Source}
\label{sec:nonunique-example}
Let $X=({{X_{\rm p}}},{{X_{\rm b}}})\in\calX=\{0,1\}^2$, $Y={{X_{\rm b}}}\oplus {{N}}\in\calY=\{0,1\}$, and $Z=({{Z_{\rm p}}},{{Z_{\rm b}}})\in\calZ=\{0,1,*\}\times\{0,1\}$, where ${{X_{\rm p}}}\sim\Ber(r)$, ${{X_{\rm b}}}\sim\Ber(1/2)$, and ${{N}}\sim\Ber(p)$ are mutually independent Bernoulli random variables. Let $(p,r,{q_{\rm p}})=(2/5,1/3,1/10)$ be three real numbers.
For $x=({{x_{\rm p}}},{{x_{\rm b}}})\in\calX$, $y\in\calY$, and $z=({{z_{\rm p}}},{{z_{\rm b}}})\in\calZ$, define the matrix
\begin{align}
\bigl(d_{{\rm p}}({{x_{\rm p}}},{{z_{\rm p}}})\bigr)_{{{x_{\rm p}}},{{z_{\rm p}}}}:=\begin{pmatrix}
0&\log9&\log(27/20)\\
\log9&0&\log(27/10)
\end{pmatrix},\label{eq:prediction-cost-table}
\end{align}
and the distortion function
\begin{align}
d(x,z):=d_{{\rm p}}({{x_{\rm p}}},{{z_{\rm p}}})+{{w}}\bbI\{{{x_{\rm b}}}\ne{{z_{\rm b}}}\},\label{eq:example-distortion}
\end{align}
where
\begin{align}
{{w}}:=\frac{{{{\omega}}}_p(\alpha)}{p-\alpha},\label{eq:example-tangent}
\end{align}
and $\alpha\in(0,p)$ is the unique solution to ${{{\omega}}}_p(\alpha)+(p-\alpha){{{\omega}}}_p'(\alpha)=0$ with
\begin{align}
{{{\omega}}}_p(a):=h_2(p+(1-2p)a)-h_2(a),\ ,\ a\in(0,1).\label{eq:example-phi}
\end{align}
Appendix~\ref{app:four-symbol-proof} calculates that $\alpha\approx0.31$ and ${{w}}\approx0.79$, so $0\leq d(x,z)\leq\log9+{{w}}$. Set the distortion level
\begin{align}
D:=q_{\rm p}\log\frac{1-q_{\rm p}}{q_{\rm p}}+{w}p,=h_2({q_{\rm p}})+\log(1-{q_{\rm p}})+{{w}}p\approx0.54.\label{eq:example-threshold}
\end{align}
Appendix~\ref{app:four-symbol-model} verifies $D_{\min}=0<D<D_0$.

Appendices~\ref{app:four-symbol-model} and~\ref{app:four-symbol-proof} show that $R_{\mathrm{WZ}}(D)\approx0.31$ and calculate the second-order rates $(L_{\mathrm{f}}(\veps),L_*(\veps))$. The calculated results are summarized in Table~\ref{tab:nonunique}, where $\Delta L:=L_{\mathrm{f}}(\veps)-L_*(\veps)$. In particular, when $\veps=0.1$, our exact second-order coding rate is $10.77\%$ smaller than the achievability result of~\cite[Theorem~15]{lili} and when $\veps=0.2$, the reduction is more than $20\%$.

\FloatBarrier

\section{Converse Proof (Proof of Theorem~\ref{thm:main})}
\label{sec:converse}
Fix $D\in(D_{\min},D_0)$ and any sequence of $(n,M)$-codes per Def.~\ref{def:code}. We prove Theorem~\ref{thm:main} by upper bounding its non-excess-distortion probability $\Pcn(D)$ when
\begin{align}
\log M\leq nR_{\mathrm{WZ}}(D)+\sqrt nL+o(\sqrt n),\label{eq:rateconverse}
\end{align}
where
\begin{align}
\Pcn(D)&=1-\Pen(D),=\Pr\{{d^{(n)}}(X^n,Z^n)\leq D\}.\label{eq:correct-probability}
\end{align}
In subsequent analysis, for ease of notation, let
\begin{align}
{\calE_n}:=\{{d^{(n)}}(X^n,Z^n)\leq D\}\label{eq:success-event}
\end{align}
denote the non-excess-distortion event. In this section, we show that
\begin{align}
\limsup_{n\to\infty}\Pr\{{\calE_n}\}\leq F_*(L).\label{eq:main-converse}
\end{align}
Thus, if $L<L_*(\veps)$, the right side is below $1-\veps$, which implies the desired converse bound $L_{\mathrm{WZ}}^*(D,\veps)\geq L_*(\veps)$ in Theorem~\ref{thm:main}.

\subsection{First-Order Optimal Posterior Distributions}
\label{sec:preliminaries}
We first rewrite the first-order optimization in terms of posterior mixtures using Lemma~\ref{lem:posterior} in this subsection. The resulted supporting gap and contact distributions are used in Lemma~\ref{lem:localization} in the next subsection and the source-measurable covariance in Lemma~\ref{lem:selectors} in the next next subsection.

Recall from~\eqref{eq:correct-probability}--\eqref{eq:success-event} that the non-excess-distortion probability is $\Pcn(D)=\Pr\{{\calE_n}\}$. To bound $\Pcn(D)$, we first identify the optimal posteriors for the first-order rate $R_{\mathrm{WZ}}(D)$ (cf. Eq.~\eqref{eq:wzclassic}). Note that the source distribution is $P_X$. Fix any test channel $P_{{U}|X}\in\calP({\calU}|\calX)$. For each ${u}\in{\calU}$, define the following marginal distribution
\begin{align}
{P_{U}}({u}):=\sum_{x\in\calX}P_X(x)P_{{U}|X}({u}|x),\label{eq:auxiliary-marginal}
\end{align}
and for each ${u}\in{\calU}$ such that ${P_{U}}({u})>0$, let
\begin{align}
{Q_{u}}(x):=P_{X|{U}}(x|{u})=\frac{P_X(x)P_{{U}|X}({u}|x)}{{P_{U}}({u})}.\label{eq:auxiliary-posterior}
\end{align}
In Lemma~\ref{lem:posterior} below, ${Q_{u}}$ is a posterior vector optimized together with its weight and $P_{X|{U}}(\cdot|{u})$ is the conditional distribution induced by an auxiliary test channel.

Recall that $P_{Y|X}$ is the induced channel by the joint source and side information distribution $P_{XY}$. For any distribution $Q\in\calP(\calX)$ and $(x,y)\in\calX\times\calY$, define
\begin{align}
Q_{XY}(x,y)&:=Q(x)P_{Y|X}(y|x),\\
Q_Y(y)&:=\sum_{x\in\calX}Q(x)P_{Y|X}(y|x).\label{eq:posterior-output}
\end{align}
For $(x,y)\in\calX\times\calY$ with $Q_Y(y)>0$, define
\begin{align}
Q_{X|Y}(x|y):=\frac{Q(x)P_{Y|X}(y|x)}{Q_Y(y)}.\label{eq:posterior-conditional}
\end{align}
Furthermore, define the following two quantities:
\begin{align}
\rho(Q)&:=-H_Q(X|Y),\label{eq:posterior-entropy}\\
\delta(Q)&:=\sum_{y\in\calY}\min_{z\in\calZ}\sum_{x\in\calX} Q(x)P_{Y|X}(y|x)d(x,z),\label{eq:posterior-risk}
\end{align}
where $H_Q(X|Y)$ is the conditional entropy under $Q_{XY}$, and $\delta(Q)$ is the minimum expected distortion when the source has distribution $Q$ and the decoder observes $Y$. It follows from the definition of $D_0$ in~\eqref{eq:D0} that $\delta(P_X)=D_0$.

The following lemma gives an alternative form for $R_{\mathrm{WZ}}(D)$ in Eq.~\eqref{eq:wzclassic}.

\begin{lemma}\label{lem:posterior}
The first-order coding rate $R_{\mathrm{WZ}}(D)$ has the following representation
\begin{align}
R_{\mathrm{WZ}}(D)=\min_{\substack{{m_0}\in\bbN,\ (\btheta,({Q_{u}})_{{u}\in[{m_0}]})\in\bbR_+^{{m_0}}\times\calP(\calX)^{{m_0}}\\
\sum_{{u}\in[{m_0}]}{\theta_{u}}{Q_{u}}=P_X,\;\sum_{{u}\in[{m_0}]}{\theta_{u}}\delta({Q_{u}})\leq D}}
\Bigl\{\sum_{{u}\in[{m_0}]}{\theta_{u}}\rho({Q_{u}})-\rho(P_X)\Bigr\}.\label{eq:firstorder}
\end{align}
\end{lemma}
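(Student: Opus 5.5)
We prove the representation~\eqref{eq:firstorder} by showing two inequalities between its right side, call it $R_{\mathrm{post}}(D)$, and the form~\eqref{eq:wzclassic}. The identity underlying both is a rewriting of the objective. For any test channel $P_{U|X}$ with the Markov chain $U-X-Y$, set $\theta_u:=P_U(u)$ and $Q_u:=P_{X|U}(\cdot|u)$ as in~\eqref{eq:auxiliary-marginal}--\eqref{eq:auxiliary-posterior}. The Markov chain gives $P_{XY|U}(x,y|u)=Q_u(x)P_{Y|X}(y|x)$, so $P_{XY|U=u}=(Q_u)_{XY}$ in the notation of~\eqref{eq:posterior-output}. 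Hence
\begin{align}
I(X;U|Y)=H(X|Y)-H(X|Y,U)=-\rho(P_X)+\sum_{u}\theta_u\rho(Q_u),
\end{align}
which is exactly the objective in~\eqref{eq:firstorder}. Conversely, any $(\btheta,(Q_u))$ with $\btheta\in\bbR_+^{m_0}$ and $\sum_u\theta_uQ_u=P_X$ has $\sum_u\theta_u=1$, since both sides sum to one over $x$. It therefore defines a test channel $P_{U|X}(u|x):=\theta_uQ_u(x)/P_X(x)$, which is well defined because $P_X(x)>0$ by Assumption~(\ref{ass:finite}), and whose posteriors are the $Q_u$ for $\theta_u>0$. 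Terms with $\theta_u=0$ contribute nothing and can be dropped.

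\emph{Step 1 ($R_{\mathrm{post}}(D)\le R_{\mathrm{WZ}}(D)$).} Take a minimizer $(m_0,P_{U|X},g)$ of~\eqref{eq:wzclassic}. Conditioning on $U=u$, we have
\begin{align}
\bbE[d(X,g(U,Y))|U=u]=\sum_y\sum_x Q_u(x)P_{Y|X}(y|x)d(x,g(u,y))\ge\delta(Q_u),
\end{align}
because $\delta(Q_u)$ minimizes each inner sum over $z$. Therefore $\sum_u\theta_u\delta(Q_u)\le\bbE[d]\le D$, so the induced $(\btheta,(Q_u))$ is feasible for~\eqref{eq:firstorder} and attains the same objective value.

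\emph{Step 2 ($R_{\mathrm{post}}(D)\ge R_{\mathrm{WZ}}(D)$).} Given a feasible $(\btheta,(Q_u))$ for~\eqref{eq:firstorder}, build $P_{U|X}$ as above and let $g(u,y)$ be a minimizer over $z$ of $\sum_xQ_u(x)P_{Y|X}(y|x)d(x,z)$. Then $\bbE[d(X,g(U,Y))]=\sum_u\theta_u\delta(Q_u)\le D$, and the objective equals $I(X;U|Y)$. The one catch is that~\eqref{eq:wzclassic} restricts $m_0\le|\calX|+1$, whereas~\eqref{eq:firstorder} allows any $m_0\in\bbN$.

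\emph{Main obstacle: cardinality reduction.} We expect Step 2's restriction on $m_0$ to be the only nontrivial point, and we handle it with a Carathéodory/Fenchel--Eggleston argument. Consider the vectors $(Q_u(x))_{x\in\calX\setminus\{x_0\}}$ together with $\rho(Q_u)$ and $\delta(Q_u)$, which gives $|\calX|+1$ coordinates. The point $\bigl(P_X,\sum_u\theta_u\rho(Q_u),\sum_u\theta_u\delta(Q_u)\bigr)$ is a convex combination of these vectors. By Carathéodory's theorem it can be rewritten using at most $|\calX|+2$ of them while keeping the marginal constraint and both sums fixed. We then reduce by one more point by the standard trick: keep only the marginal and the $\rho$-sum equal, and let the $\delta$-sum not increase. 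Concretely, minimize the linear functional $\sum_u\theta'_u\delta(Q_u)$ over the polytope of weights matching the first $|\calX|$ equality constraints. An extreme point of this polytope has at most $|\calX|$ nonzero weights, which is within $|\calX|+1$. This yields a feasible point of~\eqref{eq:wzclassic} with the same objective, completing Step 2.

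Existence of the minimum in~\eqref{eq:firstorder} follows from Step 1 (the value $R_{\mathrm{WZ}}(D)$ is attained) combined with Step 2.
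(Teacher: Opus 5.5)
Your proposal is correct and follows the paper's proof: the same posterior/test-channel correspondence, the same Bayes-reconstruction bound $\sum_u\theta_u\delta(Q_u)\le\bbE[d]$, and a cardinality reduction built on the same extreme-point idea with the roles of $\rho$ and $\delta$ swapped. The paper keeps the mean and the distortion fixed and lets the objective decrease along a linear-dependence direction; you keep the mean and the objective fixed and let the distortion decrease. One small slip: your weight polytope has $|\calX|+1$ equality constraints (the full marginal, which includes normalization, plus the $\rho$-sum), so a vertex has at most $|\calX|+1$ nonzero weights rather than $|\calX|$. That is still exactly the bound you need, and the preliminary Carath\'eodory step is unnecessary, since the LP vertex argument applies directly to the original $m_0$-component mixture.
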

\begin{proof}
Fix any feasible $(P_{U|X},g)$ in~\eqref{eq:wzclassic}. Discard zero-probability auxiliary labels and relabel the remaining labels by $[m_0]$. Set $\theta_u=P_U(u)$ and use $Q_u$ in~\eqref{eq:auxiliary-posterior}. Marginalizing $U$ gives $\sum_{u\in[m_0]}\theta_uQ_u=P_X$. Given $U=u$, the joint distribution of $(X,Y)$ is $Q_u(x)P_{Y|X}(y|x)$. Thus, the minimizing reconstructions in~\eqref{eq:posterior-risk} yield $\sum_{u\in[m_0]}\theta_u\delta(Q_u)\leq\bbE[d(X,g(U,Y))]\leq D$. The objective satisfies
\begin{align}
I(X;{U}|Y)
&=H(X|Y)-\sum_{{u}\in[{m_0}]}{\theta_{u}}H(X|Y,{U}={u}),\label{eq:v20-posterior-1}\\
&=\sum_{{u}\in[{m_0}]}{\theta_{u}}\rho({Q_{u}})-\rho(P_X),\label{eq:v20-posterior-2}
\end{align}
where~\eqref{eq:v20-posterior-1} is the conditional mutual information identity, and~\eqref{eq:v20-posterior-2} follows from ${U}-X-Y$ and $\rho(Q)=-H_Q(X|Y)$.

Conversely, discard zero-weight components of a feasible mixture in~\eqref{eq:firstorder} and relabel the remaining components by $[m_0]$. Its mean constraint implies $\sum_{u\in[m_0]}\theta_u=1$. Since $P_X$ has full support, $P_{U|X}(u|x)=\theta_uQ_u(x)/P_X(x)$ defines a test channel whose row sums equal one by the same mean constraint. Under the joint distribution $P_XP_{U|X}P_{Y|X}$, we have $U-X-Y$, $P_U(u)=\theta_u$, and $P_{X|U}(\cdot|u)=Q_u$. Choosing the minimizing reconstruction for each $(u,y)$ in~\eqref{eq:posterior-risk} gives $\bbE[d(X,g(U,Y))]=\sum_{u\in[m_0]}\theta_u\delta(Q_u)\leq D$, with the objective in~\eqref{eq:v20-posterior-2}. Finally, Appendix~\ref{app:posterior-cardinality} reduces every finite mixture to at most $|\calX|+1$ positive weights while preserving its mean and distortion and without increasing its objective. Thus the unrestricted $m_0$ in~\eqref{eq:firstorder} gives the same minimum as the cardinality constraint in~\eqref{eq:wzclassic}.
\end{proof}

The mean constraint in~\eqref{eq:firstorder} implies that the weights sum to one, so this is a linear program over probability measures on the posterior simplex. Its dual is obtained by assigning multipliers to distortion, mean, and unit mass; Appendix~\ref{app:optimal-properties} gives the derivation and proves strong duality and attainment. In the following maximization, $\lambda$, $\bbeta=(\beta(x))_{x\in\calX}$, and $\beta_0$ are optimization variables:
\begin{align}
R_{\mathrm{WZ}}(D)
=\max_{\substack{\lambda\geq0,\ {\bbeta}\in\bbR^{|\calX|},\ {\beta_0}\in\bbR\\
{\beta_0}+\langle{\bbeta},Q\rangle\leq\rho(Q)+\lambda\delta(Q)\ \text{for all }Q\in\calP(\calX)}}
\{{\beta_0}+\langle{\bbeta},P_X\rangle-\rho(P_X)-\lambda D\}.\label{eq:posterior-dual}
\end{align}
Now, fix a maximizing triple $(\lambda^*,\bbeta,\beta_0)$ in~\eqref{eq:posterior-dual}. Appendix~\ref{app:optimal-properties} proves existence and identifies its distortion multiplier as $\lambda^*$~\eqref{eq:multiplier-definition}. Dual feasibility and equality of the optimal objectives imply that for every $Q\in\calP(\calX)$,
\begin{align}
\Delta(Q)&:=\rho(Q)+{\lambda^*}\delta(Q)-{\beta_0}-\langle {\bbeta},Q\rangle\geq0,\label{eq:deficit}\\
R_{\mathrm{WZ}}(D)+{\lambda^*} D&={\beta_0}+\langle {\bbeta},P_X\rangle-\rho(P_X).\label{eq:dualvalue}
\end{align}
The function $\Delta$ is the nonnegative Karush--Kuhn--Tucker (KKT) gap above the supporting affine function $\beta_0+\langle\bbeta,Q\rangle$, which vanishes at every posterior used with positive weight by an optimal mixture. We use the normalization $\langle\bbeta,P_X\rangle=0$, which is justified in Appendix~\ref{app:optimal-properties} that also proves uniqueness of the normalized maximizer.

Define $\calC$ as the zero set of the optimal gap. Appendix~\ref{app:optimal-properties} proves that it is nonempty and finite and that all its elements have full support. Enumerate these contact distributions as $Q_1,\ldots,Q_m$, i.e.,
\begin{align}
\calC&:=\{Q\in\calP(\calX):\;\Delta(Q)=0\}=\{Q_1,\ldots,Q_m\}.\label{eq:zero-gap-posteriors}
\end{align}
Define the associated set of contact-mixture weights by
\begin{align}
\calA:=\Bigl\{\btheta\in\bbR_+^m:
\sum\nolimits_{j\in[m]}\theta_jQ_j=P_X,\quad
\sum\nolimits_{j\in[m]}\theta_j\delta(Q_j)=D\Bigr\}.\label{eq:weightset}
\end{align}
The set $\calA$ is nonempty, compact, and convex. Its mean constraint implies $\sum_{j\in[m]}\theta_j=1$. After merging repeated posteriors and omitting zero weights, the optimal mixtures in Lemma~\ref{lem:posterior} are exactly those with weights in $\calA$: their positive-weight posteriors lie in $\calC$ and their distortion equals $D$, since $\lambda^*>0$. Conversely, every $\btheta\in\calA$ gives an optimal mixture. Appendix~\ref{app:optimal-properties} proves this equivalence. Thus, $\calC$ supplies the reference posteriors for localization in Lemma~\ref{lem:localization}, while $\calA$ parametrizes the contact mixtures and covariance choices, which will be used in Lemmas~\ref{lem:localization} and~\ref{lem:selectors} below.

For any positive integer $m_0\in\bbN$, nonnegative weights $(\vartheta_{u})_{u\in[m_0]}$, and posteriors $Q_{u}\in\calP(\calX)$ with $\sum_{{u}\in[{m_0}]}{\vartheta_{u}}{Q_{u}}=P_X$, we have
\begin{align}
\sum_{u\in[m_0]}\vartheta_u\Delta(Q_u)
&=\sum_{u\in[m_0]}\vartheta_u\rho(Q_u)
+\lambda^*\sum_{u\in[m_0]}\vartheta_u\delta(Q_u)
-\beta_0-\langle\bbeta,P_X\rangle\notag\\
&=\left[\sum_{u\in[m_0]}\vartheta_u\rho(Q_u)-\rho(P_X)-R_{\mathrm{WZ}}(D)\right]
+\lambda^*\left[\sum_{u\in[m_0]}\vartheta_u\delta(Q_u)-D\right],\label{eq:mixture-gap-identity}
\end{align}
where the second equality follows from~\eqref{eq:dualvalue}, and the first equality follows from~\eqref{eq:deficit} and uses $\sum_{u\in[m_0]}\vartheta_u=1$, which is obtained by summing the mean constraint over $x\in\calX$, and $\sum_{u\in[m_0]}\vartheta_uQ_u=P_X$.

If $\sum_{{u}\in[{m_0}]}{\vartheta_{u}}\delta({Q_{u}})\leq D$, this identity implies
\begin{align}
\sum_{{u}\in[{m_0}]}{\vartheta_{u}}\rho({Q_{u}})-\rho(P_X)-R_{\mathrm{WZ}}(D)
&\geq\sum_{{u}\in[{m_0}]}{\vartheta_{u}}\Delta({Q_{u}})\geq0.\label{eq:mixture-gap-rate-slack}
\end{align}
Equation~\eqref{eq:mixture-gap-rate-slack} shows that a positive average posterior gap increases the rate required by a distortion-feasible mixture. To use this fact for an arbitrary code, Lemma~\ref{lem:localization} applies the supporting gap to its sequential posteriors and combines it with the nonnegative reconstruction regret. The resulting threshold bound retains a code-dependent allocation. Lemma~\ref{lem:selectors} then replaces that allocation by a source-measurable one.

\subsection{From Non-Excess-Distortion to Posterior Localization}
\label{sec:localization}
We next apply the posterior gap $\Delta$ in~\eqref{eq:deficit} to an arbitrary $(n,M)$-code. Lemma~\ref{lem:localization} relates its non-excess-distortion probability to a joint rate and distortion event with a code-dependent allocation, and approximates the predictable covariance by a mixture of contact covariance matrices. Here, a contact covariance describes the conditional fluctuations of the centered information and distortion scores when the posterior is a contact distribution and the side information is drawn from $P_{Y|X}(\cdot|x)$ for a fixed source symbol $x$. We define these scores and matrices in~\eqref{eq:centered-scores}--\eqref{eq:contact-covariance}. Recall ${\calE_n}$ from~\eqref{eq:success-event}. We now define the scores and prove the localization estimates. For $Q\in\calP(\calX)$ and every $y\in\calY$, choose a Bayes rule $g_Q:\calY\to\calZ$ by
\begin{align}
g_Q(y)\in\argmin_{z\in\calZ}\sum_{x\in\calX}Q(x)P_{Y|X}(y|x)d(x,z),\label{eq:bayes-rule-local}
\end{align}
where tiles are resolved using a fixed ordering of $\calZ$. For the distributions in~\eqref{eq:zero-gap-posteriors} and any $(j,y)\in[m]\times\calY$, define
\begin{align}
g_j&:=g_{Q_j},\label{eq:contact-rule-output}\\
Q_{j,Y}(y)&:=\sum_{x\in\calX}Q_j(x)P_{Y|X}(y|x).
\end{align}
For any $(j,x,y)\in[m]\times\calX\times\calY$ such that $P_{Y|X}(y|x)>0$, define
\begin{align}
\imath_j(x,y)&:=\log\frac{Q_j(x)}{P_X(x)}+\log\frac{P_Y(y)}{Q_{j,Y}(y)},\label{eq:contact-information}\\
\imath_j^{\rm c}(x,y)&:=\imath_j(x,y)-\sum_{\substack{y'\in\calY:\ P_{Y|X}(y'|x)>0}}P_{Y|X}(y'|x)\imath_j(x,y'),\label{eq:centered-scores}\\
d_j^{\rm c}(x,y)&:=d(x,g_j(y))-\sum_{y'\in\calY}P_{Y|X}(y'|x)d(x,g_j(y')),\label{eq:centered-distortion-score}\\
{\Sigma_j}(x)&:=\bbE_{Y\sim P_{Y|X}(\cdot|x)}\left[\begin{pmatrix}\imath_j^{\rm c}(x,Y)\\d_j^{\rm c}(x,Y)\end{pmatrix}
\begin{pmatrix}\imath_j^{\rm c}(x,Y)&d_j^{\rm c}(x,Y)\end{pmatrix}\right].\label{eq:contact-covariance}
\end{align}
These centered scores describe the conditional information and distortion fluctuations. The matrix ${\Sigma_j}(x)$ is their conditional covariance that results from the randomness of side information $Y$ given a source symbol $x\in\calX$. For any nonnegative $\btheta=(\theta_1,\ldots,\theta_m)$ with $\sum_{j\in[m]} \theta_j=1$, define
\begin{align}
\Sigma(\btheta):=\sum_{j\in[m]} \theta_j\sum_{x\in\calX}Q_j(x){\Sigma_j}(x).\label{eq:mixture-covariance}
\end{align}
Fix a sequence of $(n,M)$-codes satisfying~\eqref{eq:rateconverse} under Assumptions~(\ref{ass:finite})--(\ref{ass:interior}).
Denote the source and observation sequences in their original coordinate order by $(X^n,Y^n)$. Set ${S_0}:=f_n(X^n)$ and $Z^{n}:=\phi_n({S_0},Y^n)$. In these original coordinate variables, the event in~\eqref{eq:success-event} is
\begin{align}
{\calE_n}&=\{{d^{(n)}}(X^n,Z^n)\leq D\},\\
\Pcn(D)&=\Pr\{{\calE_n}\},
\end{align}
where in~\eqref{eq:main-converse}, it suffices to consider a subsequence with $\Pcn(D)$ bounded away from zero; the bound is immediate on a subsequence where this probability tends to zero. All later conditioning and code modifications are transferred back to this original event.

Introduce a uniform random permutation ${{\Upsilon}}$, independently of the original pair $(X^n,Y^n)$, and, for each $i\in[n]$, define
\begin{align}
\tilX_i&:=X_{{{\Upsilon}}(i)},\label{eq:permutation-convention}\\
\tilY_i&:=Y_{{{\Upsilon}}(i)},\\
\tilZ_i&:=Z_{{{\Upsilon}}(i)}.
\end{align}
Let $P_n^{\rm o}$ denote the joint distribution of $(X^n,Y^n,S_0,Z^n,{\Upsilon},\tilX^n,\tilY^n,\tilZ^n)$ induced by the product source $P_{XY}^n$, the code $(f_n,\phi_n)$, and the independent uniform permutation ${\Upsilon}$. All subsequent code posteriors and probabilities initially refer to this distribution. The additive distortion is invariant under permutation, so ${\calE_n}=\{{d^{(n)}}(\tilX^n,\tilZ^n)\leq D\}$. Written in the original coordinates, this event is determined by $(X^n,Y^n)$ and the fixed code, and does not depend on ${\Upsilon}$.

For every permutation ${\upsilon}$ of $[n]$ and $(\tilx^n,\tily^n)\in\calX^n\times\calY^n$, independence of ${\Upsilon}$ and the original iid source gives the following formulas. The channel conditioning is well defined for every $\tilx^n$ because $P_X$ has full support.
\begin{align}
P^{\rm o}_{n,\tilX^n\tilY^n|{\Upsilon}}(\tilx^n,\tily^n|{\upsilon})
&=\prod_{i\in[n]}P_{XY}(\tilx_i,\tily_i),\label{eq:permuted-source-product}\\
P^{\rm o}_{n,\tilY^n|\tilX^n,{\Upsilon}}(\tily^n|\tilx^n,{\upsilon})
&=\prod_{i\in[n]}P_{Y|X}(\tily_i|\tilx_i).\label{eq:permuted-channel-product}
\end{align}
The original message is still $S_0=f_n(X^n)$. It is determined by $(\tilX^n,{\Upsilon})$, since $X_j=\tilX_{{\Upsilon}^{-1}(j)}$ for every $j\in[n]$.

Recall the definitions of $\Delta$ and $g_Q$ in~\eqref{eq:deficit} and~\eqref{eq:bayes-rule-local}, respectively. Empty index ranges denote empty tuples. For any source-only description $S$ determined by $(\tilX^n,{\Upsilon})$, initially ${S}={S_0}$, and each $i\in[n]$, define
\begin{align}
{Q_i^{\rm s}}&:={P^{\rm o}}_{{n},\tilX_i|{S},\tilX^{i-1},\tilY_{i+1}^n,{{\Upsilon}}},\label{eq:sequential-posterior}\\
H_n&:=-\log {P_n^{\rm o}}({S}|\tilY^n,{{\Upsilon}}),\label{eq:message-likelihood}\\
d_i&:=d(\tilX_i,\tilZ_i),\label{eq:coordinate-distortion}\\
d_i^*&:=d(\tilX_i,g_{{Q_i^{\rm s}}}(\tilY_i)),\label{eq:bayes-coordinate-distortion}\\
{{\bar\mu_i^{\mathrm d}}}&:={\lambda^*}\bbE_{{P_n^{\rm o}}}[d_i-d_i^*|{S},\tilX^{i-1},\tilY^n,{{\Upsilon}}],\label{eq:reconstruction-regret}\\
{\Delta_n}&:=\sum_{i\in[n]}[\Delta({Q_i^{\rm s}})+{{\bar\mu_i^{\mathrm d}}}],\label{eq:accumulated-gap}\\
{\Delta_n^{\rm p}}&:=\sum_{i\in[n]}\Delta({Q_i^{\rm s}}).\label{eq:posterior-gap-sum}
\end{align}
The posterior ${Q_i^{\rm s}}$ need not be a contact distribution. It follows from conditional Bayes optimality that ${{\bar\mu_i^{\mathrm d}}}\geq0$, and ${\Delta_n^{\rm p}}$ is the accumulated KKT gap. The message likelihood $H_n$ and coordinate distortions $d_i$ encode the original constraints; Appendix~\ref{app:localization} decomposes them using these posteriors and gaps.

Recall the definitions of $\jW{\cdot}$ and $V$ in~\eqref{eq:common-optimal-density} and~\eqref{eq:source-variance}, respectively; centering $\jW{X}$ by $R_{\mathrm{WZ}}(D)$ leads to mean zero. For ${Q}\in\calP(\calX)$, define the nearest contact by
\begin{align}
{{\Pi}_{\calC}}({Q})&:=Q_{j({Q})},\label{eq:nearest-contact}\\
j({Q})&:=\min\argmin_{j\in[m]}\|{Q}-Q_j\|.
\end{align}
In particular, ${{\Pi}_{\calC}}({Q_i^{\rm s}})$ is the selected nearest contact to ${Q_i^{\rm s}}$. The smallest minimizing index resolves ties, so the selected contact is a single-valued measurable function of the posterior. For each $j\in[m]$, define $\imath_{Q_j}^{\rm c}:=\imath_j^{\rm c}$, $d_{Q_j}^{\rm c}:=d_{j}^{\rm c}$, and ${\Sigma_{Q_j}}:={\Sigma_j}$ when indexing the scores and matrices by their contact posterior. Define the scalar scores and, for $j\in[m]$, the empirical weights by
\begin{align}
\Lambda_n&:=n^{-1/2}\sum_{i\in[n]} \big(\jW{X_i}-R_{\mathrm{WZ}}(D)\big),\label{eq:source-block-score}\\
J_n&:=n^{-1/2}\sum_{i\in[n]} \imath_{{{\Pi}_{\calC}}({Q_i^{\rm s}})}^{\rm c}(\tilX_i,\tilY_i),\label{eq:information-block-score}\\
G_n&:=n^{-1/2}\sum_{i\in[n]} d_{{{\Pi}_{\calC}}({Q_i^{\rm s}})}^{\rm c}(\tilX_i,\tilY_i),\label{eq:distortion-block-score}\\
{\tilde t_n}&:=\frac{nD-\sum_{i\in[n]} d_i^*}{\sqrt n}+G_n+\frac{{\Delta_n^{\rm p}}}{{\lambda^*}\sqrt n},\label{eq:code-allocation}\\
\theta_{n,j}&:=\frac1n\sum_{i\in[n]}\bbI\{{{\Pi}_{\calC}}({Q_i^{\rm s}})=Q_j\},\label{eq:empirical-weights}\\
\btheta_n&:=(\theta_{n,1},\ldots,\theta_{n,m}).
\end{align}
The source score $\Lambda_n$ is invariant under permutation: the sum in~\eqref{eq:source-block-score} also equals $n^{-1/2}\sum_{i\in[n]}(\jW{\tilX_i}-R_{\mathrm{WZ}}(D))$. Thus its definition in the original coordinates is the same one used in Section~\ref{sec:achievability}.
For the reverse revelation order, let ${{\calF}_n^{\rm 0}}:={{\calS}}(\tilX^n,{{\Upsilon}})$ and, for each $i\in[n]$, let ${{\calF}_{n,i}^{\rm r}}:={{\calS}}(\tilX^n,{{\Upsilon}},\tilY_{n-i+1}^n)$, with ${{\calF}_{n,0}^{\rm r}}={{\calF}_n^{\rm 0}}$.
For each $i\in[n]$, define the increment corresponding to the $i$th revelation in the order $\tilY_n,\ldots,\tilY_1$ by
\begin{align}
{{\boldsymbol{\zeta}}_{n,i}}:=
\begin{pmatrix}
\imath_{{{\Pi}_{\calC}}({Q_{n-i+1}^{\rm s}})}^{\rm c}(\tilX_{n-i+1},\tilY_{n-i+1})\\
d_{{{\Pi}_{\calC}}({Q_{n-i+1}^{\rm s}})}^{\rm c}(\tilX_{n-i+1},\tilY_{n-i+1})
\end{pmatrix}.\label{eq:reverse-channel-increment}
\end{align}
The centered scores in~\eqref{eq:centered-scores}--\eqref{eq:centered-distortion-score} imply that $\bbE_{{P_n^{\rm o}}}[{{\boldsymbol{\zeta}}_{n,i}}|{{\calF}_{n,i-1}^{\rm r}}]=\bzero_2$. Define its predictable covariance by
\begin{align}
\Omega_n:=\sum_{i\in[n]}{\Sigma}_{{{\Pi}_{\calC}}({Q_i^{\rm s}})}(\tilX_i).\label{eq:predictable-covariance}
\end{align}
Lemma~\ref{lem:localization} supplies the two threshold bounds and covariance approximation needed to construct source-measurable parameters.

For any real-valued allocation $t$, possibly random, and any positive real number $\eta\in\bbR_+$, define the compression rate and distortion events by
\begin{align}
\calR_n(t;\eta)&:=\{J_n\leq L-\Lambda_n-\lambda^*t+\eta\},\label{eq:rate-event}\\
\calD_n(t;\eta)&:=\{G_n\leq t+\eta\}.\label{eq:distortion-event}
\end{align}
The rate event in~\eqref{eq:rate-event} requires the centered information score to fit within the message budget after subtracting the source fluctuation and the allocation cost. The distortion event in~\eqref{eq:distortion-event} requires the centered distortion score to fit within the distortion allowance. The exact decompositions~\eqref{eq:exactH}--\eqref{eq:exactd} transfer $H_n\leq\log M+o(\sqrt n)$ and $\sum_{i\in[n]} d_i\leq nD$ to these two normalized constraints, with the residuals controlled by Lemma~\ref{lem:localization}.

\begin{lemma}\label{lem:localization}
For the preceding code sequence and every positive real number $\eta\in\bbR_+$, the following hold under the original distribution ${P_n^{\rm o}}$.
\begin{enumerate}
\item The positive violations of the rate and distortion constraints satisfy
\begin{align}
\Pr\{{\calE_n}\cap\calR_n({\tilde t_n};\eta)^c\}&\longrightarrow0,\label{eq:initial-information-violation}\\
\Pr\left\{{\calE_n}\cap\calD_n\left({\tilde t_n}-\frac{{\Delta_n}}{\lambda^*\sqrt n};\eta\right)^c\right\}&\longrightarrow0.\label{eq:initial-distortion-violation}
\end{align}
\item The predictable covariance satisfies
\begin{align}
\Pr\Bigl\{{\calE_n}\cap\bigl\{\bigl\|\Omega_n/n-\Sigma(\btheta_n)\bigr\|_2>\eta\bigr\}\Bigr\}\longrightarrow0.\label{eq:feasible}
\end{align}
\end{enumerate}
Both claims remain valid after a binary source-only flag is appended to the message.
\end{lemma}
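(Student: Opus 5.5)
The plan is to derive two exact identities, one for the message likelihood $H_n$ in~\eqref{eq:message-likelihood} and one for the total distortion $\sum_{i\in[n]}d_i$, and then to control every residual term with martingale concentration.

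\textbf{Rate decomposition.} Since $S$ is determined by $(\tilX^n,{\Upsilon})$, Bayes' rule gives
\begin{align}
H_n=-\log P_n^{\rm o}(\tilX^n|\tilY^n,{\Upsilon})+\log P_n^{\rm o}(\tilX^n|S,\tilY^n,{\Upsilon}).\notag
\end{align}
By~\eqref{eq:permuted-source-product}, the first term equals $-\sum_{i}\log P_{X|Y}(\tilX_i|\tilY_i)$. For the second term I expand by the chain rule in the forward order and write each factor through ${Q_i^{\rm s}}$ in~\eqref{eq:sequential-posterior}. Because $\tilY_i$ depends only on $\tilX_i$ given the remaining variables, each factor equals
\begin{align}
{Q_i^{\rm s}}(\tilX_i)P_{Y|X}(\tilY_i|\tilX_i)/{Q_{i,Y}^{\rm s}}(\tilY_i).\notag
\end{align}
Each coordinate term is then split into three parts:
\begin{enumerate}
\item its conditional mean given $(S,\tilX^{i-1},\tilY_{i+1}^n,{\Upsilon})$, which equals $\rho({Q_i^{\rm s}})$;
\item a fluctuation in $\tilY_i$ given $\tilX_i$;
\item a fluctuation in $\tilX_i$ given the past.
\end{enumerate}
Next I use $\rho(Q)=\Delta(Q)-\lambda^*\delta(Q)+\beta_0+\langle\bbeta,Q\rangle$ from~\eqref{eq:deficit}, together with~\eqref{eq:dualvalue} and the normalization $\langle\bbeta,P_X\rangle=0$. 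The linear parts $\langle\bbeta,{Q_i^{\rm s}}\rangle$ then combine with the $\tilX_i$-fluctuations into $\sqrt n\Lambda_n$, up to a martingale with bounded increments. This uses that $\jW{x}$ coincides with $\beta(x)$ plus source terms, which I would check via Proposition~\ref{prop:optimal-properties}. The $\tilY_i$-fluctuations become $\sqrt n J_n$ once ${Q_i^{\rm s}}$ is replaced by ${{\Pi}_{\calC}}({Q_i^{\rm s}})$. The terms $\lambda^*\delta({Q_i^{\rm s}})$ are matched with $\sum_i d_i^*$ through the Bayes-rule identity $\delta(Q)=\bbE[d(X,g_Q(Y))]$.

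\textbf{Distortion decomposition.} I write $\sum_i d_i=\sum_i d_i^*+\sum_i(d_i-d_i^*)$. The difference is its conditional mean $\sum_i\bar\mu_i^{\rm d}/\lambda^*$ plus a martingale. The term $\sum_i d_i^*$ is the sum of its $\tilY_i$-centered version, which gives $\sqrt nG_n$ after projection to contacts, plus conditional means. These two identities are exactly what is encoded in~\eqref{eq:code-allocation}.

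\textbf{Proof of claim~(i).} On ${\calE_n}$ we have $H_n\le\log M$ (after discarding the event that $P_n^{\rm o}(S|\tilY^n,{\Upsilon})$ is atypically small, which has probability $o(1)$ by the usual $\log M+\sqrt n\eta$ Markov argument) and $\sum_i d_i\le nD$. Substituting the decompositions then yields the events $\calR_n({\tilde t_n};\eta)$ and $\calD_n({\tilde t_n}-\Delta_n/(\lambda^*\sqrt n);\eta)$, up to three residuals:
\begin{enumerate}
\item martingale remainders, which are $O_P(1)$ after division by $\sqrt n$ by Azuma--Hoeffding, since the increments are bounded: $\kappa_W>0$ and the full support of $P_X$ keep the logarithms bounded;
\item the projection error $\sum_i[\imath^{\rm c}_{{Q_i^{\rm s}}}-\imath^{\rm c}_{{\Pi}_{\calC}({Q_i^{\rm s}})}]$, and likewise for the distortion score;
\item the $o(\sqrt n)$ slack in~\eqref{eq:rateconverse}.
\end{enumerate}

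\textbf{Main obstacle: localization of the posteriors.} I expect the projection error to be the hardest step. It requires showing that on ${\calE_n}$ most posteriors lie near $\calC$. The key observation is that the rate identity, combined with~\eqref{eq:mixture-gap-rate-slack}, forces $\Delta_n^{\rm p}\le C\sqrt n$ with high probability on ${\calE_n}$, because $\Lambda_n$, $J_n$ and $G_n$ are tight. Since $\calC$ is finite and $\Delta$ is continuous on the compact simplex, $\Delta(Q)\ge c_\gamma>0$ whenever $\|Q-{{\Pi}_{\calC}}(Q)\|\ge\gamma$. Hence at most $C\sqrt n/c_\gamma$ coordinates are $\gamma$-far from $\calC$. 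The near coordinates also need care, because the map $Q\mapsto g_Q$ is discontinuous at Bayes ties. I would handle this with a $\sqrt n$-scale version of the argument: the contribution of near coordinates is bounded by a Lipschitz estimate in $\gamma$ for the information score, and by tie-freeness of $g_{Q_j}$ in a neighborhood of each $Q_j$ for the distortion score (to be checked in Appendix~\ref{app:optimal-properties}, or else absorbed into $\bar\mu_i^{\rm d}\ge0$). Then I let $\gamma\downarrow0$ slowly.

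\textbf{Proof of claim~(ii).} With localization in hand, $\Omega_n/n=\frac1n\sum_i\Sigma_{{{\Pi}_{\calC}}({Q_i^{\rm s}})}(\tilX_i)$. I replace $\tilX_i$ by its conditional law ${Q_i^{\rm s}}$, which is within $\gamma$ of ${{\Pi}_{\calC}}({Q_i^{\rm s}})=Q_j$, again paying a bounded-increment martingale error. Grouping the coordinates by their selected contact $j$ gives $\sum_j\theta_{n,j}\sum_xQ_j(x)\Sigma_j(x)=\Sigma(\btheta_n)$ up to $O(\gamma)$.

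\textbf{Appended flag.} Appending a binary source-only flag changes $H_n$ by at most $\log2$ and leaves every martingale structure unchanged, so both claims persist.
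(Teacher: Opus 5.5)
Your architecture (exact identities for $H_n$ and $\sum_{i}d_i$, a $\sqrt n$-scale bound on the accumulated gap, then residual control) is the right one. However, the residual control you propose cannot give Claim~(i).

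\textbf{Residuals must be $o_P(\sqrt n)$.} Claim~(i) must hold for \emph{every} $\eta>0$. So every term of the identities not absorbed into $\Lambda_n$, $J_n$, $G_n$, $\tilde t_n$ or $\Delta_n$ has to be $o_P(\sqrt n)$ on $\calE_n$. Azuma--Hoeffding with bounded increments gives only $O_P(\sqrt n)$, i.e., a non-vanishing random shift of the normalized thresholds. You would therefore obtain $\calR_n$ and $\calD_n$ with a random $O_P(1)$ slack instead of $\eta$.

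The remainders at stake are two.
\begin{enumerate}
\item The $\tilX_i$-fluctuation left after extracting $\jW{\tilX_i}$, namely $\sum_{i}a_{Q_i^{\rm s}}(\tilX_i)$ with $a_Q$ as in~\eqref{eq:posterior-centered-gap}. Your matching of $\lambda^*\delta(Q_i^{\rm s})$ with $d_i^*$ produces a term of the same kind.
\item The regret fluctuation $\lambda^*\sum_{i}(d_i-d_i^*)-\sum_{i}\bar\mu_i^{\mathrm d}$.
\end{enumerate}
Your near/far argument with $\gamma\downarrow0$ is applied only to the projection error $\imath^{\rm c}_{Q_i^{\rm s}}-\imath^{\rm c}_{\Pi_{\calC}(Q_i^{\rm s})}$. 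There it would be an acceptable substitute for the quadratic gap bound, once a martingale structure is available. Absorbing into $\bar\mu_i^{\mathrm d}\geq0$ handles the mean of the regret, not its fluctuation, and those increments do not shrink near contacts.

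The missing idea is that every residual increment has conditional exponential second moment at most $C\Delta(Q_i^{\rm s})$, respectively $C(\Delta(Q_i^{\rm s})+\bar\mu_i^{\mathrm d})$, as in~\eqref{eq:residual-exponential-moment}. This uses $a_{Q_j}=0$ at the interior contacts, smoothness there, the quadratic bound~\eqref{eq:quadratic}, and a positive Bayes margin for nonequivalent actions near contacts. An exponential supermartingale bound with parameter $n^{-1/4}$ then gives the self-bounding inequality~\eqref{eq:v20-gap-path-1}. Hence $\Delta_n=O_P(\sqrt n)$ on $\calE_n$, and the residuals are of order $n^{1/4}\log n=o(\sqrt n)$.

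\textbf{No martingale structure for the $\tilX_i$-fluctuation.} In your three-part split, this term is not a martingale difference in any filtration. The posterior $Q_i^{\rm s}$ is the law of $\tilX_i$ given $(S,\tilX^{i-1},\tilY_{i+1}^n,\Upsilon)$, and these $\sigma$-fields are not nested in $i$. A $Q_i^{\rm s}$-centered function of $\tilX_i$ is therefore centered neither given $\calS(S,\Upsilon,\tilY^n,\tilX^{i-1})$, under which $\tilX_i$ has law $Q^{\rm s}_{i,X|Y}(\cdot|\tilY_i)$, nor given $\calS(\tilX^n,\Upsilon,\tilY_{i+1}^n)$, under which $\tilX_i$ is known. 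Without a martingale structure, even the $O_P(\sqrt n)$ control you need to localize $\Delta_n^{\rm p}$ is unavailable.

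The paper supplies this structure through the two-filtration identity~\eqref{eq:twofiltration}. It inverts $\mathrm{Id}-K_Q$ on $Q$-centered functions, with oscillation control from the minorization $K_Q(x,x')\geq\kappa_WQ(x')$. This splits $a_Q(X)$ into a part centered given $Y$ and the past, and a part centered given $X$: a forward plus a reverse martingale difference. Your Claim~(ii) needs the same device when you replace $\tilX_i$ by its law $Q_i^{\rm s}$, although there an $O_P(\sqrt n)$ error is harmless.

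\textbf{Unbounded increments.} The increments are not uniformly bounded. The term $\log Q_i^{\rm s}(\tilX_i)$, which enters $\imath_{Q_i^{\rm s}}$ and $a_{Q_i^{\rm s}}$, is unbounded near the boundary of the simplex. Full support of $P_X$ and $\kappa_W>0$ only bound the contact scores. The paper uses exponential moments via~\eqref{eq:posterior-log-moment} instead.

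\textbf{The flag.} Appending the flag can change $H_n$ pointwise by more than $\log2$. The argument is instead rerun for the flagged description with message alphabet size $2M$.
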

The proof in Appendix~\ref{app:localization} uses the posterior representation in Lemma~\ref{lem:posterior} through the supporting gap and contact set in~\eqref{eq:deficit} and~\eqref{eq:zero-gap-posteriors}. It also uses the contact properties proved in Lemma~\ref{lem:geometry}: the common tilted-density identity and the likelihood chain lead to the exact decompositions~\eqref{eq:exactH}--\eqref{eq:exactd}, and the quadratic gap bound controls their residuals and the covariance approximation. The message-size bound controls $H_n$ except on an event of vanishing probability, while $\sum_{i\in[n]} d_i\leq nD$ holds on $\calE_n$. These facts yield Claim~(i); the posterior moment estimates and quadratic gap bound yield Claim~(ii). In particular, Claim~(i) and the union bound imply that
\begin{align}
\Pcn(D)
&\leq \Pr\left\{{\calE_n}\cap\calR_n({\tilde t_n};\eta)\cap
\calD_n\left({\tilde t_n}-\frac{{\Delta_n}}{\lambda^*\sqrt n};\eta\right)\right\}+o(1),\label{eq:localization-events}
\end{align}
where the union bound discards the two vanishing violation probabilities in Claim~(i) of Lemma~\ref{lem:localization}.
Since $\Delta_n\geq0$, the gap tightens the distortion constraint in~\eqref{eq:localization-events}. Claim~(ii) supplies the corresponding covariance approximation. The allocation $\tilde t_n$ and weights $\btheta_n$ can still depend on the side information; the next subsection uses Lemma~\ref{lem:selectors} to replace them by source-measurable parameters.

\subsection{From Posterior Localization to Source-Dependent Parameters}
\label{sec:selectors}
We now replace the code-dependent allocation and empirical weights by functions of $(\tilX^n,{{\Upsilon}})$. Define the following quantities:
\begin{align}
{B_n}&:=\bbI\{\Pr\{{\calE_n}|X^n\}\geq n^{-1}\},\label{eq:source-flag}\\
&=\bbI\{\Pr\{{\calE_n}|\tilX^n,{\Upsilon}\}\geq n^{-1}\},\notag\\
{S}&:=({S_0},{B_n}),\label{eq:flagged-description}\\
{P_n^{\rm b}}&:={P_n^{\rm o}}\{\,\cdot|{B_n}=1\}.\label{eq:flagged-law}
\end{align}
The above definitions restrict attention to sources with conditional success probability at least $n^{-1}$ and are called flagged versions in subsequent analyses. Equation~\eqref{eq:source-flag} holds because $(\tilX^n,{{\Upsilon}})$ determines $X^n$ and ${{\Upsilon}}$ is independent of the original experiment. Forming the flag before permutation preserves independence of ${{\Upsilon}}$ and $(X^n,Y^n)$ under ${P_n^{\rm b}}$. On the support of $P_n^{\rm b}$, the conditional channel in~\eqref{eq:permuted-channel-product} remains unchanged because the flag is determined by $(\tilX^n,{\Upsilon})$. The permuted source $\tilX^n$ and ${\Upsilon}$ need not remain independent, as allowed in Lemma~\ref{lem:concentration}.

From this point onward, ${Q_i^{\rm s}}$, ${\Delta_n}$, ${\tilde t_n}$, and the channel martingale are formed using the flagged description ${S}=({S_0},{B_n})$, before the source prefix is disclosed. The decoder may ignore the flag, so ${\calE_n}$ and the distortion vector remain those of the initial code. Lemma~\ref{lem:localization} applies with message alphabet size at most $2M$; the additional description costs $\log2=o(\sqrt n)$ nats. No comparison between the posteriors before and after adding the flag is required.

\begin{lemma}\label{lem:selectors}
There are functions $\bar{\btheta}_n(\tilX^n,{{\Upsilon}})\in\calA$ and $\bart_n(\tilX^n,{{\Upsilon}})\in\bbR$, measurable before the channel observations are revealed, such that, for every positive real number $\eta\in\bbR_+$,
\begin{enumerate}
\item The flagged description ${S}$ satisfies
\begin{align}
\Pr\{{\calE_n}\cap\calR_n(\bart_n;\eta)^c\}&\longrightarrow0,\label{eq:selector-eventwise}\\
\Pr\{{\calE_n}\cap\calD_n(\bart_n;\eta)^c\}&\longrightarrow0.\label{eq:selector-eventwise-distortion}
\end{align}
\item The predictable covariance satisfies
\begin{align}
\Pr\Bigl\{{\calE_n}\cap\bigl\{\bigl\|\Omega_n/n-\Sigma(\bar{\btheta}_n)\bigr\|_2>\eta\bigr\}\Bigr\}\longrightarrow0.\label{eq:cov-selector}
\end{align}
\end{enumerate}
\end{lemma}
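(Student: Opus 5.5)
The plan is to build $\bar{\btheta}_n$ and $\bart_n$ by conditioning the code-dependent quantities of Lemma~\ref{lem:localization} on the pre-channel sigma-field ${{\calF}_n^{\rm 0}}={{\calS}}(\tilX^n,{{\Upsilon}})$. We then show that the difference is small on ${\calE_n}$ by concentration. Lemma~\ref{lem:localization} is applied to the flagged description $S=(S_0,B_n)$, which costs only $\log 2$ nats.

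\textbf{Candidate parameters.} Since $B_n=1$ on ${\calE_n}$ (up to the event $\{B_n=0\}\cap{\calE_n}$, of probability at most $n^{-1}$), we may work under $P_n^{\rm b}$. Define
\begin{align}
\hat{\btheta}_n:=\bbE_{P_n^{\rm b}}[\btheta_n|\tilX^n,{{\Upsilon}}],\qquad
\hat t_n:=\bbE_{P_n^{\rm b}}\Bigl[{\tilde t_n}-\tfrac{{\Delta_n}}{\lambda^*\sqrt n}\Bigm|\tilX^n,{{\Upsilon}}\Bigr].
\end{align}
We take $\bart_n$ to be $\hat t_n$ plus a vanishing slack. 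We take $\bar{\btheta}_n$ to be the Euclidean projection of $\hat{\btheta}_n$ onto the compact convex set $\calA$. The projection error is controlled because the mean constraint $\sum_j\theta_jQ_j\approx$ the type of $\tilX^n$ and the distortion constraint approximately hold in conditional expectation. To see this, note that the posteriors $Q_i^{\rm s}$ average to the empirical source distribution, the type concentrates at $P_X$ at rate $n^{-1/2}$, and the quadratic gap bound of Lemma~\ref{lem:geometry} together with $\Delta_n=O_P(\sqrt n)$ on ${\calE_n}$ forces $Q_i^{\rm s}\approx\Pi_\calC(Q_i^{\rm s})$ on average. Continuity of $\Sigma(\cdot)$ then gives $\|\Sigma(\hat{\btheta}_n)-\Sigma(\bar{\btheta}_n)\|_2\to0$.

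\textbf{Concentration.} The key step is to show that, conditionally on ${{\calF}_n^{\rm 0}}$, the quantities $\btheta_n$, $\tilde t_n$, and $\Delta_n/(\lambda^*\sqrt n)$ concentrate around their conditional means with fluctuations $o(1)$. They depend on $\tilY^n$, which is conditionally a product measure given $(\tilX^n,{{\Upsilon}})$ by \eqref{eq:permuted-channel-product}. I would use a Doob martingale along the reverse revelation filtration ${{\calF}_{n,i}^{\rm r}}$ with McDiarmid/Azuma-type bounds. The obstacle is that these are not Lipschitz with bounded differences $O(n^{-1/2})$: changing one $\tilY_k$ changes the posterior $Q_i^{\rm s}$ for every $i<k$. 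I would handle this in two steps.
\begin{itemize}
\item[(a)] Decompose $\tilde t_n$ into $G_n$, which is a genuine martingale sum and is treated separately in the events, and a remainder. The remainder, $(nD-\sum d_i^*)/\sqrt n+\Delta^{\rm p}_n/(\lambda^*\sqrt n)$, is controlled through the exact decompositions \eqref{eq:exactH}--\eqref{eq:exactd}.
\item[(b)] Use the flag. Any event of conditional probability $e^{-\omega(\log n)}$ given $X^n$ is negligible relative to $\Pr\{{\calE_n}|X^n\}\geq n^{-1}$. Hence concentration holding with high conditional probability transfers to ${\calE_n}$, which is precisely what Lemma~\ref{lem:concentration} is for.
\end{itemize}
With this I expect $\Pr\{{\calE_n}\cap\{|\tilde t_n-\Delta_n/(\lambda^*\sqrt n)-\hat t_n|>\eta/2\}\}\to0$, and similarly for $\btheta_n$.

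\textbf{Assembly.} Claim~(i) follows from Lemma~\ref{lem:localization}(i) with $\eta/2$. For the distortion event, use ${\tilde t_n}-\Delta_n/(\lambda^*\sqrt n)\le\bart_n+\eta/2$ on the concentration event. For the rate event, $\Delta_n\geq0$ gives ${\tilde t_n}\geq{\tilde t_n}-\Delta_n/(\lambda^*\sqrt n)$, and $\calR_n(t;\eta)$ is monotone in $t$. This requires showing that $\Delta_n/\sqrt n$ itself is $o(1)$ on ${\calE_n}$, which I would obtain from the rate decomposition \eqref{eq:exactH}: the message budget $\log M\le nR_{\mathrm{WZ}}+O(\sqrt n)$ limits the total gap. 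Claim~(ii) follows from Lemma~\ref{lem:localization}(ii), the concentration of $\btheta_n$, and continuity of $\Sigma$.

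The main obstacle I expect is (b): justifying conditional concentration of the non-Lipschitz, code-dependent functionals $\btheta_n$ and $\Delta_n$ given the source.
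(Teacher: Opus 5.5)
Your outer structure is right: source-measurable conditional means, projection onto $\calA$, and monotonicity of $\calR_n(t;\eta)$ and $\calD_n(t;\eta)$ in $t$. The concentration step, however, is a genuine gap, not a technicality.

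In the paper, fluctuations of $\btheta_n$ and $\tilde t_n$ given $(\tilX^n,\Upsilon)$ are controlled through total correlation. Lemma~\ref{lem:concentration} bounds the conditional variance by $C\,\TC_{P_n^{\rm b}}(\tilX^n|S,\tilY^n,\Upsilon)/n$. Lemma~\ref{lem:tc-deficit} gives $\TC\le\bbE_{P_n^{\rm b}}[\Delta_n]$ and $\bbE_{P_n^{\rm b}}[(\tilde t_n-t_n^{S})^2]\le C\,\bbE_{P_n^{\rm b}}[\Delta_n]/n$. For the original code these bounds say nothing. Taking expectations in~\eqref{eq:combined}, $\bbE_{P_n^{\rm b}}[\Delta_n]$ equals $O(\sqrt n)$ terms plus $\lambda^*(\bbE_{P_n^{\rm b}}[\sum_i d_i]-nD)$. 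That distortion excess can be of order $n$, because nothing constrains the code off $\calE_n$.

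Your step (b) does not get around this. Lemma~\ref{lem:concentration} gives variance bounds, not $e^{-\omega(\log n)}$ conditional tails, and the non-Lipschitz dependence you noticed rules out McDiarmid.

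The missing idea is the paper's source-only repair:
\begin{enumerate}
\item Append $\tilX^{k_n}$, with $k_n=\lceil n^{3/4}\rceil$, to the description.
\item Decode from the list $\{\phi_n(S_0,y^{\prime n})\}$, taken over $y^{\prime n}$ within Hamming distance $\lceil n^{9/16}\rceil$ of $Y^n$, choosing by sample distortion on the disclosed coordinates.
\end{enumerate}
This is where the flag is really used. The bound $\Pr\{\calE_n|X^n\}\ge n^{-1}$, combined with bounded differences for the Hamming distance to the success set, puts that distance below $n^{9/16}$ with conditional probability $1-e^{-cn^{1/8}}$. Hence the repaired code has expected distortion $nD+O(n^{15/16})$, and $\bbE_{P_n^{\rm b}}[\Delta_n']=O(n^{15/16})=o(n)$. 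The selectors are built from the repaired code and then transferred back. The weights transfer via agreement of suffix posteriors, $\|\btheta_n-\btheta_n'\|_1\le 2k_n/n$. The allocation transfers via the prefix-information identity, whose order-$n^{3/4}$ cost must be cancelled exactly.

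Your allocation target also fails. The conditional mean $\hat t_n$ averages $\Delta_n$ over failure outcomes. For example, take a decoder that outputs deliberately bad reconstructions on a set of channel outcomes of conditional probability $1/2$. It incurs Bayes regret of order one per letter there, so $\bbE_{P_n^{\rm b}}[\Delta_n|\tilX^n,\Upsilon]$ is of order $n$ and $\hat t_n$ can be of order $-\sqrt n$. Meanwhile $\tilde t_n-\Delta_n/(\lambda^*\sqrt n)=O_P(1)$ on $\calE_n$ by~\eqref{eq:exactd}. So $\calD_n(\bart_n;\eta)$ fails on $\calE_n$.

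Moreover, $\Delta_n/\sqrt n$ is in general only $O_P(1)$ on $\calE_n$, since the message budget controls the gap at scale $\sqrt n$ and no better. Your assembly claim that it is $o(1)$ is therefore false, and it need not concentrate. You never need it small. By monotonicity it suffices that, on $\calE_n$, $\bart_n$ lies within $\eta$ of the random interval $[\tilde t_n-\Delta_n/(\lambda^*\sqrt n),\tilde t_n]$. The upper end gives the rate event and the lower end the distortion event. The paper obtains this with $\bart_n:=t_n^{S'}-k_nH_*^{\rm r}/(\lambda^*\sqrt n)$. The prefix identity places this within $o(1)$ of $\tilde t_n-\Delta_{n,k_n}/(\lambda^*\sqrt n)$, with $0\le\Delta_{n,k_n}\le\Delta_n$.

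Finally, $G_n$ should not be split off from $\tilde t_n$. It is there to cancel the channel-martingale part of $-\sum_i d_i^*/\sqrt n$, so $\tilde t_n$ itself is the quantity that concentrates once the expected gap is $o(n)$.
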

The proof of Lemma \ref{lem:selectors} is provided in Appendix~\ref{app:selectors}, which uses the posterior representation in Lemma~\ref{lem:posterior} and the localization estimates in Lemma~\ref{lem:localization}. A source-only repair first yields an $o(n)$ expected gap under $P_n^{\rm b}$. Lemmas~\ref{lem:tc-deficit} and~\ref{lem:concentration} then construct source-measurable approximations to the allocation and empirical weights. Transferring them back to the flagged code and using Claim~(i) of Lemma~\ref{lem:localization} proves the two threshold bounds; Claim~(ii) supplies the covariance approximation. Thus, Lemma~\ref{lem:selectors} completes the source-measurable reduction begun by Lemma~\ref{lem:localization}.

For positive real numbers $(\tau,\eta)\in\bbR_+^2$, let $\Gamma_n:=\Sigma(\bar{\btheta}_n)$ and define the retained threshold and covariance events by
\begin{align}
{\calE_n^{\rm t}}({\eta})&:=\calR_n(\bart_n;{\eta})\cap\calD_n(\bart_n;{\eta}),\label{eq:roadmap-threshold-event}\\
{\calE_n^{\rm v}}&:=\{\|\Omega_n/n-\Gamma_n\|_2\leq\tau/2\}.\label{eq:roadmap-covariance-event}
\end{align}
We now apply these selectors to the localized success event in~\eqref{eq:localization-events}. Both claims of Lemma~\ref{lem:selectors} imply that the part of $\calE_n$ outside $\calE_n^{\rm v}\cap\calE_n^{\rm t}(\eta)$ has vanishing probability. Thus,~\eqref{eq:localization-events} and the union bound lead to
\begin{align}
\Pcn(D)\leq \Pr\{{\calE_n}\cap{\calE_n^{\rm v}}\cap{\calE_n^{\rm t}}({\eta})\}+o(1),\label{eq:selector-stage-bound}
\end{align}
where Claim~(i) of Lemma~\ref{lem:selectors} controls the two threshold violations and Claim~(ii) controls the covariance violation. This is the output of the source-dependent parameter reduction: all thresholds and the covariance target are measurable with respect to ${\calS}(\tilX^n,{\Upsilon})$. The next subsection uses precisely these properties for a conditional Gaussian comparison.

\subsection{Gaussian Bound on the Non-Excess-Distortion Probability}
\label{sec:gaussian-bound}
We finally bound the probability term in~\eqref{eq:selector-stage-bound} by a conditional Gaussian comparison. We then optimize its source-measurable covariance and allocation, average the source fluctuation, and remove smoothing to obtain~\eqref{eq:main-converse}. Recall the definitions of $\calP^*(D)$, $\calA$, and $\Sigma(\btheta)$ in~\eqref{eq:optimal-set},~\eqref{eq:weightset}, and~\eqref{eq:mixture-covariance}, respectively. Appendix~\ref{app:geometry} shows
\begin{align}
\{\Sigma(P):P\in\calP^*(D)\}
=\{\Sigma(\btheta):\btheta\in\calA\}.\label{eq:covariance-equivalence}
\end{align}
In particular, $\Gamma_n=\Sigma(\bar{\btheta}_n)$ belongs to $\calK_D$ in~\eqref{eq:covariance-family-definition}. Recall that $\Lambda\sim\calN(0,V)$, with $V$ defined in~\eqref{eq:source-variance}. For $\btheta\in\calA$, let $(J_{\btheta},G_{\btheta})\sim {\calN}(\bzero_2,\Sigma(\btheta))$ independently of ${\Lambda}$. For any real number $\ell\in\bbR$ and positive real numbers $(\tau,{\eta})\in\bbR_+^2$, define
\begin{align}
{F^{\rm c}_{\btheta}}(\ell)&:=\sup_{t\in\bbR}\Phi_{\Sigma(\btheta)}(\ell-{\lambda^*} t,t),\label{eq:gaussian-envelopes}\\
{F^{\rm c}}(\ell)&:=\sup_{\btheta\in\calA}{F^{\rm c}_{\btheta}}(\ell),\label{eq:full-pointwise-envelope}\\
{F^{\rm c}_{\tau,\eta}}(\ell)&:=\sup_{(\btheta,t)\in\calA\times\bbR}
\Phi_{\Sigma(\btheta)+\tau {\mathrm{Id}_2}}(\ell-{\lambda^*} t+2{\eta},t+2{\eta}).\label{eq:smoothed-envelope}
\end{align}
The covariance identity~\eqref{eq:covariance-equivalence} leads to $F_*(L)=\bbE[{F^{\rm c}}(L-{\Lambda})]$. The envelope in~\eqref{eq:smoothed-envelope} bounds the Gaussian probability after optimizing its source-measurable covariance and allocation.

Let a two-dimensional martingale have $n$ uniformly bounded centered increments ${{\boldsymbol{\zeta}}_{n,i}}$ adapted to $({{\calF}_{n,i}^{\rm r}})_{i=0}^n$, with ${{\calF}_{n,0}^{\rm r}}={{\calF}_n^{\rm 0}}$. Define
\begin{align}
\bXi_n&:=n^{-1/2}\sum_{i\in[n]}{{\boldsymbol{\zeta}}_{n,i}},\label{eq:generic-martingale-quantities}\\
\Omega_n&:=\sum_{i\in[n]}\bbE[{{\boldsymbol{\zeta}}_{n,i}}{{\boldsymbol{\zeta}}_{n,i}}^\top|{{\calF}_{n,i-1}^{\rm r}}].
\end{align}
Let $\Gamma_n$ be a uniformly bounded positive semidefinite matrix measurable with respect to ${{\calF}_n^{\rm 0}}$, and let ${\ell_{n,1}},{\ell_{n,2}}$ be real ${{\calF}_n^{\rm 0}}$-measurable thresholds. The events ${\calE_n}$ are as above, and all probability orders in Lemma~\ref{lem:completion} are under ${P_n^{\rm o}}$.

\begin{lemma}\label{lem:completion}
Suppose that, for every positive real number $\eta\in\bbR_+$, the covariance error satisfies
\begin{align}
\Pr\Bigl\{{\calE_n}\cap\bigl\{\bigl\|\Omega_n/n-\Gamma_n\bigr\|_2>\eta\bigr\}\Bigr\}&\longrightarrow0,\label{eq:completion-covariance-hypothesis}
\end{align}
and, for each ${j_2}\in\{1,2\}$, the positive threshold violation satisfies
\begin{align}
\Pr\Bigl\{{\calE_n}\cap\bigl\{\Xi_{n,{j_2}}-{\ell_{n,j_2}}>\eta\bigr\}\Bigr\}&\longrightarrow0.\label{eq:completion-threshold-hypothesis}
\end{align}
Then, for every pair of positive real numbers $(\tau,{\eta})\in\bbR_+^2$,
\begin{align}
\Pr\{{\calE_n}\}\leq
\bbE[\Phi_{\Gamma_n+\tau {\mathrm{Id}_2}}({\ell_{n,1}}+2{\eta},{\ell_{n,2}}+2{\eta})]
+3\tau/{\eta}^2+o(1).\label{eq:completionbound}
\end{align}
\end{lemma}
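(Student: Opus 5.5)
Our plan is to reduce the claim to a conditional Gaussian approximation for a martingale whose quadratic variation has been made deterministic (given $\calF_n^0$) by adding an independent Gaussian correction. We smooth by $\tau$ and then transfer back.

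\textbf{Step 1 (reduce to threshold events).} By the union bound and the threshold hypothesis~\eqref{eq:completion-threshold-hypothesis},
\begin{align*}
\Pr\{\calE_n\}\leq \Pr\{\calE_n\cap\{\bXi_n\leq \boldsymbol\ell_n+\eta\mathbf{1}\}\}+o(1),
\end{align*}
where $\boldsymbol\ell_n=(\ell_{n,1},\ell_{n,2})$. Likewise, by~\eqref{eq:completion-covariance-hypothesis}, we may further intersect with $\{\|\Omega_n/n-\Gamma_n\|_2\leq\tau/2\}$ at cost $o(1)$. After this we drop $\calE_n$ and bound the probability of
\begin{align*}
\{\bXi_n\leq\boldsymbol\ell_n+\eta\mathbf 1\}\cap\{\|\Omega_n/n-\Gamma_n\|_2\leq\tau/2\}.
\end{align*}

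\textbf{Step 2 (stopping construction to fix the covariance).} Conditionally on $\calF_n^0$, the target $\Gamma_n+\tau\mathrm{Id}_2$ is known. On the good event, $\Omega_n/n\preceq\Gamma_n+(\tau/2)\mathrm{Id}_2$. Define a stopping index $T$: the largest $k$ such that the partial predictable covariance, which is predictable, stays $\preceq n(\Gamma_n+\tau\mathrm{Id}_2)$. By boundedness of the increments, and since the target exceeds the full $\Omega_n$ on the good event by $n\tau/2\cdot\mathrm{Id}_2$, the stopped martingale coincides with the full one there.

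We then append an independent conditionally Gaussian vector $\mathbf W$ with covariance
\begin{align*}
n(\Gamma_n+\tau\mathrm{Id}_2)-\Omega_{n,T}\succeq 0,
\end{align*}
which is measurable after the martingale is run, or we realize it as additional small Gaussian martingale steps. The completed sum $\bXi_n^{\rm c}=n^{-1/2}(\text{stopped sum}+\mathbf W)$ has conditional quadratic variation exactly $\Gamma_n+\tau\mathrm{Id}_2$. A conditional martingale CLT with deterministic quadratic variation then gives, conditionally on $\calF_n^0$, that $\bXi_n^{\rm c}$ is approximately $\calN(\bzero_2,\Gamma_n+\tau\mathrm{Id}_2)$ uniformly over orthants. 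The error is $o(1)$ uniformly, since the increments are bounded and the total variance is at least $\tau$ in every direction. A Lindeberg/Stein-type bound, or a Bolthausen/Haeusler martingale Berry--Esseen bound, suffices.

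\textbf{Step 3 (account for the correction).} On the good event we have
\begin{align*}
\bXi_n\leq\boldsymbol\ell_n+\eta\mathbf 1\ \Longrightarrow\ \bXi_n^{\rm c}\leq\boldsymbol\ell_n+2\eta\mathbf 1\quad\text{unless }\|n^{-1/2}\mathbf W\|_\infty>\eta.
\end{align*}
By Chebyshev applied to each coordinate, the exceptional probability is at most $\mathrm{tr}$ of the correction covariance divided by $\eta^2$. On the good event this trace is at most about $2(\tau+\tau/2)\leq 3\tau$, which gives the term $3\tau/\eta^2$. Combining with Step 2 and averaging over $\calF_n^0$ yields~\eqref{eq:completionbound}.

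The main obstacle is Step 2: obtaining a conditional Gaussian approximation that is uniform in the random $\calF_n^0$-measurable thresholds and covariance. It must also cope with the fact that $\Gamma_n$ is only approximately matched on the event intersected with $\calE_n$. The $\tau\mathrm{Id}_2$ regularization is exactly what keeps the limiting covariance nondegenerate and allows a uniform rate.
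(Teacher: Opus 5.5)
Your proposal is correct and follows essentially the same route as the paper: a union bound using both hypotheses, predictable stopping against $n(\Gamma_n+\tau\mathrm{Id}_2)$, independent completion to exact conditional covariance, a Chebyshev bound giving $3\tau/\eta^2$ on the good covariance event, and a conditional Gaussian approximation uniform over rectangles thanks to the $\tau\mathrm{Id}_2$ floor. The differences are cosmetic: you complete with a Gaussian vector where the paper uses $n$ small Rademacher sign steps, and you cite a martingale Berry--Esseen or Lindeberg bound where the paper uses a characteristic-function telescoping argument followed by a compactness argument for uniformity.
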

The proof of Lemma~\ref{lem:completion} is provided in Appendix~\ref{app:completion}. For its application, take ${{\calF}_n^{\rm 0}}={{\calS}}(\tilX^n,{{\Upsilon}})$ and reveal $\tilY_n,\ldots,\tilY_1$. For every $i\in[n]$,
\begin{align}
\bbE_{{P_n^{\rm o}}}\left[\left.
\begin{pmatrix}\imath_{{{\Pi}_{\calC}}({Q_i^{\rm s}})}^{\rm c}(\tilX_i,\tilY_i)\\d_{{{\Pi}_{\calC}}({Q_i^{\rm s}})}^{\rm c}(\tilX_i,\tilY_i)\end{pmatrix}
\right|\tilX^n,{{\Upsilon}},\tilY_{i+1}^n\right]=\begin{pmatrix}0\\0\end{pmatrix},\label{eq:gaussian-martingale-check}
\end{align}
where~\eqref{eq:gaussian-martingale-check} follows from~\eqref{eq:centered-scores}--\eqref{eq:centered-distortion-score}: ${{\Pi}_{\calC}}({Q_i^{\rm s}})$ is measurable given $(\tilX^n,{{\Upsilon}},\tilY_{i+1}^n)$ and $\tilY_i$ has conditional distribution $P_{Y|X}(\cdot|\tilX_i)$ under ${P_n^{\rm o}}$. The flagged message is source-measurable. The increments are uniformly bounded because the contact set is finite, its posteriors have full support, and distortion is bounded. Their normalized terminal value is ${\bXi_n=(J_n,G_n)^\top}$, and their predictable covariance is $\Omega_n$.

For this application, the covariance target and thresholds are
\begin{align}
\Gamma_n&=\Sigma(\bar{\btheta}_n),\label{eq:gaussian-target-thresholds}\\
({\ell_{n,1}},{\ell_{n,2}})&:=(L-\Lambda_n-\lambda^*\bart_n,\bart_n).
\end{align}
Lemma~\ref{lem:selectors} supplies exactly the hypotheses of Lemma~\ref{lem:completion}: its covariance claim is~\eqref{eq:completion-covariance-hypothesis} for the target in~\eqref{eq:gaussian-target-thresholds}, and its two threshold claims are~\eqref{eq:completion-threshold-hypothesis} for $(\Xi_{n,1},\Xi_{n,2})=(J_n,G_n)$. In contrast to Lemma~\ref{lem:selectors}, Lemma~\ref{lem:completion} is a generic martingale comparison; its proof uses these hypotheses and uniform increment bounds, rather than posterior geometry.

Starting from~\eqref{eq:selector-stage-bound}, we apply Lemma~\ref{lem:completion} with~\eqref{eq:gaussian-target-thresholds}. It replaces the retained channel event by its Gaussian rectangle probability. Optimizing that rectangle over the admissible weights and allocation leads to
\begin{align}
\Pcn(D)
&\leq\bbE\left[\Phi_{\Sigma(\bar{\btheta}_n)+\tau \mathrm{Id}_2}
(L-\Lambda_n-\lambda^*\bart_n+2\eta,\bart_n+2\eta)\right]
+3\tau/\eta^2+o(1),\label{eq:roadmap-gaussian-loss}\\
&\leq\bbE[F^{\rm c}_{\tau,\eta}(L-\Lambda_n)]+3\tau/\eta^2+o(1).
\label{eq:roadmap-envelope-loss}
\end{align}
where~\eqref{eq:roadmap-gaussian-loss} follows from Lemma~\ref{lem:completion}, whose hypotheses were just verified, and~\eqref{eq:roadmap-envelope-loss} follows from~\eqref{eq:smoothed-envelope} because $\bar{\btheta}_n\in\calA$. No additional localization is performed here:~\eqref{eq:selector-stage-bound} already completed that step. It remains to average the source fluctuation and remove the Gaussian smoothing.

To average $\Lambda_n$ and remove smoothing, we use Lemma~\ref{lem:gaussian}. Recall the definition of the compact set $\calA$ in~\eqref{eq:weightset}.

\begin{lemma}\label{lem:gaussian}
Under Assumptions~(\ref{ass:finite})--(\ref{ass:interior}), for any positive real numbers $(\tau,{\eta})\in\bbR_+^2$, $\ell\in\bbR$, $\btheta\in\calA$, and $t\in\bbR$,
\begin{align}
\Phi_{\Sigma(\btheta)+\tau {\mathrm{Id}_2}}(\ell-\lambda^*t+2{\eta},t+2{\eta})
\leq {F^{\rm c}_{\tau,\eta}}(\ell).\label{eq:pointwise-envelope-bound}
\end{align}
Furthermore, the following hold.
\begin{enumerate}
\item For any positive real numbers $(\tau,{\eta})\in\bbR_+^2$, ${F^{\rm c}_{\tau,\eta}}$ is bounded and continuous. For every $\ell\in\bbR$,
\begin{align}
\limsup_{\tau\to0}{F^{\rm c}_{\tau,\tau^{1/4}}}(\ell)\leq {F^{\rm c}}(\ell).\label{eq:smoothing-removal}
\end{align}
\item For every $\veps\in(0,1)$, the quantile $L_*(\veps)$ in~\eqref{eq:quantile} satisfies
\begin{align}
L<L_*(\veps)&\ \Longrightarrow\ F_*(L)<1-\veps,\label{eq:quantile-separation}\\
L>L_*(\veps)&\ \Longrightarrow\ F_*(L)>1-\veps.
\end{align}
\end{enumerate}
\end{lemma}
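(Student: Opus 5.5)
The displayed inequality~\eqref{eq:pointwise-envelope-bound} follows directly from the definition~\eqref{eq:smoothed-envelope}: the left side is one of the terms over which $F^{\rm c}_{\tau,\eta}(\ell)$ takes its supremum, because $\btheta\in\calA$. For Claim~(i), boundedness is clear since $F^{\rm c}_{\tau,\eta}$ takes values in $[0,1]$. For continuity I will prove a uniform Lipschitz bound.

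\emph{Continuity.} The first marginal of $\calN(\bzero_2,\Sigma(\btheta)+\tau\mathrm{Id}_2)$ has variance at least $\tau$, so its density is at most $(2\pi\tau)^{-1/2}$. Hence, for every $(\btheta,t)$,
\begin{align}
\bigl|\Phi_{\Sigma(\btheta)+\tau\mathrm{Id}_2}(\ell-\lambda^*t+2\eta,t+2\eta)-\Phi_{\Sigma(\btheta)+\tau\mathrm{Id}_2}(\ell'-\lambda^*t+2\eta,t+2\eta)\bigr|\leq\frac{|\ell-\ell'|}{\sqrt{2\pi\tau}}.
\end{align}
A supremum of uniformly Lipschitz functions is Lipschitz, so $F^{\rm c}_{\tau,\eta}$ is continuous.

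\emph{Smoothing removal.} Fix $\ell$ and take $\tau_k\to0$ realizing the $\limsup$ in~\eqref{eq:smoothing-removal}. Set $\eta_k=\tau_k^{1/4}$ and choose near-maximizers $(\btheta_k,t_k)$ within $1/k$ of $F^{\rm c}_{\tau_k,\eta_k}(\ell)$. Since $\calA$ is compact, pass to a subsequence with $\btheta_k\to\btheta\in\calA$. The map $\btheta\mapsto\Sigma(\btheta)$ is linear, so the covariances $\Sigma(\btheta_k)$ are bounded.
\begin{itemize}
\item If $t_k\to-\infty$, the second threshold $t_k+2\eta_k\to-\infty$. With bounded covariances, the rectangle probability tends to $0$.
\item If $t_k\to+\infty$, the first threshold $\ell-\lambda^*t_k+2\eta_k\to-\infty$, using $\lambda^*>0$ from Proposition~\ref{prop:optimal-properties}. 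Again the probability tends to $0$.
\item Otherwise, pass to $t_k\to t$. Then $\calN(\bzero_2,\Sigma(\btheta_k)+\tau_k\mathrm{Id}_2)\Rightarrow\calN(\bzero_2,\Sigma(\btheta))$. Bound each term by the probability of the fixed closed rectangle $(-\infty,\ell-\lambda^*t+\delta]\times(-\infty,t+\delta]$ for large $k$, apply the Portmanteau theorem, and let $\delta\downarrow0$. This gives $\limsup_k\leq\Phi_{\Sigma(\btheta)}(\ell-\lambda^*t,t)\leq F^{\rm c}(\ell)$. The argument is valid even for degenerate $\Sigma(\btheta)$, because it only uses closed rectangles.
\end{itemize}
In all three cases the $\limsup$ is at most $F^{\rm c}(\ell)$. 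The same argument with $\tau$ fixed shows that the supremum defining $F^{\rm c}(\ell)$ is attained or approached by a bounded $t$, which I will reuse in Claim~(ii).

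\emph{Claim~(ii).} $F_*(L)=\bbE[F^{\rm c}(L-\Lambda)]$ is nondecreasing in $L$, since each $\Phi_\Sigma$ is nondecreasing in its arguments. So if $L<L_*(\veps)$, then $F_*(L)<1-\veps$ by the definition~\eqref{eq:quantile} of the infimum. For $L>L_*(\veps)$ there is $L''\in[L_*(\veps),L)$ with $F_*(L'')\geq1-\veps$, so it suffices to show that $F_*$ is strictly increasing wherever $F_*<1$. I plan to take a maximizing $(\btheta,t)$ for $F^{\rm c}(\ell)$ and replace $t$ by $t+\delta/(2\lambda^*)$, which raises both thresholds by $\delta/2$. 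This strictly increases the Gaussian rectangle probability unless it already equals $1$, provided $\Sigma(\btheta)$ puts positive variance on a coordinate whose threshold is active. Averaging over $\Lambda$ then gives $F_*(L)>F_*(L'')$ whenever $F_*(L'')<1$. Right-continuity of $F_*$ (from the upper semicontinuity above and dominated convergence) handles the boundary case $F_*(L'')=1-\veps$ exactly.

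\emph{Main obstacle.} The main obstacle is this strictness step when $\Sigma(\btheta)$ is degenerate. If $V>0$, the unbounded support of $\Lambda$ helps. I would first try to prove, along the lines of Proposition~\ref{prop:optimal-properties}(v), that the distortion entry of every $\Sigma(\btheta)$, $\btheta\in\calA$, is bounded below by a positive constant. Failing that, I would argue directly that $F^{\rm c}$ cannot be flat below $1$ on an interval of positive $\Lambda$-probability.
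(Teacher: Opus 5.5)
Your proofs of the displayed bound, of Claim~(i), and of the first implication in Claim~(ii) are correct and follow the paper. For continuity you use the uniform Lipschitz bound from the first-marginal density $(2\pi\tau)^{-1/2}$. For smoothing removal you use near-maximizers, compactness of $\calA$, bounded allocations, closed-set Portmanteau and continuity from above.

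The gap is the second implication of Claim~(ii). You reduce it to strict increase of $F_*$ wherever $F_*<1$, but you do not prove this, and your first proposed route is false. There is no uniform positive lower bound on $\Sigma_{22}(\btheta)$ over $\calA$. In the paper's quaternary example, $\calK_D=\{\diag((1-\vartheta_{\rm m})\sigma_{\rmI}^2,\vartheta_{\rm m}\sigma_{\rmD}^2):\vartheta_{\rm m}\in[0,1]\}$. So $\Sigma_{22}=0$ at $\vartheta_{\rm m}=0$ and $\Sigma_{11}=0$ at $\vartheta_{\rm m}=1$, and the envelope~\eqref{eq:example-envelopes} is attained precisely at these singular endpoints. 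Proposition~\ref{prop:optimal-properties}(v) does not help: it bounds $\widetilde V(P)$, the spread of the mean distortions $\bard(x,\widetilde u)$ across labels, not the side-information-induced variance $\Sigma_{22}$. Your fallback is too vague to check. It also cannot work when $V=0$ (e.g., symmetric sources; cf.\ Lemma~\ref{lem:comparison}(ii)). There $F_*=F^{\rm c}$, so you need strict increase of $F^{\rm c}$ itself at the point $L''$, with no source averaging available.

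The missing observation is that degeneracy is harmless. For every positive semidefinite $\Sigma$ and $\delta_1,\delta_2>0$, $0<\Phi_\Sigma(a,b)<1$ implies $\Phi_\Sigma(a+\delta_1,b+\delta_2)>\Phi_\Sigma(a,b)$:
\begin{itemize}
\item In rank two this is positivity of the density.
\item In rank one, $(J,G)$ lives on a line with a positive one-dimensional density. The closed rectangle meets this line in a convex set of positive length that is not the whole line. A finite endpoint of that set lies in the interior of the enlarged rectangle, so the enlargement adds a segment of positive length.
\item Rank zero gives only $0$ or $1$.
\end{itemize}
You also need the maximum defining $F^{\rm c}(\ell)$ to be attained when $F^{\rm c}(\ell)>0$; near-maximizers alone would let the increments vanish. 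Your upper-semicontinuity and compactness argument at $\tau=0$ does give attainment. Together these yield $F^{\rm c}(\ell+\delta)>F^{\rm c}(\ell)$ whenever $0<F^{\rm c}(\ell)<1$, which is~\eqref{eq:envelope-strict-increase} and settles the case $V=0$.

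For $V>0$ the paper avoids pointwise strictness and instead shows $F^{\rm c}$ is a genuine cdf:
\begin{itemize}
\item it is right-continuous by upper semicontinuity;
\item $F^{\rm c}(\ell)\leq\sup_{\btheta\in\calA}\Pr\{J_{\btheta}+\lambda^*G_{\btheta}\leq\ell\}\to0$ as $\ell\to-\infty$;
\item $F^{\rm c}(\ell)\to1$ as $\ell\to\infty$, taking $t=\ell/(2\lambda^*)$.
\end{itemize}
Then $F_*$ is the cdf of the convolution with $\calN(0,V)$ and has a strictly positive density. This also covers the case where $F^{\rm c}$ jumps directly from $0$ to $1$, which your averaging sketch would otherwise have to handle separately.
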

The proof of Lemma~\ref{lem:gaussian} is provided in Appendix~\ref{app:gaussian}. Claim~(i) of Lemma~\ref{lem:gaussian} permits source averaging and removes smoothing from~\eqref{eq:roadmap-envelope-loss} with ${\eta}=\tau^{1/4}$. Claim~(ii) of Lemma~\ref{lem:gaussian} leads to the strict quantile separation used to conclude the converse below and to invert the achievability bound in Section~\ref{sec:ach-limit-passage}.

Taking the limit in~\eqref{eq:roadmap-envelope-loss}, we have
\begin{align}
\limsup_{n\to\infty}\Pcn(D)
&\leq\limsup_{n\to\infty}\bbE[{F^{\rm c}_{\tau,\eta}}(L-\Lambda_n)]+3\tau/{\eta}^2,\label{eq:smoothing}\\
&=\bbE[{F^{\rm c}_{\tau,\eta}}(L-{\Lambda})]+3\tau/{\eta}^2,\label{eq:v20-smoothing-2}
\end{align}
where~\eqref{eq:smoothing} follows from~\eqref{eq:roadmap-envelope-loss}, and~\eqref{eq:v20-smoothing-2} follows from the scalar central limit theorem $\Lambda_n\Rightarrow {\Lambda}$ and the bounded continuity in Claim~(i) of Lemma~\ref{lem:gaussian}. Set ${\eta}=\tau^{1/4}$, so the extra term $3\tau/{\eta}^2=3\sqrt\tau$ tends to zero. Then
\begin{align}
\limsup_{n\to\infty}\Pcn(D)
&\leq\bbE\Bigl[\limsup_{\tau\to0}{F^{\rm c}_{\tau,\tau^{1/4}}}(L-{\Lambda})\Bigr],\label{eq:converse}\\
&\leq\bbE[{F^{\rm c}}(L-{\Lambda})],\label{eq:converse-final-bound}\\
&=F_*(L),
\end{align}
where~\eqref{eq:converse} follows from~\eqref{eq:smoothing} and bounded reverse Fatou, while~\eqref{eq:converse-final-bound} follows from~\eqref{eq:smoothing-removal} and~\eqref{eq:Fstar}. Equation~\eqref{eq:converse-final-bound} is the bound required in~\eqref{eq:main-converse}. By Claim~(ii) of Lemma~\ref{lem:gaussian}, $L<L_*(\veps)$ implies $F_*(L)<1-\veps$, which excludes a sequence with normalized rate offset at most $L$ and limiting excess-distortion probability at most $\veps$. Combining~\eqref{eq:converse-final-bound} with the quantile separation~\eqref{eq:quantile-separation}, we obtain $L_{\mathrm{WZ}}^*(D,\veps)\geq L_*(\veps)$ and complete the proof of Theorem~\ref{thm:main}.

\section{Achievability Proof}
\label{sec:achievability}
Throughout this section, Assumptions~(\ref{ass:finite})--(\ref{ass:interior}) hold. Take any real number $L\in\bbR$. Here ${P_n^{\rm o}}$ denotes the probability distribution of the current construction, with $(X^n,Y^n)$ marginal $P_{XY}^n$.

\subsection{Type-Dependent Code Construction}
\label{sec:ach-code}

For the coding step, let ${\calU_{\rm a}}^n$ denote a finite auxiliary sequence alphabet. For $(x^n,y^n,u^n)\in\calX^n\times\calY^n\times{\calU_{\rm a}}^n$, let ${P_n^{\rm a}}(u^n|x^n)$ be a conditional distribution and let ${Q_n^{\rm a}}(u^n|y^n)>0$ be a decoder reference distribution. Let $z^n:{\calU_{\rm a}}^n\times\calY^n\to\calZ^n$ be a reconstruction map, with coordinates $z_i$, $i\in[n]$. Under ${P_n^{\rm o}}=P_{X^nY^n}{P_n^{\rm a}}$, define
\begin{align}
\imath_n(X^n;U^n|Y^n)&:=\log\frac{{P_n^{\rm a}}(U^n|X^n)}{{Q_n^{\rm a}}(U^n|Y^n)},\label{eq:block-aux-information}\\
D_n&:=\sum_{i\in[n]} d(X_i,z_i(U^n,Y^n)).\label{eq:block-aux-distortion}
\end{align}
\begin{lemma}\label{lem:oneshot}
For every positive integer $M\in\bbN$ and every positive real number $\gamma\in\bbR_+$, there exists a deterministic $(n,M)$-code such that
\begin{align}
\Pen(D)
&\leq1-\bbE\left[\frac{\bbI\{D_n\leq nD\}}{1+\exp\{\imath_n(X^n;U^n|Y^n)\}/M}\right],\label{eq:oneshot}\\
&\leq\Pr\{D_n>nD\ \mathrm{or}\ \imath_n(X^n;U^n|Y^n)>\log M-\gamma\}+\exp\{-\gamma\}.
\end{align}
\end{lemma}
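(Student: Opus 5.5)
This is a one-shot Wyner--Ziv bound of the likelihood-encoder / Poisson-matching type, and I would prove it by random coding followed by derandomization. First I generate a random codebook: draw i.i.d.\ sequences $U^n(1),U^n(2),\ldots$, each from a fixed auxiliary distribution, and bin them into $M$ bins. Equivalently, I use the Poisson functional representation of Li and Li, i.e., a Poisson process $\{(T_k,U_k)\}$ with intensity $\mathrm{Leb}\times\mu$ on $\bbR_+\times{\calU_{\rm a}}^n$ together with independent uniform bin labels $K_k\in[M]$. The encoder selects $k^*=\argmin_k T_k\,\mu(U_k)/{P_n^{\rm a}}(U_k|x^n)$ and sends $K_{k^*}$, so that $U_{k^*}\sim{P_n^{\rm a}}(\cdot|x^n)$ exactly. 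The decoder, knowing $y^n$ and the bin, selects $\hat k=\argmin_{k:K_k=K_{k^*}}T_k\,\mu(U_k)/{Q_n^{\rm a}}(U_k|y^n)$ and outputs $z^n(U_{\hat k},y^n)$. I will take $\mu$ to be a reference measure dominating both ${P_n^{\rm a}}$ and ${Q_n^{\rm a}}$; this is possible since ${Q_n^{\rm a}}>0$.

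The key step is the Poisson matching lemma: conditioned on $(X^n,Y^n,U_{k^*})$,
\[
\Pr\{\hat k\ne k^*\mid\cdot\}\leq 1-\frac{1}{1+\exp\{\imath_n(X^n;U^n|Y^n)\}/M}.
\]
To get this, I compare the decoder's tilted arrival time $T_{k^*}\mu/{Q_n^{\rm a}}$ with those of the other points in the same bin. By the mapping and thinning theorems, the competing points in the bin form a Poisson process of rate $1/M$ in the decoder's tilted time. The encoder's tilted time of the chosen point is $\mathrm{Exp}(1)$, so its decoder-tilted time equals that Exp variable multiplied by ${P_n^{\rm a}}/{Q_n^{\rm a}}=e^{\imath_n}$. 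Averaging gives
\[
\bbE\bigl[\exp\{-e^{\imath_n}E/M\}\bigr]=\frac{1}{1+e^{\imath_n}/M},
\]
and Jensen is not needed, since this expression is exact for the exponential case.

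On the event $\{\hat k=k^*\}\cap\{D_n\leq nD\}$ there is no excess distortion. Taking expectations therefore gives the first inequality of~\eqref{eq:oneshot} for the randomized code. Because the averaged excess-distortion probability satisfies the bound, some realization of the common randomness does too, which yields a deterministic $(n,M)$-code.

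For the second inequality, I split according to whether $\imath_n\leq\log M-\gamma$. On that event, $1/(1+e^{\imath_n}/M)\geq 1/(1+e^{-\gamma})\geq 1-e^{-\gamma}$. Hence
\[
1-\bbE[\cdot]\leq \Pr\{D_n>nD \text{ or } \imath_n>\log M-\gamma\}+e^{-\gamma}.
\]

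I expect the main obstacle to be a rigorous proof of the matching lemma, specifically its conditional independence structure. The competing points must remain a Poisson process after conditioning on the selected point, which holds by the standard Poisson selection property: the remaining points are Poisson conditioned to have larger encoder-tilted time. This conditioning only helps, because it makes the competitors arrive later. Since that step yields an inequality rather than an equality, I would cite or reproduce Li and Li's Poisson matching lemma rather than rederive it from scratch.
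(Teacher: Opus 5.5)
Your proposal is correct and follows essentially the paper's argument: the paper's finite array of independent unit exponentials $E_{s,u^n}$, indexed by $[M]\times\calU_{\rm a}^n$, is exactly the rescaled first arrival of your Poisson process in each bin--mark cell. The paper derives the exact matching probability $\bigl(1+\sum_{k\ne j}(P_{\rm r}(j)Q_{\rm r}(k)/Q_{\rm r}(j)-P_{\rm r}(k))_+\bigr)^{-1}$ from memorylessness of the residuals, and this is the explicit form of your observation that conditioning the competitors to arrive later only helps. One small correction: the Poisson matching lemma you plan to cite is due to Li and Anantharam, which is the matching method the paper credits, not Li and Li.
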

The proof of Lemma \ref{lem:oneshot} is provided in Appendix~\ref{app:oneshot}, which uses a finite exponential race based on the matching method of Li and Anantharam~\cite{poisson}. Lemma~\ref{lem:oneshot} with $\gamma=2\log n$ leads to~\eqref{eq:ach-error-reduction} after averaging over source types. We now specify its auxiliary distribution, decoder reference, and reconstruction. The reference may differ from the actual distribution of $U^n$ lead to n $Y^n$.

Recall the definitions of $\calC$ and $g_j$ in~\eqref{eq:zero-gap-posteriors} and~\eqref{eq:contact-rule-output}, respectively. For weights $\btheta\in\calA$ from~\eqref{eq:weightset} and every $(x,y,j)\in\calX\times\calY\times[m]$, the optimal distribution and its induced conditionals are determined by
\begin{align}
P_{\btheta}(x,y,j):=\theta_jQ_j(x)P_{Y|X}(y|x),\label{eq:optimal-joint}
\end{align}
The defining constraints of $\calA$ in~\eqref{eq:weightset} lead to
\begin{align}
\sum_{j\in[m]}\sum_{y\in\calY}P_{\btheta}(x,y,j)
&=\sum_{j\in[m]}\theta_jQ_j(x)=P_X(x),\label{eq:ach-optimal-source-marginal}\\
\bbE_{P_{\btheta}}[d(X,g_U(Y))]
&=\sum_{j\in[m]}\theta_j\delta(Q_j)=D,\label{eq:ach-optimal-distortion}
\end{align}
where~\eqref{eq:ach-optimal-source-marginal} uses the unit row sums of $P_{Y|X}$, and~\eqref{eq:ach-optimal-distortion} uses the Bayes rule $g_j$ at $Q_j$ from~\eqref{eq:contact-rule-output}. Thus, $P_{\btheta}$ is a joint distribution with the prescribed source marginal and expected distortion. Its optimality follows from the zero gaps of the contacts in~\eqref{eq:zero-gap-posteriors} and the identity~\eqref{eq:mixture-gap-identity}.
The induced conditional probabilities satisfy
\begin{align}
P_{\btheta}(j|x)&=\frac{\theta_jQ_j(x)}{P_X(x)},\label{eq:optimal-conditionals}\\
P_{\btheta}(j|y)&=\frac{\theta_jQ_{j,Y}(y)}{P_Y(y)},
\end{align}
where both identities follow from the joint distribution in~\eqref{eq:optimal-joint} and Bayes' rule.
Retain the contact-label alphabet $\calU_{\btheta}:=\{j\in[m]:\theta_j>0\}$ and use reconstruction $g_j$ for label $j$. The induced conditional probabilities are positive on $\calU_{\btheta}$, and the corresponding covariance is $\Sigma(\btheta)$ in~\eqref{eq:mixture-covariance}. When evaluating $\widetilde V(P_{\btheta})$, each retained label is identified with the index of its Bayes rule in $\widetilde{\calU}$, as justified in Appendix~\ref{app:geometry}.

Recall the definitions of $\bard_P$ and ${{\widetilde V}(P)}$ in~\eqref{eq:mean-distortion-source} and~\eqref{eq:existing-direction}, respectively. For $x\in\calX$ and retained labels $j$, define
\begin{align}
\bard_{\btheta}(x)&:=\bbE_{P_{\btheta}}[d(X,{g_U(Y)})|X=x],\label{eq:weight-mean-distortion}\\
{{\widetilde V}_{\btheta}}&:={{\widetilde V}(P_{\btheta})},,\label{eq:weight-direction-variance}\\
{\dot P_{\btheta}}(j|x)&:=\frac{P_{\btheta}(j|x)(\bard(x,j)-\bard_{\btheta}(x))}{{{\widetilde V}_{\btheta}}},\label{eq:distortion-direction}
\end{align}
where $\bard(x,j)=\sum_{y\in\calY}P_{Y|X}(y|x)d(x,g_j(y))$, consistent with~\eqref{eq:mean-distortion-rule}. The inequality ${{\widetilde V}_{\btheta}}\geq{\underline{\widetilde V}}>0$ follows from Claim~(v) of Proposition~\ref{prop:optimal-properties} and~\eqref{eq:uniform-direction}, so the denominator in~\eqref{eq:distortion-direction} stays uniformly away from zero. Appendix~\ref{app:types} proves that the direction ${\dot P_{\btheta}}$ preserves each row sum and has unit distortion derivative. Let $\calP_n(\calX)$ be the set of length-$n$ source types. For $\hatP\in\calP_n(\calX)$, $x^n$ of type $\hatP$, and every $x\in\calX$, define
\begin{align}
\hatT_{x^n}(x)&:=n^{-1}\sum_{i\in[n]}\bbI\{x_i=x\},\label{eq:empirical-source-type}\\
\bxi&:=\sqrt n(\hatP-P_X),\label{eq:type-fluctuation}\\
\Lambda_n(x^n)&=\langle\bxi,\jW{\cdot}-R_{\mathrm{WZ}}(D)\rangle,\label{eq:type-source-score}
\end{align}
where~\eqref{eq:type-source-score} is the type representation of the source score in~\eqref{eq:source-block-score}. The retained alphabet $\calU_{\btheta}$ was defined above. We use the standard definitions of conditional types and their classes~\cite[Sec.~1.4]{TanBook}. lead to n $\hatP\in\calP_n(\calX)$, define
\begin{align}
\calP_n(\calU_{\btheta}|\hatP)
:=\bigl\{Q_{U|X}\in\calP(\calU_{\btheta}|\calX):\;
n\hatP(x)Q_{U|X}(j|x)\in\{0,1,\ldots,n\hatP(x)\}
\text{ for all }(x,j)\in\calX\times\calU_{\btheta}\bigr\}.
\label{eq:conditional-type-set}
\end{align}
Thus, a conditional type specifies integer joint counts whose sum in each source row is $n\hatP(x)$. For an allocation $t\in\bbR$ and $(x,j)\in\calX\times\calU_{\btheta}$, define the target conditional distribution by
\begin{align}
P_{\btheta,n}(j|x)
:=P_{\btheta}(j|x)
+\frac{-t-\langle\bxi,\bard_{\btheta}\rangle}{\sqrt n}
\dot P_{\btheta}(j|x).
\label{eq:perturb}
\end{align}
The row-sum and unit-distortion-derivative identities~\eqref{eq:v20-direction-1}--\eqref{eq:v20-direction-2} explain this choice: the first preserves normalization, while the second cancels the source-type distortion shift and imposes the allocation $t$. For each fixed $\btheta$, bounded $t$, and $\|\bxi\|\leq n^{1/8}$, the change in every conditional probability is $O(n^{-3/8})$. Since all retained probabilities $P_{\btheta}(j|x)$ are positive, $P_{\btheta,n}$ remains a positive conditional distribution for all sufficiently large $n$.

Let $n_x:=n\hatP(x)$. Round each $n_xP_{\btheta,n}(j|x)$ down to an integer and add one to the required number of entries to make the row sum $n_x$. This constructs $\hatP_n\in\calP_n(\calU_{\btheta}|\hatP)$ satisfying the following bound for every $(x,j)\in\calX\times\calU_{\btheta}$:
\begin{align}
\left|\hatP(x)\hatP_n(j|x)-\hatP(x)P_{\btheta,n}(j|x)\right|
\leq\frac1n.
\label{eq:conditional-type-rounding-error}
\end{align}
For a pair $(x^n,u^n)\in\calX^n\times\calU_{\btheta}^n$, its joint type is
\begin{align}
\hatT_{x^n,u^n}(x,j)
:=\frac1n\sum_{i\in[n]}\bbI\{(x_i,u_i)=(x,j)\}.
\label{eq:empirical-joint-type}
\end{align}
For $x^n$ of source type $\hatP$, the conditional type class is
\begin{align}
\calT_{\hatP_n}(x^n)
:=\{u^n\in\calU_{\btheta}^n:\;\hatT_{x^n,u^n}=\hatP\hatP_n\}.
\label{eq:conditional-type-class}
\end{align}
Here $\hatP\hatP_n$ denotes the joint distribution with entries $\hatP(x)\hatP_n(j|x)$. The integer row counts ensure that this class is nonempty: assign the prescribed number of labels to the $n_x$ positions containing each source symbol $x$. The rounding error in~\eqref{eq:conditional-type-rounding-error} changes the total mean distortion by $O(1)$; Appendix~\ref{app:types} shows that its effect on the information mean is also $o(\sqrt n)$. Thus, rounding does not change the second-order means in Lemma~\ref{lem:types}.
Let ${P_n^{\rm a}}(\cdot|x^n)$ be uniform on this set, choose ${Q_n^{\rm a}}(u^n|y^n)=\prod_{i=1}^nP_{\btheta}(u_i|y_i)$, and reconstruct with $z_i=g_{u_i}(y_i)$. These are the three inputs required by Lemma~\ref{lem:oneshot}; Lemma~\ref{lem:types} states their conditional means and Gaussian fluctuations.

\begin{lemma}\label{lem:types}
Fix $\btheta\in\calA$ and a compact interval $\calI\subset\bbR$, retaining only positive-weight labels. For all sufficiently large $n$, the construction is well defined when
\begin{align}
t&\in\calI,\label{eq:type-admissibility}\\
\hatT_{x^n}&=\hatP,\\
\|\sqrt n(\hatP-P_X)\|&\leq n^{1/8},\label{eq:type-cutoff}\\
u^n&\in\calT_{\hatP_n}(x^n).\label{eq:type-admissibility-end}
\end{align}
Recall the definitions of $\imath_n$ and $D_n$ in~\eqref{eq:block-aux-information} and~\eqref{eq:block-aux-distortion}, respectively; their arguments $(X^n,U^n,Y^n)$ are suppressed below. For each pair $(x^n,u^n)\in\calX^n\times{\calU_{\btheta}}^n$, all expectations and convergence statements are with respect to $Y^n\sim P_{Y|X}^n(\cdot|x^n)$. The following hold uniformly over~\eqref{eq:type-admissibility}--\eqref{eq:type-admissibility-end}.
\begin{enumerate}
\item Its conditional means satisfy
\begin{align}
\frac{{\bbE_{P_n^{\rm o}}}[\imath_n|X^n,U^n]-nR_{\mathrm{WZ}}(D)}{\sqrt n}
&=\Lambda_n+{\lambda^*} t+o(1),\label{eq:type-means}\\
\frac{{\bbE_{P_n^{\rm o}}}[D_n|X^n,U^n]-nD}{\sqrt n}
&=-t+o(1).\label{eq:type-distortion-mean}
\end{align}
\item Conditionally on $(X^n,U^n)$, the centered vector satisfies
\begin{align}
\frac1{\sqrt n}
\begin{pmatrix}
\imath_n-{\bbE_{P_n^{\rm o}}}[\imath_n|X^n,U^n]\\
D_n-{\bbE_{P_n^{\rm o}}}[D_n|X^n,U^n]
\end{pmatrix}\Rightarrow {\calN}(\bzero_2,\Sigma(\btheta)).\label{eq:typeclt}
\end{align}
Convergence holds along every sequence of admissible $(x^n,u^n,t)$, including when $\Sigma(\btheta)$ is singular.
\end{enumerate}
\end{lemma}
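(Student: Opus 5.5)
The plan is to condition on $(X^n,U^n)=(x^n,u^n)$ and exploit the fact that, under $P_n^{\rm o}$, the side information coordinates $Y_i$ are then independent with laws $P_{Y|X}(\cdot|x_i)$. This holds because $U^n$ is generated from $x^n$ alone. Both $\imath_n$ and $D_n$ then become a deterministic term plus a sum of independent bounded terms indexed by the pairs $(x_i,u_i)$. Well-definedness for large $n$ was already argued before the lemma: the perturbation in~\eqref{eq:perturb} is $O(n^{-3/8})$ under~\eqref{eq:type-admissibility}--\eqref{eq:type-cutoff}, the retained probabilities are bounded away from zero, and the conditional type class is nonempty. So I take it for granted.

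\textbf{Step 1 (exact form of the conditional means).} Since $P_n^{\rm a}(\cdot|x^n)$ is uniform on $\calT_{\hatP_n}(x^n)$, the standard type-class size estimate gives $-\log|\calT_{\hatP_n}(x^n)|=-nH(\hatP_n|\hatP)+O(\log n)$. Also $-\log Q_n^{\rm a}(u^n|Y^n)=-\sum_i\log P_{\btheta}(u_i|Y_i)$. Writing $W:=\hatP\hatP_n$ for the joint type of $(x^n,u^n)$, I obtain $\bbE[\imath_n|x^n,u^n]=nF(W)+O(\log n)$ with
\[
F(W):=\sum_{x,j}W(x,j)\Bigl[\log W(j|x)-\sum_{y}P_{Y|X}(y|x)\log P_{\btheta}(j|y)\Bigr].
\]
Similarly, $\bbE[D_n|x^n,u^n]=n\sum_{x,j}W(x,j)\bard(x,j)$. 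At $W=P_{\btheta}$, the second quantity equals $D$ by~\eqref{eq:ach-optimal-distortion}. The first equals $I(X;U|Y)=R_{\mathrm{WZ}}(D)$, by optimality of $P_{\btheta}$ and the identity~\eqref{eq:mixture-gap-identity}.

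\textbf{Step 2 (first-order expansion).} I write $W-P_{\btheta}=n^{-1/2}\bigl[\xi(x)P_{\btheta}(j|x)+P_X(x)\,(-t-\langle\bxi,\bard_{\btheta}\rangle)\dot P_{\btheta}(j|x)\bigr]+r_n$. By~\eqref{eq:type-cutoff} and~\eqref{eq:conditional-type-rounding-error}, the remainder satisfies $\|r_n\|=O(n^{-3/4}+n^{-1})$, uniformly over the admissible set. Since $F$ is smooth near $P_{\btheta}$ (all entries are positive), the second-order terms contribute $O(n\cdot n^{-3/4})=o(\sqrt n)$.

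For the distortion, the unit-distortion-derivative identity for $\dot P_{\btheta}$ gives the first-order term $\langle\bxi,\bard_{\btheta}\rangle+(-t-\langle\bxi,\bard_{\btheta}\rangle)=-t$, which proves~\eqref{eq:type-distortion-mean}.

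For the information, I differentiate $F$. The derivative of the $\log W(j|x)$ part produces $\log P_{\btheta}(j|x)$ plus terms that vanish by row sums. So the first-order term is $\sum_{x,j}\Delta W(x,j)\,\bar\imath(x,j)$, where $\bar\imath(x,j)$ is the conditional mean of $\imath_{P_{\btheta}}$ given $(x,j)$. The source part contributes $\langle\bxi,\bbE[\imath|X=\cdot]\rangle$.

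For the direction part, the key step is the contact property: $\bar\imath(x,j)+\lambda^*\bard(x,j)$ does not depend on the retained label $j$. This should follow from the zero-gap/KKT structure, i.e., Lemma~\ref{lem:geometry} and Claim~(iii) of Proposition~\ref{prop:optimal-properties}. Combined with row-sum preservation and the unit distortion derivative, it gives $\sum_j\dot P_{\btheta}(j|x)\bar\imath(x,j)$ summing to $-\lambda^*$. The direction term is therefore $\lambda^*(t+\langle\bxi,\bard_{\btheta}\rangle)$. Adding the source part and using $\jW{x}=\bbE[\imath|X=x]+\lambda^*(\bard_{\btheta}(x)-D)$ together with $\sum_x\xi(x)=0$ yields $\Lambda_n+\lambda^*t$. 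This is~\eqref{eq:type-means}.

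\textbf{Step 3 (CLT).} Conditionally, the centered vector is $n^{-1/2}\sum_i\boldsymbol{\zeta}_i$ with independent, mean-zero, uniformly bounded $\boldsymbol{\zeta}_i$. Here $\boldsymbol{\zeta}_i$ is the centered pair $(-\log P_{\btheta}(u_i|Y_i),\,d(x_i,g_{u_i}(Y_i)))$; the $\log P_n^{\rm a}$ term is constant given $(x^n,u^n)$. The conditional covariance of the first coordinate matches that of $\imath_{u_i}^{\rm c}$, because the two differ by a function of $(x_i,u_i)$. Its covariance sum divided by $n$ equals $\sum_{x,j}W(x,j)\Sigma_j(x)$. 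Since $W\to P_{\btheta}$ uniformly, this tends to $\Sigma(\btheta)$ by~\eqref{eq:mixture-covariance}. I then apply Cram\'er--Wold with the scalar Lindeberg CLT along an arbitrary admissible sequence; Lindeberg is trivial by boundedness. Directions $a$ with $a^\top\Sigma(\btheta)a=0$ have vanishing variance, so the singular case converges to the degenerate limit.

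The main obstacle is Step 2: establishing the label-independence of $\bar\imath(x,j)+\lambda^*\bard(x,j)$ on the support, and showing that the expansion with the $O(\log n)$ type-class correction and the rounding errors is uniform in $(t,\bxi)$.
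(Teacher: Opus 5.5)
Your proposal is correct and follows essentially the same route as the paper. Both rely on four ingredients: the conditional type-class cardinality estimate, the row-sum and unit-distortion-derivative identities for $\dot P_{\btheta}$, the label-independence of $\bar\imath(x,j)+\lambda^*\bard(x,j)$, and a conditional CLT for independent bounded summands. The label-independence you flag as the ``main obstacle'' is exactly the stationarity identity~\eqref{eq:stationarity} of Lemma~\ref{lem:geometry}(v), so it is already available. The only difference is organizational: instead of Taylor-expanding your $F$, the paper uses the exact identity $nF(W)=\sum_i\bar\imath(x_i,u_i)+n\operatorname{KL}(\hatP_n\|P_{\btheta}|\hatP)$. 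Stationarity then turns the sum into $n(R_{\mathrm{WZ}}(D)+\lambda^*D)+\sqrt n\Lambda_n-\lambda^*\bbE[D_n|X^n,U^n]$, and the KL term is bounded by a $\chi^2$ estimate of order $(1+\|\bxi\|+|t|)^2$.
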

The proof of Lemma~\ref{lem:types} is provided in Appendix~\ref{app:types}. Its conditional means and Gaussian limit evaluate the auxiliary event from Lemma~\ref{lem:oneshot}, which is used for analyses in the next subsections.

\subsection{Non-Excess-Distortion Probability and the Optimal Rate}
\label{sec:ach-analysis}
We first bound the operational error using the constructed auxiliary distribution.

\subsubsection{Operational Error and Conditional Gaussian Constraints}
\label{sec:ach-error-gaussian}

Fix a nonempty finite set $\calA_{\rm f}\subset\calA$ of optimal weights, a real number $L\in\bbR$, and a positive real number $\eta\in\bbR_+$. Recall $\Lambda_n$ from~\eqref{eq:source-block-score}. Define the message size used by each source-type code and the information slack by
\begin{align}
\widetilde M
&:=\left\lceil\exp\{nR_{\mathrm{WZ}}(D)
+\sqrt n[L+(1+\lambda^*)\eta]\}\right\rceil,
\label{eq:payload-choice}\\
\gamma_n&:=2\log n.\label{eq:payload-slack}
\end{align}
There are $|\calP_n(\calX)|\leq(n+1)^{|\calX|}$ source types. For each source type $\hatP$ satisfying the cutoff, use Lemma~\ref{lem:oneshot} to choose a code with $\widetilde M$ messages under the source distribution conditioned on that type; its selected weights and allocation will be specified in Section~\ref{sec:ach-source-selection}. Denote its encoder and decoder by $f_{n,\hatP}$ and $\phi_{n,\hatP}$. For an excluded type, define these maps to use a fixed message and reconstruction. The full encoder sends the source type together with this code's message, and the decoder applies the corresponding source-type decoder for $(\hatP,s,y^n)\in\calP_n(\calX)\times[\widetilde M]\times\calY^n$:
\begin{align}
f_n(x^n)&:=\bigl(\hatT_{x^n},f_{n,\hatT_{x^n}}(x^n)\bigr),
\label{eq:ach-full-encoder}\\
\phi_n((\hatP,s),y^n)&:=\phi_{n,\hatP}(s,y^n).
\label{eq:ach-full-decoder}
\end{align}
The message pairs can be indexed by $[M]$ with $M:=|\calP_n(\calX)|\widetilde M$, so this is an $(n,M)$-code and
\begin{align}
\log M
&=nR_{\mathrm{WZ}}(D)+\sqrt n[L+(1+\lambda^*)\eta]+O(\log n).
\label{eq:ach-total-message-size}
\end{align}
The source type determines the weights and allocation, and its description costs at most $|\calX|\log(n+1)$ nats. This specifies an encoder and decoder on every source sequence. Let $Z^n=\phi_n(f_n(X^n),Y^n)$ be the actual code reproduction; $D_n$ remains the auxiliary distortion sum.
On each source type satisfying the cutoff in~\eqref{eq:type-cutoff}, define
\begin{align}
\calE_n^{\rm u}:=\{D_n>nD\}\cup\{\imath_n>\log\widetilde M-\gamma_n\}.\label{eq:ach-auxiliary-failure}
\end{align}
On an excluded type, define $\calE_n^{\rm u}$ to be the whole conditional probability space, so its conditional probability is one. The assembled code satisfies
\begin{align}
\Pen(D)
&=\sum_{\hatP\in\calP_n(\calX)}\Pr\{\hatT_{X^n}=\hatP\}
\Pr\{{d^{(n)}}(X^n,Z^n)>D|\hatT_{X^n}=\hatP\},\label{eq:ach-error-reduction}\\
&\leq\sum_{\hatP\in\calP_n(\calX)}\Pr\{\hatT_{X^n}=\hatP\}
\big[\Pr\{\calE_n^{\rm u}|\hatT_{X^n}=\hatP\}+n^{-2}\big],\label{eq:v20-type-error-2}\\
&=\Pr\{\calE_n^{\rm u}\}+n^{-2},\label{eq:v20-type-error-3}
\end{align}
where~\eqref{eq:ach-error-reduction} follows by conditioning on the source type, ~\eqref{eq:v20-type-error-2} follows from Lemma~\ref{lem:oneshot} with $\gamma_n=2\log n$ on each retained type, whose conditional channel remains $P_{Y|X}^n$, and from the unit error bound on excluded types, and~\eqref{eq:v20-type-error-3} follows by averaging the type-dependent auxiliary distributions.
Each subcode is fixed with error no larger than its random-code mean. No shared randomness is retained in the resulting code. The probability in the last line averages the type-dependent auxiliary distributions on retained types and assigns auxiliary failure probability one to excluded types; it bounds the operational excess-distortion probability.

lead to n a retained type with source score $\Lambda_n$, choose a weight vector $\btheta\in{\calA_{\rm f}}$ and use allocation $t+\eta$ in~\eqref{eq:perturb}. Define the centered scores
\begin{align}
{\Xi_{n,1}^{\rm a}}&:=\frac{\imath_n-\bbE[\imath_n|X^n,U^n]}{\sqrt n},\label{eq:ach-centered-information}\\
{\Xi_{n,2}^{\rm a}}&:=\frac{D_n-\bbE[D_n|X^n,U^n]}{\sqrt n}.\label{eq:ach-centered-distortion}
\end{align}
On retained types, the auxiliary event is
\begin{align}
(\calE_n^{\rm u})^c
=\big\{{\Xi_{n,1}^{\rm a}}\leq L-\Lambda_n-{\lambda^*} t+\eta+o(1),\quad
{\Xi_{n,2}^{\rm a}}\leq t+\eta+o(1)\big\},\label{eq:ach-rectangle}
\end{align}
where~\eqref{eq:ach-rectangle} follows from~\eqref{eq:ach-auxiliary-failure},~\eqref{eq:payload-choice}, and Claim~(i) of Lemma~\ref{lem:types}. The remainders, including $\gamma_n/\sqrt n$ and type rounding, vanish uniformly over permitted types, bounded allocations, and the finite set ${\calA_{\rm f}}$. We next lower bound this event using source-dependent choices of $(\btheta,t)$.

\subsubsection{Source-Dependent Selection and Gaussian Averaging}
\label{sec:ach-source-selection}
From the auxiliary event in~\eqref{eq:ach-rectangle}, the operational non-excess-distortion probability satisfies
\begin{align}
\Pcn(D)\geq \Pr\{(\calE_n^{\rm u})^c\}-n^{-2},\label{eq:ach-success-reduction}
\end{align}
where~\eqref{eq:ach-success-reduction} follows from~\eqref{eq:ach-error-reduction}. Recall $\Lambda_n$ and ${F^{\rm c}_{\btheta}}$ from~\eqref{eq:source-block-score} and~\eqref{eq:gaussian-envelopes}, respectively, and ${\Lambda}\sim {\calN}(0,V)$ from Section~\ref{sec:main-result}. The finite set ${\calA_{\rm f}}\subset\calA$ was fixed in Section~\ref{sec:ach-error-gaussian}. Take any positive real number ${\veps_{\rm d}}\in\bbR_+$ and choose a positive real number ${{A}}\in\bbR_+$ such that $\Pr\{|{\Lambda}|>{{A}}\}\leq{\veps_{\rm d}}$. Partition $[-{{A}},{{A}}]$ into disjoint intervals $\calI_{{j_{\calI}}}$, ${j_{\calI}}\in[{m_{\calI}}]$, of length at most ${{\eta_0}}\in(0,\eta/[8(1+\lambda^*)])$, and let ${\ell_{j_{\calI}}^{\rm s,-}}$ denote the left endpoint of $\calI_{{j_{\calI}}}$. Choose the endpoints to have zero Gaussian probability; when $V=0$, place $0$ inside one interval. Choose a finite positive real number ${t_{\max}}\in\bbR_+$ and, for each ${j_{\calI}}\in[{m_{\calI}}]$, a pair $(\btheta_{{j_{\calI}}},\hatt_{{j_{\calI}}})\in{\calA_{\rm f}}\times[-{t_{\max}},{t_{\max}}]$ satisfying
\begin{align}
\Phi_{\Sigma(\btheta_{{j_{\calI}}})}(L-{\ell_{j_{\calI}}^{\rm s,-}}-\lambda^*\hatt_{{j_{\calI}}},\hatt_{{j_{\calI}}})
\geq\max_{\btheta\in{\calA_{\rm f}}}{F^{\rm c}_{\btheta}}(L-{\ell_{j_{\calI}}^{\rm s,-}})-2{\veps_{\rm d}}.\label{eq:ach-cell-choice}
\end{align}
Such choices exist by~\eqref{eq:gaussian-envelopes}: the set and partition are finite, and each supremum admits a finite ${\veps_{\rm d}}$-maximizing allocation. Take ${t_{\max}}$ larger than all chosen allocations. Round each $\hatt_{{j_{\calI}}}$ to a finite grid to obtain $|t_{{j_{\calI}}}-\hatt_{{j_{\calI}}}|\leq {{\eta_0}}$, and use $(\btheta_{{j_{\calI}}},t_{{j_{\calI}}}+\eta)$ for types with $\Lambda_n\in\calI_{{j_{\calI}}}$ satisfying the cutoff. For retained types with $\Lambda_n\notin[-{A},{A}]$, choose any fixed weight vector in $\calA_{\rm f}$ and allocation zero. These types are discarded in the following lower bound. Then
\begin{align}
\Pcn(D)
&\geq\sum_{{j_{\calI}}\in[{m_{\calI}}]}\Pr\{\Lambda_n\in\calI_{{j_{\calI}}},\ \|\sqrt n(\hatT_{X^n}-P_X)\|\leq n^{1/8}\}\notag\\
&\qquad\times\inf_{\substack{x^n\in\calX^n:\ \Lambda_n(x^n)\in\calI_{{j_{\calI}}}\\
\|\sqrt n(\hatT_{x^n}-P_X)\|\leq n^{1/8}}}
\Pr\{(\calE_n^{\rm u})^c|X^n=x^n\}-n^{-2},\label{eq:ach-partition-bound}
\end{align}
where~\eqref{eq:ach-partition-bound} follows from~\eqref{eq:ach-success-reduction} by conditioning on $X^n$, retaining the displayed source events, and bounding each conditional probability by its infimum. An interval with no retained sequence contributes zero.

For a retained $x^n$ in $\calI_{{j_{\calI}}}$, put ${\ell_{\rm I}}:=L-{\ell_{j_{\calI}}^{\rm s,-}}-\lambda^*\hatt_{{j_{\calI}}}$. The conditional success probability satisfies
\begin{align}
\liminf_{n\to\infty}\inf_{\substack{x^n\in\calX^n:\ \Lambda_n(x^n)\in\calI_{{j_{\calI}}}\\
\|\sqrt n(\hatT_{x^n}-P_X)\|\leq n^{1/8}}}
\Pr\{(\calE_n^{\rm u})^c|X^n=x^n\}
&\geq \Pr\{J_{\btheta_{{j_{\calI}}}}<{\ell_{\rm I}}+\eta/4,\ G_{\btheta_{{j_{\calI}}}}<\hatt_{{j_{\calI}}}+\eta/4\},\label{eq:ach-cell-open}\\
&\geq\Phi_{\Sigma(\btheta_{{j_{\calI}}})}({\ell_{\rm I}},\hatt_{{j_{\calI}}}),\label{eq:ach-cell-closed}\\
&\geq\max_{\btheta\in{\calA_{\rm f}}}{F^{\rm c}_{\btheta}}(L-{\ell_{j_{\calI}}^{\rm s,-}})-2{\veps_{\rm d}},\label{eq:cell-success}
\end{align}
where~\eqref{eq:ach-cell-open} follows from~\eqref{eq:ach-rectangle}, Claim~(ii) of Lemma~\ref{lem:types}, and the open-set Portmanteau bound~\cite[Sec.~2]{billingsley}. Indeed, $|\Lambda_n(x^n)-{\ell_{j_{\calI}}^{\rm s,-}}|\leq {{\eta_0}}$ and $|t_{{j_{\calI}}}-\hatt_{{j_{\calI}}}|\leq {{\eta_0}}$ leave at least $\eta/2$ in both finite-$n$ thresholds for all large $n$. The open rectangle therefore lies inside the auxiliary event, including for singular limiting covariances. Uniformity over retained types and auxiliary sequences follows by applying the sequencewise convergence in Lemma~\ref{lem:types} to any putative violating sequence and then averaging over $U^n$. Equation~\eqref{eq:ach-cell-closed} follows by rectangle inclusion, and~\eqref{eq:cell-success} follows from~\eqref{eq:ach-cell-choice}. Infima are taken only over nonempty retained sets.

Averaging these conditional bounds leads to
\begin{align}
\liminf_{n\to\infty}(\Pcn(D))
&\geq\sum_{{j_{\calI}}\in[{m_{\calI}}]}\Pr\{{\Lambda}\in\calI_{{j_{\calI}}}\}
\big[\max_{\btheta\in{\calA_{\rm f}}}{F^{\rm c}_{\btheta}}(L-{\ell_{j_{\calI}}^{\rm s,-}})-2{\veps_{\rm d}}\big],\label{eq:ach-cell-average}\\
&\geq\bbE\Bigl[\bbI\{|{\Lambda}|\leq {{A}}\}\max_{\btheta\in{\calA_{\rm f}}}{F^{\rm c}_{\btheta}}(L-{\Lambda})\Bigr]-2{\veps_{\rm d}},\label{eq:ach-truncated-average}\\
&\geq\bbE\Bigl[\max_{\btheta\in{\calA_{\rm f}}}{F^{\rm c}_{\btheta}}(L-{\Lambda})\Bigr]-3{\veps_{\rm d}},\label{eq:ach-finite-design}
\end{align}
where~\eqref{eq:ach-cell-average} follows from~\eqref{eq:ach-partition-bound} and~\eqref{eq:cell-success}. The cutoff fails with probability $O(n^{-1/4})$ by Markov's inequality and the bounded second moment of $\sqrt n(\hatT_{X^n}-P_X)$; $\Pr\{\Lambda_n\in\calI_{{j_{\calI}}}\}\to \Pr\{{\Lambda}\in\calI_{{j_{\calI}}}\}$ follows from the scalar central limit theorem and the choice of endpoints. Equation~\eqref{eq:ach-truncated-average} follows from monotonicity of ${F^{\rm c}_{\btheta}}$ and ${\ell_{j_{\calI}}^{\rm s,-}}\leq {\Lambda}$ on $\calI_{{j_{\calI}}}$. Equation~\eqref{eq:ach-finite-design} follows from $0\leq {F^{\rm c}_{\btheta}}\leq1$ and $\Pr\{|{\Lambda}|>{{A}}\}\leq{\veps_{\rm d}}$.

Letting ${\veps_{\rm d}}\downarrow0$ through successive finite designs leads to
\begin{align}
\liminf_{n\to\infty}(\Pcn(D))
\geq\bbE\Bigl[\max_{\btheta\in{\calA_{\rm f}}}{F^{\rm c}_{\btheta}}(L-{\Lambda})\Bigr],\label{eq:ach-finite-family}
\end{align}
where~\eqref{eq:ach-finite-family} follows from~\eqref{eq:ach-finite-design} by a diagonal choice of blocklength thresholds. The next part makes this choice together with removal of the finite-set restriction and the rate enlargement $L+(1+\lambda^*)\eta$ in~\eqref{eq:payload-choice}.

\subsubsection{Limit Passage and Exact Second-Order Rate}
\label{sec:ach-limit-passage}
Recall the definitions of $\calA$, ${F^{\rm c}_{\btheta}}$, and ${F^{\rm c}}$ in~\eqref{eq:weightset},~\eqref{eq:gaussian-envelopes}, and~\eqref{eq:full-pointwise-envelope}, respectively. The following approximation removes the finite-set restriction.

\begin{lemma}\label{lem:finite-approximation}
Under Assumptions~(\ref{ass:finite})--(\ref{ass:interior}), there are increasing nonempty finite subsets ${\calA_k}\subset\calA$, $k\in\bbN$, such that, for every $\ell\in\bbR$,
\begin{align}
\max_{\btheta\in{\calA_k}}{F^{\rm c}_{\btheta}}(\ell)\longrightarrow {F^{\rm c}(\ell)},\label{eq:finiteapprox}
\end{align}
monotonically.
\end{lemma}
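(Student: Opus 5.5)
We take $\calA_k$ to be nested finite $1/k$-nets of the compact set $\calA$, namely $\calA_k:=\bigcup_{k'\leq k}\calN_{k'}$, where $\calN_{k'}$ is a finite $1/k'$-net of $\calA$. Compactness of $\calA$ is stated after~\eqref{eq:weightset}. The sets $\calA_k$ are finite, nonempty, and increasing, and $\bigcup_k\calA_k$ is dense in $\calA$. Monotonicity of $\max_{\btheta\in\calA_k}F^{\rm c}_{\btheta}(\ell)$ in $k$ is then automatic. Each term is bounded above by $F^{\rm c}(\ell)$ because $\calA_k\subset\calA$. It therefore remains to show that, for each fixed $\ell$, $\sup_{\btheta\in D_\infty}F^{\rm c}_{\btheta}(\ell)\geq F^{\rm c}(\ell)$, where $D_\infty:=\bigcup_k\calA_k$.

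For this lower bound we use lower semicontinuity of $\btheta\mapsto F^{\rm c}_{\btheta}(\ell)$. Fix $\veps>0$, and choose $\btheta^\star\in\calA$ and $t^\star$ with $\Phi_{\Sigma(\btheta^\star)}(\ell-\lambda^*t^\star,t^\star)\geq F^{\rm c}(\ell)-\veps$. The map $\btheta\mapsto\Sigma(\btheta)$ in~\eqref{eq:mixture-covariance} is linear, so it is continuous. Take $\btheta_k\in\calA_k$ with $\btheta_k\to\btheta^\star$. Then $(J_{\btheta_k},G_{\btheta_k})\Rightarrow(J_{\btheta^\star},G_{\btheta^\star})$, since Gaussian laws converge weakly when their covariances converge, including in the singular case. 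The open-set Portmanteau bound gives
\begin{align}
\liminf_{k\to\infty}F^{\rm c}_{\btheta_k}(\ell)\geq\liminf_{k\to\infty}\Pr\{J_{\btheta_k}<\ell-\lambda^*t^\star+\delta,\ G_{\btheta_k}<t^\star+\delta\}\geq\Pr\{J_{\btheta^\star}<\ell-\lambda^*t^\star+\delta,\ G_{\btheta^\star}<t^\star+\delta\}\notag
\end{align}
for each fixed $\delta>0$. For the first inequality we use $F^{\rm c}_{\btheta_k}(\ell)\geq\Phi_{\Sigma(\btheta_k)}(\ell-\lambda^*t^\star+\delta,t^\star+\delta)$ only after shifting the allocation, so we need to handle the fact that the same $t$ appears in both coordinates.

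This coupling of the two coordinates through $t$ is the main obstacle. Shifting $t$ by $\delta$ moves the first threshold by $-\lambda^*\delta$. We handle it as follows. Choose $t=t^\star+\delta'$ with $\delta'$ small. Then the rectangle for $\btheta_k$ at $(\ell-\lambda^*t,t)$ contains the open rectangle with thresholds $(\ell-\lambda^*t^\star-\lambda^*\delta',t^\star+\delta')$. Portmanteau applied to that open set gives a limit of at least $\Pr\{J_{\btheta^\star}<\ell-\lambda^*t^\star-\lambda^*\delta',G_{\btheta^\star}<t^\star+\delta'\}$.

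Alternatively, we could avoid the shift altogether and use the closed-set form directly. For fixed $\btheta$, the function $t\mapsto\Phi_{\Sigma(\btheta)}(\ell-\lambda^*t,t)$ is continuous from the right in the second argument and from the left in the first. To sidestep this, we note that we may first perturb $\ell$: $F^{\rm c}(\ell)$ is bounded by the closed rectangle at $(\ell-\lambda^*t^\star,t^\star)$, which is contained in the open rectangle at $(\ell+\delta-\lambda^*t^\star,t^\star+\delta)$. We then realize the latter as $(\ell'-\lambda^*t',t')$ with $t'=t^\star+\delta$ and $\ell'=\ell+(1+\lambda^*)\delta$. This yields $\liminf_kF^{\rm c}_{\btheta_k}(\ell+(1+\lambda^*)\delta)\geq F^{\rm c}(\ell)-\veps$. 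The monotone limit $G(\ell):=\lim_k\max_{\calA_k}F^{\rm c}_{\btheta}(\ell)$ therefore satisfies $G(\ell+)\geq F^{\rm c}(\ell)$.

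To remove the $\delta$-shift in $\ell$, it suffices to show $G(\ell)\geq F^{\rm c}(\ell)$ directly. For this we would use Claim~(i) of Lemma~\ref{lem:gaussian}, or argue via the choice of $t^\star$: for $\delta$ small, $F^{\rm c}(\ell-(1+\lambda^*)\delta)\geq F^{\rm c}(\ell)-\veps$ whenever $F^{\rm c}$ is left-continuous at $\ell$. We expect to verify this left-continuity from continuity of $\Phi_\Sigma$ in its arguments when $\Sigma$ is nonsingular. In the degenerate singular case, we instead restrict $t^\star$ to the continuity set of the one-dimensional marginals, which is co-countable, and apply weak convergence of the rectangle probability directly at the unshifted point.
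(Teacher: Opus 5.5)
\textbf{There is a genuine gap: zero covariance.} Your net construction fails when $\calA$ contains weights with zero covariance, and the step where you try to remove the $\delta$-shift is exactly where it breaks.

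Nothing in Assumptions~(\ref{ass:finite})--(\ref{ass:interior}) rules out weights $\btheta_0\in\calA$ with $\Sigma(\btheta_0)=0$; the paper's proof explicitly keeps ``one zero-covariance weight, if any''. Since the contacts have full support, these are the weights supported on contacts with $\Sigma_j(\cdot)\equiv0$. They form a face of $\calA$, and when that face is proper an arbitrary finite $1/k$-net need not meet it.

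For such $\btheta_0$ we have $\Phi_{\Sigma(\btheta_0)}(\ell-\lambda^*t,t)=\bbI\{0\leq t\leq\ell/\lambda^*\}$. Hence $F^{\rm c}_{\btheta_0}(\ell)=\bbI\{\ell\geq0\}$ and $F^{\rm c}(0)=1$.

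Now take any $\btheta$ with $\Sigma(\btheta)\neq0$. Each rectangle at $(-\lambda^*t,t)$ lies in $\{J_{\btheta}+\lambda^*G_{\btheta}\leq0\}$. That event has probability $1/2$ unless $J_{\btheta}=-\lambda^*G_{\btheta}$ almost surely, and in that case the rectangle reduces to the null event $\{G_{\btheta}=t\}$. So $F^{\rm c}_{\btheta}(0)\leq1/2$.

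Consequently, if your nets miss the zero-covariance face, then $\lim_k\max_{\btheta\in\calA_k}F^{\rm c}_{\btheta}(0)\leq1/2<F^{\rm c}(0)$, and the lemma fails for your sets. None of your proposed repairs helps:
\begin{itemize}
\item $F^{\rm c}$ is not left-continuous at $0$ here, since $F^{\rm c}(0-)\leq1/2$; the paper only proves right-continuity.
\item Restricting $t^\star$ to continuity points of the marginals fails, because at $\ell=0$ the unique maximizer of $t\mapsto\Phi_{\Sigma(\btheta_0)}(-\lambda^*t,t)=\bbI\{t=0\}$ is precisely the atom $t^\star=0$.
\item Claim~(i) of Lemma~\ref{lem:gaussian} is an upper bound on smoothed envelopes, which is the wrong direction for proving $G\geq F^{\rm c}$.
\end{itemize}

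\textbf{How to repair it.} First, apply your shifted Portmanteau bound to a \emph{fixed} $\btheta^\star$ rather than to $F^{\rm c}$. Your argument already shows $G(\ell+(1+\lambda^*)\delta)\geq F^{\rm c}_{\btheta^\star}(\ell)$ for every $\btheta^\star\in\calA$, $\ell\in\bbR$ and $\delta>0$. When $\Sigma(\btheta^\star)\neq0$, the single envelope $\ell\mapsto F^{\rm c}_{\btheta^\star}(\ell)$ is continuous:
\begin{itemize}
\item if $\Sigma_{11}(\btheta^\star)>0$, it is Lipschitz with constant $(2\pi\Sigma_{11}(\btheta^\star))^{-1/2}$;
\item if $\Sigma_{11}(\btheta^\star)=0$, it equals $\Phi(\ell/(\lambda^*\sqrt{\Sigma_{22}(\btheta^\star)}))$.
\end{itemize}
So replace $\ell$ by $\ell-(1+\lambda^*)\delta$ and let $\delta\downarrow0$. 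This gives $G\geq F^{\rm c}_{\btheta^\star}$ with no continuity of $F^{\rm c}$ required.

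Second, put one fixed zero-covariance weight (if any exists) into every $\calA_k$. This suffices because all such weights share the envelope $\bbI\{\ell\geq0\}$.

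\textbf{Comparison with the paper.} With these two changes your route is a valid alternative to the paper's. The paper uses the same two ingredients: continuity of every nonzero-covariance envelope, and a single representative zero-covariance weight. Instead of nets and weak convergence, it applies second countability of $\bbR^2$ to the open strict subgraphs $\{(\ell,F):F<F^{\rm c}_{\btheta}(\ell)\}$. This extracts a countable subfamily with the same pointwise supremum, and the paper then takes initial segments of an enumeration. That approach avoids any Gaussian weak-convergence argument and needs neither compactness of $\calA$ nor continuity of $\btheta\mapsto\Sigma(\btheta)$. Your version, in exchange, produces explicit, geometrically meaningful finite sets.
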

The proof is of Lemma \ref{lem:finite-approximation} is provided in Appendix~\ref{app:gaussian}. We apply this approximation to~\eqref{eq:ach-finite-design}: for each $k\in\bbN$ choose a finite set whose expectation differs from $F_*(L)$ by at most $k^{-1}$, choose a real number $\eta\in(0,k^{-1}]$, and choose the finite truncation and selection parameters with total probability loss at most $k^{-1}$. Keep this design fixed as $n\to\infty$. For suitable increasing integers ${n_k}\to\infty$, its codes satisfy, for every $n\geq {n_k}$,
\begin{align}
\Pcn(D)&\geq F_*(L)-4/k,\label{eq:ach-diagonal-probability}\\
\frac{\log M-nR_{\mathrm{WZ}}(D)}{\sqrt n}&\leq L+(2+\lambda^*)/k,\label{eq:ach-diagonal-rate}
\end{align}
where~\eqref{eq:ach-diagonal-probability} follows from~\eqref{eq:ach-finite-design} and the finite-set approximation in Lemma~\ref{lem:finite-approximation}, using monotone convergence. Equation~\eqref{eq:ach-diagonal-rate} follows from~\eqref{eq:payload-choice}, increasing ${n_k}$ to absorb the $O(\log n)$ source-type description. Each subcode has already been fixed deterministically with error no larger than its random-code mean.

For ${n_k}\leq n<{n_{k+1}}$, use the $k$th design. It follows that
\begin{align}
\liminf_{n\to\infty}(\Pcn(D))&\geq F_*(L),\label{eq:ach-diagonal-limit}\\
\limsup_{n\to\infty}\frac{\log M-nR_{\mathrm{WZ}}(D)}{\sqrt n}&\leq L,\label{eq:ach-diagonal-expansion}
\end{align}
where~\eqref{eq:ach-diagonal-limit} follows from~\eqref{eq:ach-diagonal-probability} as $k\to\infty$, and~\eqref{eq:ach-diagonal-expansion} follows from the message-size formula with $\eta\to0$. This establishes the required code construction for every real $L$, including singular covariances. Since the first-order rate is positive, each source-type code has an exponentially growing message size even when $L\leq0$.

To prove Theorem~\ref{thm:achievability}, fix $L>L_*(\veps)$. Claim~(ii) of Lemma~\ref{lem:gaussian} leads to $F_*(L)>1-\veps$. Thus,~\eqref{eq:ach-diagonal-limit}--\eqref{eq:ach-diagonal-expansion} provide a code sequence meeting the excess-distortion criterion at rate offset at most $L$. Consequently,
\begin{align}
L_{\mathrm{WZ}}^*(D,\veps)\leq L_*(\veps),\label{eq:ach-exact-quantile}
\end{align}
where the inequality follows by letting $L\downarrow L_*(\veps)$ in the definition of the second-order coefficient. Combining the preceding rate bound with the success-probability bound~\eqref{eq:ach-success-reduction} and the quantile properties in Lemma~\ref{lem:gaussian}, we complete the proof of Theorem~\ref{thm:achievability}. Corollary~\ref{cor:secondorder} follows by combining Theorems~\ref{thm:main} and~\ref{thm:achievability}.

\section{Conclusion}
\label{sec:conclusion}
We derived the exact second-order asymptotics for the discrete Wyner--Ziv problem. Our main contribution is a new second-order converse bound and a slightly generalized achievability bound. In particular, we showed that the previous best second-order achievability bound in~\cite[Theorem~15]{lili} is not optimal in general and provided numerical examples where the result of~\cite[Theorem~15]{lili} is strictly suboptimal.

We next discuss future research directions. Firstly, one can generalize our results to the Gaussian case under quadratic distortion measure~\cite{wynergeneral}. In this case, the type-based method could not be used. Novel ideas for continuous sequences are required. Secondly, one can consider generalize our results to multi-terminal lossy source coding problems with more than distortion constraints, e.g., the Heegard--Berger problem~\cite{heegard1985}. In this case, one needs to additionally consider the tradeoff of different excess-distortion events and adapt the proof techniques accordingly. Finally, one could explore further refined third-order asymptotics~\cite[Theorem~18]{kostina2012} for the discrete Wyner-Ziv problem. In this case, the third dominant term of the non-asymptotic theoretical benchmark needs to be characterized explicitly.

\section*{Acknowledgment}
Parts of the proofs, algebra, and writing were carried out with assistance from OpenAI's ChatGPT (GPT-5.6 and GPT-6). The author reviewed the AI-assisted derivations and assumes full responsibility for the statements, proofs, and correctness of the mathematical content.

\appendices
\hypersetup{linkcolor=black,citecolor=black,urlcolor=black,anchorcolor=black}

\section*{Additional Notation for the Appendices}
We use $[0]:=\varnothing$, $0\log0=0$, $h_2(0)=h_2(1)=0$, and $\|\cdot\|_1$ for the $\ell_1$ norm. For $n\in\bbN$, ${\boldsymbol v}\in\bbR^n$, nonempty $\calB\subseteq\bbR^n$, ${v}\in\bbR$, and $Q\in\calP(\calX)$, write $\dist({\boldsymbol v},\calB):=\inf_{{\boldsymbol v}'\in\calB}\|{\boldsymbol v}-{\boldsymbol v}'\|$, $({v})_+:=\max\{{v},0\}$, and $\supp(Q):=\{x\in\calX:Q(x)>0\}$. For $\ell\in\bbR$, define the standard Gaussian density and cumulative distribution function by
\begin{align}
{{\Phi'}(\ell)}:=(2\pi)^{-1/2}\exp(-{\ell}^2/2),\qquad \Phi({\ell}):=\int_{-\infty}^{{\ell}}{{\Phi'}(\ell')}\,{d\ell'}.\label{eq:standard-gaussian-functions}
\end{align}

\section{Proof of Proposition~\ref{prop:optimal-properties}}\label{app:optimal-properties}
\label{sec:geometry}
Under Assumptions~(\ref{ass:finite})--(\ref{ass:interior}), recall $\rho$, $\delta$, and $\Delta$ from~\eqref{eq:posterior-entropy},~\eqref{eq:posterior-risk}, and~\eqref{eq:deficit}, and the contact and weight sets $\calC$ and $\calA$ from~\eqref{eq:zero-gap-posteriors}--\eqref{eq:weightset}. We first justify the auxiliary-cardinality bound and derive the dual, then prove positivity, contact regularity, dual uniqueness, and stationarity. These properties establish Proposition~\ref{prop:optimal-properties} and the following lemma used by subsequent proofs.

\Needspace{12\baselineskip}
\begin{lemma}\label{lem:geometry}
Under Assumptions~(\ref{ass:finite})--(\ref{ass:interior}), the following hold.
\begin{enumerate}
\item The normalized dual parameters ${({\lambda^*},\bbeta,\beta_0)}$ in~\eqref{eq:deficit}--\eqref{eq:dualvalue} exist and are unique.
\item The contact set $\calC$ is finite and contained in the interior of $\calP(\calX)$.
\item The Bayes rules $(g_j)_{j\in[m]}$ are locally unique up to actions having identical distortion on every source symbol that can produce the observed side information symbol.
\item There exists ${{C}>0}$ such that, for every $Q\in\calP(\calX)$,
\begin{align}
\dist(Q,\calC)^2\leq {C}\Delta(Q).\label{eq:quadratic}
\end{align}
\item For every $(j,x)\in[m]\times\calX$,
\begin{align}
\sum_{{y\in\calY:\,P_{Y|X}(y|x)>0}}P_{Y|X}(y|x)[\imath_j(x,y)+{\lambda^*} d(x,g_j(y))]
=\jW{x}+{\lambda^*} D.\label{eq:stationarity}
\end{align}
\end{enumerate}
\end{lemma}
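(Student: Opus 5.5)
The plan is to work entirely with the linear program of Lemma~\ref{lem:posterior} over mixtures of posteriors and its dual~\eqref{eq:posterior-dual}. I will use three structural facts about $\rho$ and $\delta$ on $\calP(\calX)$. First, $\rho(Q)=-H_Q(X|Y)$ is convex in $Q$. Second, $\delta$ is concave and piecewise linear, being a sum over $y$ of minima of linear functions. Third, near the boundary of the simplex, $\rho$ has the Legendre-type behaviour $\partial_{Q(x)}\rho\to-\infty$ as $Q(x)\to0$; this uses $\kappa_W>0$, since every $y$ with $P_{Y|X}(y|x)>0$ is also reachable from other symbols, so $Q_{X|Y}(x|y)\to0$ while $Q_Y(y)$ stays bounded away from zero.

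\textbf{Claim (i): existence and uniqueness of the dual.} Existence follows from strong duality for this semi-infinite LP. Slater holds because $D\in(D_{\min},D_0)$: a mixture of point masses, blended with $P_X$ itself, gives a strictly feasible distortion. I will also bound $\lambda$ a priori using $D<D_0$. Uniqueness of $\lambda^*$ follows from differentiability of $R_{\mathrm{WZ}}$, which is Proposition~\ref{prop:optimal-properties}(i) and is shown in parallel. Given $\lambda^*$, the affine function $\beta_0+\langle\bbeta,Q\rangle$ must support the lower convex envelope of $\rho+\lambda^*\delta$ at $P_X$. Since $P_X$ is interior and the supporting contacts in~\eqref{eq:zero-gap-posteriors} span a full-dimensional cone, as follows from the interiority in (ii), the supporting hyperplane is unique once normalized by $\langle\bbeta,P_X\rangle=0$.

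\textbf{Claims (ii) and (iii): finiteness and interiority of $\calC$, and local uniqueness of Bayes rules.} For interiority, suppose $Q\in\calC$ has $Q(x)=0$. Moving mass toward $x$ decreases $\rho$ at infinite rate, whereas $\lambda^*\delta$ and the affine part change only at finite rates. This makes $\Delta$ negative, contradicting $\Delta\ge0$. For finiteness, partition the simplex into the finitely many polyhedral cells on which the Bayes rule $g_Q$ is constant. On each cell, $\Delta$ is strictly convex: strict convexity of $\rho$ transversal to affine directions holds by a standard argument under $\kappa_W>0$, since $H_Q(X|Y)$ is strictly concave unless $Q_{X|Y}(\cdot|y)$ is constant along the direction for every $y$, which forces the direction to vanish. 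A nonnegative strictly convex function has at most one zero per cell, so $\calC$ is finite. For (iii), a contact $Q_j$ lies in the interior, so near $Q_j$ the minimizers in $\delta$ are constant up to ties. A tie between two actions $z,z'$ would give a kink in $\delta$, so that $\lambda^*\delta$ is not differentiable at $Q_j$. Because $\Delta\ge0$ with $\Delta(Q_j)=0$ and $\rho$ is smooth, the concave kink is impossible unless the two actions give identical distortion on all $x$ with $P_{Y|X}(y|x)>0$.

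\textbf{Claim (iv): the quadratic bound.} Locally around each $Q_j$, $\Delta$ equals a smooth function with zero gradient, by stationarity and (iii), plus $\lambda^*$ times a concave piecewise-linear term. The concave term is linear near $Q_j$ by (iii), so $\Delta$ is locally smooth. Its Hessian there is the Hessian of $\rho$, which is positive definite on the tangent space by the strict convexity above; this gives $\Delta(Q)\ge c\|Q-Q_j\|^2$ near $Q_j$. Away from $\calC$, $\Delta$ is continuous and positive on a compact set, so it is bounded below by a positive constant, and $\dist(Q,\calC)^2\le 2$ finishes the bound. I expect this local smoothness step to be the main obstacle: ruling out kinks of $\delta$ at contacts is exactly what (iii) is needed for, and the strict convexity of $\rho$ must be verified carefully using $\kappa_W>0$.

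\textbf{Claim (v): the stationarity identity.} Use the first-order condition $\nabla\Delta(Q_j)=0$ in the coordinate direction $e_x$, which is allowed because $Q_j$ is interior. Differentiating $\rho(Q)=\sum_{x,y}Q(x)P_{Y|X}(y|x)\log Q_{X|Y}(x|y)$ and $\delta$ with the fixed rule $g_j$ gives
\begin{align}
\sum_y P_{Y|X}(y|x)\bigl[\log Q_j(x)-\log Q_{j,Y}(y)+\lambda^* d(x,g_j(y))\bigr]=\beta(x)+\beta_0 .
\end{align}
The additive constants from the derivative cancel against the centering of the Lagrange condition on the simplex. Subtracting the $j$-independent quantity $\log P_X(x)-\sum_y P_{Y|X}(y|x)\log P_Y(y)$ converts the left side into the expression in~\eqref{eq:stationarity}. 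The right side is then independent of $j$, which identifies it with $\jW{x}+\lambda^*D$. Its mean under $P_X$ is checked via~\eqref{eq:dualvalue}, consistent with~\eqref{eq:tilted-mean}.
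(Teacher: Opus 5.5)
Your arguments for (ii)--(v) are essentially the paper's: the $\vartheta\log\vartheta$ blow-up at the boundary, at most one zero of each strictly convex $\Delta_g$, the concave-kink form of the two-gradient tie argument, local equality $\Delta=\Delta_{g_j}$ plus curvature, and tangent stationarity. They work once $\lambda^*>0$ is available.

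\textbf{The gap is in (i).} You obtain uniqueness of $\lambda^*$ from differentiability of $R_{\mathrm{WZ}}$ by citing Proposition~\ref{prop:optimal-properties}(i), and this is circular. Maximizing \eqref{eq:posterior-dual} over $(\bbeta,\beta_0)$ at fixed $\lambda\ge0$ gives $\min_{D'}\{R_{\mathrm{WZ}}(D')+\lambda(D'-D)\}$. Hence the optimal multipliers are exactly the $\lambda$ with $-\lambda\in\partial R_{\mathrm{WZ}}(D)$, so differentiability and uniqueness are the same statement. In the paper, Proposition~\ref{prop:optimal-properties}(i) is deduced from this lemma.

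The same citation is also your only source for $\lambda^*>0$. Your kink argument in (iii) needs it, since the concave term vanishes when $\lambda^*=0$. It is also needed to force Bayes reconstructions and distortion exactly $D$ when you identify the constant in (v) with $\jW{x}+\lambda^*D$.

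You therefore need an independent argument for both facts.
\begin{itemize}
\item \emph{Positivity.} Under \eqref{eq:overlap}, $I(X;U|Y)=0$ forces $U$ to be independent of $X$, hence distortion at least $D_0$. So $R_{\mathrm{WZ}}(D)>0=R_{\mathrm{WZ}}(D_0)$, and convexity makes every optimal multiplier positive.
\item \emph{Uniqueness.} Fix one primal optimum. By \eqref{eq:mixture-gap-identity}, its active posteriors are zeros of the gap for every dual-optimal triple, hence interior by your (ii). So any two optimal triples both satisfy $\nabla\rho(Q_j)+\lambda\overline{\bd}_{g_j}=\bbeta+\beta_0\bone$ at each active $Q_j$; note that $g_j$ depends only on $Q_j$. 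Subtracting gives $(\lambda^*-\tilde\lambda)\overline{\bd}_{g_j}=\bbeta-\tilde{\bbeta}+(\beta_0-\tilde\beta_0)\bone$ for all active $j$. If $\lambda^*\neq\tilde\lambda$, all active contacts share one distortion vector, hence one nonnegative strictly convex $\Delta_{g_j}$ with at most one zero. But $\delta(P_X)=D_0>D$ forces at least two distinct active contacts, a contradiction.
\end{itemize}
Differentiability of $R_{\mathrm{WZ}}$ is then a corollary, not an input.

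\textbf{Smaller, repairable points.}
\begin{itemize}
\item Uniqueness of $(\bbeta,\beta_0)$ given $\lambda^*$ does not follow from the contacts ``spanning a full-dimensional cone.'' Interiority says nothing about affine spanning; an optimum may use just two contacts with $P_X$ on the segment between them. Instead, tangency at any single interior active contact fixes $\bbeta$ up to the $\bone$-shift, which $\langle\bbeta,P_X\rangle=0$ removes. This is the same first-order condition you use in (v).
\item In (v), $[\nabla\rho(Q)]_x=\sum_yP_{Y|X}(y|x)\log Q_{X|Y}(x|y)$ also contains $\sum_yP_{Y|X}(y|x)\log P_{Y|X}(y|x)$, which your display drops. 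That term is $j$-independent, so your conclusion survives.
\item $\kappa_W>0$ does not make every output of $x$ reachable from other symbols. The boundary argument needs only that at least a $\kappa_W$ fraction of $P_{Y|X}(\cdot|x_*)$ lies on outputs charged by $\supp(Q)$, and this does hold. It gives $\rho(Q_{\vartheta})-\rho(Q)\leq\kappa_W\vartheta\log\vartheta+O(\vartheta)$.
\end{itemize}
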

\subsection{Auxiliary Cardinality in the Posterior Representation}
\label{app:posterior-cardinality}

We justify the cardinality reduction used in Lemma~\ref{lem:posterior}. Fix any feasible finite mixture in~\eqref{eq:firstorder}, discard its zero-weight components, and relabel the remaining components by $[m_0]$. If $m_0>|\calX|+1$, the vectors $(Q_u,\delta(Q_u))\in\bbR^{|\calX|+1}$, $u\in[m_0]$, are linearly dependent. Thus there is a nonzero weight direction $\dot{\btheta}=(\dot\theta_u)_{u\in[m_0]}$ such that
\begin{align}
\sum_{u\in[m_0]}\dot\theta_uQ_u=\bzero_{|\calX|},\qquad \sum_{u\in[m_0]}\dot\theta_u\delta(Q_u)=0.
\end{align}
Summing the coordinates of the first equality gives $\sum_{u\in[m_0]}\dot\theta_u=0$, so this direction has both positive and negative coordinates. Change its sign, if necessary, so that $\sum_{u\in[m_0]}\dot\theta_u\rho(Q_u)\leq0$.

Starting from $\btheta$, increase the coefficient of this direction until the first weight becomes zero. This occurs at a positive finite coefficient because every initial weight is positive and the direction has a negative coordinate. At that point all weights are nonnegative, the two equalities preserve the source mean and distortion, and the chosen sign ensures that the objective does not increase. Remove the zero weights and repeat until at most $|\calX|+1$ components remain. The test-channel and Bayes-reconstruction construction in the proof of Lemma~\ref{lem:posterior} then gives a feasible auxiliary in~\eqref{eq:wzclassic} with no greater objective. Conversely, every auxiliary in~\eqref{eq:wzclassic} yields a feasible mixture in~\eqref{eq:firstorder}. This proves that the two minima agree.

\subsection{Supporting Inequality and Optimal Posterior Distributions}
Initially, $\lambda^*$ denotes an optimal multiplier in~\eqref{eq:posterior-dual}; uniqueness below will establish~\eqref{eq:multiplier-definition}. The parameters $(\bbeta,\beta_0)$ are from~\eqref{eq:deficit}, and $\nabla$ and $\nabla^2$ denote the gradient and Hessian. For a positive vector ${Q^+}=({Q^+}(x))_{x\in\calX}\in\bbR_+^{|\calX|}$, define the homogeneous extension of $\rho$ by
\begin{align}
\rho({Q^+}):=\Bigl(\sum_{x\in\calX}{Q^+}(x)\Bigr)
\rho\left(\frac{{Q^+}}{\sum_{x\in\calX}{Q^+}(x)}\right).\label{eq:homogeneous-extension}
\end{align}
Use the induced distributions $Q_Y$ and $Q_{X|Y}$ in~\eqref{eq:posterior-output}--\eqref{eq:posterior-conditional}.
For every interior $Q\in\calP(\calX)$ and $x\in\calX$,
\begin{align}
[\nabla\rho(Q)]_x&=\sum_{{y\in\calY:\,P_{Y|X}(y|x)>0}}P_{Y|X}(y|x)\log Q_{X|Y}(x|y),\label{eq:entropy-gradient}\\
\langle Q,\nabla\rho(Q)\rangle&=\rho(Q),\label{eq:v20-gradient-2}
\end{align}
where~\eqref{eq:entropy-gradient} follows by differentiating~\eqref{eq:homogeneous-extension}, and~\eqref{eq:v20-gradient-2} is Euler's identity for a homogeneous function of degree one. Terms with $P_{Y|X}(y|x)=0$ are omitted.

We derive~\eqref{eq:posterior-dual} from the moment form of~\eqref{eq:firstorder}. Let $\mathfrak{m}$ be a finite nonnegative measure on $\calP(\calX)$. The primal constraints are unit mass, mean $P_X$, and distortion at most $D$. For candidate multipliers $(\lambda,\bbeta,\beta_0)\in\bbR_+\times\bbR^{|\calX|}\times\bbR$, its Lagrangian is
\begin{align}
&\int_{\calP(\calX)}\rho(Q)\,\mathfrak{m}(dQ)-\rho(P_X)
+\lambda\left[\int_{\calP(\calX)}\delta(Q)\,\mathfrak{m}(dQ)-D\right]\notag\\
&\quad+\left\langle\bbeta,P_X-\int_{\calP(\calX)}Q\,\mathfrak{m}(dQ)\right\rangle
+\beta_0\left[1-\mathfrak{m}(\calP(\calX))\right]\notag\\
&=\beta_0+\langle\bbeta,P_X\rangle-\rho(P_X)-\lambda D\notag\\
&\quad+\int_{\calP(\calX)}[\rho(Q)+\lambda\delta(Q)-\beta_0-\langle\bbeta,Q\rangle]\,\mathfrak{m}(dQ).
\end{align}
Its infimum over all such measures equals the first four terms on the last line if the integrand is nonnegative for every $Q$, and is $-\infty$ otherwise, by increasing the mass at a violating posterior. Maximizing this infimum gives exactly~\eqref{eq:posterior-dual}.

For unit-mass measures, the attainable moment triples are the convex hull of the compact set $\{(Q,\delta(Q),\rho(Q)):Q\in\calP(\calX)\}$, hence form a compact set and can be represented by finite mixtures. Thus, the primal minimum is attained. For $x\in\calX$, let $\be_x(x')=\bbI\{x'=x\}$ for $x'\in\calX$, and let $P_{\be_x}$ be unit mass at $\be_x$. The mixture $\sum_{x\in\calX}P_X(x)P_{\be_x}$ has mean $P_X$ and distortion $D_{\min}<D$. Full support and this strict inequality persist for all prescribed means and distortion bounds sufficiently near $(P_X,D)$ in the affine hull of the simplex. The minimum of the $\rho$ moment is therefore a finite convex function on this neighborhood and admits a supporting affine function at $(P_X,D)$. Write its mean coefficient as $\bbeta$ and its distortion coefficient as $-\lambda$, where $\lambda\geq0$ because the minimum is nonincreasing in the distortion bound. Set $\beta_0:=R_{\mathrm{WZ}}(D)+\rho(P_X)+\lambda D-\langle\bbeta,P_X\rangle$. Applying the supporting inequality to the one-point posterior measure at $Q$ gives $\beta_0+\langle\bbeta,Q\rangle\leq\rho(Q)+\lambda\delta(Q)$; the dual objective then equals $R_{\mathrm{WZ}}(D)$. This proves dual attainment and strong duality.

Since every posterior has unit mass, the shift $(\bbeta,\beta_0)\mapsto(\bbeta+\beta_g\bone_{|\calX|},\beta_0-\beta_g)$ leaves the dual constraint, objective, and gap unchanged. Here $\bone_{|\calX|}$ is the all-ones vector and $\beta_g\in\bbR$. Choosing $\beta_g=-\langle\bbeta,P_X\rangle$ gives the normalization $\langle\bbeta,P_X\rangle=0$.

To prove positivity, observe that
\begin{align}
I(X;U|Y)=0
&\Longrightarrow P_{U|X}(\cdot|x)=P_{U|Y}(\cdot|y)
\quad\text{whenever }P_{XY}(x,y)>0,\\
&\Longrightarrow U\perp X
\Longrightarrow \bbE[d(X,g(U,Y))]\geq D_0.
\end{align}
The first implication uses equality in conditional mutual information and $U-X-Y$; pairwise overlap~\eqref{eq:overlap} leads to the second by connecting the conditional distributions for every pair of source symbols. Independence then limits distortion to at least $D_0$. Thus, $R_{\mathrm{WZ}}(D)>R_{\mathrm{WZ}}(D_0)=0$; convexity implies $\lambda^*>0$, and complementary slackness makes distortion active.

We next obtain uniform curvature from the overlap constant $\kappa_W$ in~\eqref{eq:overlap}.
For $Q\in\calP(\calX)$ and $(x,x')\in\supp(Q)^2$, define the posterior transition kernel by the equality below. The kernel satisfies
\begin{align}
{K}_Q(x,x'):=\sum_{y\in\calY:\,Q_Y(y)>0}
P_{Y|X}(y|x)\frac{Q(x')P_{Y|X}(y|x')}{Q_Y(y)}
\geq\kappa_W Q(x'),\label{eq:posterior-kernel}
\end{align}
where~\eqref{eq:posterior-kernel} follows from $Q_Y(y)\leq1$ and~\eqref{eq:overlap}; its kernel is reversible with stationary distribution $Q$. For an interior $Q$ and a tangent vector ${{\dot Q}=({\dot Q}(x))_{x\in\calX}}$ with ${\sum_{x\in\calX}{\dot Q}(x)=0}$, the Hessian satisfies
\begin{align}
{\dot Q}^\top\nabla^2\rho(Q){\dot Q}
&=\Biggl[\sum_{x\in\calX}\frac{{\dot Q}(x)^2}{Q(x)}
-\sum_{y\in\calY:\,Q_Y(y)>0}\frac{(\sum_{x\in\calX}{\dot Q}(x)P_{Y|X}(y|x))^2}{Q_Y(y)}\Biggr],\label{eq:hessian}\\
&=\frac12\sum_{(x,x')\in\calX^2}Q(x){K}_Q(x,x')\left[\frac{{\dot Q}(x)}{Q(x)}-\frac{{\dot Q}(x')}{Q(x')}\right]^2,\label{eq:v20-curvature-2}\\
&\geq\kappa_W\sum_{x\in\calX}\frac{{\dot Q}(x)^2}{Q(x)}
\geq\kappa_W\|{\dot Q}\|^2,\label{eq:strongcurv}
\end{align}
where~\eqref{eq:hessian} follows by differentiating~\eqref{eq:entropy-gradient};~\eqref{eq:v20-curvature-2} follows from~\eqref{eq:posterior-kernel} and reversibility; and~\eqref{eq:strongcurv} follows from the minorization, ${\bbE_Q[{\dot Q}(X)/Q(X)]=0}$, and $Q(x)\leq1$. Continuity extends strong convexity to the closed simplex.

To exclude boundary contacts, suppose $Q\in\calC$ has $Q(x_*)=0$ for some $x_*\in\calX$. Choose $x'\in\supp(Q)$ and, for $\vartheta_{\rm b}\in(0,1)$, set $Q_{\vartheta_{\rm b}}=(1-\vartheta_{\rm b})Q+\vartheta_{\rm b}\be_{x_*}$. Outputs with $Q_Y(y)=0$ satisfy $Q_{\vartheta_{\rm b},Y}(y)=\vartheta_{\rm b} P_{Y|X}(y|x_*)$. Define the retained-output mass $\kappa(Q,x_*)$ by the first line below. For sufficiently small $\vartheta_{\rm b}$,
\begin{align}
\kappa(Q,x_*)&:=\sum_{y\in\calY:\,Q_Y(y)>0}P_{Y|X}(y|x_*)\\
&\geq\sum_{y\in\calY}P_{Y|X}(y|x_*)P_{Y|X}(y|x')\geq\kappa_W,\\
\rho(Q_{\vartheta_{\rm b}})-\rho(Q)
&=\vartheta_{\rm b}\log\vartheta_{\rm b}-\sum_{y\in\calY:\,Q_Y(y)=0}\vartheta_{\rm b} P_{Y|X}(y|x_*)\log\vartheta_{\rm b}+O(\vartheta_{\rm b})\\
&=\kappa(Q,x_*)\vartheta_{\rm b}\log\vartheta_{\rm b}+O(\vartheta_{\rm b}),\\
\Delta(Q_{\vartheta_{\rm b}})&=\kappa(Q,x_*)\vartheta_{\rm b}\log\vartheta_{\rm b}+O(\vartheta_{\rm b})<0.\label{eq:boundary-contact-gap}
\end{align}
The overlap bound uses $Q(x')>0$, $P_{Y|X}(y|x')\leq1$, and~\eqref{eq:overlap}. The entropy expansion uses smoothness on positive supports and affineness of the channel entropy term. Since $\delta$ is the minimum of finitely many affine risks with bounded coefficients, its change and the supporting affine terms are $O(\vartheta_{\rm b})$. The negative term dominates because $\kappa(Q,x_*)\geq\kappa_W>0$, contradicting~\eqref{eq:deficit}. Thus, contacts have full support.

For deterministic $g:\calY\to\calZ$, $x\in\calX$, and $Q\in\calP(\calX)$, define
\begin{align}
\bar d_g(x)&:=\sum_{y\in\calY}P_{Y|X}(y|x)d(x,g(y)),\qquad
\overline{\bd}_g:=(\bar d_g(x))_{x\in\calX},\label{eq:rule-distortion-vector}\\
\Delta_g(Q)&:=\rho(Q)+\lambda^*\langle Q,\overline{\bd}_g\rangle-\beta_0-\langle\bbeta,Q\rangle.\label{eq:rule-gap}
\end{align}
For $j\in[m]$, use the Bayes rule $g_j$ from~\eqref{eq:contact-rule-output} in~\eqref{eq:rule-gap} to obtain ${\Delta_{g_j}}$. Since ${\Delta_g\geq\Delta\geq0}$ is strongly convex, each rule has at most one zero. Every contact is such a zero for a Bayes rule; hence $\calC$ is finite.

For any distortion-feasible mixture in Lemma~\ref{lem:posterior},~\eqref{eq:mixture-gap-identity} expresses its excess rate as the average gap plus $\lambda^*$ times its distortion slack. Both terms are nonnegative. Since $\lambda^*>0$, the mixture is optimal if and only if every positive-weight posterior belongs to $\calC$ and its distortion equals $D$. Primal attainment therefore makes $\calC$ nonempty and, after merging repeated contacts, gives a weight vector in $\calA$. Conversely, every $\btheta\in\calA$ is an optimal mixture by the same identity. Summing its mean constraint gives $\sum_{j\in[m]}\theta_j=1$, so $\calA$ is a nonempty closed subset of the probability simplex. Its remaining constraints are affine; hence it is compact and convex.

For $(y,z,z')\in\calY\times\calZ^2$, define
\begin{align}
z\sim_y z'\quad\Longleftrightarrow\quad
\sum_{x\in\calX}P_{Y|X}(y|x)\big(d(x,z)-d(x,z')\big)^2=0.\label{eq:action-equivalence}
\end{align}
Suppose two Bayes rules tie at a contact and differ only by actions $z,z'$ at $y$. For every $x\in\calX$,
\begin{align}
{\lambda^*} P_{Y|X}(y|x)[d(x,z)-d(x,z')]=0,\label{eq:tied-action-gradient}
\end{align}
where~\eqref{eq:tied-action-gradient} follows from stationarity of the two nonnegative smooth functions ${\Delta_g}$ at their common zero: their gradient difference is constant across coordinates, and equality of their values makes this constant zero. Since $\lambda^*>0$,~\eqref{eq:action-equivalence} identifies the tied actions. For $Q$ near $Q_j$,
\begin{align}
\Delta(Q)\geq\frac{\kappa_W}{2}\|Q-Q_j\|^2,\label{eq:local-contact-gap}
\end{align}
where~\eqref{eq:local-contact-gap} follows from~\eqref{eq:strongcurv}: the finitely many nonequivalent actions have a positive Bayes margin at $Q_j$, so ${\Delta=\Delta_{g_j}}$ locally. On the compact complement of these neighborhoods, $\Delta$ has a positive minimum. Together these bounds prove~\eqref{eq:quadratic}.

This proves Claims~(ii)--(iv) of Lemma~\ref{lem:geometry}. We next prove dual uniqueness and stationarity.
\subsection{Uniqueness and Stationarity}
For each contact $Q_j$ and Bayes rule $g_j$ in~\eqref{eq:zero-gap-posteriors} and~\eqref{eq:contact-rule-output}, full support and local uniqueness lead to the following tangent stationarity identity, where $\bone_{|\calX|}$ is the all-ones vector:
\begin{align}
\nabla \rho(Q_j)+{\lambda^*} \overline{\bd}_{g_j}=\bbeta+\beta_0\bone_{|\calX|},\label{eq:contactgradient}
\end{align}
where~\eqref{eq:contactgradient} follows from tangent stationarity at $Q_j$; taking its inner product with $Q_j$ and using~\eqref{eq:entropy-gradient} and $\Delta(Q_j)=0$ fixes the constant ${\beta_0}$.
Fix $\btheta\in\calA$ from~\eqref{eq:weightset} and set $\calJ_+:=\{j\in[m]:\theta_j>0\}$. Since ${\sum_{j\in[m]}}\theta_jQ_j=P_X$ and $\delta(P_X)=D_0>D$, $|\calJ_+|\geq2$.
For two optimal dual triples ${(\lambda^*,\bbeta,\beta_0)}$ and ${(\widetilde{\lambda^*},\tilde{\bbeta},\tilde{\beta}_0)}$, every active contact of a fixed optimum is common to both. Subtracting~\eqref{eq:contactgradient} leads to
\begin{align}
({\lambda^*}-\widetilde{\lambda^*})\overline{\bd}_{g_j}
=\bbeta-\tilde{\bbeta}+(\beta_0-\tilde{\beta}_0)\bone_{|\calX|},
\end{align}
where both triples satisfy~\eqref{eq:contactgradient} at every active contact.
If the multipliers differ, all $j\in\calJ_+$ have the same $\overline{\bd}_{g_j}$ and hence the same $\Delta_{g_j}$, whose strong convexity permits only one contact, contradicting $|\calJ_+|\geq2$. Thus, $\lambda^*=\widetilde{\lambda^*}$. Normalization $\langle\bbeta,P_X\rangle=\langle\tilde{\bbeta},P_X\rangle=0$ in the same identity implies $\beta_0=\tilde{\beta}_0$ and $\bbeta=\tilde{\bbeta}$, proving Lemma~\ref{lem:geometry}(i). Every supporting slope of the finite convex function $R_{\mathrm{WZ}}$ at the interior level $D$ supplies an optimal multiplier; uniqueness implies differentiability and $-R_{\mathrm{WZ}}'(D)=\lambda^*>0$, proving Proposition~\ref{prop:optimal-properties}(i).

For Claim~(ii), the fixed-rule feasible set in~\eqref{eq:fixed-alphabet} is nonempty and compact, and information and distortion are continuous. Hence $\calP^*(D)$ is nonempty and compact; complementary slackness and $\lambda^*>0$ lead to~\eqref{eq:optimal-active-constraint}.

For Claim~(iii), subtract $\nabla\rho(P_X)$ from~\eqref{eq:contactgradient} and use $\imath_j$ in~\eqref{eq:contact-information}:
\begin{align}
\sum_{{y\in\calY:\,P_{Y|X}(y|x)>0}}P_{Y|X}(y|x)[\imath_j(x,y)+{\lambda^*}d(x,g_j(y))]
&=\beta(x)+\beta_0-[\nabla\rho(P_X)]_x,\label{eq:contact-conditional-mean}
\end{align}
where~\eqref{eq:contact-conditional-mean} uses~\eqref{eq:entropy-gradient} and~\eqref{eq:contact-information}. Complementary slackness places every active optimal posterior at a contact with a Bayes rule. It follows from averaging given $X=x$ in~\eqref{eq:existing-tilted} that for every $P\in\calP^*(D)$,
\begin{align}
\jmath_P(x|D)=\beta(x)+\beta_0-[\nabla\rho(P_X)]_x-{\lambda^*}D,\label{eq:common-tilted-from-dual}
\end{align}
The right side of~\eqref{eq:common-tilted-from-dual} is independent of $P$, proving Claim~(iii). Denote it by $\jW{x}$; substitution in~\eqref{eq:contact-conditional-mean} leads to~\eqref{eq:stationarity}, proving Lemma~\ref{lem:geometry}(v).

For Claim~(iv), averaging~\eqref{eq:common-tilted-from-dual} under $P_X$ leads to
\begin{align}
\bbE_{P_X}[\jW{X}]&=\langle P_X,\bbeta\rangle+\beta_0-\rho(P_X)-{\lambda^*}D
=R_{\mathrm{WZ}}(D),\label{eq:v20-source-mean-1}\\
\jW{x}-R_{\mathrm{WZ}}(D)&=\beta(x)-[\nabla\rho(P_X)]_x
-\langle P_X,\bbeta-\nabla\rho(P_X)\rangle,\label{eq:v20-source-mean-2}
\end{align}
Equation~\eqref{eq:v20-source-mean-1} uses Euler's identity~\eqref{eq:v20-gradient-2} and~\eqref{eq:dualvalue}; subtracting it from~\eqref{eq:common-tilted-from-dual} leads to~\eqref{eq:v20-source-mean-2}. The common density and source marginal $P_X$ prove Claim~(iv).

For Claim~(v), use $\bard$ and ${\widetilde V}$ in~\eqref{eq:mean-distortion-rule} and~\eqref{eq:existing-direction}. Fix $P\in\calP^*(D)$. Every active posterior $Q_{\widetilde u}=P_{X|{\widetilde U}}(\cdot|{\widetilde u})$ has full support, so $P_{{\widetilde U}|X}({\widetilde u}|x)>0$ for all $x\in\calX$. If ${\widetilde V}(P)=0$, its sum of squares forces $\bard(x,{\widetilde u})=\bar d_{{\widetilde g}({\widetilde u},\cdot)}(x)$ to be independent of active ${\widetilde u}$. The functions $\Delta_{{\widetilde g}({\widetilde u},\cdot)}$ then coincide and, by strong convexity, have at most one zero. Thus, every active posterior equals $P_X$, giving distortion $D_0>D$, contrary to~\eqref{eq:optimal-active-constraint}. Hence ${\widetilde V}(P)>0$, and continuity on compact $\calP^*(D)$ implies~\eqref{eq:uniform-direction}. This proves Claim~(v) and completes Proposition~\ref{prop:optimal-properties} and Lemma~\ref{lem:geometry}.

\section{Reconstruction Rules and Proof of Proposition~\ref{prop:covariance-family}}
\label{app:geometry}
We establish the fixed-rule representation, characterize all optimal covariances under Assumptions~(\ref{ass:finite})--(\ref{ass:interior}) using Lemma~\ref{lem:geometry}, and prove existence of the maximizer used in Lemma~\ref{lem:comparison}.
\subsection{Fixed Reconstruction Rule Representation}
\label{app:rule-alphabet}
In~\eqref{eq:fixed-alphabet}, ${\widetilde{\calU}}=[|\calZ|^{|\calY|}]$ indexes all maps ${\widetilde g}({\widetilde u},\cdot):\calY\to\calZ$. For feasible $({U},g)$ in~\eqref{eq:wzclassic}, let ${\widetilde U}$ index $g({U},\cdot)$. Then ${\widetilde U}-X-Y$, reconstruction is unchanged, and data processing implies $I(X;{\widetilde U}|Y)\leq I(X;{U}|Y)$. Conversely, a fixed-rule distribution is a finite-auxiliary Wyner--Ziv distribution; the reduction to $|\calX|+1$ labels in Appendix~\ref{app:posterior-cardinality} leaves the unrestricted finite-auxiliary minimum unchanged. These directions prove~\eqref{eq:fixed-alphabet}.
\subsection{Preservation of Optimal Fluctuations}

Recall the optimal set $\calP^*(D)$ and weight/covariance representation in~\eqref{eq:optimal-set},~\eqref{eq:weightset}, and~\eqref{eq:mixture-covariance}. Lemma~\ref{lem:geometry} and complementary slackness place every active posterior at a contact with a Bayes rule. For active ${\widetilde u}$, set $\theta_{\widetilde u}=P_{\widetilde U}({\widetilde u})>0$ and $Q_{\widetilde u}=P_{X|{\widetilde U}}(\cdot|{\widetilde u})$, with output marginal $Q_{{\widetilde u},Y}$. For $(x,y)\in\calX\times\calY$ with $P_{Y|X}(y|x)>0$,
\begin{align}
\frac{P_{{\widetilde U}|X}({\widetilde u}|x)}{P_{{\widetilde U}|Y}({\widetilde u}|y)}
=\frac{Q_{\widetilde u}(x)P_Y(y)}{P_X(x)Q_{{\widetilde u},Y}(y)},\label{eq:merged-density-ratio}
\end{align}
Equation~\eqref{eq:merged-density-ratio} follows from Bayes' rule and ${\widetilde U}-X-Y$. Equal contacts have equal information densities and, by~\eqref{eq:action-equivalence}, equal distortions. Merging them preserves~\eqref{eq:existing-tilted} and~\eqref{eq:cov}, yielding weights in $\calA$. Conversely,~\eqref{eq:optimal-joint} is optimal for each $\btheta\in\calA$ and has covariance $\Sigma(\btheta)$. It belongs to the fixed-rule set because distinct contacts have distinct Bayes rules: $\Delta_g$ in~\eqref{eq:rule-gap} is strongly convex and has at most one zero.

To include every optimal finite auxiliary, fix $({U},g^{*})$ with joint distribution $P^*=P_XP^*_{{U}|X}P_{Y|X}$ and let ${\widetilde U}$ index $g^{*}({U},\cdot)$ as above. Reconstruction is unchanged, and
\begin{align}
R_{\mathrm{WZ}}(D)\leq I(X;{\widetilde U}|Y)\leq I(X;{U}|Y)=R_{\mathrm{WZ}}(D),\label{eq:v20-merging-1}
\end{align}
The inequalities in~\eqref{eq:v20-merging-1} use feasibility and data processing; its equality uses optimality of ${U}$. For $x\in\calX$, merged labels share the conditional mean distortion $\bard(x,{u}):=\sum_{y\in\calY}P_{Y|X}(y|x)d(x,g^{*}({u},y))$. Merging preserves its first two moments under $P^*_{{U}|X}(\cdot|x)$ and hence ${\widetilde V}$ in~\eqref{eq:existing-direction}.
For active ${u}$, set $\theta_{u}=P^*_{U}({u})>0$ and $Q_{u}=P^*_{X|{U}}(\cdot|{u})$. For ${\widetilde u}\in{\widetilde{\calU}}$, let $\calU_{\widetilde u}$ collect the active classical labels mapped to ${\widetilde u}$. For each nonempty merged group $\calU_{\widetilde u}$, set $\theta_{\widetilde u}=\sum_{{u}\in\calU_{\widetilde u}}\theta_{u}>0$ and $Q_{\widetilde u}=\sum_{{u}\in\calU_{\widetilde u}}(\theta_{u}/\theta_{\widetilde u})Q_{u}$. Empty groups are omitted below. Then
\begin{align}
0&=\sum_{{\widetilde u}\in{\widetilde{\calU}}:\,\calU_{\widetilde u}\ne\varnothing}\Bigl[\sum_{{u}\in\calU_{\widetilde u}}\theta_{u}\rho(Q_{u})-\theta_{\widetilde u}\rho(Q_{\widetilde u})\Bigr],\label{eq:v20-merge-gap-1}\\
&\geq\frac{\kappa_W}{2}\sum_{{\widetilde u}\in{\widetilde{\calU}}:\,\calU_{\widetilde u}\ne\varnothing}\sum_{{u}\in\calU_{\widetilde u}}\theta_{u}\|Q_{u}-Q_{\widetilde u}\|^2,\label{eq:v20-merge-gap-2}
\end{align}
where~\eqref{eq:v20-merge-gap-1} is the zero information loss in~\eqref{eq:v20-merging-1}, expressed through conditional entropies, and~\eqref{eq:v20-merge-gap-2} follows from~\eqref{eq:strongcurv} and the mixture formula for $Q_{\widetilde u}$.
Thus, $Q_{u}=Q_{\widetilde u}$ in each group, so~\eqref{eq:merged-density-ratio} and unchanged reconstruction preserve covariance. Every fixed-rule distribution is also an admissible finite auxiliary; hence the attainable set is exactly $\calK_D$ in~\eqref{eq:covariance-family-definition}. Proposition~\ref{prop:optimal-properties}(iii) makes the source Gaussian $\Lambda$ common to all optimizers. Substituting this covariance image in~\eqref{eq:fixed-versus-adaptive} leads to~\eqref{eq:fixed-covariance-representation} and completes Proposition~\ref{prop:covariance-family}.

\subsection{\texorpdfstring{Existence of a Maximizing Covariance}{Existence of a Maximizing Covariance}}
The linear covariance map~\eqref{eq:mixture-covariance} sends nonempty compact $\calA$ onto $\calK_D$, so $\calK_D$ is nonempty and compact. For $\ell\in\bbR$ and $\Sigma\in\calK_D$, define
\begin{align}
F^{\rm c}_{\Sigma}(\ell):=\sup_{t\in\bbR}\Phi_\Sigma(\ell-\lambda^*t,t).\label{eq:covariance-envelope-usc}
\end{align}
Let $(\Sigma_k)_{k\in\bbN}$ converge to $\Sigma$ in $\calK_D$ and, for each $k\in\bbN$, choose $t_k\in\bbR$ such that $F^{\rm c}_{\Sigma_k}(\ell)\leq\Phi_{\Sigma_k}(\ell-\lambda^*t_k,t_k)+1/k$. If ${\limsup_{k\to\infty}F^{\rm c}_{\Sigma_k}(\ell)>0}$, uniform Gaussian marginal tails and $\lambda^*>0$ bound $t_k$ along a maximizing subsequence. Passing to $t_k\to t$ and applying the closed-set inequality for weak convergence leads to
\begin{align}
\limsup_{k\to\infty} F^{\rm c}_{\Sigma_k}(\ell)
&\leq\Phi_\Sigma(\ell-\lambda^*t,t)\leq F^{\rm c}_{\Sigma}(\ell).\label{eq:covariance-usc-bound}
\end{align}
The bound is immediate if the limiting upper value is zero and holds for singular $\Sigma$. Fix $L\in\bbR$ and the common $\Lambda\sim \calN(0,V)$ from~\eqref{eq:source-variance}. Since $0\leq F^{\rm c}_\Sigma(\ell)\leq1$, bounded reverse Fatou leads to
\begin{align}
\limsup_{k\to\infty}\bbE[F^{\rm c}_{\Sigma_k}(L-\Lambda)]\leq\bbE[F^{\rm c}_{\Sigma}(L-\Lambda)].
\end{align}
Thus, this upper semicontinuous objective attains its maximum on compact $\calK_D$, proving the existence used in Lemma~\ref{lem:comparison}. The argument includes singular covariances and a zero-variance source Gaussian.

\section{Proof of Lemma~\ref{lem:localization}}
\label{app:localization}
We use the sequential posterior $Q_i^{\rm s}$, gap $\Delta_n$, scores $(\Lambda_n,J_n,G_n)$, and allocation $\tilde t_n$ from~\eqref{eq:sequential-posterior},~\eqref{eq:accumulated-gap}, and~\eqref{eq:source-block-score}--\eqref{eq:code-allocation}, and the success event $\calE_n$ of~\eqref{eq:success-event}; $\calC$ and $\calA$ are defined in~\eqref{eq:zero-gap-posteriors}--\eqref{eq:weightset}. Lemma~\ref{lem:posterior} supplies the supporting gap~\eqref{eq:deficit}, and Lemma~\ref{lem:geometry} its contact properties. We first derive exact rate and distortion decompositions, then control their residuals by the quadratic gap bound, and finally prove the threshold violations and posterior moment and covariance estimates.
Unless specified otherwise, $P_n^{\rm o}$ governs all distributions and probability statements, with the source-only description $S$ determined by $(\tilX^n,{\Upsilon})$ and independent uniform permutation ${\Upsilon}$ of~\eqref{eq:permutation-convention}. Constants ${C}$ may change and depend only on the fixed model and parameters. For the nonnegative regret ${\bar\mu_i^{\mathrm d}}$ in~\eqref{eq:reconstruction-regret} and a positive-probability binary source-only flag $B=B(\tilX^n,{\Upsilon})$ included in $S$, define
\begin{align}
\Delta_{n,k}&:=\sum_{i\in[k]}[\Delta(Q_i^{\rm s})+{\bar\mu_i^{\mathrm d}}],\qquad k\in[n],\label{eq:prefix-gap}\\
P_n^{\rm b}&:=P_n^{\rm o}\{\,\cdot|B=1\}.\label{eq:source-conditioned-distribution}
\end{align}
Likelihoods retain their $P_n^{\rm o}$ normalization. The sequential form of~\eqref{eq:mixture-gap-identity} below retains the nonnegative gap $\Delta_n$.

\subsection{Two-Filtration Identity and Exact Decomposition}
In the next identities, $Q\in\calP(\calX)$ denotes a one-letter posterior; $Q_Y$ and $Q_{X|Y}$ are induced by $Q_{XY}$ as in~\eqref{eq:posterior-output}--\eqref{eq:posterior-conditional}. For ${\widetilde a^{\circ}}:\supp(Q)\to\bbR$ and $x\in\supp(Q)$, define the action of the kernel in~\eqref{eq:posterior-kernel} by
\begin{align}
({{K}_Q} {\widetilde a^{\circ}})(x):=\sum_{\substack{y\in\calY:\ Q_Y(y)>0}}P_{Y|X}(y|x)
\sum_{x'\in\supp(Q)}Q_{X|Y}(x'|y){\widetilde a^{\circ}}(x'),
\end{align}
With $\mathrm{Id}$ the identity operator, for a $Q$-centered $a^{\circ}:\supp(Q)\to\bbR$, take the inverse on the $Q$-centered subspace and write $\operatorname{osc}(a^{\circ}):={\max_{x\in\supp(Q)}a^{\circ}(x)-\min_{x\in\supp(Q)}a^{\circ}(x)}$. The minorization~\eqref{eq:posterior-kernel} leads to
\begin{align}
(\mathrm{Id}-{K}_Q)^{-1}a^{\circ}=\sum_{r=0}^{\infty}{K}_Q^ra^{\circ},\qquad
\operatorname{osc}\big((\mathrm{Id}-{K}_Q)^{-1}a^{\circ}\big)\leq\frac{\operatorname{osc}(a^{\circ})}{\kappa_W},\label{eq:v20-resolvent-1}
\end{align}
by summing the geometric oscillation contraction; the series is $Q$-centered uniformly over support faces. With $\widetilde a_Q^{\circ}:=(\mathrm{Id}-{K}_Q)^{-1}a^{\circ}$ and $(X,Y)\sim Q_{XY}$,
\begin{align}
a^{\circ}(X)
&=\big(\widetilde a_Q^{\circ}(X)-\bbE_Q[\widetilde a_Q^{\circ}(X)|Y]\big)
+\big(a^{\circ}(X)-\widetilde a_Q^{\circ}(X)+\bbE_Q[\widetilde a_Q^{\circ}(X)|Y]\big),\label{eq:twofiltration}
\end{align}
the two terms are centered given $Y$ and $X$, respectively, since $(\mathrm{Id}-{K}_Q)\widetilde a_Q^{\circ}=a^{\circ}$. All functions are restricted to $\supp(Q)$.

Recall $\rho$ from~\eqref{eq:posterior-entropy}, the supporting parameters from~\eqref{eq:deficit}, and $g_Q$ from~\eqref{eq:bayes-rule-local}. All posterior gradients below are restricted to the support face of $Q$. For $x\in\calX$, set ${\bar d}_{g_Q}(x):=\sum_{y\in\calY}P_{Y|X}(y|x)d(x,g_Q(y))$. Define the $Q$-centered function explicitly by
\begin{align}
{{a}}_Q(x):=[\nabla\rho(Q)]_x+{\lambda^*}{\bar d}_{g_Q}(x)-{\beta(x)}-{\beta_0}-\Delta(Q),
\qquad x\in\supp(Q),\label{eq:posterior-centered-gap}
\end{align}
Define the resolvent used below by
\begin{align}
{{\widetilde a}}_Q:=({\mathrm{Id}}-{{K}_Q})^{-1}{{a}}_Q.\label{eq:posterior-resolvent}
\end{align}
Euler's identity and~\eqref{eq:deficit} lead to $\bbE_Q[{{a}}_Q(X)]=0$. For $x\in\supp(Q)$ and $y\in\calY$ with $P_{Y|X}(y|x)>0$, define the generic posterior information density and the two scores centered over $Y\sim P_{Y|X}(\cdot|x)$ by
\begin{align}
\imath_Q(x,y)
&:=\log\frac{Q(x)}{P_X(x)}+\log\frac{P_Y(y)}{Q_Y(y)},\\
\imath_Q^{\rm c}(x,y)
&:=\imath_Q(x,y)-\sum_{\substack{y'\in\calY:\ P_{Y|X}(y'|x)>0}}
P_{Y|X}(y'|x)\imath_Q(x,y'),\\
d_Q^{\rm c}(x,y)
&:=d(x,g_Q(y))-\sum_{y'\in\calY}P_{Y|X}(y'|x)d(x,g_Q(y')).
\end{align}
For contact posteriors these agree with~\eqref{eq:contact-information}--\eqref{eq:centered-distortion-score}.
Recall the common tilted-density expression~\eqref{eq:common-tilted-from-dual}.
Substituting these definitions and~\eqref{eq:posterior-centered-gap} leads to
\begin{align}
\imath_Q(x,y)+{\lambda^*} d(x,g_Q(y))
&=\jW{x}+{\lambda^*} D+\Delta(Q)+{{a}}_Q(x)
+\imath_{Q}^{\rm c}(x,y)+{\lambda^*} d_{Q}^{\rm c}(x,y),\label{eq:pointwise}
\end{align}
At each contact $Q_j\in\calC$, ${a}_{Q_j}=0$ by $\Delta(Q_j)=0$ and~\eqref{eq:contactgradient}.

Recall the definitions of ${Q_i^{\rm s}}$, $d_i$, $d_i^*$, ${{\bar\mu_i^{\mathrm d}}}$, and ${{\Pi}_{\calC}}({Q})$ in~\eqref{eq:sequential-posterior},~\eqref{eq:coordinate-distortion},~\eqref{eq:bayes-coordinate-distortion},~\eqref{eq:reconstruction-regret}, and~\eqref{eq:nearest-contact}, respectively. For each $i\in[n]$, the decomposition~\eqref{eq:twofiltration} and the differences between the centered scores at ${Q_i^{\rm s}}$ and ${{\Pi}_{\calC}}({Q_i^{\rm s}})$ define
\begin{align}
{{\mu}_i}^{\rm f}&:={{\widetilde a}}_{{Q_i^{\rm s}}}(\tilX_i)-\bbE_{{Q_i^{\rm s}}}[{{\widetilde a}}_{{Q_i^{\rm s}}}(\tilX_i)|\tilY_i],\label{eq:forward-residual}\\
{{\mu}_i}^{\rm r}
&:={{a}}_{{Q_i^{\rm s}}}(\tilX_i)-{{\widetilde a}}_{{Q_i^{\rm s}}}(\tilX_i)+\bbE_{{Q_i^{\rm s}}}[{{\widetilde a}}_{{Q_i^{\rm s}}}(\tilX_i)|\tilY_i]\notag\\
&\quad+\imath_{{Q_i^{\rm s}}}^{\rm c}(\tilX_i,\tilY_i)-\imath_{{{\Pi}_{\calC}}({Q_i^{\rm s}})}^{\rm c}(\tilX_i,\tilY_i)
+{\lambda^*}\big(d_{{Q_i^{\rm s}}}^{\rm c}(\tilX_i,\tilY_i)-d_{{{\Pi}_{\calC}}({Q_i^{\rm s}})}^{\rm c}(\tilX_i,\tilY_i)\big),\label{eq:reverse-residual}\\
{{\mu_i^{\mathrm d}}}&{:={\lambda^*}(d_i-d_i^*)-{\bar\mu_i^{\mathrm d}},}\\
{{\mu}_n}&{:=\sum_{i\in[n]}({\mu}_i^{\rm f}+{\mu}_i^{\rm r}),\qquad {\mu_n^{\mathrm d}}:=\sum_{i\in[n]}{\mu_i^{\mathrm d}}.}\label{eq:residual-sums}
\end{align}
For $(i,k)\in\{0,\ldots,n\}^2$, define the filtrations by
\begin{align}
{\calF}_i^{\rm f}:={\calS}(S,{\Upsilon},\tilY^n,\tilX^i),\qquad
{\calF}_k^{\rm r}:={\calS}(\tilX^n,{\Upsilon},\tilY_{n-k+1}^n).
\end{align}
For $i\in[n]$ and $y\in\calY$, define
\begin{align}
Q_{i,Y}^{\rm s}(y)&:=\sum_{x\in\calX}Q_i^{\rm s}(x)P_{Y|X}(y|x),\\
Q_{i,X|Y}^{\rm s}(x|y)&:=\frac{Q_i^{\rm s}(x)P_{Y|X}(y|x)}{Q_{i,Y}^{\rm s}(y)},
\qquad x\in\calX,\ Q_{i,Y}^{\rm s}(y)>0.
\label{eq:sequential-conditioned-posterior}
\end{align}
Under $P_n^{\rm o}$, this is the conditional distribution of $\tilX_i$ given ${\calF}_{i-1}^{\rm f}$ when $y=\tilY_i$:
the message is determined by the source and permutation, and past observations are irrelevant once $\tilX^{i-1}$ is known.
The first term of~\eqref{eq:twofiltration} and the definition of ${\bar\mu_i^{\mathrm d}}$ therefore lead to
\begin{align}
\bbE_{P_n^{\rm o}}[{\mu}_i^{\rm f}|{\calF}_{i-1}^{\rm f}]&=0,\qquad
\bbE_{P_n^{\rm o}}[{\mu_i^{\mathrm d}}|{\calF}_{i-1}^{\rm f}]=0,\\
\bbE_{P_n^{\rm o}}[{\mu}_i^{\rm r}|{\calF}_{n-i}^{\rm r}]&=0.
\label{eq:localization-residual-centering}
\end{align}
The last equality uses measurability of $Q_i^{\rm s}$ given $(\tilX^n,{\Upsilon},\tilY_{i+1}^n)$,
$\tilY_i\sim P_{Y|X}(\cdot|\tilX_i)$, and the second centered term of~\eqref{eq:twofiltration}.

For $H_n$ and $\Delta_n^{\rm p}$ in~\eqref{eq:message-likelihood} and~\eqref{eq:posterior-gap-sum}, Bayes' rule, the posterior chain rule, and~\eqref{eq:sequential-conditioned-posterior} lead to
\begin{align}
H_n
&=\log\frac{P_n^{\rm o}(\tilX^n|S,\tilY^n,{\Upsilon})}{P_n^{\rm o}(\tilX^n|\tilY^n)}\\
&=\sum_{i\in[n]}\log\frac{Q_{i,X|Y}^{\rm s}(\tilX_i|\tilY_i)}{P_{X|Y}(\tilX_i|\tilY_i)}\\
&=\sum_{i\in[n]}\imath_{Q_i^{\rm s}}(\tilX_i,\tilY_i).
\label{eq:v20-likelihood-chain-1}
\end{align}
The first equality uses $S$ being determined by $(\tilX^n,{\Upsilon})$, the second the product source distribution conditional on ${\Upsilon}$, and the third the definition of $\imath_{Q_i^{\rm s}}$.
Summing~\eqref{eq:pointwise} and substituting
$\lambda^*\sqrt n\tilde t_n=\lambda^*(nD-{\sum_{i\in[n]}} d_i^*)+\lambda^*\sqrt nG_n+\Delta_n^{\rm p}$
from~\eqref{eq:code-allocation} yields
\begin{align}
H_n&=nR_{\mathrm{WZ}}(D)+\sqrt n(\Lambda_n+J_n+{\lambda^*} \tilde t_n)+{\mu}_n,\label{eq:exactH}\\
\sum_{i\in[n]} d_i&=nD+\sqrt n(G_n-\tilde t_n)+(\Delta_n+{\mu_n^{\mathrm d}})/{\lambda^*},\label{eq:exactd}
\end{align}
\begin{align}
H_n+{\lambda^*}\Bigl(\sum_{i\in[n]} d_i-nD\Bigr)
&=nR_{\mathrm{WZ}}(D)+\sqrt n(\Lambda_n+J_n+{\lambda^*} G_n)+\Delta_n+{\mu}_n+{\mu_n^{\mathrm d}}.\label{eq:combined}
\end{align}
Here~\eqref{eq:exactd} also uses $\lambda^*{\sum_{i\in[n]}}(d_i-d_i^*)={\sum_{i\in[n]}} {\bar\mu_i^{\mathrm d}}+{\mu_n^{\mathrm d}}$, and~\eqref{eq:combined} adds the first two identities.
\subsection{Residual Estimates}
We bound residual moments and tails by the nonnegative gap, then localize the gap on $\calE_n$. Set
\begin{align}
{\underline p_W}:=\min\{P_{Y|X}(y|x):(x,y)\in\calX\times\calY,\ P_{Y|X}(y|x)>0\}>0.\label{eq:min-positive-entry}
\end{align}
Fix $Q\in\calP(\calX)$, put $Q_y:=Q_{X|Y}(\cdot|y)$ when $Q_Y(y)>0$, and define ${b}_Q(x):=\log Q(x)-\bbE_Q[\log Q(X)]$ on $\supp(Q)$. Omitting zero-transition summands, the posterior identity leads to
\begin{align}
[\nabla \rho(Q)]_x-({\mathrm{Id}}-{{K}_Q}){{b}_Q}(x)
={\sum_{\substack{y\in\calY:\ P_{Y|X}(y|x)>0}}}P_{Y|X}(y|x)
\bigl[\log P_{Y|X}(y|x)-\bbE_{Q_y}[\log P_{Y|X}(y|X)]-H(Q_y)\bigr],
\end{align}
where substitution of~\eqref{eq:posterior-conditional} into~\eqref{eq:entropy-gradient} separates the source and channel logarithms.
Set ${a_Q^{\mathrm f}}:={a}_Q-(\mathrm{Id}-{K}_Q){b}_Q$. Both terms are $Q$-centered, so the resolvent~\eqref{eq:v20-resolvent-1} applies. Define
\begin{align}
{\widetilde a}_Q^{\rm b}&:=(\mathrm{Id}-{K}_Q)^{-1}{a_Q^{\mathrm f}},\label{eq:v20-residual-split-2}\\
{\widetilde a}_Q&={b}_Q+{\widetilde a}_Q^{\rm b},\label{eq:v20-residual-split-1}\\
\sup_{Q\in\calP(\calX)}\max_{x\in\supp(Q)}|{\widetilde a}_Q^{\rm b}(x)|&<\infty,
\label{eq:v20-residual-split-3}
\end{align}
The split follows from $(\mathrm{Id}-{K}_Q){\widetilde a}_Q={a}_Q$. The gradient identity bounds ${a_Q^{\mathrm f}}$ uniformly, since positive transitions are at least $\underline p_W$ and $H(Q_y)\leq\log|\calX|$;~\eqref{eq:v20-resolvent-1} and centering then bound ${\widetilde a}_Q^{\rm b}$.
For $(x,y)\in\calX\times\calY$ with $Q(x)P_{Y|X}(y|x)>0$, the forward residual is
\begin{align}
{\widetilde a}_Q(x)-\bbE_{Q_y}[{\widetilde a}_Q(X)]
&=\log Q_y(x)-\bbE_{Q_y}[\log Q_y(X)]
-\log P_{Y|X}(y|x)+\bbE_{Q_y}[\log P_{Y|X}(y|X)]\notag\\
&\quad+{\widetilde a}_Q^{\rm b}(x)-\bbE_{Q_y}[{\widetilde a}_Q^{\rm b}(X)],
\label{eq:forward-log-cancel}
\end{align}
where~\eqref{eq:v20-residual-split-1} and Bayes' rule cancel the normalizing term $\log Q_Y(y)$ under conditional centering.
For any real number ${\bar a_0}\in(0,1)$, uniformly over probability vectors ${Q}$ on $\calX$,
\begin{align}
\sum_{x\in\calX:\,{Q}(x)>0}{Q}(x)^{1-{\bar a_0}}(1+|\log {Q}(x)|^2)\leq {{C}},\label{eq:posterior-log-moment}
\end{align}
where~\eqref{eq:posterior-log-moment} follows from boundedness of ${{p_0}}^{1-{\bar a_0}}(1+|\log {{p_0}}|^2)$ on $(0,1]$. In~\eqref{eq:forward-log-cancel}, all other terms are uniformly bounded by~\eqref{eq:min-positive-entry} and~\eqref{eq:v20-residual-split-3}. A uniform conditional exponential second moment follows from these bounds and $|\bbE_{{Q}}[\log {Q}(X)]|\leq\log|\calX|$.

For $y$ with $Q_Y(y)>0$, define
\begin{align}
{c}_Q(y):=-H(Q_y)-\bbE_{Q_y}[\log P_{Y|X}(y|X)]
+\bbE_{Q_y}[{\widetilde a}_Q^{\rm b}(X)]+\log P_Y(y).
\end{align}
For $Q(x)P_{Y|X}(y|x)>0$, the split~\eqref{eq:v20-residual-split-1} and Bayes' rule lead to
$\bbE_{Q_y}[{\widetilde a}_Q(X)]+\imath_Q(x,y)={b}_Q(x)-\log P_X(x)+{c}_Q(y)$.
Centering under $P_{Y|X}(\cdot|x)$ yields
\begin{align}
&{a}_Q(x)-{\widetilde a}_Q(x)+\bbE_{Q_y}[{\widetilde a}_Q(X)]+\imath_Q^{\rm c}(x,y)+\lambda^*d_Q^{\rm c}(x,y)\notag\\
&\quad={c}_Q(y)+\lambda^*d(x,g_Q(y))
-\sum_{\substack{y'\in\calY:\ P_{Y|X}(y'|x)>0}}
P_{Y|X}(y'|x)[{c}_Q(y')+\lambda^*d(x,g_Q(y'))],
\end{align}
where the last equality follows from ${{a}}_Q-{{\widetilde a}}_Q=-{{K}_Q}{{\widetilde a}}_Q$; the three terms depending only on $x$ cancel. The remaining ${{c}_Q}$ is bounded by~\eqref{eq:min-positive-entry},~\eqref{eq:v20-residual-split-3}, and positivity of $P_Y$. A uniform bound on ${{\mu}_i}^{\rm r}$ follows from bounded distortion and contact scores.

For ${\mu_i^{\mathrm 0}}={\mu}_i^{\rm f},\allowbreak {\mu}_i^{\rm r},\allowbreak {\mu_i^{\mathrm d}}$ from~\eqref{eq:residual-sums},\allowbreak  let ${\calF}_{i-}$ be ${\calF}_{i-1}^{\rm f},\allowbreak {\calF}_{n-i}^{\rm r},\allowbreak {\calF}_{i-1}^{\rm f}$,\allowbreak  respectively and take ${\Delta_i^{\mathrm c}}=\Delta(Q_i^{\rm s}),\allowbreak \Delta(Q_i^{\rm s}),\allowbreak \Delta(Q_i^{\rm s})+{\bar\mu_i^{\mathrm d}}$,\allowbreak  respectively. For a sufficiently small positive real number $\bar a_0\in\bbR_+$,\allowbreak 
\begin{align}
\bbE[{{\mu_i^{\mathrm 0}}}^2e^{{\bar a_0}|{{\mu_i^{\mathrm 0}}}|}|{{\calF}}_{i-}]\leq {{C}}{{\Delta_i^{\mathrm c}}}.\label{eq:residual-exponential-moment}
\end{align}
Choose disjoint neighborhoods of the finitely many interior contacts so that each has a fixed nearest contact. By Lemma~\ref{lem:geometry}, the Bayes rule is locally fixed up to~\eqref{eq:action-equivalence}. Smoothness of the scores and resolvent, with ${a}_{Q_j}={\widetilde a}_{Q_j}=0$, leads to $|{\mu}_i^{\rm f}|+|{\mu}_i^{\rm r}|\leq {C}\|Q_i^{\rm s}-{\Pi}_{\calC}(Q_i^{\rm s})\|$ there. The quadratic bound~\eqref{eq:quadratic} therefore leads to
\begin{align}
\bbE_{P_n^{\rm o}}[({\mu}_i^{\rm f})^2e^{\bar a_0|{\mu}_i^{\rm f}|}|{\calF}_{i-1}^{\rm f}]
+\bbE_{P_n^{\rm o}}[({\mu}_i^{\rm r})^2e^{\bar a_0|{\mu}_i^{\rm r}|}|{\calF}_{n-i}^{\rm r}]
&\leq {C}\Delta(Q_i^{\rm s}).
\end{align}
Equivalent actions have zero distortion difference on the relevant channel support; nonequivalent actions have uniformly positive conditional Bayes regret in these neighborhoods. Bounded distortion thus bounds the regret residual's conditional exponential second moment by ${C}{\bar\mu_i^{\mathrm d}}$. On their compact complement, $\Delta$ has a positive minimum, so~\eqref{eq:posterior-log-moment} and the bounded reverse and regret residuals absorb all moments into ${C}\Delta(Q_i^{\rm s})$. This proves~\eqref{eq:residual-exponential-moment} for sufficiently small $\bar a_0\in(0,1)$.
For $|{{a_0}}|\leq {\bar a_0}$,
\begin{align}
\bbE[e^{{{a_0}}{{\mu_i^{\mathrm 0}}}}|{{\calF}}_{i-}]
&\leq1+{{a_0}}\bbE[{{\mu_i^{\mathrm 0}}}|{{\calF}}_{i-}]
+\tfrac12{{a_0}}^2\bbE[{{\mu_i^{\mathrm 0}}}^2e^{|{{a_0}}{{\mu_i^{\mathrm 0}}}|}|{{\calF}}_{i-}],\label{eq:v20-residual-mgf-1}\\
&\leq1+{{C}{a_0}^2}{{\Delta_i^{\mathrm c}}}\leq e^{{{C}{a_0}^2}{{\Delta_i^{\mathrm c}}}},\label{eq:v20-residual-mgf-2}
\end{align}
where~\eqref{eq:v20-residual-mgf-1} is Taylor's inequality;~\eqref{eq:v20-residual-mgf-2} follows from conditional centering in~\eqref{eq:localization-residual-centering}, the preceding exponential second-moment bound, and $1+x\leq e^x$.
Thus, for $|{{a_0}}|\leq {\bar a_0}$,
\begin{align}
{\bbE[e^{{a_0}{\mu}_i^{\rm f}}|{\calF}_{i-1}^{\rm f}]}&{\leq e^{{C}{a_0}^2\Delta(Q_i^{\rm s})},\qquad
\bbE[e^{{a_0}{\mu}_i^{\rm r}}|{\calF}_{n-i}^{\rm r}]\leq e^{{C}{a_0}^2\Delta(Q_i^{\rm s})},}\label{eq:mgf}\\
{\bbE[e^{{a_0}{\mu_i^{\mathrm d}}}|{\calF}_{i-1}^{\rm f}]}&{\leq e^{{C}{a_0}^2(\Delta(Q_i^{\rm s})+{\bar\mu_i^{\mathrm d}})},}
\end{align}
where~\eqref{eq:v20-residual-mgf-2} is applied to the forward, reverse, and regret residuals, respectively.
Each compensator is predictable. For ${\overline{{\mu}}_k^{\rm f}}:=\sum_{i\in[k]}{{\mu}_i}^{\rm f}$, ${\Delta_k^{\rm p}}:=\sum_{i\in[k]}\Delta({Q_i^{\rm s}})$, ${{a_0}}\in(0,{\bar a_0}]$, any positive real number ${{b_0}}\in\bbR_+$, and ${{s_{\pm}}}\in\{-1,1\}$, we have
\begin{align}
\bbE[\exp\{{{s_{\pm}}{a_0}}{\overline{{\mu}}_k^{\rm f}}-{{C}{a_0}^2}{\Delta_k^{\rm p}}\}]&\leq1,\label{eq:v20-residual-tail-1}\\
\Pr\{{{s_{\pm}}}{\overline{{\mu}}_k^{\rm f}}>{{C}{a_0}}{\Delta_k^{\rm p}}+{{b_0}}/{{a_0}}\}&\leq e^{-{{b_0}}},\label{eq:v20-residual-tail-2}
\end{align}
where~\eqref{eq:v20-residual-tail-1} follows by iterating the first bound in~\eqref{eq:mgf}, and~\eqref{eq:v20-residual-tail-2} follows by Markov's inequality.
A union bound over the two signs leads to $\Pr\{|\overline{{\mu}}_k^{\rm f}|>{C}{a_0}\Delta_k^{\rm p}+{b_0}/{a_0}\}\leq2e^{-{b_0}}$.
Recall the retained block distribution $P_n^{\rm b}$ in~\eqref{eq:source-conditioned-distribution}. Define
\begin{align}
{\mu}_{n,k}:=\sum_{i\in[k]}({\mu}_i^{\rm f}+{\mu}_i^{\rm r}),\qquad {\mu_{n,k}^{\mathrm d}}:=\sum_{i\in[k]}{\mu_i^{\mathrm d}}.\label{eq:prefix-residuals}
\end{align}
For $(k,{a_0},{b_0})\in[n]\times(0,\bar a_0]\times(0,\infty)$ and $\nu\in\{P_n^{\rm o},P_n^{\rm b}\}$,
\begin{align}
\Pr_\nu\{|{\mu}_{n,k}|+|{\mu_{n,k}^{\mathrm d}}|>{C}{a_0}\Delta_{n,k}+{C}{b_0}/{a_0}\}&\leq6e^{-{b_0}},\label{eq:residualexp}\\
\bbE_\nu[{\mu}_{n,k}^2]+\bbE_\nu[{(\mu_{n,k}^{\mathrm d})^2}]&\leq {C}\bbE_\nu[\Delta_{n,k}].\label{eq:residual-second-moment}
\end{align}
The tail bound combines the six signed forward, reverse, and regret tails, using $\Delta_k^{\rm p}\leq \Delta_{n,k}$. In reverse order, zeroing increments outside a deterministic prefix preserves the martingale property. Orthogonality and Cauchy--Schwarz lead to the second-moment bound. Here $\Pr_\nu$ and $\bbE_\nu$ use distribution $\nu$.
Since $S$ already contains $B$, its posteriors under $P_n^{\rm o}$ and $P_n^{\rm b}$ agree on $B=1$, and the channel remains $P_{Y|X}^n$; both bounds therefore hold under either distribution, with likelihoods normalized under $P_n^{\rm o}$. This compares the same flagged description: appending $B$ can change the original description's posteriors.

\subsection{Localization and Empirical Moments}
All limits below hold for every fixed $\eta>0$. For the current message alphabet $\calM_n$, $H_n=-\log P_n^{\rm o}(S|\tilY^n,{\Upsilon})$ leads to, for ${b_0}>0$,
\begin{align}
\Pr\{H_n>\log|\calM_n|+{b_0}\}
&=\bbE\Bigl[\sum_{s\in\calM_n}P_n^{\rm o}(s|\tilY^n,{\Upsilon})
\bbI\{P_n^{\rm o}(s|\tilY^n,{\Upsilon})<e^{-{b_0}}/|\calM_n|\}\Bigr],\label{eq:message-spectrum}\\
&\leq e^{-{b_0}},\label{eq:v20-message-spectrum-2}\\
\Pr\{H_n>\log(2M)+{b_0}\}&\leq e^{-{b_0}}.\label{eq:spectrum}
\end{align}
The first equality conditions on $(\tilY^n,{\Upsilon})$; each of the $|\calM_n|$ summands is at most $e^{-{b_0}}/|\calM_n|$, and $|\calM_n|\leq2M$ leads to the last inequality. A flag of probability bounded below preserves the vanishing tail.

The three normalized scores satisfy
\begin{align}
\lim_{{{A}}\to\infty}\limsup_{n\to\infty}\Pr\{\max(|{\Lambda_n}|,|J_n|,|G_n|)>{{A}}\}&=0,\label{eq:score-tight-tails}
\end{align}
where~\eqref{eq:score-tight-tails} follows from uniformly bounded second moments of the iid source score and uniformly bounded increments of the reverse channel martingale. On ${\calE_n}$, outside an event of vanishing probability,
\begin{align}
(1-{{C}}n^{-1/4}){\Delta_n}
\leq\sqrt n\big(L-{\Lambda_n}-J_n-{\lambda^*}G_n+o(1)\big)+{{C}}n^{1/4}\log n,\label{eq:v20-gap-path-1}
\end{align}
where~\eqref{eq:v20-gap-path-1} follows from~\eqref{eq:rateconverse},~\eqref{eq:combined}, and~\eqref{eq:spectrum}: the distortion term is nonpositive on ${\calE_n}$, and $|{{\mu}_n}|+|{{\mu_n^{\mathrm d}}}|\leq {{C}}n^{-1/4}{\Delta_n}+{{C}}n^{1/4}\log n$ holds outside an event of vanishing probability by~\eqref{eq:residualexp} with ${{a_0}}=n^{-1/4}$ and ${{b_0}}$ proportional to $\log n$.
Consequently,
\begin{align}
\lim_{{{A}}\to\infty}\limsup_{n\to\infty} \Pr\Bigl\{{\calE_n}\cap\bigl\{{\Delta_n}>{{A}}\sqrt n\bigr\}\Bigr\}&=0,\label{eq:v20-gap-residual-limits-1}\\
\Pr\Bigl\{{\calE_n}\cap\bigl\{\bigl|{{\mu}_n}\bigr|>\eta\sqrt n\bigr\}\Bigr\}&\longrightarrow0,\label{eq:v20-gap-residual-limits-2}\\
\Pr\Bigl\{{\calE_n}\cap\bigl\{\bigl|{{\mu_n^{\mathrm d}}}\bigr|>\eta\sqrt n\bigr\}\Bigr\}&\longrightarrow0,\label{eq:v20-gap-residual-limits-3}
\end{align}
where~\eqref{eq:v20-gap-residual-limits-1} follows from~\eqref{eq:v20-gap-path-1} and~\eqref{eq:score-tight-tails};~\eqref{eq:v20-gap-residual-limits-2} and~\eqref{eq:v20-gap-residual-limits-3} substitute this tight gap bound into~\eqref{eq:residualexp} with ${{a_0}}=n^{-1/4}$ and ${{b_0}}$ proportional to $\log n$.
On the message-likelihood event with tail parameter $2\log n$,~\eqref{eq:combined} leads to
\begin{align}
\lambda^*\Bigl(\sum_{i\in[n]}d_i-nD\Bigr)
&=nR_{\mathrm{WZ}}(D)-H_n
+\sqrt n(\Lambda_n+J_n+\lambda^*G_n)+\Delta_n+{\mu}_n+{\mu_n^{\mathrm d}}\\
&\geq\sqrt n(\Lambda_n+J_n+\lambda^*G_n-L+o(1))+{\mu}_n+{\mu_n^{\mathrm d}}.
\end{align}
The lower bound uses~\eqref{eq:rateconverse},~\eqref{eq:spectrum}, and $\Delta_n\geq0$.
On $\calE_n$, the left side is also at most zero; score tightness and the residual limits therefore bound its magnitude at the $\sqrt n$ scale.
The distortion deviation satisfies
\begin{align}
\lim_{{{A}}\to\infty}\limsup_{n\to\infty} \Pr\Bigl\{{\calE_n}\cap\bigl\{\bigl|\sum\nolimits_{i\in[n]} d_i-nD\bigr|>{{A}}\sqrt n\bigr\}\Bigr\}&=0,\label{eq:v20-distortion-tight-1}
\end{align}
For every deterministic $k=k(n)\in[n]$,
\begin{align}
\Pr\Bigl\{{\calE_n}\cap\bigl\{\bigl|{{\mu}_{n,k}}\bigr|>\eta\sqrt n\bigr\}\Bigr\}&\longrightarrow0,\label{eq:prefix-residual-rate}\\
\Pr\Bigl\{{\calE_n}\cap\bigl\{\bigl|{{\mu_{n,k}^{\mathrm d}}}\bigr|>\eta\sqrt n\bigr\}\Bigr\}&\longrightarrow0,\label{eq:v20-prefix-residual-1}
\end{align}
where~\eqref{eq:prefix-residual-rate} and~\eqref{eq:v20-prefix-residual-1} follow from~\eqref{eq:residualexp} and ${\Delta_{n,k}}\leq {\Delta_n}$, with the same choices of ${{a_0}},{{b_0}}$ as above.

Summing~\eqref{eq:quadratic}, using $\Delta_n^{\rm p}\leq \Delta_n$, and then~\eqref{eq:v20-gap-residual-limits-1} yields
\begin{align}
\frac1n\sum_{i\in[n]}\dist(Q_i^{\rm s},\calC)^2
&\leq\frac{{C}}{n}\sum_{i\in[n]}\Delta(Q_i^{\rm s})\leq\frac{{C}\Delta_n}n,\label{eq:gap-to-distance}\\
\lim_{{A}\to\infty}\limsup_{n\to\infty} \Pr\Bigl\{\calE_n\cap\bigl\{\sum\nolimits_{i\in[n]}\dist(Q_i^{\rm s},\calC)^2>{A}\sqrt n\bigr\}\Bigr\}&=0.
\end{align}
Let $\be_x$ denote the unit vector indexed by $x\in\calX$. Recall the definitions of $\calC$, ${{\Pi}_{\calC}}({Q})$, and $\theta_{n,j}$ in~\eqref{eq:zero-gap-posteriors},~\eqref{eq:nearest-contact}, and~\eqref{eq:empirical-weights}, respectively. Fix any bounded measurable ${\boldsymbol\psi}:\calP(\calX)\to\bbR^{|\calX|}$. Then
\begin{align}
\lim_{{{A}}\to\infty}\limsup_{n\to\infty} \Pr\Bigl\{\bigl|\sum_{i\in[n]}\langle {{\boldsymbol\psi}}({Q_i^{\rm s}}),\be_{\tilX_i}-{Q_i^{\rm s}}\rangle\bigr|>{{A}}\sqrt n\Bigr\}&=0,\label{eq:moment}
\end{align}
where~\eqref{eq:moment} follows by applying~\eqref{eq:twofiltration} to $\langle{\boldsymbol\psi}(Q),\be_X-Q\rangle$: its two martingale sums have uniformly bounded increments and second moments $O(n)$.
Cauchy--Schwarz,~\eqref{eq:quadratic}, and~\eqref{eq:v20-gap-residual-limits-1} similarly lead to
\begin{align}
\frac1n\sum_{i\in[n]}\|Q_i^{\rm s}-{\Pi}_{\calC}(Q_i^{\rm s})\|
&\leq\Bigl(\tfrac{{C}}{n}\sum_{i\in[n]}\Delta(Q_i^{\rm s})\Bigr)^{1/2},\label{eq:nearestmean}\\
\lim_{{A}\to\infty}\limsup_{n\to\infty} \Pr\Bigl\{\calE_n\cap\bigl\{\frac1n\sum\nolimits_{i\in[n]}\|Q_i^{\rm s}-{\Pi}_{\calC}(Q_i^{\rm s})\|>{A}n^{-1/4}\bigr\}\Bigr\}&=0.\label{eq:nearestmean-tail}
\end{align}
The empirical contact mixture satisfies
\begin{align}
\Pr\Bigl\{{\calE_n}\cap\bigl\{\bigl\|\sum\nolimits_{j\in[m]} \theta_{n,j}Q_j-P_X\bigr\|>\eta\bigr\}\Bigr\}&\longrightarrow0,\label{eq:v20-weight-mean-1}
\end{align}
where~\eqref{eq:v20-weight-mean-1} follows from~\eqref{eq:moment} applied coordinatewise, convergence in probability of the empirical source frequencies to $P_X$, and~\eqref{eq:nearestmean-tail}.
The Bayes distortions satisfy
\begin{align}
\lim_{{{A}}\to\infty}\limsup_{n\to\infty} \Pr\Bigl\{{\calE_n}\cap\bigl\{\bigl|\sum\nolimits_{i\in[n]} d_i^*-nD\bigr|>{{A}}\sqrt n\bigr\}\Bigr\}&=0,\label{eq:v20-bayes-distortion-1}
\end{align}
where~\eqref{eq:v20-bayes-distortion-1} follows from the distortion tail bound above and $\lambda^*\sum_{i\in[n]}(d_i-d_i^*)=\allowbreak\sum_{i\in[n]}{\bar\mu_i^{\mathrm d}}+{\mu_n^{\mathrm d}}$, as $\sum_{i\in[n]}{\bar\mu_i^{\mathrm d}}\leq \Delta_n$.
Bayes optimality leads to $\delta(Q)=\langle Q,\bar{\boldsymbol d}_{g_Q}\rangle$ for the distortion vector~\eqref{eq:rule-distortion-vector}. Thus,
\begin{align}
\sum_{i\in[n]}d_i^*-\sum_{i\in[n]}\delta(Q_i^{\rm s})
&=\sum_{i\in[n]}[d_i^*-\bar d_{g_{Q_i^{\rm s}}}(\tilX_i)]
+\sum_{i\in[n]}\langle\bar{\boldsymbol d}_{g_{Q_i^{\rm s}}},\be_{\tilX_i}-Q_i^{\rm s}\rangle.
\end{align}
The first sum is a reverse channel martingale with bounded increments,
because $g_{Q_i^{\rm s}}$ is measurable before $\tilY_i$ is revealed and $\tilY_i$ has conditional distribution $P_{Y|X}(\cdot|\tilX_i)$.
The second sum satisfies~\eqref{eq:moment} with ${\boldsymbol\psi}(Q)=\bar{\boldsymbol d}_{g_Q}$.
Combining these bounds with~\eqref{eq:v20-bayes-distortion-1} leads to
\begin{align}
\lim_{{{A}}\to\infty}\limsup_{n\to\infty} \Pr\Bigl\{{\calE_n}\cap\bigl\{\bigl|\sum\nolimits_{i\in[n]}\delta({Q_i^{\rm s}})-nD\bigr|>{{A}}\sqrt n\bigr\}\Bigr\}&=0,\label{eq:v20-posterior-distortion-1}
\end{align}
Hence,
\begin{align}
\Pr\Bigl\{{\calE_n}\cap\bigl\{\bigl|\sum\nolimits_{j\in[m]} \theta_{n,j}\delta(Q_j)-D\bigr|>\eta\bigr\}\Bigr\}&\longrightarrow0,\label{eq:v20-weight-distortion-1}
\end{align}
where~\eqref{eq:v20-weight-distortion-1} follows from the preceding posterior distortion bound, Lipschitz continuity of $\delta$, and~\eqref{eq:nearestmean-tail}.
Hence
\begin{align}
\Pr\Bigl\{{\calE_n}\cap\bigl\{\dist({\btheta_n},\calA)>\eta\bigr\}\Bigr\}\longrightarrow0,\label{eq:weights-feasible}
\end{align}
where~\eqref{eq:weights-feasible} follows from the preceding approximate constraints and compactness of the weight simplex.

Recall the definitions of ${\Sigma_j}(x)$, $\Sigma(\btheta)$, and ${\Omega_n}$ in~\eqref{eq:contact-covariance},~\eqref{eq:mixture-covariance}, and~\eqref{eq:predictable-covariance}, respectively. The predictable covariance satisfies
\begin{align}
\Pr\Bigl\{\bigl\|\frac1n{\Omega_n}-\frac1n\sum_{i\in[n]}\sum_{x\in\calX}{Q_i^{\rm s}}(x){\Sigma}_{{{\Pi}_{\calC}}({Q_i^{\rm s}})}(x)\bigr\|_2>\eta\Bigr\}&\longrightarrow0,\label{eq:v20-localized-covariance-1}\\
\Pr\Bigl\{{\calE_n}\cap\bigl\{\bigl\|{\Omega_n}/n-\Sigma({\btheta_n})\bigr\|_2>\eta\bigr\}\Bigr\}&\longrightarrow0,\label{eq:v20-localized-covariance-2}
\end{align}
where~\eqref{eq:v20-localized-covariance-1} follows by applying~\eqref{eq:moment} entrywise to the bounded matrix scores;~\eqref{eq:v20-localized-covariance-2} then follows from~\eqref{eq:nearestmean-tail} and the definition of $\Sigma({\btheta_n})$.
Combining~\eqref{eq:weights-feasible} and~\eqref{eq:v20-localized-covariance-2} establishes the localization of the weights and Claim~(ii) of Lemma~\ref{lem:localization}.

\subsection{The Non-Excess-Distortion Probability Bound}
Fix $\eta>0$ and $\gamma=2\log n$. Since $|\calM_n|\leq2M$,~\eqref{eq:rateconverse} implies, for all sufficiently large $n$,
$\log|\calM_n|+\gamma\leq nR_{\mathrm{WZ}}(D)+\sqrt n(L+\eta/2)$.
By~\eqref{eq:exactH}, a rate violation on this likelihood event forces ${\mu}_n<-\eta\sqrt n/2$. Thus,
\begin{align}
\Pr\{\calE_n\cap\calR_n(\tilde t_n;\eta)^c\}
&\leq n^{-2}+\Pr\{\calE_n\cap\{|{\mu}_n|>\eta\sqrt n/2\}\}
\longrightarrow0,
\end{align}
using~\eqref{eq:v20-message-spectrum-2} and~\eqref{eq:v20-gap-residual-limits-2}.
Similarly,~\eqref{eq:exactd} and ${\sum_{i\in[n]}} d_i\leq nD$ on $\calE_n$ show that a distortion violation forces ${\mu_n^{\mathrm d}}<-\lambda^*\eta\sqrt n$, so
\begin{align}
\Pr\left\{\calE_n\cap\calD_n\left(\tilde t_n-\frac{\Delta_n}{\lambda^*\sqrt n};\eta\right)^c\right\}
&\leq\Pr\{\calE_n\cap\{|{\mu_n^{\mathrm d}}|>\lambda^*\eta\sqrt n\}\}
\longrightarrow0
\end{align}
by~\eqref{eq:v20-gap-residual-limits-3}.
These prove~\eqref{eq:initial-information-violation}--\eqref{eq:initial-distortion-violation}.
The success probability is therefore bounded by
\begin{align}
\Pcn(D)
\leq \Pr\Bigl\{J_n\leq L-{\Lambda_n}-{\lambda^*} {\tilde t_n}+\eta,\quad
G_n\leq {\tilde t_n}-\tfrac{{\Delta_n}}{{\lambda^*}\sqrt n}+\eta\Bigr\}+o(1),
\end{align}
by a union bound over the two violations. Together with~\eqref{eq:weights-feasible} and~\eqref{eq:v20-localized-covariance-2}, this proves Lemma~\ref{lem:localization}.

\Needspace{10\baselineskip}
\section{Proof of Lemma~\ref{lem:selectors}}
\label{app:selectors}
Recall the original success event $\calE_n$, flagged description $S$, retained distribution $P_n^{\rm b}$, and weight set $\calA$ from~\eqref{eq:success-event},~\eqref{eq:flagged-description},~\eqref{eq:flagged-law}, and~\eqref{eq:weightset}. We bound coordinate dependence by the expected posterior gap. For the source-only repair, define
\begin{align}
k_n:=\lceil n^{3/4}\rceil,\qquad S':=(S,\tilX^{k_n}),\label{eq:repaired-description}
\end{align}
We then transfer source-measurable weights and allocation back to $S$. 

\subsection{Allocation and Coordinate Dependence Bounds}
\label{app:projection}
For $i\in[n]$, let $\tilY_{-i}$ contain all observations except $\tilY_i$. For every $y\in\calY$, define
\begin{align}
{\tilde Q_i^{\rm s}}:={P^{\rm o}_{n,\tilX_i|S,\tilY_{-i},{\Upsilon}}},\qquad
{\tilde Q_{i,Y}^{\rm s}}(y):=\sum_{x\in\calX}{\tilde Q_i^{\rm s}}(x)P_{Y|X}(y|x).\label{eq:leave-one-out-posterior}
\end{align}
For a finite random variable ${{\boldsymbol W}}$ and any joint distribution ${\nu}$ of $(X^n,{{\boldsymbol W}})$, define the averaged conditional total correlation by
\begin{align}
\TC_{{\nu}}(X^n|{{\boldsymbol W}})
&:=\sum_{i\in[n]}H_{{\nu}}(X_i|{{\boldsymbol W}})-H_{{\nu}}(X^n|{{\boldsymbol W}})\label{eq:TC-definition}\\
&=\bbE_{{\nu}}\Bigl[{\operatorname{KL}}\bigl({\nu}_{X^n|{{\boldsymbol W}}}(\cdot|{{\boldsymbol W}})
\,\bigm\|\,\prod_{i=1}^n{\nu}_{X_i|{{\boldsymbol W}}}(\cdot|{{\boldsymbol W}})\bigr)\Bigr]\geq0,
\end{align}
where conditional relative entropy leads to the equality and nonnegativity.
Recall $H_n,\Delta_n,\tilde t_n$ from~\eqref{eq:message-likelihood},~\eqref{eq:accumulated-gap}, and~\eqref{eq:code-allocation}. The sequential $Q_i^{\rm s}$ conditions on $(S,\tilX^{i-1},\tilY_{i+1}^n,{\Upsilon})$. Here $P_n^{\rm b}=P_n^{\rm o}$ or $P_n^{\rm o}\{\,\cdot|B=1\}$ for a source-only flag $B$ included in $S$. Define
\begin{align}
{{I_n^{\mathrm t}}}&:=\TC_{{P_n^{\rm b}}}(\tilX^n|{S},\tilY^n,{{\Upsilon}})\label{eq:code-total-correlation}\\
t_n^{{S}}(\tilX^n,{{\Upsilon}})&:=\frac{{\bbE_{P_n^{\rm o}}}[H_n|\tilX^n,{{\Upsilon}}]-nR_{\mathrm{WZ}}(D)-\sum_{i\in[n]} \big(\jW{\tilX_i}-R_{\mathrm{WZ}}(D)\big)}{{\lambda^*}\sqrt n},\label{eq:scalar-selector-definition}
\end{align}
Here ${I_n^{\mathrm t}}$ is deterministic and $t_n^{S}$ is source-measurable. On the flag, source-only conditioning preserves the conditional expectation of $H_n$ given $(\tilX^n,{\Upsilon})$, so it agrees under $P_n^{\rm o}$ and $P_n^{\rm b}$.

\begin{lemma}\label{lem:tc-deficit}
For the preceding distributions, constants ${C}_0,{C}>0$ depending only on the fixed model and distortion level satisfy the following.
\begin{enumerate}
\item The remaining coordinate dependence satisfies
\begin{align}
\bbE_{{P_n^{\rm b}}}[{\Delta_n}]
&={{I_n^{\mathrm t}}}+\sum_{i\in[n]}\bbE_{{P_n^{\rm b}}}[\Delta({\tilde Q_i^{\rm s}})]
+{\lambda^*}\Bigl(\bbE_{{P_n^{\rm b}}}[\sum_{i\in[n]} d_i]-\sum_{i\in[n]}\bbE_{{P_n^{\rm b}}}[\delta({\tilde Q_i^{\rm s}})]\Bigr)\geq {{I_n^{\mathrm t}}}.\label{eq:tc-deficit}
\end{align}
The identity also holds for any source prior with the same product channel, using its posteriors and entropies.
\item The posterior replacement error satisfies
\begin{align}
\bbE_{{P_n^{\rm b}}}[{\Delta_n}]&\geq {{C}_0}\sum_{i\in[n]}\bbE_{{P_n^{\rm b}}}[\|{Q_i^{\rm s}}-{\tilde Q_i^{\rm s}}\|^2],\label{eq:post-stability}
\end{align}
\item The allocation projection satisfies
\begin{align}
\bbE_{{P_n^{\rm b}}}[({\tilde t_n}-t_n^{{S}})^2]&\leq {{C}}\bbE_{{P_n^{\rm b}}}[{\Delta_n}]/n.\label{eq:scalarprojection}
\end{align}
\end{enumerate}
\end{lemma}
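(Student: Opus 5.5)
Claim~(i) reduces to an entropy bookkeeping identity, Claim~(ii) to curvature of $\rho$ along the leave-one-out posterior, and Claim~(iii) to a variance bound for conditional centering.

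\emph{Claim~(i).} Take expectations of the definition~\eqref{eq:accumulated-gap} under $P_n^{\rm b}$ and write each $\Delta(Q_i^{\rm s})$ through~\eqref{eq:deficit}. Summing, the affine parts give $\lambda^*\sum_i\bbE[\delta(Q_i^{\rm s})]-n\beta_0-\sum_i\bbE\langle\bbeta,Q_i^{\rm s}\rangle$. By the tower property $\bbE[Q_i^{\rm s}]=\bbE[\be_{\tilX_i}]$, so the $\bbeta$ terms depend only on the source marginals, and the same holds for $\tilde Q_i^{\rm s}$. The entropy parts give $-\sum_i\bbE[H_{Q_i^{\rm s}}(X|Y)]$. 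By the chain rule along the sequential revelation $(S,\tilX^{i-1},\tilY^n)$ and the Markov structure of the product channel, this sum equals $-H(\tilX^n|S,\tilY^n,{\Upsilon})$. Indeed, $\bbE[H_{Q_i^{\rm s}}(X|Y)]=H(\tilX_i|S,\tilX^{i-1},\tilY^n,{\Upsilon})$, because $\tilY^{i-1}$ is conditionally irrelevant given $\tilX^{i-1}$. The analogous leave-one-out sum equals $-\sum_iH(\tilX_i|S,\tilY^n,{\Upsilon})$, since $\tilde Q_i^{\rm s}$ updated by $\tilY_i$ is exactly $P(\tilX_i|S,\tilY^n,{\Upsilon})$. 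Subtracting the two yields the total correlation $I_n^{\rm t}$ in~\eqref{eq:code-total-correlation}. The regret terms $\bar\mu_i^{\rm d}$ add up to $\lambda^*(\bbE\sum_id_i-\sum_i\bbE[d_i^*])$, and I will replace $\bbE[d_i^*]$ by $\bbE[\delta(Q_i^{\rm s})]$ using Bayes optimality and the tower property. Rearranging then gives~\eqref{eq:tc-deficit}. The inequality $\ge I_n^{\rm t}$ follows from $\Delta\ge0$ and from $\bbE[d_i]\ge\bbE[\delta(\tilde Q_i^{\rm s})]$, the latter because $d_i$ uses a reconstruction measurable with respect to $(S,\tilY^n)$, which is no better than the Bayes rule given $\tilde Q_i^{\rm s}$ and $\tilY_i$. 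The same computation uses only the product channel, so it holds for any source prior.

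\emph{Claim~(ii).} I will show that each $\bbE\|Q_i^{\rm s}-\tilde Q_i^{\rm s}\|^2$ is controlled by an entropy or Jensen gap, and then sum. Write $\tilde Q_i^{\rm s}$ as a mixture of sharper posteriors given more information, and $Q_i^{\rm s}$ as obtained from $\tilde Q_i^{\rm s}$ by exchanging $\tilY^{i-1}$ for $\tilX^{i-1}$. The strong concavity of $-\rho$ from~\eqref{eq:strongcurv} turns the resulting Jensen gap in $\rho$ into squared distance, with constant $\kappa_W/2$. The Jensen gaps telescope into $I_n^{\rm t}$ plus the nonnegative extra terms of~\eqref{eq:tc-deficit}, which are bounded by $\bbE[\Delta_n]$ through Claim~(i). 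I expect this to be the main obstacle: the conditioning $\sigma$-fields of $Q_i^{\rm s}$ and $\tilde Q_i^{\rm s}$ are not nested, so a direct martingale Jensen argument does not apply. I would first try comparing both to the common refinement $P(\tilX_i|S,\tilX^{i-1},\tilY^n,{\Upsilon})$ and use the triangle inequality, accepting a factor $2$ in $C_0$.

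\emph{Claim~(iii).} By~\eqref{eq:exactH} and~\eqref{eq:source-block-score}, $\lambda^*\sqrt n\,\tilde t_n=H_n-nR_{\mathrm{WZ}}(D)-\sqrt n\Lambda_n-\sqrt nJ_n-\mu_n$. Conditioning on $(\tilX^n,{\Upsilon})$ gives $\lambda^*\sqrt n\,t_n^S=\bbE[H_n|\tilX^n,{\Upsilon}]-nR_{\mathrm{WZ}}(D)-\sqrt n\Lambda_n$. Hence $\lambda^*\sqrt n(\tilde t_n-t_n^S)$ equals $(H_n-\bbE[H_n|\tilX^n,{\Upsilon}])-\sqrt nJ_n-\mu_n$. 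This difference is $\bbE[\mu_n|\tilX^n,{\Upsilon}]-\mu_n$, since $\bbE[J_n|\tilX^n,{\Upsilon}]=0$ by~\eqref{eq:gaussian-martingale-check}. Its second moment is at most $\bbE[\mu_n^2]\le C\,\bbE[\Delta_n]$ by~\eqref{eq:residual-second-moment} with $k=n$. Dividing by $(\lambda^*)^2n$ gives~\eqref{eq:scalarprojection}.
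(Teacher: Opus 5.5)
Your Claim~(i) is correct and follows the paper's computation: the tower property handles the affine terms, the chain rule together with conditional irrelevance of $\tilY^{i-1}$ handles the entropy terms, and $\bbE[d_i^*]=\bbE[\delta(Q_i^{\rm s})]$ handles the regret. Claims~(ii) and~(iii), however, each have a genuine gap.

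\textbf{Claim~(ii).} The missing idea is that the $\sigma$-fields \emph{are} effectively nested. The conditional irrelevance of $\tilY^{i-1}$ given $\tilX^{i-1}$, which you already use in Claim~(i), gives $Q_i^{\rm s}=P_n^{\rm b}\{\tilX_i=\cdot\mid S,\tilX^{i-1},\tilY_{-i},\Upsilon\}$, cf.~\eqref{eq:posterior-enlarged-field}. This is a posterior on a refinement of $\calS(S,\tilY_{-i},\Upsilon)$, so $\bbE[Q_i^{\rm s}\mid S,\tilY_{-i},\Upsilon]=\tilde Q_i^{\rm s}$. Consequently $\bbE[\rho(Q_i^{\rm s})]-\bbE[\rho(\tilde Q_i^{\rm s})]$, which is exactly the $i$th summand of $I_n^{\rm t}$ in your Claim~(i) computation, is a conditional Jensen gap of $\rho$. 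Expanding $\rho$ around $\tilde Q_i^{\rm s}$, the tower identity cancels the linear term, and strong convexity~\eqref{eq:strongcurv} bounds the remainder below by $(\kappa_W/2)\bbE\|Q_i^{\rm s}-\tilde Q_i^{\rm s}\|^2$. Then~\eqref{eq:post-stability} follows from $\bbE[\Delta_n]\geq I_n^{\rm t}$ with $C_0=\kappa_W/2$.

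Your proposed workaround would not work. The $\sigma$-field $\calS(S,\tilX^{i-1},\tilY^n,\Upsilon)$ you name is not the join of the two fields, because it contains $\tilY_i$. Its posterior is the one-observation Bayes update of $Q_i^{\rm s}$, whose distance from $Q_i^{\rm s}$ (and from $\tilde Q_i^{\rm s}$) is typically of order one for every index $i$. The triangle inequality would therefore leave an $O(n)$ term that $\bbE[\Delta_n]$ does not control. The actual join $\calS(S,\tilX^{i-1},\tilY_{-i},\Upsilon)$ returns $Q_i^{\rm s}$ itself, which is the observation above.

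\textbf{Claim~(iii).} Your reduction $\lambda^*\sqrt n(\tilde t_n-t_n^{S})=(H_n-\bbE[H_n\mid\tilX^n,\Upsilon])-\sqrt nJ_n-\mu_n$ matches the paper. The next step, that this equals $\bbE[\mu_n\mid\tilX^n,\Upsilon]-\mu_n$, is false. It would make $H_n-\bbE[H_n\mid\tilX^n,\Upsilon]-\sqrt nJ_n$ measurable with respect to $(\tilX^n,\Upsilon)$. Since this quantity has zero conditional mean given $(\tilX^n,\Upsilon)$, it would then vanish identically. That would mean the channel fluctuation of the message likelihood coincides exactly with the contact-score martingale. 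Only a mean-square approximate version of this holds, and proving it is the whole content of Claim~(iii). Note that $\tilde t_n$ depends on $\tilY^n$ through $d_i^*$, $G_n$, and $\Delta_n^{\rm p}$, so its fluctuation given the source is not captured by $\mu_n$.

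The missing argument compares two reverse martingales:
\begin{enumerate}
\item By the leave-one-out factorization~\eqref{eq:likelihoodcoordinate}, the Doob increment of $H_n$ when $\tilY_i$ is revealed is $\bbE[\imath^{\rm y,c}_{\tilde Q_i^{\rm s}}(\tilX_i,\tilY_i)\mid\tilX^n,\Upsilon,\tilY_i^n]$.
\item Orthogonality reduces $\bbE[(H_n-\bbE[H_n\mid\tilX^n,\Upsilon]-\sqrt nJ_n)^2]$ to the sum over $i$ of the squared differences between these increments and $\imath^{\rm c}_{\Pi_{\calC}(Q_i^{\rm s})}(\tilX_i,\tilY_i)$.
\item Each difference is bounded in mean square by $C\bbE[\|\tilde Q_i^{\rm s}-Q_i^{\rm s}\|^2+\Delta(Q_i^{\rm s})]$. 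When $\tilde Q_i^{\rm s}$ stays away from the boundary, this uses Lipschitz continuity of the output score and~\eqref{eq:quadratic}. Otherwise, a uniform moment bound is absorbed because $\bbE[\Delta(Q_i^{\rm s})\mid S,\tilY_{-i},\Upsilon]$ is then bounded below by a constant.
\end{enumerate}
Claim~(ii) then turns the sum into $C\bbE[\Delta_n]$, and only after that does~\eqref{eq:residual-second-moment} dispose of $\mu_n$. The leave-one-out posterior is introduced precisely for this step, so Claim~(ii) playing no role in your proof of Claim~(iii) is a symptom of the gap.
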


\subsubsection{Proof of Lemma~\ref{lem:tc-deficit}}
Condition on ${\Upsilon}$ and average afterward. Use $P_n^{\rm b}=P_n^{\rm o}\{\,\cdot|B=1\}$; $B\equiv1$ includes $P_n^{\rm b}=P_n^{\rm o}$. Then
\begin{align}
\bbE_{{P_n^{\rm b}}}[\rho({Q_i^{\rm s}})]&=-H_{{P_n^{\rm b}}}(\tilX_i|{S},\tilX^{i-1},\tilY^n,{{\Upsilon}})\label{eq:v20-posterior-tower-1}\\
\bbE_{{P_n^{\rm b}}}[\rho({\tilde Q_i^{\rm s}})]&=-H_{{P_n^{\rm b}}}(\tilX_i|{S},\tilY^n,{{\Upsilon}})\label{eq:v20-posterior-tower-2}\\
\bbE_{{P_n^{\rm b}}}[{Q_i^{\rm s}}|{S},\tilY_{-i},{{\Upsilon}}]&={\tilde Q_i^{\rm s}},\label{eq:v20-posterior-tower-3}
\end{align}
The entropy identities use $\rho(Q)=-H_{QP_{Y|X}}(X|Y)$. For the tower identity, fix $x\in\calX$. The product channel makes past observations irrelevant given $\tilX^{i-1}$, so
\begin{align}
{Q_i^{\rm s}}(x)&={P_n^{\rm b}}\{\tilX_i=x|{S},\tilX^{i-1},\tilY_{-i},{{\Upsilon}}\}\label{eq:posterior-enlarged-field}\\
\bbE_{{P_n^{\rm b}}}[{Q_i^{\rm s}}(x)|{S},\tilY_{-i},{{\Upsilon}}]
&{=\bbE_{P_n^{\rm b}}\big[P_n^{\rm b}\{\tilX_i=x|S,\tilX^{i-1},\tilY_{-i},{\Upsilon}\}\bigm|S,\tilY_{-i},{\Upsilon}\big]}\\
&={P_n^{\rm b}}\{\tilX_i=x|{S},\tilY_{-i},{{\Upsilon}}\}={\tilde Q_i^{\rm s}}(x),\label{eq:posterior-nested-tower}
\end{align}
The product channel leads to~\eqref{eq:posterior-enlarged-field}; the nested tower property leads to~\eqref{eq:posterior-nested-tower}. These posteriors agree under $P_n^{\rm o},P_n^{\rm b}$ on $B=1$ because $B$ is included in $S$; see Appendix~\ref{app:localization}.
The accumulated gap satisfies
\begin{align}
\bbE_{{P_n^{\rm b}}}[{\Delta_n}]
&=\sum_{i\in[n]}\bbE_{{P_n^{\rm b}}}[\rho({Q_i^{\rm s}})-{\beta_0}-\langle {\bbeta},{Q_i^{\rm s}}\rangle]+{\lambda^*}\bbE_{{P_n^{\rm b}}}\bigl[\sum_{i\in[n]} d_i\bigr]\label{eq:v20-mean-gap-1}\\
&=\sum_{i\in[n]}\bbE_{{P_n^{\rm b}}}[\rho({Q_i^{\rm s}})-\rho({\tilde Q_i^{\rm s}})]+\sum_{i\in[n]}\bbE_{{P_n^{\rm b}}}[\Delta({\tilde Q_i^{\rm s}})]
+{\lambda^*}\Bigl(\bbE_{{P_n^{\rm b}}}\bigl[\sum_{i\in[n]} d_i\bigr]-\sum_{i\in[n]}\bbE_{{P_n^{\rm b}}}[\delta({\tilde Q_i^{\rm s}})]\Bigr)\label{eq:v20-mean-gap-2}\\
&={{I_n^{\mathrm t}}}+\sum_{i\in[n]}\bbE_{{P_n^{\rm b}}}[\Delta({\tilde Q_i^{\rm s}})]
+{\lambda^*}\Bigl(\bbE_{{P_n^{\rm b}}}\bigl[\sum_{i\in[n]} d_i\bigr]-\sum_{i\in[n]}\bbE_{{P_n^{\rm b}}}[\delta({\tilde Q_i^{\rm s}})]\Bigr)\geq {{I_n^{\mathrm t}}},\label{eq:v20-mean-gap-3}
\end{align}
The first equality uses $\bbE_{P_n^{\rm b}}[{\bar\mu_i^{\mathrm d}}]=\lambda^*(\bbE_{P_n^{\rm b}}[d_i]-\bbE_{P_n^{\rm b}}[\delta(Q_i^{\rm s})])$ and~\eqref{eq:deficit}; the second uses $\bbE_{P_n^{\rm b}}[Q_i^{\rm s}]=\bbE_{P_n^{\rm b}}[\tilde Q_i^{\rm s}]$; the third uses the entropy chain rule. The remaining terms are nonnegative by $\Delta\geq0$ and Bayes optimality. This proves Claim~(i) for every source prior. Here $P_n^{\rm b}$ denotes the block distribution.
The posterior replacement error satisfies
\begin{align}
\bbE_{{P_n^{\rm b}}}[{\Delta_n}]
&\geq\sum_{i\in[n]}\bbE_{{P_n^{\rm b}}}[\rho({Q_i^{\rm s}})-\rho({\tilde Q_i^{\rm s}})]\label{eq:v20-posterior-stability-1}\\
&\geq {{C}_0}\sum_{i\in[n]}\bbE_{{P_n^{\rm b}}}[\|{Q_i^{\rm s}}-{\tilde Q_i^{\rm s}}\|^2],\label{eq:v20-posterior-stability-2}
\end{align}
Dropping the nonnegative terms leads to the first inequality. Strong curvature~\eqref{eq:strongcurv} and~\eqref{eq:v20-posterior-tower-3} cancel the linear Taylor term and lead to the second; continuity covers boundary faces. This proves Claim~(ii). For Claim~(iii), compare reverse Doob increments with centered contact scores.

Fix $i\in[n]$, a possible message ${s}$, $y_{-i}\in\calY^{n-1}$, and a permutation ${{\upsilon}}$ of $[n]$. For $y\in\calY$ on the positive-probability support, the message likelihood satisfies
\begin{align}
H_n({s};y_{-i},y,{{\upsilon}})
=-\log \Pr\{{S}={s}|\tilY_{-i}=y_{-i},{{\Upsilon}}={{\upsilon}}\}
+\log P_Y(y)-\log{\tilde Q_{i,Y}^{\rm s}}(y).\label{eq:likelihoodcoordinate}
\end{align}
Bayes' rule under $P_n^{\rm o}$ leads to~\eqref{eq:likelihoodcoordinate}. For $Q\in\calP(\calX)$, write $Q_Y$ for its output distribution~\eqref{eq:posterior-output} and $\imath_Q^{\rm y}(y):=\log P_Y(y)-\log Q_Y(y)$ for $Q_Y(y)>0$. For $x\in\supp(Q)$ and $Q_Y(y)>0$, define
\begin{align}
\imath_Q^{\rm y,c}(x,y)&:=\imath_Q^{\rm y}(y)-\sum_{y'\in\calY:\,P_{Y|X}(y'|x)>0}P_{Y|X}(y'|x)\imath_Q^{\rm y}(y').
\label{eq:centered-output-score}
\end{align}
The sum omits zero-transition terms; on supported pairs, $\imath_{Q_j}^{\rm y,c}(x,y)=\imath_j^{\rm c}(x,y)$.
Recall the contact set and nearest-contact rule~\eqref{eq:zero-gap-posteriors} and~\eqref{eq:nearest-contact}, and $\underline p_W$ from~\eqref{eq:min-positive-entry}. Set ${\underline q}:=\min_{j\in[m],x\in\calX}Q_j(x)>0$ and ${\calF}_i^{\rm *}:={\calS}(S,\tilY_{-i},{\Upsilon})$. Conditionally on ${\calF}_i^{\rm *}$, $\bbE_{P_n^{\rm b}}[Q_i^{\rm s}|{\calF}_i^{\rm *}]=\tilde Q_i^{\rm s}$, $P^{\rm b}_{n,\tilX_i|Q_i^{\rm s},{\calF}_i^{\rm *}}=Q_i^{\rm s}$, and $P^{\rm b}_{n,\tilY_i|\tilX_i,Q_i^{\rm s},{\calF}_i^{\rm *}}=P_{Y|X}$.

If $\min_{x\in\calX}{\tilde Q_i^{\rm s}}(x)\geq {{\underline q}}/2$, then
\begin{align}
|{\imath_{\tilde Q_i^{\rm s}}^{\rm y,c}}(x,y)-\imath_{{{\Pi}_{\calC}}({Q_i^{\rm s}})}^{\rm c}(x,y)|^2
&\leq {{C}}\|{\tilde Q_i^{\rm s}}-{{\Pi}_{\calC}}({Q_i^{\rm s}})\|^2\label{eq:v20-score-lipschitz-1}\\
&\leq {{C}}\big[\|{\tilde Q_i^{\rm s}}-{Q_i^{\rm s}}\|^2+\Delta({Q_i^{\rm s}})\big],\label{eq:v20-score-lipschitz-2}
\end{align}
Supported output probabilities are at least ${\underline q}\underline p_W/2$, giving the first inequality; the triangle inequality and~\eqref{eq:quadratic} lead to the second.
Otherwise, choose $x_*\in\calX$ with ${\tilde Q_i^{\rm s}}(x_*)<{{\underline q}}/2$. Then
\begin{align}
\bbE_{{P_n^{\rm b}}}[\dist({Q_i^{\rm s}},\calC)|{{\calF}_i^{\rm *}}]
&\geq\bbE_{{P_n^{\rm b}}}[{{\Pi}_{\calC}}({Q_i^{\rm s}})(x_*)-{Q_i^{\rm s}}(x_*)|{{\calF}_i^{\rm *}}]\geq {{\underline q}}/2\label{eq:v20-away-contact-1}\\
\bbE_{{P_n^{\rm b}}}[\Delta({Q_i^{\rm s}})|{{\calF}_i^{\rm *}}]&\geq {{C}_{\rm -}} {{\underline q}}^2,\label{eq:v20-away-contact-2}
\end{align}
The coordinate bound and $\bbE_{P_n^{\rm b}}[Q_i^{\rm s}|{\calF}_i^{\rm *}]=\tilde Q_i^{\rm s}$ lead to the first inequality;~\eqref{eq:quadratic} and Jensen lead to the second.
Under this conditioning, $(\tilX_i,\tilY_i)\sim{\tilde Q_i^{\rm s}} P_{Y|X}$; hence
\begin{align}
\bbE_{{\tilde Q_i^{\rm s}} P_{Y|X}}[|{\imath_{\tilde Q_i^{\rm s}}^{\rm y,c}}(\tilX_i,\tilY_i)|^2]
&\leq\bbE_{{\tilde Q_{i,Y}^{\rm s}}}[|\log P_Y(\tilY_i)-\log {\tilde Q_{i,Y}^{\rm s}}(\tilY_i)|^2]\label{eq:v20-score-moment-1}\\
&\leq 2\max_{y\in\calY}|\log P_Y(y)|^2+
2\sum_{y\in\calY:\,{\tilde Q_{i,Y}^{\rm s}}(y)>0}{\tilde Q_{i,Y}^{\rm s}}(y)|\log {\tilde Q_{i,Y}^{\rm s}}(y)|^2
\leq {{C}},\label{eq:v20-score-moment-2}
\end{align}
Conditional variance is bounded by the second moment. Then $({v}_1-{v}_2)^2\leq2{v}_1^2+2{v}_2^2$ for real ${v}_1,{v}_2$, $P_Y>0$, and bounded ${p_0}|\log {p_0}|^2$ lead to the uniform bound.
In both cases,
\begin{align}
\bbE_{{P_n^{\rm b}}}[|{\imath_{\tilde Q_i^{\rm s}}^{\rm y,c}}(\tilX_i,\tilY_i)-\imath_{{{\Pi}_{\calC}}({Q_i^{\rm s}})}^{\rm c}(\tilX_i,\tilY_i)|^2]
\leq {{C}}\bbE_{{P_n^{\rm b}}}[\|{\tilde Q_i^{\rm s}}-{Q_i^{\rm s}}\|^2+\Delta({Q_i^{\rm s}})],\label{eq:averaged-score}
\end{align}
The first case uses~\eqref{eq:v20-score-lipschitz-2}; in the second,~\eqref{eq:v20-away-contact-2} absorbs the uniform moment and bounded contact scores. Recall $J_n,{\mu}_n$ from~\eqref{eq:information-block-score},~\eqref{eq:residual-sums}.
For $i\in[n]$, let ${{\mu_i^{\mathrm H}}}$ be the reverse Doob increment of $H_n$ after $(\tilX^n,{{\Upsilon}})$ is fixed. Its formula is
\begin{align}
{{\mu_i^{\mathrm H}}}
&=\bbE_{{P_n^{\rm b}}}[H_n|\tilX^n,{{\Upsilon}},\tilY_i^n]
-\bbE_{{P_n^{\rm b}}}[H_n|\tilX^n,{{\Upsilon}},\tilY_{i+1}^n]\label{eq:doobincrement}\\
&=\bbE_{{P_n^{\rm b}}}[{\imath_{\tilde Q_i^{\rm s}}^{\rm y,c}}(\tilX_i,\tilY_i)|\tilX^n,{{\Upsilon}},\tilY_i^n],\label{eq:v20-doob-increment-2}
\end{align}
Averaging~\eqref{eq:likelihoodcoordinate} over $\tilY^{i-1}$ cancels the $\tilY_i$-independent term; centering under $P_{Y|X}(\cdot|\tilX_i)$ leads to the second line. The contact score is measurable given $(\tilX^n,{{\Upsilon}},\tilY_i^n)$, and
\begin{align}
\bbE_{{P_n^{\rm b}}}\Bigl[\bigl(H_n-\bbE_{{P_n^{\rm b}}}[H_n|\tilX^n,{{\Upsilon}}]-\sqrt nJ_n\bigr)^2\Bigr]
&=\sum_{i\in[n]}\bbE_{{P_n^{\rm b}}}\Bigl[\bigl({{\mu_i^{\mathrm H}}}-\imath_{{{\Pi}_{\calC}}({Q_i^{\rm s}})}^{\rm c}(\tilX_i,\tilY_i)\bigr)^2\Bigr]\label{eq:doobcompare}\\
&\leq {{C}}\sum_{i\in[n]}\bbE_{{P_n^{\rm b}}}[\|{\tilde Q_i^{\rm s}}-{Q_i^{\rm s}}\|^2+\Delta({Q_i^{\rm s}})]
\leq {{C}}\bbE_{{P_n^{\rm b}}}[{\Delta_n}],\label{eq:v20-doob-2}
\end{align}
Martingale orthogonality leads to the equality. Conditional Jensen,~\eqref{eq:averaged-score},~\eqref{eq:post-stability}, and $\sum_{i\in[n]}\Delta(Q_i^{\rm s})\leq \Delta_n$ lead to the bound.
The source fixes ${B}$, so subtracting~\eqref{eq:scalar-selector-definition} from~\eqref{eq:exactH} leads to
\begin{align}
{\lambda^*}\sqrt n({\tilde t_n}-t_n^{{S}})
=H_n-\bbE_{{P_n^{\rm b}}}[H_n|\tilX^n,{{\Upsilon}}]-\sqrt nJ_n-{{\mu}_n},
\end{align}
The preceding bound and~\eqref{eq:residual-second-moment} control the right-hand side by ${C}\bbE_{P_n^{\rm b}}[\Delta_n]$ in mean square. Division by $(\lambda^*)^2n$ proves Claim~(iii) and completes Lemma~\ref{lem:tc-deficit}.

\subsection{Concentration of Covariance Weights}
\label{app:concentration}
We now control posterior averages by total correlation. Under a distribution $P_n^{\rm o}$ with arbitrary source prior, deterministic source-only $S$, and product channel $P_{Y|X}^n$ satisfying~\eqref{eq:overlap}, set $Q_i^{\rm s}=P^{\rm o}_{n,X_i|S,X^{i-1},Y_{i+1}^n}$. Let $\psi_i:\calX^n\times\calP(\calX)\to\bbR$ be ${C_\psi}$-Lipschitz in the posterior under $\|\cdot\|_1$, uniformly in $(i,x^n)$, for a fixed ${C_\psi}>0$.

\Needspace{6\baselineskip}
\begin{lemma}\label{lem:concentration}
There is a finite constant ${C}_{\rm ch}$, depending only on the
fixed side-information channel $P_{Y|X}$, with the following properties.
\begin{enumerate}
\item For a source-only description ${S}$,
\begin{align}
{\bbE_{P_n^{\rm o}}}\Bigl[{\var_{P_n^{\rm o}}}\bigl(\frac1n\sum_{i\in[n]}\psi_i(X^n,{Q_i^{\rm s}})\bigm|X^n\bigr)\Bigr]
\leq\frac{{{C}_{\rm ch}}{{C_\psi}}^2}{n}{\TC_{P_n^{\rm o}}}(X^n|{S},Y^n).
\end{align}
\item For a finite random variable ${\Upsilon}$ with $S$ determined by $(X^n,{\Upsilon})$ and ${P^{\rm o}_{n,Y^n|X^n,{\Upsilon}}}=P_{Y|X}^n$, set ${Q_i^{\rm s}}={P^{\rm o}_{n,X_i|S,X^{i-1},Y_{i+1}^n,{\Upsilon}}}$. Then
\begin{align}
{\bbE_{P_n^{\rm o}}}\Bigl[{\var_{P_n^{\rm o}}}\bigl(\frac1n\sum_{i\in[n]}\psi_i(X^n,{Q_i^{\rm s}})\bigm|X^n,{{\Upsilon}}\bigr)\Bigr]
\leq\frac{{{C}_{\rm ch}}{{C_\psi}}^2}{n}{\TC_{P_n^{\rm o}}}(X^n|{S},Y^n,{{\Upsilon}}).
\end{align}
Both sides average over ${{\Upsilon}}$ under ${P_n^{\rm o}}$; independence between $X^n$ and ${{\Upsilon}}$ is not required. The result holds coordinatewise for fixed-dimensional vector statistics.
\end{enumerate}
\end{lemma}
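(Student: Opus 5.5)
The plan is to prove Claim~(i) by a reverse Doob martingale decomposition in the side-information coordinates, conditionally on $X^n$. Claim~(ii) then follows from the same argument with ${\Upsilon}$ appended to every conditioning $\sigma$-field. The independence of $X^n$ and ${\Upsilon}$ is never used, since only the conditional product channel $P^{\rm o}_{n,Y^n|X^n,{\Upsilon}}=P_{Y|X}^n$ enters. Write $F_n:=n^{-1}\sum_{i\in[n]}\psi_i(X^n,Q_i^{\rm s})$. Conditionally on $X^n$, the observations $Y_1,\ldots,Y_n$ are independent, so $F_n$ is a function of independent inputs. Reveal them in the order $Y_n,Y_{n-1},\ldots,Y_1$ and let $\mu_k:=\bbE[F_n|X^n,Y_k^n]-\bbE[F_n|X^n,Y_{k+1}^n]$. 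Orthogonality gives
\begin{align}
\bbE\bigl[\var(F_n|X^n)\bigr]=\sum_{k\in[n]}\bbE[\mu_k^2].
\end{align}
The posterior $Q_i^{\rm s}$ depends on $(S,X^{i-1},Y_{i+1}^n)$, so $Y_k$ enters only through $Q_i^{\rm s}$ with $i<k$. The $C_\psi$-Lipschitz property then bounds $|\mu_k|$ by $(C_\psi/n)\sum_{i<k}\|Q_i^{\rm s}-Q_i^{{\rm s},(k)}\|_1$, where $Q_i^{{\rm s},(k)}$ is the posterior with $Y_k$ replaced by an independent copy. The bound is averaged over that copy through conditional Jensen.

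The next step avoids any cancellation between the $i$'s. Cauchy--Schwarz gives $(\sum_{i<k}a_i)^2\le n\sum_{i<k}a_i^2$, so it suffices to prove
\begin{align}
\sum_{k\in[n]}\sum_{i<k}\bbE\|Q_i^{\rm s}-Q_i^{{\rm s},(k)}\|_1^2\leq C_{\rm ch}\,\TC_{P_n^{\rm o}}(X^n|S,Y^n).
\end{align}
Indeed, this yields $\bbE[\var(F_n|X^n)]\le n\cdot C_\psi^2 n^{-2}\cdot C_{\rm ch}\TC=C_{\rm ch}C_\psi^2\TC/n$, which is exactly the claimed bound. For fixed $k$, I would apply Pinsker coordinatewise and then the chain rule for relative entropy. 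The sequential posteriors $Q_i^{\rm s}$, $i<k$, are precisely the successive conditionals of the law of $X^{k-1}$ given $(S,Y_{k+1}^n)$ together with $Y_k$. Hence $\sum_{i<k}\bbE\|\cdot\|_1^2$ is at most a constant times an expected KL divergence between the laws of $X^{k-1}$ under $Y_k$ and under an independent copy of $Y_k$. Convexity bounds this by a multiple of the conditional mutual information $I(X^{k-1};Y_k|S,Y_{k+1}^n,X_k^{\,c}\text{-free})$, that is, the information $Y_k$ carries about the past source symbols beyond what $X_k$ itself carries. The multiple is controlled through $\underline p_W$ in~\eqref{eq:min-positive-entry} and the overlap $\kappa_W$.

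The last step turns these information terms into the total correlation. Since $X^{k-1}-(X_k,S,Y_{k+1}^n)-Y_k$ is a Markov chain when the source description is fixed, data processing bounds the $k$-th term by $I(X^{k-1};X_k|S,Y_{k+1}^n)$, up to the constant. I then want to telescope $\sum_k I(X^{k-1};X_k|S,Y_{k+1}^n)$ into $\sum_i H(X_i|S,Y^n)-H(X^n|S,Y^n)$. This is the main obstacle, because the conditioning sets differ: $Y_{k+1}^n$ appears instead of all of $Y^n$. My first attempt would be the identity $I(X^{k-1};X_k|S,Y_{k+1}^n)\le I(X^{k-1};X_k|S,Y_k^n)+I(X^{k-1};Y_k|S,Y_{k+1}^n)$ combined with the chain-rule expansion of $H(X^n|S,Y^n)$ in the reverse order. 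If the extra terms do not telescope exactly, I would use the strong data-processing contraction furnished by $\kappa_W>0$. It gives $I(X^{k-1};Y_k|\cdot)\le(1-\kappa_W)I(X^{k-1};X_k|\cdot)$, which absorbs these terms into the left side at the cost of a factor $1/\kappa_W$ in $C_{\rm ch}$. The fixed-dimensional vector version follows by applying the scalar bound to each coordinate.
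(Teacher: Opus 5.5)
There is a genuine gap, and it is at the step you flagged as the main obstacle. The conditioning you carry through the information bounds is too small, and what is lost cannot be recovered afterwards. Your reduction ends with $\sum_k I(X^{k-1};X_k|S,Y_{k+1}^n)$ (or $\sum_k I(X^{k-1};Y_k|S,Y_{k+1}^n)$), which you then want to bound by $C\,\TC(X^n|S,Y^n)$. That inequality is false.

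Take binary $\calX$, a BSC with crossover $a\in(0,1/2)$ (so~\eqref{eq:overlap} holds), constant $S$, and the repetition prior $X_1=\cdots=X_n=X\sim\Ber(1/2)$. Then $\TC(X^n|Y^n)=(n-1)H(X|Y^n)\to0$ exponentially fast. Yet the single term $k=n$ already gives $I(X^{n-1};X_n)=\log2$ and $I(X^{n-1};Y_n)=\log2-h_2(a)>0$. The lemma itself holds here: only $Q_1^{\rm s}$ is random given $X^n$, with exponentially small fluctuations. It is your intermediate quantity that is far too large.

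What is missing is conditioning on $Y^{k-1}$, the observations of the very block $X^{k-1}$ appearing in the mutual information. After your SDPI step removes the $Y_k$ term, you are left with $\sum_k I(X^{k-1};X_k|S,Y_k^n)$. The chain rule gives
$\sum_k I(X^{k-1};X_k|S,Y_k^n)=\TC(X^n|S,Y^n)+\sum_k I(X_k;Y^{k-1}|S,Y_k^n)$,
so the telescoping overshoots $\TC$ by a nonnegative term. Absorbing that term would need an SDPI for the product channel $X^{k-1}\to Y^{k-1}$ with correlated inputs, and none holds uniformly in $k$ (for the repetition input above the ratio tends to one).

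Separately, your justification of the chain-rule step is wrong. $Q_i^{\rm s}$ also conditions on $Y_{i+1}^{k-1}$, which changes with $i$. So the $Q_i^{\rm s}$, $i<k$, are not successive conditionals of a single law of $X^{k-1}$ given $(S,Y_k^n)$.

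The repair is to drop only the one or two observations being compared and keep all the others. The paper does this pairwise, with $W_{ij}=(S,X^{i-1},Y_{i+1}^{j-1},Y_{j+1}^n)$:
\begin{enumerate}
\item Pinsker and data processing give $\bbE\|Q_i^{\rm s}-Q_i^{{\rm s},(j)}\|_1^2\le 8I(X_i;X_j|W_{ij})$.
\item The single-letter SDPI, applied once on each side, gives $I(X_i;X_j|W_{ij},Y_i,Y_j)\ge\kappa_W^2 I(X_i;X_j|W_{ij})$.
\item Since $Y^{i-1}$ is irrelevant given $X^{i-1}$, the left side of the previous step equals $I(X_i;X_j|S,X^{i-1},Y^n)$, whose sum over $i<j$ is exactly $\TC(X^n|S,Y^n)$.
\end{enumerate}

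Your aggregated route can also be repaired by conditioning on $Y_{-k}$, meaning all observations except $Y_k$:
\begin{enumerate}
\item One SDPI step gives $I(X_i;Y_k|W_{ik})\le\kappa_W^{-1}I(X_i;Y_k|S,X^{i-1},Y_{-k})$.
\item The chain rule over $i<k$ now applies and gives $I(X^{k-1};Y_k|S,Y_{-k})$.
\item This is at most $I(X^{k-1};X_k|S,Y_{-k})\le\kappa_W^{-1}I(X^{k-1};X_k|S,Y^n)$.
\item Summing over $k$ gives $\TC(X^n|S,Y^n)$, with the same constant $8/\kappa_W^2$ as the paper.
\end{enumerate}
Your Doob/Efron--Stein, Lipschitz, and Cauchy--Schwarz reduction needs no change.
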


\subsubsection{Uniform Information Contraction and Single-Observation Influence}
For the overlap constant $\kappa_W>0$ in~\eqref{eq:overlap}, take interior one-letter distributions $Q^{(1)},Q^{(0)}$, with output distributions induced by $P_{Y|X}$. Put ${\dot Q}:=Q^{(1)}-Q^{(0)}$ and $Q^{(\rm m)}:=Q^{(0)}+{\vartheta_0}{\dot Q}$, ${\vartheta_0}\in[0,1]$. Negative entropy's integral remainder leads to
\begin{align}
{\operatorname{KL}}({Q^{(1)}}\|{Q^{(0)}})
&=\int_0^1(1-{{\vartheta_0}})\sum_{x\in\calX}\frac{{{\dot Q}}(x)^2}{{Q^{(\rm m)}}(x)}\,{d{\vartheta_0}}\label{eq:v20-kl-integral-1}\\
{\operatorname{KL}}({Q_Y^{(1)}}\|{Q_Y^{(0)}})
&=\int_0^1(1-{{\vartheta_0}})\sum_{y\in\calY}
\frac{(\sum_{x\in\calX}{{\dot Q}}(x)P_{Y|X}(y|x))^2}{{Q_Y^{(\rm m)}}(y)}\,{d{\vartheta_0}},\label{eq:v20-kl-integral-2}
\end{align}
Integrating~\eqref{eq:hessian}--\eqref{eq:strongcurv} in these formulas yields the strong data-processing bound~\cite{polyanskiywu}
\begin{align}
{\operatorname{KL}}({Q_Y^{(1)}}\|{Q_Y^{(0)}})\leq(1-\kappa_W){\operatorname{KL}}({Q^{(1)}}\|{Q^{(0)}}),\label{eq:uniform-kl}
\end{align}
Positive mixtures extend this bound to finite boundary divergences. Infinite input divergence is immediate if $\kappa_W<1$; if $\kappa_W=1$, all rows are the same point mass. Averaging over conditional input distributions for finite $U-X-Y$ leads to
\begin{align}
I({U};Y)\leq(1-\kappa_W)I({U};X),\label{eq:uniform-mi}
\end{align}
This also holds conditionally whenever the conditional channel remains $P_{Y|X}$.

Let $(X_1,X_2)$ have any joint distribution ${Q^{\rm 2}}\in\calP(\calX^2)$, and generate $Y_1,Y_2$ through conditionally independent copies of $P_{Y|X}$. The information retained after both observations satisfies
\begin{align}
I(X_1;X_2|Y_1,Y_2)
&=I(X_1;X_2|Y_1)-I(X_1;Y_2|Y_1)\label{eq:retained-information}\\
&\geq\kappa_W I(X_1;X_2|Y_1)\label{eq:v20-retained-info-2}\\
&=\kappa_W\big[I_{{Q^{\rm 2}}}(X_1;X_2)-I(Y_1;X_2)\big]\label{eq:v20-retained-info-3}\\
&\geq\kappa_W^2I_{{Q^{\rm 2}}}(X_1;X_2),\label{eq:v20-retained-info-4}
\end{align}
The equalities use the chain rule and $Y_1-X_1-X_2-Y_2$; the inequalities apply~\eqref{eq:uniform-mi} to $X_2\to Y_2$ conditionally and then to $X_1\to Y_1$.

Given $X^n$, resample $Y_j$ independently from $P_{Y|X}(\cdot|X_j)$ and evaluate the posterior as $Q_i^{\rm s,(j)}$. Then $Q_i^{\rm s,(j)}=Q_i^{\rm s}$ for $j\leq i$; for $i<j$,
\begin{align}
{\bbE_{P_n^{\rm o}}}[\|{Q_i^{\rm s}}-{Q_i^{\rm s,(j)}}\|_1^2]
\leq {{C}_{\rm ch}} I(X_i;X_j|{S},X^{i-1},Y^n).\label{eq:influence}
\end{align}
Set ${\boldsymbol W}_{ij}:=(S,X^{i-1},Y_{i+1}^{j-1},Y_{j+1}^n)$. For a supported ${\boldsymbol w}_{ij}$, fix $Q^{\rm 2}:=P^{\rm o}_{n,X_i,X_j|{\boldsymbol W}_{ij}={\boldsymbol w}_{ij}}$ with first marginal $Q^{\rm 2}_{X_i}$, and $Q^{\rm 2,y}:=Q^{\rm 2}_{X_i|Y_j}(\cdot|y)$ under the induced channel. Original and resampled observations have equal marginals given ${\boldsymbol W}_{ij}={\boldsymbol w}_{ij}$; expectations include the resampling. Thus,
\begin{align}
{\bbE_{P_n^{\rm o}}[\|Q_i^{\rm s}-Q_i^{\rm s,(j)}\|_1^2|{\boldsymbol W}_{ij}={\boldsymbol w}_{ij}]}
&\leq4{\bbE_{P_n^{\rm o}}[\|Q^{\rm 2,Y_j}-Q^{\rm 2}_{X_i}\|_1^2|{\boldsymbol W}_{ij}={\boldsymbol w}_{ij}]}\label{eq:prior-influence}\\
&\leq8{I_{Q^{\rm 2}}}(X_i;Y_j)
\leq8{I_{Q^{\rm 2}}}(X_i;X_j),\label{eq:v20-prior-influence-2}
\end{align}
The squared triangle inequality and equal marginals lead to the first bound; Pinsker and data processing lead to the second. Under this fixed distribution,~\eqref{eq:retained-information}--\eqref{eq:v20-retained-info-4} lead to $I_{Q^{\rm 2}}(X_i;X_j|Y_i,Y_j)\geq\kappa_W^2I_{Q^{\rm 2}}(X_i;X_j)$. Averaging over ${\boldsymbol W}_{ij}$ and discarding irrelevant $Y^{i-1}$ given $X^{i-1}$ prove~\eqref{eq:influence} with ${C}_{\rm ch}=8/\kappa_W^2$.

\subsubsection{Entropy Chain Rule and Conditional Efron--Stein}
To sum~\eqref{eq:influence}, expand conditional total correlation by the entropy chain rule:
\begin{align}
{\TC_{P_n^{\rm o}}}(X^n|{S},Y^n)
&=\sum_{j\in[n]}I(X_j;X^{j-1}|{S},Y^n)\label{eq:pair-sum}\\
&=\sum_{\substack{(i,j)\in[n]^2:\,i<j}}I(X_i;X_j|{S},X^{i-1},Y^n),\label{eq:v20-total-correlation-2}
\end{align}
The second line expands each mutual information along $X^{j-1}$. Set $\Psi_n:=n^{-1}\sum_{i\in[n]}\psi_i(X^n,Q_i^{\rm s})$ and define $\Psi_n^{(j)}$, $j\in[n]$, by replacing $Q_i^{\rm s}$ with $Q_i^{\rm s,(j)}$. Then
\begin{align}
{\bbE_{P_n^{\rm o}}}[{\var_{P_n^{\rm o}}}(\Psi_n|X^n)]
&\leq\frac12\sum_{j\in[n]}{\bbE_{P_n^{\rm o}}}[(\Psi_n-\Psi_n^{(j)})^2]\label{eq:v20-efron-stein-1}\\
&\leq\frac{{{C_\psi}}^2}{2n^2}\sum_{j\in[n]}(j-1)\sum_{i=1}^{j-1}{\bbE_{P_n^{\rm o}}}[\|{Q_i^{\rm s}}-{Q_i^{\rm s,(j)}}\|_1^2]\label{eq:v20-efron-stein-2}\\
&\leq\frac{{{C}_{\rm ch}}{{C_\psi}}^2}{2n}\sum_{\substack{(i,j)\in[n]^2:\,i<j}}I(X_i;X_j|{S},X^{i-1},Y^n)\label{eq:v20-efron-stein-3}\\
&=\frac{{{C}_{\rm ch}}{{C_\psi}}^2}{2n}{\TC_{P_n^{\rm o}}}(X^n|{S},Y^n),\label{eq:v20-efron-stein-4}
\end{align}
Given $X^n$, observations are independent and $S$ is fixed, so conditional Efron--Stein~\cite{steele} leads to the first line. Lipschitz continuity, Cauchy--Schwarz, and $Q_i^{\rm s,(j)}=Q_i^{\rm s}$ for $i\geq j$ lead to the second;~\eqref{eq:influence}, $j-1\leq n$, and~\eqref{eq:pair-sum} lead to the last two. Dropping $1/2$ proves Claim~(i). Given ${\Upsilon}={\upsilon}$, the source prior is arbitrary and the channel remains $P_{Y|X}^n$; averaging the same bound over ${\Upsilon}$ proves Claim~(ii). Coordinatewise application leads to the vector version and completes Lemma~\ref{lem:concentration}.

For the converse-code application under $P_n^{\rm b}$, the generic $(X^n,Y^n)$ in Lemma~\ref{lem:concentration} are the permuted $(\tilX^n,\tilY^n)$; Claim~(ii) conditions on ${\Upsilon}$ and then averages.

\subsection{Source-Only Repair and Transfer to the Original Code}
To obtain the $o(n)$ expected gap needed above, work under $P_n^{\rm b}$ in original coordinates~\eqref{eq:permutation-convention}. Every retained source has conditional success probability at least $n^{-1}$. Set
\begin{align}
{{\bar D_n^{\mathrm H}}}:=\lceil n^{9/16}\rceil,\qquad {{\eta_n^{\mathrm d}}}:=n^{-1/16}.\label{eq:repair-scales}
\end{align}
For $(s,y^n)\in[M]\times\calY^n$, define the nonempty candidate list
\begin{align}
\calL_n(s,y^n)
&:=\left\{\phi_n(s,y^{\prime n}):y^{\prime n}\in\calY^n,\;
\sum_{i\in[n]}\bbI\{y_i^{\prime}\ne y_i\}\leq {\bar D_n^{\mathrm H}}\right\}.
\label{eq:repair-candidate-list}
\end{align}
The list precedes the independent sampling permutation and satisfies $|\calL_n(s,y^n)|\leq\sum_{\ell=0}^{{\bar D_n^{\mathrm H}}}\binom n\ell(|\calY|-1)^\ell\leq(n|\calY|)^{{\bar D_n^{\mathrm H}}+1}$, hence $\log|\calL_n|\leq {C}{\bar D_n^{\mathrm H}}\log n$ for large $n$.

For any retained source realization $x^n\in\calX^n$, define its non-excess-distortion observation set and the corresponding random Hamming distance by
\begin{align}
\calY_D(x^n)&:=\{y^n\in\calY^n:{d^{(n)}}(x^n,\phi_n(f_n(x^n),y^n))\leq D\},\\
{{D_n^{\mathrm H}}}&:={\min_{y^n\in\calY_D(X^n)}}\sum_{i\in[n]}\bbI\{Y_i\ne y_i\}.
\end{align}
The set $\calY_D(x^n)$ is nonempty for every retained source because its conditional non-excess-distortion probability is at least $n^{-1}$. The conditional distance satisfies
\begin{align}
n^{-1}\leq \Pr\{{{D_n^{\mathrm H}}}=0|X^n\}
\leq\exp\left(-\frac{2(\bbE[{{D_n^{\mathrm H}}}|X^n])^2}{n}\right),\label{eq:v20-hamming-lower-1}
\end{align}
The flag leads to the first inequality. Conditionally on $X^n$, ${D_n^{\mathrm H}}$ is $1$-Lipschitz in independent observations, so bounded differences~\cite{mcdiarmid} lead to the second and thus $\bbE[{D_n^{\mathrm H}}|X^n]\leq\sqrt{n\log n/2}$.
For some positive constant ${{C}_{\rm -}}$,
\begin{align}
\Pr\{{{D_n^{\mathrm H}}}>{{\bar D_n^{\mathrm H}}}|X^n\}\leq\exp(-{{C}_{\rm -}} n^{1/8}),\label{eq:v20-hamming-upper-1}
\end{align}
The upper-tail bounded-differences bound leads to~\eqref{eq:v20-hamming-upper-1} uniformly over flagged sources for large $n$. On $\{{D_n^{\mathrm H}}\leq {\bar D_n^{\mathrm H}}\}$, the list contains a reconstruction of distortion at most $D$. The independent permutation ${\Upsilon}$ samples the $k_n$ disclosed coordinates.
For $z^n\in\calZ^n$, define sample distortion, the repaired reconstruction, and its sample-error event by
\begin{align}
\widehat d_n(z^n)&:=\frac1{k_n}\sum_{i\in[k_n]}d(X_{{\Upsilon}(i)},z_{{\Upsilon}(i)}),\label{eq:repair-sample-distortion}\\
Z^{\prime n}&\in\argmin_{z^n\in\calL_n(S_0,Y^n)}\widehat d_n(z^n),\label{eq:repair-decoder}\\
\calE_n^{\rm s}&:=\left\{\max_{z^n\in\calL_n(S_0,Y^n)}|\widehat d_n(z^n)-d^{(n)}(X^n,z^n)|>{\eta_n^{\mathrm d}}\right\}.\label{eq:repair-sample-error-event}
\end{align}
Resolve ties by a fixed ordering. The decoder uses only $S_0,Y^n,{\Upsilon}$ and disclosed source coordinates. Given $(X^n,Y^n)$, the event's only randomness is the independent uniform ${\Upsilon}$.
\begin{align}
\Pr\{\calE_n^{\rm s}|X^n,Y^n\}
&\leq2|\calL_n(S_0,Y^n)|
\exp\{-2k_n{(\eta_n^{\mathrm d})^2}/d_{\max}^2\}\\
&\leq2\exp\{{C}{\bar D_n^{\mathrm H}}\log n-2k_n{(\eta_n^{\mathrm d})^2}/d_{\max}^2\}
\label{eq:v20-sampling-1}\\
&\leq\exp(-{C}^{\prime}n^{5/8}),\label{eq:v20-sampling-2}
\end{align}
Hoeffding's without-replacement bound~\cite[Sec.~6]{hoeffding} and a union bound lead to~\eqref{eq:v20-sampling-1}; $k_n{(\eta_n^{\mathrm d})^2}\asymp n^{5/8}$ and ${\bar D_n^{\mathrm H}}\log n=o(n^{5/8})$ lead to~\eqref{eq:v20-sampling-2}, uniformly in the retained pair. Set $d_i':=d(\tilX_i,Z'_{{\Upsilon}(i)})$. On $\{{D_n^{\mathrm H}}\leq {\bar D_n^{\mathrm H}}\}\cap(\calE_n^{\rm s})^c$, choose $z_0^n$ in the list with distortion at most $D$. Sample accuracy and decoder minimization lead to
\begin{align}
d^{(n)}(X^n,Z^{\prime n})
&\leq\widehat d_n(Z^{\prime n})+{\eta_n^{\mathrm d}}\leq\widehat d_n(z_0^n)+{\eta_n^{\mathrm d}}\\
&\leq d^{(n)}(X^n,z_0^n)+2{\eta_n^{\mathrm d}}\leq D+2{\eta_n^{\mathrm d}}.
\end{align}
Using $d\leq d_{\max}$ on the two error events yields
\begin{align}
\bbE_{{P_n^{\rm b}}}\Bigl[\sum_{i\in[n]} d_i'\Bigr]
&\leq nD+2n{{\eta_n^{\mathrm d}}}+nd_{\max}\big(e^{-{{C}_{\rm -}}n^{1/8}}+e^{-{{C}^{\prime}}n^{5/8}}\big)\label{eq:repairdist}\\
&=nD+O(n^{15/16}),\label{eq:v20-repair-distortion-2}
\end{align}
where ${\eta_n^{\mathrm d}}=n^{-1/16}$ leads to the final order.

For ${S}'=({S},\tilX^{{k_n}})$, set $H_n':=-\log {P_n^{\rm o}}({S}'|\tilY^n,{{\Upsilon}})$. On $B_n=1$, Bayes' rule leads to
\begin{align}
P_n^{\rm o}(S'|\tilY^n,{\Upsilon})
&=\Pr\{B_n=1\}\,
\frac{P_n^{\rm b}(\tilY^n,{\Upsilon})}{P_n^{\rm o}(\tilY^n,{\Upsilon})}
P_n^{\rm b}(S_0,\tilX^{k_n}|\tilY^n,{\Upsilon}).
\label{eq:flagged-likelihood-factorization}
\end{align}
At the realized $(S_0,\tilX^{k_n})$, taking the expected negative logarithm leads to
\begin{align}
\bbE_{{P_n^{\rm b}}}[H_n']
&=H_{{P_n^{\rm b}}}({S_0},\tilX^{{k_n}}|\tilY^n,{{\Upsilon}})
+\log\frac1{\Pr\{{B_n}=1\}}-{\operatorname{KL}}({P^{\rm b}_{n,\tilY^n,{\Upsilon}}}\|{P^{\rm o}_{n,\tilY^n,{\Upsilon}}})\label{eq:repairentropy}\\
&\leq\log M+{k_n}\log|\calX|+O(1),
\end{align}
The inequality uses alphabet sizes, nonnegative relative entropy, and a flag probability bounded away from zero. Because the source-only flag is included in $S'$, the repaired reverse scores and forward and reverse residuals are centered under $P_n^{\rm b}$. Taking expectations in~\eqref{eq:combined} leads to
\begin{align}
\bbE_{P_n^{\rm b}}[\Delta_n']
&=\bbE_{P_n^{\rm b}}[H_n']
+\lambda^*\left(\bbE_{P_n^{\rm b}}
\left[\sum_{i\in[n]}d_i'\right]-nD\right)
-nR_{\rm WZ}(D)-\sqrt n\,
\bbE_{P_n^{\rm b}}[\Lambda_n].
\end{align}
The repaired gap then satisfies
\begin{align}
0\leq\bbE_{{P_n^{\rm b}}}[{\Delta_n}']
&\leq O(\sqrt n)+{k_n}\log|\calX|+O(n^{15/16})
=O(n^{15/16}),\label{eq:v20-repair-gap-1}
\end{align}
This follows from~\eqref{eq:repairdist},~\eqref{eq:repairentropy}, and Cauchy--Schwarz: the flag probability bounded away from zero leads to $\sqrt n|\bbE_{P_n^{\rm b}}[\Lambda_n]|=O(\sqrt n)$. The bound is $o(n)$ and controls both projections.

We first construct the covariance selector.
Choose a positive real number ${{\ell_{\calC}}}\in\bbR_+$ smaller than half the minimum contact separation and set
$\psi_j({Q}):=(1-\|{Q}-Q_j\|/{{\ell_{\calC}}})_+$ for $(j,{Q})\in[m]\times\calP(\calX)$.
For each $j\in[m]$, apply Lemma~\ref{lem:concentration} with this same function $\psi_j$ at every coordinate; its Lipschitz constant is $1/{\ell_{\calC}}$.
For $({Q},j)\in\calP(\calX)\times[m]$,
\begin{align}
|\bbI\{{{\Pi}_{\calC}}({Q})=Q_j\}-\psi_j({Q})|^2
\leq {{C}}\dist({Q},\calC)^2\leq {{C}}\Delta({Q}),\label{eq:contact-indicator-approximation}
\end{align}
Lipschitz continuity and $\psi_j(Q_k)=\bbI\{j=k\}$ lead to the first inequality;~\eqref{eq:quadratic} leads to the second.
For an arbitrary code description under one of the source-only distributions $P_n^{\rm b}$, let $\Psi_{n,j}=n^{-1}\sum_{i\in[n]}\psi_j({Q_i^{\rm s}})$. Then
\begin{align}
\bbE_{{P_n^{\rm b}}}[|\theta_{n,j}-\Psi_{n,j}|^2]&\leq {{C}}\bbE_{{P_n^{\rm b}}}[{\Delta_n}]/n\label{eq:v20-weight-moments-1}\\
\bbE_{{P_n^{\rm b}}}[{\var_{P_n^{\rm b}}}(\Psi_{n,j}|\tilX^n,{{\Upsilon}})]&\leq {{C}}{{I_n^{\mathrm t}}}/n\leq {{C}}\bbE_{{P_n^{\rm b}}}[{\Delta_n}]/n,\label{eq:v20-weight-moments-2}
\end{align}
where~\eqref{eq:v20-weight-moments-1} follows from Jensen's inequality and the preceding pointwise approximation;~\eqref{eq:v20-weight-moments-2} follows from Lemma~\ref{lem:concentration} and~\eqref{eq:tc-deficit}.
Consequently,
\begin{align}
\bbE_{{P_n^{\rm b}}}[\|{\btheta_n}-\bbE_{{P_n^{\rm b}}}[{\btheta_n}|\tilX^n,{{\Upsilon}}]\|^2]\leq {{C}}\bbE_{{P_n^{\rm b}}}[{\Delta_n}]/n,\label{eq:weights-proj}
\end{align}
where~\eqref{eq:weights-proj} follows from optimality of conditional expectation in mean square and the preceding two bounds, summed over $j\in[m]$.

Apply~\eqref{eq:weights-proj} to the repaired code under $P_n^{\rm b}$, and set $\hat{\btheta}_n':=\bbE_{P_n^{\rm b}}[\btheta_n'|\tilX^n,{\Upsilon}]$. Then
\begin{align}
\bbE_{{P_n^{\rm b}}}[\|{\btheta_n}'-\hat{\btheta}_n'\|^2]=O(n^{-1/16}),\label{eq:repaired-proj}
\end{align}
where~\eqref{eq:repaired-proj} follows from~\eqref{eq:weights-proj} and~\eqref{eq:v20-repair-gap-1}.
For $i\in\{{k_n}+1,\ldots,n\}$,
\begin{align}
{Q_i^{\rm s\prime}}&={Q_i^{\rm s}}\label{eq:suffix}\\
{{\Pi}_{\calC}}({Q_i^{\rm s\prime}})&={{\Pi}_{\calC}}({Q_i^{\rm s}})\label{eq:v20-suffix-2}\\
\|{\btheta_n}-{\btheta_n}'\|_1&\leq2{k_n}/n=o(1),\label{eq:suffix-weights}
\end{align}
The prefix is already in $\tilX^{i-1}$ on the suffix, so its posteriors and nearest contacts agree; only $k_n$ weight contributions can differ.
On ${B_n}=1$, project $\hat{\btheta}_n'$ onto $\calA$ to obtain $\bar{\btheta}_n$; extend both by a fixed point of $\calA$ on ${B_n}=0$. Then
\begin{align}
\|{\btheta_n}-\bar{\btheta}_n\|
&\leq2\|{\btheta_n}-\hat{\btheta}_n'\|+\dist({\btheta_n},\calA),\label{eq:weight-projection-distance}
\end{align}
The triangle inequality and nearest-point optimality lead to this bound. Projection onto the nonempty compact convex $\calA$ is continuous, so $\bar{\btheta}_n$ is source-measurable. For every $\eta>0$,
\begin{align}
\Pr\{{\calE_n}\cap\{\|{\btheta_n}-\bar{\btheta}_n\|>\eta\}\}
&\leq \Pr\{{\calE_n}\cap\{\|{\btheta_n}-\hat{\btheta}_n'\|>\eta/4\}\}\notag\\
&\quad+\Pr\{{\calE_n}\cap\{\dist({\btheta_n},\calA)>\eta/2\}\}\longrightarrow0,\label{eq:weight-projection-tail}
\end{align}
where in~\eqref{eq:weight-projection-tail}, the inequality follows from~\eqref{eq:weight-projection-distance}; convergence follows from~\eqref{eq:repaired-proj},~\eqref{eq:suffix-weights},~\eqref{eq:weights-feasible}, and $\Pr\{{\calE_n},{B_n}=0\}\leq n^{-1}$. Recall $\Omega_n$ and $\Sigma$ from~\eqref{eq:predictable-covariance} and~\eqref{eq:mixture-covariance}, respectively. The weight-tail bound,~\eqref{eq:feasible}, and linearity of $\Sigma$ lead to~\eqref{eq:cov-selector}, which proves Claim~(ii) of Lemma~\ref{lem:selectors}.

For Claim~(i), retain the exact prefix information to cancel its possibly larger-than-$\sqrt n$ cost. Claim~(iii) of Lemma~\ref{lem:tc-deficit},~\eqref{eq:residual-second-moment}, and~\eqref{eq:v20-repair-gap-1} lead to
\begin{align}
\bbE_{{P_n^{\rm b}}}[({\tilde t_n}'-t_n^{{S}'})^2]+n^{-1}\bbE_{{P_n^{\rm b}}}[({{\mu}_n}')^2]=O(n^{-1/16}),\label{eq:repairscalar}
\end{align}
For a deterministic prefix length $k\in[n]$, define its conditional information and the residual first-order entropy by
\begin{align}
{H_{n,k}^{\rm p}}&:=-\log {P_n^{\rm o}}(\tilX^k|{S},\tilY^n,{{\Upsilon}})\label{eq:prefix-information}\\
{H_*^{\rm r}}&:=H_{P_{XY}}(X|Y)-R_{\mathrm{WZ}}(D).\label{eq:residual-entropy}
\end{align}
For $k\in[n]$, recall the definitions of ${\Delta_{n,k}}$ and $({{\mu}_{n,k}},{{\mu_{n,k}^{\mathrm d}}})$ in~\eqref{eq:prefix-gap} and~\eqref{eq:prefix-residuals}, respectively.
Define the remainder
\begin{align}
{{\mu_{n,k}^{\mathrm a}}}
&:=\sum_{i\in[k]}\big[-\log P_{X|Y}(\tilX_i|\tilY_i)-H_{P_{XY}}(X|Y)-\big(\jW{\tilX_i}-R_{\mathrm{WZ}}(D)\big)\notag\\
&\hspace{42mm}-\imath_{{{\Pi}_{\calC}}({Q_i^{\rm s}})}^{\rm c}(\tilX_i,\tilY_i)-{\lambda^*} d_{{{\Pi}_{\calC}}({Q_i^{\rm s}})}^{\rm c}(\tilX_i,\tilY_i)\big]
-{{\mu}_{n,k}}-{{\mu_{n,k}^{\mathrm d}}}.{\label{eq:prefix-allocation-remainder}}
\end{align}
The prefix information satisfies
\begin{align}
{H_{n,k}^{\rm p}}
&=\sum_{i\in[k]}[-\log P_{X|Y}(\tilX_i|\tilY_i)-\imath_{{Q_i^{\rm s}}}(\tilX_i,\tilY_i)]\label{eq:prefixidentity}\\
&=k{H_*^{\rm r}}-{\Delta_{n,k}}+{\lambda^*}\Bigl(\sum_{i\in[k]}d_i-kD\Bigr)+{{\mu_{n,k}^{\mathrm a}}},
\end{align}
The posterior chain rule leads to the first line; summing~\eqref{eq:pointwise} and the residual definitions leads to the second. For $k=k_n$ and every $\eta>0$,
\begin{align}
\Pr\Bigl\{{\calE_n}\cap\bigl\{\bigl|{{\mu_{n,k_n}^{\mathrm a}}}\bigr|>\eta\sqrt n\bigr\}\Bigr\}&\longrightarrow0,\label{eq:v20-prefix-remainder-1}
\end{align}
Use~\eqref{eq:prefix-residual-rate}--\eqref{eq:v20-prefix-residual-1} and $k_n=o(n)$: the centered baseline/source and reverse martingale sums have variance $O(k_n)$ under $P_n^{\rm o}$, whose relabeled pairs are iid given ${\Upsilon}$.

The prefix distortion decomposes as
\begin{align}
\frac{\sum_{i\in[{k_n}]}d_i-{k_n}D}{\sqrt n}
&=\frac{\sum_{i\in[{k_n}]}d_i-({k_n}/n)\sum_{i\in[n]} d_i}{\sqrt n}
+\frac{{k_n}}{n}\frac{\sum_{i\in[n]} d_i-nD}{\sqrt n},
\end{align}
For every $\eta>0$,
\begin{align}
\Pr\Bigl\{{\calE_n}\cap\bigl\{\bigl|\sum_{i\in[{k_n}]}d_i-{k_n}D\bigr|>\eta\sqrt n\bigr\}\Bigr\}
&\leq\frac{4{{C}}{k_n}}{\eta^2n}
+\Pr\Bigl\{{\calE_n}\cap\bigl\{\bigl|\sum_{i\in[n]} d_i-nD\bigr|>
\tfrac{\eta n\sqrt n}{2{k_n}}\bigr\}\Bigr\}\longrightarrow0,
\end{align}
where the inequality follows from a union bound and Chebyshev's inequality: the original distortion vector is fixed before ${{\Upsilon}}$, so the first term in the decomposition has conditional variance at most ${{C}}{k_n}/n$. The remaining probability vanishes by~\eqref{eq:v20-distortion-tight-1} and ${k_n}/n\to0$. Thus,
\begin{align}
\Pr\Bigl\{{\calE_n}\cap\bigl\{\bigl|{H_{n,k_n}^{\rm p}}-{k_n}{H_*^{\rm r}}+{\Delta_{n,k_n}}\bigr|>\eta\sqrt n\bigr\}\Bigr\}&\longrightarrow0,\label{eq:prefixcancel}
\end{align}
where~\eqref{eq:prefixcancel} follows by substituting the preceding distortion and remainder bounds into~\eqref{eq:prefixidentity}; it retains and cancels the possibly order-$n^{3/4}$ prefix cost.

Since $H_n'-H_n=H_{n,k_n}^{\rm p}$, subtracting the two instances of~\eqref{eq:exactH} leads to
\begin{align}
{\lambda^*}\sqrt n({\tilde t_n}'-{\tilde t_n})
={H_{n,k_n}^{\rm p}}-\sqrt n(J_n'-J_n)-({{\mu}_n}'-{{\mu}_n}),\label{eq:exact-allocation-transfer}
\end{align}
Define the source-measurable $\bart_n:=t_n^{S'}-k_nH_*^{\rm r}/(\lambda^*\sqrt n)$ on $B_n=1$, extending it arbitrarily elsewhere. Then
\begin{align}
\Pr\Bigl\{{\calE_n}\cap\bigl\{\bigl|\bart_n-{\tilde t_n}'+\tfrac{{k_n}{H_*^{\rm r}}}{{\lambda^*}\sqrt n}\bigr|>\eta\bigr\}\Bigr\}&\longrightarrow0\label{eq:v20-allocation-transfer-1}\\
\Pr\Bigl\{{\calE_n}\cap\bigl\{\bigl|\bart_n-{\tilde t_n}-\tfrac{{H_{n,k_n}^{\rm p}}-{k_n}{H_*^{\rm r}}}{{\lambda^*}\sqrt n}\bigr|>\eta\bigr\}\Bigr\}&\longrightarrow0\label{eq:v20-allocation-transfer-2}\\
\Pr\Bigl\{{\calE_n}\cap\bigl\{\bigl|\bart_n-{\tilde t_n}+\tfrac{{\Delta_{n,k_n}}}{{\lambda^*}\sqrt n}\bigr|>\eta\bigr\}\Bigr\}&\longrightarrow0,\label{eq:allocation-transfer-tail}
\end{align}
where~\eqref{eq:v20-allocation-transfer-1} follows from the definition of $\bart_n$ and~\eqref{eq:repairscalar};~\eqref{eq:v20-allocation-transfer-2} follows from~\eqref{eq:exact-allocation-transfer}, the vanishing channel-score difference,~\eqref{eq:repairscalar} for the repaired residual, and~\eqref{eq:v20-gap-residual-limits-2} for the original residual; and~\eqref{eq:allocation-transfer-tail} follows from~\eqref{eq:prefixcancel}. For the channel-score difference, $\bbE[(J_n'-J_n)^2|\tilX^n,{{\Upsilon}}]\leq {{C}}{k_n}/n=o(1)$ follows from~\eqref{eq:suffix} and reverse martingale orthogonality. The omitted flag contributes at most $n^{-1}$ to each probability.
Consequently,
\begin{align}
\Pr\Bigl\{{\calE_n}\cap\bigl\{\dist\Bigl(\bart_n,\bigl[{\tilde t_n}-\tfrac{{\Delta_n}}{{\lambda^*}\sqrt n},\,{\tilde t_n}\bigr]\Bigr)>\eta\bigr\}\Bigr\}\longrightarrow0,\label{eq:alloc-selector}
\end{align}
where~\eqref{eq:alloc-selector} follows from~\eqref{eq:allocation-transfer-tail} and $0\leq {\Delta_{n,k_n}}\leq {\Delta_n}$. Both selectors depend only on $(\tilX^n,{{\Upsilon}})$.

For ${v_-}\leq {v_+}$ and ${v}\in\bbR$,
\begin{align}
\max\{({v}-{v_+})_+,({v_-}-{v})_+\}&\leq\dist({v},[{v_-},{v_+}]),\label{eq:interval-violations}
\end{align}
by the three positions of ${v}$ relative to $[{v_-},{v_+}]$. Apply this with ${v_-}=\tilde t_n-\Delta_n/(\lambda^*\sqrt n)$, ${v_+}=\tilde t_n$, and ${v}=\bart_n$. The rate/distortion events~\eqref{eq:rate-event}--\eqref{eq:distortion-event}, localization~\eqref{eq:initial-information-violation}--\eqref{eq:initial-distortion-violation}, and~\eqref{eq:alloc-selector} yield
\begin{align}
\Pr\{{\calE_n}\cap{\calR_n}(\bart_n;\eta)^c\}&\longrightarrow0\label{eq:successrectangle}\\
\Pr\{{\calE_n}\cap\calD_n(\bart_n;\eta)^c\}&\longrightarrow0,\label{eq:successrectangle-distortion}
\end{align}
using $\lambda^*>0$. Their union bound leads to, for every $\eta>0$,
\begin{align}
\Pcn(D)
&\leq \Pr\{J_n\leq L-{\Lambda_n}-{\lambda^*}\bart_n+{\eta},\quad
G_n\leq\bart_n+{\eta}\}+o(1),
\end{align}
Both thresholds and $\Sigma(\bar{\btheta}_n)$ depend only on $(\tilX^n,{\Upsilon})$. The two violation limits prove Claim~(i); with~\eqref{eq:cov-selector} for Claim~(ii), they complete Lemma~\ref{lem:selectors}.

\section{Proof of Lemma~\ref{lem:completion}}
\label{app:completion}
Recall $\calE_n$, $\bXi_n$, and $\Omega_n$ from~\eqref{eq:success-event} and~\eqref{eq:generic-martingale-quantities}. Under $P_n^{\rm o}$, the positive semidefinite target $\Gamma_n$ and thresholds $\ell_{n,j_2}$ are measurable with respect to the initial sigma-field ${\calF}_n^{\rm 0}$ and satisfy~\eqref{eq:completion-covariance-hypothesis}--\eqref{eq:completion-threshold-hypothesis}. The uniform bounds lead to $\|{\boldsymbol{\zeta}}_{n,i}\|\leq {C}_0$ and $\|\Gamma_n\|_2\leq {C}_\Gamma$ almost surely, with constants independent of $n$. We keep $(\tau,\eta)\in(0,\infty)^2$ fixed as $n\to\infty$. First, predictable stopping and independent completion lead to exact covariance; next, a uniform conditional Gaussian approximation permits transfer to $\calE_n$. The independent signs below extend $P_n^{\rm o}$ without changing the original experiment.

\subsection{Stopping and Completing the Covariance}
Retain original increments until the next accumulated predictable covariance would fail to be dominated by $n(\Gamma_n+\tau \mathrm{Id}_2)$ in positive semidefinite order, then set that increment and all later increments to zero. The next covariance is known before its increment is revealed, so retaining the original filtration preserves conditional centering. Let $\Gamma_n^{\mathrm{r}}$ be the target minus the retained covariance, so $0\preceq\Gamma_n^{\mathrm{r}}\preceq n(\Gamma_n+\tau \mathrm{Id}_2)$. Here, for an arbitrary symmetric matrix $\Sigma_{\rm lim}\in\bbR^{2\times2}$, $\Sigma_{\rm lim}\succeq0$ denotes positive semidefiniteness, $\Sigma_{\rm lim}^{1/2}$ its positive semidefinite square root, and $\tr\Sigma_{\rm lim}$ its trace.

After revealing the original filtration, reveal sign vectors ${\boldsymbol R}_j\in\{-1,1\}^2$, $j\in[n]$, with independent symmetric coordinates, independent across $j$ and of the original experiment. Append $(\Gamma_n^{\mathrm{r}})^{1/2}{\boldsymbol R}_j/\sqrt n$, whose conditional mean is zero and conditional covariance is $\Gamma_n^{\mathrm{r}}/n$. The completed predictable covariance is therefore $n(\Gamma_n+\tau \mathrm{Id}_2)$, and
\begin{align}
\bigl\|(\Gamma_n^{\mathrm{r}})^{1/2}{\boldsymbol R}_j/\sqrt n\bigr\|
\leq\sqrt{2\|\Gamma_n^{\mathrm{r}}\|_2/n}\leq {C}',\label{eq:completion-increment-bound}
\end{align}
by $\|{\boldsymbol R}_j\|=\sqrt2$ and the bound on $\Gamma_n$. Divide the retained increments by $\sqrt n$ to obtain $\overline{{\boldsymbol{\zeta}}}_{n,i}$, $i\in[n]$, and define
\begin{align}
\overline{{\boldsymbol{\zeta}}}_{n,n+j}&:=(\Gamma_n^{\mathrm{r}})^{1/2}{\boldsymbol R}_j/n,\quad j\in[n],\\
\widetilde\bXi_n&:=\sum_{k\in[2n]}\overline{{\boldsymbol{\zeta}}}_{n,k},\label{eq:completed-sum}\\
\calE_n^{\rm v}&:=\{\|\Omega_n/n-\Gamma_n\|_2\leq\tau/2\}.\label{eq:completion-good-covariance}
\end{align}
On $\calE_n^{\rm v}$, each partial original covariance is dominated by $\Omega_n\preceq n(\Gamma_n+\tau \mathrm{Id}_2/2)$, so no stopping occurs and
\begin{align}
\tr(\Gamma_n^{\mathrm{r}}/n)\leq3\tau.\label{eq:completion-trace}
\end{align}
Conditionally on the original experiment, the appended normalized noise has covariance $\Gamma_n^{\mathrm{r}}/n$. Conditional Markov inequality and~\eqref{eq:completion-trace} lead to
\begin{align}
\Pr\{\calE_n\cap\calE_n^{\rm v},\ \|\widetilde\bXi_n-\bXi_n\|>\eta\}
&\leq\eta^{-2}\bbE[\bbI\{\calE_n\cap\calE_n^{\rm v}\}\tr(\Gamma_n^{\mathrm{r}}/n)]\label{eq:addednoise}\\
&\leq3\tau/\eta^2.\label{eq:v20-completion-noise-2}
\end{align}

\subsection{Uniform Conditional Gaussian Limit}
The Gaussian comparison is under $P_n^{\rm o}$, without conditioning on $\calE_n$.
\begin{proposition}\label{prop:uniform-rectangle}
For every $\tau>0$, the completed martingale satisfies
\begin{align}
\sup_{(\ell_1,\ell_2)\in\bbR^2}\bigl|
\Pr\{\widetilde\Xi_{n,1}\leq\ell_1,\widetilde\Xi_{n,2}\leq\ell_2|{\calF}_n^{\rm 0}\}
-\Phi_{\Gamma_n+\tau \mathrm{Id}_2}(\ell_1,\ell_2)\bigr|=o(1),\label{eq:uniformrectangle-prop}
\end{align}
with a deterministic bound outside a null set of initial-information values, uniformly over targets with the stated bounds.
\end{proposition}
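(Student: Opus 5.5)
We will prove~\eqref{eq:uniformrectangle-prop} with a Lindeberg-type telescoping argument through Gaussian smoothing (a martingale version of the replacement method). The argument only needs two properties of the completed martingale. First, its total predictable covariance is \emph{exactly} the $\calF_n^{\rm 0}$-measurable matrix $\Gamma_n+\tau\mathrm{Id}_2$. Second, every increment is bounded by ${C}/\sqrt n$.

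We first fix a regular version of the conditional law given $\calF_n^{\rm 0}$ and work on a single realization $f$ of the initial information; this is where the null set enters. Conditionally, $\Gamma_n$ is then a fixed matrix $\Gamma$ with $\|\Gamma\|_2\leq {C}_\Gamma$. Set $\Sigma_\tau:=\Gamma+\tau\mathrm{Id}_2$, whose smallest eigenvalue is at least $\tau$.

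We order the $2n$ increments $\overline{\boldsymbol{\zeta}}_{n,k}$ as in~\eqref{eq:completed-sum}: first the stopped original increments along the reverse filtration, then the sign increments, which are revealed after the original filtration so that $\Gamma_n^{\rm r}$ is already known. Let $\calG_k$ denote the resulting filtration, and let $C_k:=\bbE[\overline{\boldsymbol{\zeta}}_{n,k}\overline{\boldsymbol{\zeta}}_{n,k}^\top|\calG_{k-1}]$. This is predictable, and the stopping rule plus completion give $\sum_{k\in[2n]}C_k=\Sigma_\tau$ identically. Each increment satisfies $\|\overline{\boldsymbol{\zeta}}_{n,k}\|\leq {C}/\sqrt n$: for the original ones by $\|{\boldsymbol{\zeta}}_{n,i}\|\leq {C}_0$, and for the appended ones by $\|(\Gamma_n^{\rm r})^{1/2}{\boldsymbol R}_j\|/n\leq\sqrt{2\|\Gamma_n^{\rm r}\|_2}/n$ with $\Gamma_n^{\rm r}\preceq n\Sigma_\tau$.

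Next, fix a test function $h:\bbR^2\to\bbR$ with three bounded derivatives. Define the remaining covariance $V_k:=\Sigma_\tau-\sum_{i\leq k}C_i\succeq0$, which is $\calG_{k-1}$-measurable, with $V_0=\Sigma_\tau$ and $V_{2n}=0$. Put $\varphi_k(s):=\bbE[h(s+N_k)]$ with $N_k\sim\calN(\bzero_2,V_k)$, and $S_k:=\sum_{i\leq k}\overline{\boldsymbol{\zeta}}_{n,i}$. We telescope
\begin{align}
\bbE[h(\widetilde\bXi_n)]-\bbE[h(\calN(\bzero_2,\Sigma_\tau))]
=\sum_{k\in[2n]}\bbE\big[\varphi_k(S_k)-\varphi_{k-1}(S_{k-1})\big].
\end{align}
Since $V_{k-1}=V_k+C_k$, we have $\varphi_{k-1}(s)=\bbE[\varphi_k(s+\calN(\bzero_2,C_k))]$. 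A third-order Taylor expansion of $\varphi_k$ around $S_{k-1}$ then matches the first two conditional moments of $\overline{\boldsymbol{\zeta}}_{n,k}$ and of the Gaussian with covariance $C_k$. Each summand is therefore at most ${C}\|h'''\|_\infty\,\bbE[\|\overline{\boldsymbol{\zeta}}_{n,k}\|^3]$, because Gaussian convolution does not increase sup-norms of derivatives. The total is ${C}\|h'''\|_\infty n^{-1/2}$, deterministic and independent of $f$.

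Finally, we pass from smooth $h$ to rectangle indicators. Sandwich $\bbI\{s_1\leq\ell_1,s_2\leq\ell_2\}$ between mollified versions $h_\delta^\pm$ whose third derivatives are $O(\delta^{-3})$. The error of the sandwich under $\calN(\bzero_2,\Sigma_\tau)$ is at most the Gaussian mass of two strips of width $\delta$. That mass is $\leq {C}\delta/\sqrt\tau$, because each coordinate marginal has variance at least $\tau$. This yields the uniform bound ${C}(\delta\tau^{-1/2}+\delta^{-3}n^{-1/2})$, and choosing $\delta=n^{-1/8}$ proves~\eqref{eq:uniformrectangle-prop}. The bound depends only on $({C}_0,{C}_\Gamma,\tau)$, which gives uniformity over targets and over $f$.

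We expect the main obstacle to be the predictability bookkeeping. The interpolating covariance $V_k$ must be measurable before the $k$th increment, and it must stay positive semidefinite along every path, including on the stopping event where an original increment is zeroed and where $\Gamma_n^{\rm r}$ becomes known only after the original filtration ends. We will handle this by defining $C_k$ for the stopped increments through the pre-revealed stopping decision. The sign increments are placed after all original increments, so $\Gamma_n^{\rm r}$ is $\calG_n$-measurable when they start. The condition $V_k\succeq0$ follows from the domination built into the stopping rule.
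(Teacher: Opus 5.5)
Your proposal is correct, and it reaches the conclusion by a different route from the paper.

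\emph{The paper's route.} The paper works with conditional characteristic functions. For each frequency $\boldsymbol v$ it telescopes the exponential compensator $\chi_{n,k}=\exp(\rmi\sum_{\ell\le k}\zeta_{n,\ell}+\frac12\sum_{\ell\le k}\sigma_{n,\ell}^2)$. It then uses the fact that the total projected predictable variance equals the $\calF_n^{\rm 0}$-measurable quantity $\boldsymbol v^\top(\Gamma_n+\tau\mathrm{Id}_2)\boldsymbol v$, which gives an $O_{\boldsymbol v}(n^{-1/2})$ error. The upgrade to uniform rectangle convergence is then soft, in four steps:
\begin{enumerate}
\item conditional versions are chosen on a common null set for rational $\boldsymbol v$;
\item equicontinuity follows from the second-moment bound;
\item a compactness/contradiction argument over the targets is combined with the L\'evy continuity theorem;
\item a finite rectangle grid is used together with the marginal density bound $(2\pi\tau)^{-1/2}$.
\end{enumerate}

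\emph{Your route.} Your Lindeberg replacement along the predictable remaining covariance $V_k$ uses the same two inputs: the exact total predictable covariance, and increments of size $O(n^{-1/2})$. It works directly with smooth test functions and mollified rectangle indicators. This buys an explicit rate $O((1+\tau^{-1/2})n^{-1/8})$ with constants depending only on $(C_0,C_\Gamma,\tau)$. It also makes uniformity over targets and over initial-information values immediate, with no compactness step and no Lévy continuity argument.

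\emph{Two small remarks.}
\begin{itemize}
\item The per-step Taylor remainder should also account for the Gaussian third moment $\bbE\|G_k\|^3\le C(\tr C_k)^{3/2}$. This is harmless: by conditional Jensen it is at most $C\,\bbE[\|\overline{\boldsymbol\zeta}_{n,k}\|^3\mid\calG_{k-1}]$, or directly $O(n^{-3/2})$ since $\tr C_k\le C/n$.
\item The positive semidefiniteness $V_k\succeq0$ is automatic once the completed covariance is exact pathwise, because $V_k=\sum_{i>k}C_i$ with each $C_i\succeq0$. The stopping rule is needed only to guarantee $\Gamma_n^{\rm r}\succeq0$, so that the completion exists. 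The bookkeeping you flag as the main obstacle is therefore lighter than you expect.
\end{itemize}
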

\begin{proof}
Define the completed filtration $({\calF}_{n,k}^{\rm c})_{k=0}^{2n}$ to retain the original filtration for $k\leq n$ and then include the first $k-n$ sign vectors, with ${\calF}_{n,0}^{\rm c}={\calF}_n^{\rm 0}$. For ${\boldsymbol v}\in\bbR^2$ and $k\in[2n]$, put $\zeta_{n,k}:={\boldsymbol v}^\top\overline{{\boldsymbol{\zeta}}}_{n,k}$ and $\sigma_{n,k}^2:=\bbE[(\zeta_{n,k})^2|{\calF}_{n,k-1}^{\rm c}]$. Define the total projected variance $\sigma_n^2$ by~\eqref{eq:v20-completion-variance-2}. The original increment bound,~\eqref{eq:completion-increment-bound}, and exact covariance lead to
\begin{align}
|\zeta_{n,k}|&\leq {C}_{{\boldsymbol v}}n^{-1/2},\label{eq:v20-completion-variance-1}\\
\sigma_n^2&:={\boldsymbol v}^\top(\Gamma_n+\tau \mathrm{Id}_2){\boldsymbol v},\label{eq:v20-completion-variance-2}\\
\sum_{k\in[2n]}\sigma_{n,k}^2&=\sigma_n^2\leq({C}_\Gamma+\tau)\|{\boldsymbol v}\|^2.\label{eq:v20-completion-variance-3}
\end{align}
For $k\in\{0,\ldots,2n\}$, let $\chi_{n,k}:=\exp(\rmi\sum_{\ell\in[k]}\zeta_{n,\ell}+\frac12\sum_{\ell\in[k]}\sigma_{n,\ell}^2)$, where $\rmi^2=-1$. For each $k\in[2n]$, define the Taylor remainder ${\mu_{n,k}^{\mathrm T}}$ by~\eqref{eq:v20-conditional-taylor-1}. Conditional centering and Taylor expansion yield
\begin{align}
\bbE[e^{\rmi\zeta_{n,k}}|{\calF}_{n,k-1}^{\rm c}]
&=1-\tfrac12\sigma_{n,k}^2+{\mu_{n,k}^{\mathrm T}},\label{eq:v20-conditional-taylor-1}\\
|{\mu_{n,k}^{\mathrm T}}|&\leq\tfrac16\bbE[|\zeta_{n,k}|^3|{\calF}_{n,k-1}^{\rm c}],\label{eq:v20-conditional-taylor-2}\\
|\bbE[\chi_{n,k}-\chi_{n,k-1}|{\calF}_{n,k-1}^{\rm c}]|
&\leq {C}_{{\boldsymbol v}}\bigl(\bbE[|\zeta_{n,k}|^3|{\calF}_{n,k-1}^{\rm c}]+\sigma_{n,k}^4\bigr),\label{eq:v20-conditional-taylor-3}
\end{align}
where the last bound uses $|\chi_{n,k-1}|\leq {C}_{{\boldsymbol v}}$ and expansion of $e^{\sigma_{n,k}^2/2}$. Almost surely, $\sum_{k\in[2n]}\bbE[|\zeta_{n,k}|^3|{\calF}_{n,k-1}^{\rm c}]\leq {C}_{{\boldsymbol v}}\sigma_n^2/\sqrt n$ and $\sum_{k\in[2n]}\sigma_{n,k}^4\leq {C}_{{\boldsymbol v}}^2\sigma_n^2/n$. Telescoping~\eqref{eq:v20-conditional-taylor-3} from $\chi_{n,0}=1$ and multiplying by the initially measurable $e^{-\sigma_n^2/2}$ leads to
\begin{align}
\Bigl|\bbE[e^{\rmi{\boldsymbol v}^\top\widetilde\bXi_n}|{\calF}_n^{\rm 0}]
-e^{-{\boldsymbol v}^\top(\Gamma_n+\tau \mathrm{Id}_2){\boldsymbol v}/2}\Bigr|
&\leq {C}_{{\boldsymbol v}}(n^{-1/2}+n^{-1})=O_{{\boldsymbol v}}(n^{-1/2}).\label{eq:conditionalcf}
\end{align}
Martingale orthogonality also leads to $\bbE[\|\widetilde\bXi_n\|^2|{\calF}_n^{\rm 0}]=\tr(\Gamma_n+\tau \mathrm{Id}_2)\leq2({C}_\Gamma+\tau)$. Choose conditional versions on a common probability-one set for every $n$ and every vector with rational coordinates. This second-moment bound leads to uniform equicontinuity, extending~\eqref{eq:conditionalcf} to all ${\boldsymbol v}\in\bbR^2$. If uniform weak convergence failed, compactness of the targets $\tau \mathrm{Id}_2\preceq\Gamma_n+\tau \mathrm{Id}_2\preceq({C}_\Gamma+\tau)\mathrm{Id}_2$ would lead to initial-information values and a target subsequence converging to $\Sigma_{\rm lim}\succeq\tau \mathrm{Id}_2$. Equation~\eqref{eq:conditionalcf} and the characteristic-function continuity theorem~\cite{billingsley} then imply weak convergence to $\calN(\bzero_2,\Sigma_{\rm lim})$, a contradiction. Uniform second moments control tails; on a fixed square, a finite rectangle grid, monotonicity, and the Gaussian marginal-density bound $(2\pi\tau)^{-1/2}$ control all thresholds. Thus,~\eqref{eq:uniformrectangle-prop} has a deterministic error depending only on $({C}_0,{C}_\Gamma,\tau)$.
\end{proof}

\subsection{\texorpdfstring{Transfer to the Non-Excess-Distortion Event}{Transfer to the Non-Excess-Distortion Event}}
Let $\calE_n^{\rm t}(\eta):=\bigcap_{j_2\in[2]}\{\Xi_{n,j_2}\leq \ell_{n,j_2}+\eta\}$, $\veps_n^{\rm v}(\tau):=\Pr\{\calE_n\cap\bigl(\calE_n^{\rm v}\bigr)^c\}$, and $\veps_n^{\rm t}(\eta):=\allowbreak\sum_{j_2\in[2]}\Pr\{\calE_n,\allowbreak\Xi_{n,j_2}>\ell_{n,j_2}+\eta\}$. For fixed $(\tau,\eta)$, the hypotheses lead to $\veps_n^{\rm t}(\eta)+\veps_n^{\rm v}(\tau)=o(1)$. Let $\veps_n^{\rm r}(\tau)=o(1)$ be the deterministic error in Proposition~\ref{prop:uniform-rectangle}. Then
\begin{align}
\Pr\{\calE_n\}
&\leq\Pr\{\calE_n\cap\calE_n^{\rm v}\cap\calE_n^{\rm t}(\eta)\}+\veps_n^{\rm v}(\tau)+\veps_n^{\rm t}(\eta)\label{eq:v20-completion-chain-1}\\
&\leq\Pr\{\widetilde\Xi_{n,j_2}\leq \ell_{n,j_2}+2\eta,\ j_2\in[2]\}
+3\tau/\eta^2+o(1)\label{eq:v20-completion-chain-2}\\
&=\bbE\Bigl[\Pr\{\widetilde\Xi_{n,j_2}\leq \ell_{n,j_2}+2\eta,\ j_2\in[2]|{\calF}_n^{\rm 0}\}\Bigr]
+3\tau/\eta^2+o(1)\label{eq:v20-completion-chain-3}\\
&\leq\bbE[\Phi_{\Gamma_n+\tau \mathrm{Id}_2}(\ell_{n,1}+2\eta,\ell_{n,2}+2\eta)]
+3\tau/\eta^2+\veps_n^{\rm r}(\tau)+o(1).\label{eq:completion-explicit-errors}
\end{align}
The steps use the union bound;~\eqref{eq:addednoise}, since displacement at most $\eta$ on $\calE_n^{\rm v}\cap\calE_n^{\rm t}(\eta)$ leads to the enlarged thresholds; the tower property; and Proposition~\ref{prop:uniform-rectangle}, whose uniformity permits ${\calF}_n^{\rm 0}$-measurable thresholds. This proves~\eqref{eq:completionbound}.

\section{Proofs of Lemmas~\ref{lem:gaussian} and~\ref{lem:finite-approximation}}
\label{app:gaussian}
Use the Gaussian cdf $\Phi_\Sigma$ and envelopes $F^{\rm c}_{\btheta},F^{\rm c},F^{\rm c}_{\tau,\eta}$ in~\eqref{eq:gaussian-cdf} and~\eqref{eq:gaussian-envelopes}--\eqref{eq:smoothed-envelope}. Lemma~\ref{lem:geometry},~\eqref{eq:weightset}, and~\eqref{eq:mixture-covariance} lead to $\lambda^*>0$, nonempty compact $\calA$, and continuous bounded $\btheta\mapsto\Sigma(\btheta)$. We first remove smoothing, then approximate the covariance family by finite sets, and finally separate the quantile.

\subsection{Smoothing and Its Removal}
For $(\tau,\eta)\in(0,\infty)^2$, bounding the first Gaussian marginal density uniformly over $(\btheta,t)\in\calA\times\bbR$ and taking suprema leads to
\begin{align}
|F^{\rm c}_{\tau,\eta}(\ell+\ell_{\rm i})-F^{\rm c}_{\tau,\eta}(\ell)|
\leq |\ell_{\rm i}|/\sqrt{2\pi\tau},\qquad(\ell,\ell_{\rm i})\in\bbR^2.\label{eq:v20-envelope-lipschitz-1}
\end{align}
Thus, $0\leq F^{\rm c}_{\tau,\eta}\leq1$ is continuous;~\eqref{eq:pointwise-envelope-bound} follows from its definition. For fixed $\ell$, choose positive $\tau_k\to0$ attaining the limit superior and $1/k$-maximizers $(\btheta_k,t_k)\in\calA\times\bbR$. If the limit superior is positive, allocations are bounded: otherwise one threshold tends to $-\infty$ and uniformly bounded covariances force the probabilities to zero. Compactness leads to a subsequence converging to $(\btheta,t)$. Gaussian weak convergence, closed-set Portmanteau, and continuity from above yield
\begin{align}
\limsup_{k\to\infty}F^{\rm c}_{\tau_k,\tau_k^{1/4}}(\ell)
&\leq\lim_{\eta_{\rm u}\downarrow0}\Phi_{\Sigma(\btheta)}(\ell-\lambda^*t+\eta_{\rm u},t+\eta_{\rm u})\label{eq:v20-remove-smoothing-1}\\
&=\Phi_{\Sigma(\btheta)}(\ell-\lambda^*t,t)\leq F^{\rm c}(\ell).\label{eq:v20-remove-smoothing-2}
\end{align}
A zero limit superior is immediate. These bounds prove Claim~(i) of Lemma~\ref{lem:gaussian}, including singular covariances.

\subsection{Finite Approximation of the Covariance Family}
For $\btheta\in\calA$ and $\ell\in\bbR$, a deterministic first coordinate leads to
\begin{align}
F^{\rm c}_{\btheta}(\ell)=
\begin{cases}
\Phi\bigl(\ell/(\lambda^*\sqrt{\Sigma_{22}(\btheta)})\bigr),
&\Sigma_{11}(\btheta)=0<\Sigma_{22}(\btheta),\\
\bbI\{\ell\geq0\},&\Sigma(\btheta)=0,
\end{cases}\label{eq:degenerate-envelope}
\end{align}
by taking $t=\ell/\lambda^*$; $\Phi$ is the standard Gaussian cdf in~\eqref{eq:standard-gaussian-functions}. If $\Sigma_{11}(\btheta)>0$, the first marginal density bound leads to continuity uniformly over allocations. Hence every nonzero-covariance envelope is continuous. Second countability of $\bbR^2$ supplies a countable subfamily with the same union of open sets $\{(\ell,F_{\rm \ell}):F_{\rm \ell}<F^{\rm c}_{\btheta}(\ell)\}$ over nonzero covariances. Let $\calA_\infty$ contain its weights and one zero-covariance weight, if any; all such weights share~\eqref{eq:degenerate-envelope}. Then $\calA_\infty$ is nonempty and $\sup_{\btheta\in\calA_\infty}F^{\rm c}_{\btheta}=F^{\rm c}$ pointwise. Fix an enumeration of $\calA_\infty$ and, for each $k\in\bbN$, define $\calA_k$ to contain its first $k$ elements, or all elements if fewer exist. These sets are nested, finite, and nonempty, proving~\eqref{eq:finiteapprox} and Lemma~\ref{lem:finite-approximation}.

\subsection{Quantile Separation}
Recall the definition of $F^{\rm c}$ in~\eqref{eq:full-pointwise-envelope}. We now prove that the supremum is attained and that this envelope is right-continuous. For every $\ell\in\bbR$,
\begin{align}
F^{\rm c}(\ell)&=\max_{(\btheta,t)\in\calA\times\bbR}\Phi_{\Sigma(\btheta)}(\ell-\lambda^*t,t),\label{eq:envelope-attainment}\\
F^{\rm c}(\ell)&=\lim_{\ell_{\rm i}\downarrow0}F^{\rm c}(\ell+\ell_{\rm i}).\label{eq:envelope-right-continuity}
\end{align}
If $F^{\rm c}(\ell)=0$, every rectangle probability is zero and any pair attains the maximum. If $F^{\rm c}(\ell)>0$, probabilities vanish uniformly as $|t|\to\infty$, locally uniformly in $\ell$. Bounded allocations, compactness, and closed-set Portmanteau lead to attainment. The same subsequence argument for converging $\ell$ leads to upper semicontinuity; monotonicity then leads to right continuity. With $(J_{\btheta},G_{\btheta})\sim \calN(\bzero_2,\Sigma(\btheta))$, for $\ell<0$,
\begin{align}
F^{\rm c}(\ell)\leq\sup_{\btheta\in\calA}\Pr\{J_{\btheta}+\lambda^*G_{\btheta}\leq\ell\}
\leq {C}/\ell^2\longrightarrow0\quad(\ell\to-\infty),\label{eq:envelope-left-tail}
\end{align}
by rectangle inclusion and bounded variance. Fixing $\btheta$ and taking $t=\ell/(2\lambda^*)$ leads to $F^{\rm c}(\ell)\to1$ as $\ell\to\infty$. If $0<F^{\rm c}(\ell)<1$, its maximizing covariance is nonzero, and for $\ell_{\rm i}>0$,
\begin{align}
F^{\rm c}(\ell+\ell_{\rm i})&\geq\Phi_{\Sigma(\btheta)}(\ell-\lambda^*t+\ell_{\rm i}/2,t+\ell_{\rm i}/(2\lambda^*))
>\Phi_{\Sigma(\btheta)}(\ell-\lambda^*t,t)=F^{\rm c}(\ell).\label{eq:envelope-strict-increase}
\end{align}
The allocation $t+\ell_{\rm i}/(2\lambda^*)$ leads to the first inequality. Enlarging both thresholds leads to strictness by positive Gaussian density in rank two, expansion of the supporting half-line or interval in rank one, or strict increase of the univariate cdf with one random coordinate. Thus, no intermediate constant interval exists.

Recall $\Lambda\sim \calN(0,V)$, $F_*(L)=\bbE[F^{\rm c}(L-\Lambda)]$, and $L_*(\veps)$ from~\eqref{eq:Fstar}--\eqref{eq:quantile}. For $V=0$, $F_*=F^{\rm c}$; otherwise $F^{\rm c}$ is a cdf and Gaussian convolution leads to a continuous strictly positive density. The end limits make $L_*(\veps)$ finite for $\veps\in(0,1)$, and right continuity leads to $F_*(L_*(\veps))\geq1-\veps$. Below this quantile,~\eqref{eq:quantile} leads to $F_*(L)<1-\veps$. Above it, a strict inequality at the quantile suffices by monotonicity; if equality holds,~\eqref{eq:envelope-strict-increase} or the positive convolution density leads to $F_*(L)>1-\veps$. Both cases allow zero-covariance jumps, proving~\eqref{eq:quantile-separation} and Claim~(ii) of Lemma~\ref{lem:gaussian}.
\section{Proof of Lemma~\ref{lem:oneshot}}
\label{app:oneshot}
Fix the inputs $P_n^{\rm a}$, positive $Q_n^{\rm a}$, and $z^n$ of Lemma~\ref{lem:oneshot}; $\imath_n$ and $D_n$ are defined in~\eqref{eq:block-aux-information}--\eqref{eq:block-aux-distortion}. For $M\in\bbN$ and $\gamma>0$, generate a source-independent array of independent unit-rate exponentials ${E}_{s,u^n}$, $(s,u^n)\in[M]\times\calU_{\rm a}^n$, shared during random-code analysis. For $x^n\in\calX^n$, define the encoder output and its selected auxiliary sequence by
\begin{align}
(f_n(x^n),u_{\mathrm e}^n(x^n))
&:=\operatorname*{argmin}_{(s,u^n)\in[M]\times\calU_{\rm a}^n}
\frac{{E}_{s,u^n}}{P_n^{\rm a}(u^n|x^n)/M}.
\label{eq:exponential-race-encoder}
\end{align}
Send $f_n(x^n)$. For $(s,y^n)\in[M]\times\calY^n$, define the decoded auxiliary sequence and reconstruction by
\begin{align}
\widehat u^n(s,y^n)
&:=\operatorname*{argmin}_{u^n\in\calU_{\rm a}^n}
\frac{{E}_{s,u^n}}{Q_n^{\rm a}(u^n|y^n)},
\label{eq:exponential-race-decoder}\\
\phi_n(s,y^n)
&:=z^n(\widehat u^n(s,y^n),y^n).
\label{eq:exponential-race-reconstruction}
\end{align}
Zero encoder weights have score $+\infty$. A fixed ordering resolves ties, which have probability zero for positive weights. Fixing the array makes the code deterministic.

Fix $(x^n,y^n)$, let $\calJ^{\rm r}:=[M]\times\calU_{\rm a}^n$, and write $j=(s,u^n)$ for the encoder winner. For $k=(s_k,u_k^n)\in\calJ^{\rm r}$, define the weights and winning time
\begin{align}
P_{\rm r}(k)&:=P_n^{\rm a}(u_k^n|x^n)/M,\qquad
Q_{\rm r}(k):=\bbI\{s_k=s\}Q_n^{\rm a}(u_k^n|y^n),
\label{eq:exponential-race-weights}\\
{E_*}&:=\min_{k\in\calJ^{\rm r}:\ P_{\rm r}(k)>0}\frac{{E}_{s_k,u_k^n}}{P_{\rm r}(k)}.
\label{eq:exponential-race-winning-time}
\end{align}
Let $\hatj=(s,\widehat u^n(s,y^n))$ be the decoder winner. Over the exponential array, $\Pr\{j=k,{E_*}>{e_*}\mid x^n,y^n\}=P_{\rm r}(k)e^{-{e_*}}$ for ${e_*}\geq0$, so ${E_*}$ is a unit-rate exponential independent of $j$. Conditional on $j$ and ${E_*}={e_*}$, the other exponentials have independent unit-rate residuals above $P_{\rm r}(k){e_*}$. Matching requires these residuals to exceed ${e_*}(P_{\rm r}(j)Q_{\rm r}(k)/Q_{\rm r}(j)-P_{\rm r}(k))_+$. Averaging over ${E_*}$ leads to
\begin{align}
\Pr\{\hatj=j|x^n,y^n,j\}
&=\int_0^\infty\exp\Biggl[-{e_*}\Bigl(1+\sum_{k\in\calJ^{\rm r}\setminus\{j\}}
\bigl(\tfrac{P_{\rm r}(j)Q_{\rm r}(k)}{Q_{\rm r}(j)}-P_{\rm r}(k)\bigr)_+\Bigr)\Biggr]d{e_*}\label{eq:v20-race-1}\\
&=\Bigl(1+\sum_{k\in\calJ^{\rm r}\setminus\{j\}}
\bigl(\tfrac{P_{\rm r}(j)Q_{\rm r}(k)}{Q_{\rm r}(j)}-P_{\rm r}(k)\bigr)_+\Bigr)^{-1}\label{eq:v20-race-2}\\
&\geq\Bigl(1+\tfrac{P_{\rm r}(j)}{Q_{\rm r}(j)}\sum_{k\in\calJ^{\rm r}}Q_{\rm r}(k)\Bigr)^{-1}
=\Bigl(1+\tfrac{\exp\{\imath_n(x^n;u^n|y^n)\}}{M}\Bigr)^{-1},\label{eq:v20-race-3}
\end{align}
where the second line evaluates the integral; the last uses $({v}_1-{v}_2)_+\leq{v}_1\quad({v}_1,{v}_2\geq0)$, $\sum_{k\in\calJ^{\rm r}}Q_{\rm r}(k)=1$, and $P_{\rm r}(j)/Q_{\rm r}(j)=e^{\imath_n(x^n;u^n|y^n)}/M$, with $Q_{\rm r}(j)>0$.
The selected auxiliary has distribution $P_n^{\rm a}(\cdot|X^n)$ independently of $Y^n$ given $X^n$. Correct auxiliary decoding with $D_n\leq nD$ implies $\calE_n$ from~\eqref{eq:success-event}. Averaging~\eqref{eq:v20-race-3} therefore yields
\begin{align}
\Pr\{\calE_n\}
&\geq\Pr\{\hatj=j,\ D_n\leq nD\}\label{eq:v20-oneshot-success-1}\\
&\geq\bbE\left[\frac{\bbI\{D_n\leq nD\}}
{1+\exp\{\imath_n(X^n;U^n|Y^n)\}/M}\right].\label{eq:v20-oneshot-success-2}
\end{align}
The complementary error, averaged over the array, satisfies
\begin{align}
1-\Pr\{\calE_n\}
&\leq1-\bbE\left[\frac{\bbI\{D_n\leq nD\}}
{1+\exp\{\imath_n(X^n;U^n|Y^n)\}/M}\right]
\label{eq:oneshot-average-error}\\
&\leq1-(1-e^{-\gamma})\Pr\{D_n\leq nD,\ \imath_n\leq\log M-\gamma\}
\label{eq:oneshot-threshold-error}\\
&\leq\Pr\{\{D_n>nD\}\cup\{\imath_n>\log M-\gamma\}\}+e^{-\gamma},
\label{eq:oneshot-complement-error}
\end{align}
where the first line uses~\eqref{eq:v20-oneshot-success-2}; the second restricts to the displayed event and uses $(1+e^{-\gamma})^{-1}\geq1-e^{-\gamma}$; and the last takes its complement. Choose one array whose excess-distortion probability does not exceed the average. This same deterministic code satisfies both bounds in~\eqref{eq:oneshot}, proving Lemma~\ref{lem:oneshot}.

\section{Proof of Lemma~\ref{lem:types}}
\label{app:types}
Fix $\btheta\in\calA$ and a compact allocation interval as in Lemma~\ref{lem:types}. Recall $P_{\btheta}$, $g_j$, $\bard_{\btheta}$, ${\widetilde V}_{\btheta}$, and $\dot P_{\btheta}$ from~\eqref{eq:optimal-joint},~\eqref{eq:contact-rule-output}, and~\eqref{eq:weight-mean-distortion}--\eqref{eq:distortion-direction}. Retain $\calU_{\btheta}:=\{j\in[m]:\theta_j>0\}$, extend $P_{\btheta}(\cdot|x)$ and $\dot P_{\btheta}(\cdot|x)$, the target $P_{\btheta,n}(\cdot|x)$, and rounded type $\hatP_n(\cdot|x)$ by zero on $[m]\setminus\calU_{\btheta}$. The source type is $\hatP$, $\bxi=\sqrt n(\hatP-P_X)$, and $P_n^{\rm a}$ is uniform on $\calT_{\hatP_n}(x^n)$. Its reference is $Q_n^{\rm a}(u^n|y^n)=\prod_{i\in[n]}P_{\btheta}(u_i|y_i)$. All conditional statements use $Y^n\sim P_{Y|X}^n(\cdot|x^n)$ and are uniform over~\eqref{eq:type-admissibility}--\eqref{eq:type-admissibility-end}, hence over any fixed finite set of weights.

\subsection{\texorpdfstring{Conditional Type and Mean Expansions}{Conditional Type and Mean Expansions}}
With $\bard(x,j)$ the channel distortion mean from~\eqref{eq:mean-distortion-rule}, centering and~\eqref{eq:existing-direction} lead to
\begin{align}
\sum_{j\in[m]}\dot P_{\btheta}(j|x)&=0,\label{eq:v20-direction-1}\\
\sum_{x\in\calX}\sum_{j\in[m]}P_X(x)\dot P_{\btheta}(j|x)\bard(x,j)
&=\frac{{\widetilde V}_{\btheta}}{{\widetilde V}_{\btheta}}=1.\label{eq:v20-direction-2}
\end{align}
Here ${\widetilde V}_{\btheta}\geq\underline{\widetilde V}>0$ by~\eqref{eq:uniform-direction}. Interior contacts make the retained encoder and decoder probabilities positive. For admissible types, define
\begin{align}
\iota_{\rm t}:=-t-\langle\bxi,\bard_{\btheta}\rangle.\label{eq:type-perturbation-coefficient}
\end{align}
Thus,~\eqref{eq:perturb} is a row-normalized positive kernel for all large $n$, since $|\iota_{\rm t}|/\sqrt n=O(n^{-3/8})$ uniformly. Write $n_x:=n\hatP(x)\geq {C}_{\rm -}n$ for large $n$, using full support and the cutoff. Round each $n_xP_{\btheta,n}(j|x)$ down and define the integer deficit by
\begin{align}
n_x^{\rm r}:=n_x-\sum_{j\in\calU_{\btheta}}\lfloor n_xP_{\btheta,n}(j|x)\rfloor
\in\{0,\ldots,|\calU_{\btheta}|-1\}
\label{eq:type-row-rounding-deficit}
\end{align}
Distribute it among $n_x^{\rm r}$ distinct entries. Each count differs from its target by at most one, proving~\eqref{eq:conditional-type-rounding-error}. Dividing by $n_x$ leads to conditional error $O(1/n)$ and positive retained probabilities for large $n$. Assigning these counts to source-$x$ positions proves that the class is nonempty.
Grouping the conditional distortion mean by joint type and substituting the perturbed, rounded kernel leads to
\begin{align}
\bbE[D_n|X^n,U^n]
&=n\sum_{x\in\calX}\sum_{j\in[m]}\hatP(x)\hatP_n(j|x)\bard(x,j)\label{eq:typedistcalc}\\
&=nD+\sqrt n\langle \bxi,\bard_{\btheta}\rangle
+\sqrt n\iota_{\rm t}\sum_{x\in\calX}\sum_{j\in[m]}P_X(x)\dot P_{\btheta}(j|x)\bard(x,j)
+O(1+\|\bxi\||\iota_{\rm t}|)\label{eq:v20-type-distortion-2}\\
&=nD-\sqrt nt+O(1+\|\bxi\|^2+t^2),\label{eq:v20-type-distortion-3}
\end{align}
where the second line has $O(1)$ rounding error; the last uses~\eqref{eq:v20-direction-2}, the definition of $\iota_{\rm t}$, and $\|\bxi\||\iota_{\rm t}|=O(\|\bxi\|^2+t^2)$.

For $Q_X\in\calP(\calX)$ and $Q_{U|X},P_{U|X}\in\calP(\calU_{\btheta}|\calX)$, write $\operatorname{KL}(Q_{U|X}\|P_{U|X}|Q_X):=\sum_{x\in\calX}Q_X(x)\operatorname{KL}(Q_{U|X}(\cdot|x)\|P_{U|X}(\cdot|x))$. The joint type has entries $\hatP(x)\hatP_n(j|x)$. Standard conditional-type bounds~\cite[Lemma~1.2]{TanBook}, with $|\calU_{\btheta}|\leq m$, imply
\begin{align}
&(n+1)^{-|\calX|m}\exp\{nH_{\hatP\hatP_n}(U|X)\}
\leq|\calT_{\hatP_n}(x^n)|\leq\exp\{nH_{\hatP\hatP_n}(U|X)\},\label{eq:conditional-type-cardinality}\\
&-\log|\calT_{\hatP_n}(x^n)|=-nH_{\hatP\hatP_n}(U|X)+O(\log n).\label{eq:v20-type-divergence-1}
\end{align}
Uniformity of $P_n^{\rm a}$ and the entropy--cross-entropy identity then yield
\begin{align}
\log P_n^{\rm a}(U^n|X^n)-\sum_{i\in[n]}\log P_{\btheta}(U_i|X_i)
&=n\operatorname{KL}(\hatP_n\|P_{\btheta}|\hatP)+O(\log n)\label{eq:v20-type-divergence-2}\\
&=O(1+\|\bxi\|^2+t^2+\log n),\label{eq:typediv}
\end{align}
where $\log{v}\leq{v}-1\quad({v}>0)$ bounds each row divergence by $\sum_{j\in\calU_{\btheta}}(\hatP_n(j|x)-P_{\btheta}(j|x))^2/P_{\btheta}(j|x)$. Retained $P_{\btheta}(j|x)$ are bounded away from zero, and the row difference is $O((1+\|\bxi\|+|t|)/\sqrt n)$. The remainder is therefore $o(\sqrt n)$ uniformly under the cutoff.
For $(x,j)\in\calX\times\calU_{\btheta}$, define the channel information mean as
\begin{align}
\bar\imath(x,j)&:=\sum_{y\in\calY:\ P_{Y|X}(y|x)>0}P_{Y|X}(y|x)\imath_j(x,y).\label{eq:type-contact-information-mean}
\end{align}
It follows from~\eqref{eq:optimal-conditionals} that
\begin{align}
\bbE[\imath_n|X^n,U^n]
&=\sum_{i\in[n]}\bar\imath(X_i,U_i)
+\log P_n^{\rm a}(U^n|X^n)-\sum_{i\in[n]}\log P_{\btheta}(U_i|X_i).\label{eq:type-information-mean-decomposition}
\end{align}
Stationarity~\eqref{eq:stationarity} and the distortion expansion lead to
\begin{align}
\sum_{i\in[n]}\bar\imath(X_i,U_i)
&=n(R_{\mathrm{WZ}}(D)+{\lambda^*} D)+\sum_{i\in[n]} \big(\jW{X_i}-R_{\mathrm{WZ}}(D)\big)-{\lambda^*}\sum_{i\in[n]}\bard(X_i,U_i)\label{eq:v20-type-information-1}\\
&=nR_{\mathrm{WZ}}(D)+\sqrt n(\Lambda_n+{\lambda^*} t)+O(1+\|\bxi\|^2+t^2),\label{eq:v20-type-information-2}
\end{align}
where the last line uses the source score~\eqref{eq:source-block-score}. Together with~\eqref{eq:typediv}, these identities prove~\eqref{eq:type-means};~\eqref{eq:v20-type-distortion-3} proves~\eqref{eq:type-distortion-mean}. Thus, Claim~(i) holds.

\subsection{\texorpdfstring{Conditional Gaussian Limit}{Conditional Gaussian Limit}}
Recall the channel-centered scores and covariance $\Sigma_j$ from~\eqref{eq:centered-scores}--\eqref{eq:contact-covariance}, evaluated only where $P_{Y|X}(y|x)>0$. The block definitions and product reference lead to the exact centered decomposition. For ${\boldsymbol v}\in\bbR^2$, also define its scalar projection:
\begin{align}
\frac1{\sqrt n}
\begin{pmatrix}
\imath_n-\bbE[\imath_n|X^n,U^n]\\
D_n-\bbE[D_n|X^n,U^n]
\end{pmatrix}
&=\frac1{\sqrt n}\sum_{i\in[n]}
\begin{pmatrix}
\imath_{U_i}^{\rm c}(X_i,Y_i)\\
d_{U_i}^{\rm c}(X_i,Y_i)
\end{pmatrix},\label{eq:type-centered-decomposition}\\
\zeta_{n,i}&:=\frac1{\sqrt n}{\boldsymbol v}^\top
\begin{pmatrix}
\imath_{U_i}^{\rm c}(X_i,Y_i)\\
d_{U_i}^{\rm c}(X_i,Y_i)
\end{pmatrix},\qquad i\in[n].\label{eq:type-scalar-projection}
\end{align}
Conditionally on $(X^n,U^n)$, the product channel makes these summands independent and centered. Finite alphabets and positive retained probabilities bound them uniformly. Grouping their covariance by joint type leads to
\begin{align}
\frac1n\sum_{i\in[n]}\Sigma_{U_i}(X_i)
=\sum_{x\in\calX}\sum_{j\in[m]}\hatP(x)\hatP_n(j|x)\Sigma_j(x)
\longrightarrow\sum_{j\in[m]}\theta_j\sum_{x\in\calX}Q_j(x)\Sigma_j(x)=\Sigma(\btheta),\label{eq:type-covariance-limit}
\end{align}
by~\eqref{eq:perturb} and~\eqref{eq:mixture-covariance}. Covariance additivity and $|\zeta_{n,i}|\leq {C}_{{\boldsymbol v}}/\sqrt n$ consequently imply
\begin{align}
\sum_{i\in[n]}\bbE[(\zeta_{n,i})^2|X^n,U^n]
&={\boldsymbol v}^\top\Bigl(\frac1n\sum_{i\in[n]}\Sigma_{U_i}(X_i)\Bigr){\boldsymbol v}
\longrightarrow {\boldsymbol v}^\top\Sigma(\btheta){\boldsymbol v}\label{eq:v20-type-moments-1}\\
\sum_{i\in[n]}\bbE[|\zeta_{n,i}|^3|X^n,U^n]&\leq {C}_{{\boldsymbol v}}n^{-1/2}.\label{eq:v20-type-moments-2}
\end{align}
Here ${C}_{{\boldsymbol v}}>0$ is a finite constant depending only on $\btheta$ and ${\boldsymbol v}$. With $\rmi^2=-1$, conditional independence and Taylor expansion yield
\begin{align}
\bbE\Bigl[e^{\rmi\sum_{i\in[n]}\zeta_{n,i}}\bigm|X^n,U^n\Bigr]
&=\prod_{i=1}^n\Bigl(1-\tfrac12\bbE[(\zeta_{n,i})^2|X^n,U^n]+O(n^{-3/2})\Bigr)\label{eq:v20-type-characteristic-1}\\
&=\exp\Bigl(-\tfrac12\sum_{i\in[n]}\bbE[(\zeta_{n,i})^2|X^n,U^n]+O(n^{-1/2})\Bigr)\label{eq:v20-type-characteristic-2}\\
&\longrightarrow\exp\bigl(-\tfrac12{\boldsymbol v}^\top\Sigma(\btheta){\boldsymbol v}\bigr),\label{eq:v20-type-characteristic-3}
\end{align}
where the second line uses $\log(1+{\digamma})={\digamma}+O(|{\digamma}|^2)$ for sufficiently small complex ${\digamma}$; the last uses~\eqref{eq:v20-type-moments-1}. Errors are uniform over admissible triples. The characteristic-function continuity theorem~\cite{billingsley}, applied to~\eqref{eq:type-centered-decomposition} for every ${\boldsymbol v}$, proves~\eqref{eq:typeclt} along every admissible sequence. Zero-variance projections vanish in mean square by~\eqref{eq:v20-type-moments-1}, so singular covariances are included. This proves Claim~(ii) and Lemma~\ref{lem:types}.

\section{Numerical Calculations}\label{app:numerical}
\label{app:nonunique-example}
Recall $V$, $F_*$, and $L_*$ from~\eqref{eq:source-variance},~\eqref{eq:Fstar}, and~\eqref{eq:quantile}. We compute~\eqref{eq:bac-main-values} and Table~\ref{tab:nonunique}, first identifying the binary optimizer numerically, then deriving the quaternary optimal set analytically and evaluating its Gaussian envelopes. Scripts use unrounded parameters; displayed decimals are rounded.

\subsection{Binary Asymmetric Source}
\label{app:bac}
Let $\calX=\calY=\calZ=\{0,1\}$ and $P_X=(0.65,0.35)$. For $(x,z)\in\calX\times\calZ$, use $d(x,z)=\bbI\{x\ne z\}$, and fix $D=0.05$ and $\veps=0.1$. The side-information channel, with rows and columns indexed by $x$ and $y$, is
\begin{align}
P_{Y|X}=\begin{pmatrix}0.92&0.08\\0.20&0.80\end{pmatrix}.
\end{align}
The source and observation probabilities are positive, as required by (A1). Substituting $q=0.35$, ${p_{01}=0.08}$, and ${p_{10}=0.20}$ into~\eqref{eq:bac-overlap-general}--\eqref{eq:bac-distortion-endpoints} leads to $\kappa_W=31/125\approx0.25$, $D_{\min}=0$, and $D_0=61/500\approx0.12$. Thus, (A2) and (A3) hold at the stated $D$ and $\veps$. For the numerical candidate $P^*=P_{XY}P^*_{{\widetilde U}|X}$ with active labels $j\in[3]$, define $Q_j:=P^*_{X|{\widetilde U}}(\cdot|j)$ and $\theta_j:=P^*_{\widetilde U}(j)$. Choose the enumeration so that $\widetilde g(j,y)=g_j(y)$ for $j\in[3]$, where $g_1(y)=0$, $g_2(y)=y$, and $g_3(y)=1$ for $y\in\{0,1\}$, with
\begin{align}
(Q_j(1))_{j=1}^3\approx(0.01,\ 0.34,\ 0.98),\qquad
(\theta_j)_{j=1}^3\approx(0.43,\ 0.34,\ 0.23).
\end{align}
The contacts and dual parameters solve $\Delta_{g_j}(Q_j)=0$ and~\eqref{eq:contactgradient} for $j\in[3]$, with $\Delta_g$ from~\eqref{eq:rule-gap}. Their weights satisfy $\sum_{j\in[3]}\theta_j=1$, $\sum_{j\in[3]}\theta_jQ_j(1)=0.35$, and $\sum_{j\in[3]}\theta_j\delta(Q_j)=0.05$. The remaining rule $g(y)=1-y$ has numerical minimum gap approximately $2.08>0$. Bayes' rule leads to $P^*_{{\widetilde U}|X}(j|x)=\theta_jQ_j(x)/P_X(x)$.

Evaluate $\lambda^*$, $\Sigma(P)$, and $\jW{\cdot}$ from~\eqref{eq:multiplier-definition},~\eqref{eq:cov}, and~\eqref{eq:common-optimal-density} at $P^*$, and write $\Sigma:=\Sigma(P^*)$. This leads to
\begin{align}
R_{\mathrm{WZ}}(D)&\approx0.21,\qquad {\lambda^*}\approx2.91,\qquad V\approx0.0252,\\
(\jW{0},\jW{1})&\approx(0.09,\ 0.43),\\
\Sigma&\approx\begin{pmatrix}0.235&-5.35\times10^{-5}\\-5.35\times10^{-5}&0.0346\end{pmatrix}.
\end{align}
The conditional variance in~\eqref{eq:existing-direction} is positive and approximately $0.01$.
Set $\sigma_{\rms}=\sqrt{V}$,\allowbreak $\sigma_{\rmI}=\sqrt{\Sigma_{11}}$,\allowbreak $\sigma_{\rmD}=\sqrt{\Sigma_{22}}$,\allowbreak and ${\bar\Sigma_{12}}=\Sigma_{12}/(\sigma_{\rmI}\sigma_{\rmD})$. Let $\Phi_2(\cdot,\cdot;{\bar\Sigma_{12}})$ denote the standard bivariate Gaussian distribution function with correlation ${\bar\Sigma_{12}}$; recall the definitions of $\Phi$ and ${\Phi'}$ in~\eqref{eq:standard-gaussian-functions}. For $(\ell,L)\in\bbR^2$,\allowbreak with $\bar{\ell}_{\rm s}\in\bbR$ the standardized source-Gaussian integration variable, define the Gaussian expressions for the numerically identified covariance by
\begin{align}
F^{\rm c}_{\mathrm{BA}}(\ell)&:=\max_{t\in\bbR}\Phi_2\left(\frac{\ell-{\lambda^*} t}{\sigma_{\rmI}},\frac{t}{\sigma_{\rmD}};{\bar\Sigma_{12}}\right),\label{eq:asymmetricenvelope}\\
F_{\mathrm{BA}}(L)&:=\int_{-\infty}^{\infty}F^{\rm c}_{\mathrm{BA}}(L-\sigma_{\rms}\bar{\ell}_{\rm s}){\Phi'}(\bar{\ell}_{\rm s})\,d\bar{\ell}_{\rm s}.
\end{align}
For each pair $(\ell,t)\in\bbR^2$ of a threshold and a candidate allocation, define
\begin{align}
\bar{\ell}_{\rm I}:=\frac{\ell-{\lambda^*} t}{\sigma_{\rmI}},\qquad \bar{\ell}_{\rm D}:=\frac{t}{\sigma_{\rmD}}.
\end{align}
The maximizing allocation satisfies
\begin{align}
\frac{{\Phi'}(\bar{\ell}_{\rm D})}{\sigma_{\rmD}}\Phi\left(\frac{\bar{\ell}_{\rm I}-{\bar\Sigma_{12}} \bar{\ell}_{\rm D}}{\sqrt{1-{\bar\Sigma_{12}}^2}}\right)
=\frac{{\lambda^*}{\Phi'}(\bar{\ell}_{\rm I})}{\sigma_{\rmI}}\Phi\left(\frac{\bar{\ell}_{\rm D}-{\bar\Sigma_{12}} \bar{\ell}_{\rm I}}{\sqrt{1-{\bar\Sigma_{12}}^2}}\right),
\end{align}
where the equality follows by differentiating the bivariate Gaussian cdf with respect to $t$ at the maximum.
The numerical Gaussian coefficient for this candidate covariance is
\begin{align}
L_*(0.1)\approx1.69,
\end{align}
where the value is obtained from~\eqref{eq:asymmetricenvelope} by a bracketed allocation search, Gaussian quadrature, and a root search for $F_{\mathrm{BA}}(L)=0.9$. Its interpretation as $L_*(0.1)$ uses the optimal covariance family identified by floating-point stationary-equation and inactive-rule-gap calculations; certified enclosures of these calculations are not provided. The Gaussian calculation uses unrounded parameters; Gauss--Hermite quadrature with $80$ and $140$ nodes yields the displayed value in both cases.

\subsection{Quaternary Source: Model and Gaussian Calculation}
\label{app:four-symbol-model}
Use $(p,r,q_{\rm p})=(2/5,1/3,1/10)$ from Section~\ref{sec:nonunique-example}, with $\alpha\in(0,p)$, $w$, and ${\omega}_p$ from~\eqref{eq:example-tangent}--\eqref{eq:example-phi}. The distortion and threshold in~\eqref{eq:example-distortion}--\eqref{eq:example-threshold} have $|\calX|=4$, $|\calY|=2$, and $|\calZ|=6$. Define prediction distributions with coordinates ${x_{\rm p}}\in\{0,1\}$ by
\begin{align}
Q_0^{\rm p}=(1-q_{\rm p},q_{\rm p}),\qquad Q_1^{\rm p}=(q_{\rm p},1-q_{\rm p}),\qquad Q_*^{\rm p}=(1-r,r).\label{eq:prediction-distributions}
\end{align}
For every ${({x_{\rm p}},{z_{\rm p}})\in\{0,1\}\times\{0,1,*\}}$, substitution into~\eqref{eq:prediction-cost-table} leads to
\begin{align}
d_{\rm p}({x_{\rm p}},{z_{\rm p}})=\log\frac{1-q_{\rm p}}{Q_{z_{\rm p}}^{\rm p}({x_{\rm p}})}.\label{eq:prediction-cost-log-form}
\end{align}
The natural-log convention is used in both distortion terms: ${w}$, $D$, and $d_{\max}$ therefore use the same scale as the prediction loss. The prediction loss is nonnegative and bounded because every ${Q_{z_{\rm p}}^{\rm p}({x_{\rm p}})}$ is positive and at most ${1-q_{\rm p}}$. In particular,
\begin{align}
D_{\min}&=0<D<D_0=h_2(r)+\log(1-q_{\rm p})+wp,\\
\kappa_W&=2p(1-p)=0.48,\qquad d_{\max}=\log\frac{1-q_{\rm p}}{q_{\rm p}}+w\approx2.99.
\end{align}
Choosing the correct bit and a prediction assigning probability $1-q_{\rm p}$ to ${X_{\rm p}}$ leads to $D_{\min}=0$. With only $Y$, the prior minimizes the prediction risk of independent ${X_{\rm p}}$, and $Y$ is the optimal bit estimate, giving $D_0$. Since $h_2(q_{\rm p})+\log(1-q_{\rm p})=q_{\rm p}\log((1-q_{\rm p})/q_{\rm p})>0$ and $h_2(r)>h_2(q_{\rm p})$ for $0<q_{\rm p}<r<1/2$, we have $0<D<D_0$. This verifies (A3) for every $\veps\in(0,1)$; (A1) and (A2) were checked in Section~\ref{sec:assumptions}.

To realize the threshold in~\eqref{eq:example-threshold} as a mean distortion, choose ${P_{Z_{\rm p}}(1)=(r-q_{\rm p})/(1-2q_{\rm p})=7/24}$, ${P_{{X_{\rm p}}|{Z_{\rm p}}}(1|0)=q_{\rm p}}$, and ${P_{{X_{\rm p}}|{Z_{\rm p}}}(1|1)=1-q_{\rm p}}$. These probabilities induce ${P_{X_{\rm p}}(1)=r}$ and define ${P_{{Z_{\rm p}}|{X_{\rm p}}}}$ by Bayes' rule. Generate ${{Z_{\rm p}}}$ independently of ${({X_{\rm b}},{N})}$ given ${{X_{\rm p}}}$ and reconstruct ${Z=({Z_{\rm p}},Y)}$. Since ${\Pr\{{X_{\rm p}}\ne {Z_{\rm p}}\}=q_{\rm p}}$ and ${\Pr\{{X_{\rm b}}\ne Y\}=p}$, its mean distortion is ${q_{\rm p}\log((1-q_{\rm p})/q_{\rm p})+wp=D}$.

With $\diag$ denoting the diagonal matrix, define
\begin{align}
R_0:=h_2(r)-h_2(q_{\rm p}),\qquad \varpi:=p+(1-2p)\alpha.\label{eq:example-rate-output}
\end{align}
Define the positive standard deviations $\sigma_{\rmI},\sigma_{\rmD},\sigma_{\rms}$ by
\begin{align}
\sigma_{\rmI}^2&:=p(1-p)\left(\log\frac{1-\varpi}{\varpi}\right)^2,\qquad \sigma_{\rmD}^2:=w^2p(1-p),\label{eq:example-variances}\\
\sigma_{\rms}^2&:=r(1-r)\left(\log\frac{1-r}{r}\right)^2=\frac29(\log2)^2.
\end{align}
For ${(\vartheta_{\rm m},\ell)\in[0,1]\times\bbR}$, define
\begin{align}
\Sigma_{\vartheta_{\rm m}}:=\diag((1-\vartheta_{\rm m})\sigma_{\rmI}^2,\vartheta_{\rm m}\sigma_{\rmD}^2).\label{eq:example-covariance-definition}
\end{align}
Define the corresponding pointwise envelope by
\begin{align}
F^{\rm c}_{\vartheta_{\rm m}}(\ell):=\sup_{t\in\bbR}\Phi_{\Sigma_{\vartheta_{\rm m}}}(\ell-t,t).\label{eq:example-envelope-definition}
\end{align}
The multiplier in Lemma~\ref{lem:nonunique-example} is the overall distortion multiplier ${\lambda^*}$; the constant ${w}$ is only the Hamming weight.

\begin{lemma}\label{lem:nonunique-example}
For the above quaternary example and threshold $D$ in~\eqref{eq:example-threshold}, the following hold for ${({x_{\rm p}},{x_{\rm b}},\ell)\in\{0,1\}^2\times\bbR}$.
\begin{enumerate}
\item The first-order rate and source information satisfy
\begin{align}
R_{\mathrm{WZ}}(D)=R_0,\qquad {\lambda^*}=1.
\end{align}
\item The source information and dispersion satisfy
\begin{align}
\jW{({x_{\rm p}},{x_{\rm b}})}&=-\log P_{X_{\rm p}}({x_{\rm p}})-h_2(q_{\rm p}),\\
\jW{({x_{\rm p}},{x_{\rm b}})}-R_{\mathrm{WZ}}(D)&=({x_{\rm p}}-\tfrac13)\log2,\qquad V=\sigma_{\rms}^2.
\end{align}
\item The optimal covariance family is
\begin{align}
\{\Sigma(P):P\in\calP^*(D)\}&=\{\Sigma_{\vartheta_{\rm m}}:0\leq\vartheta_{\rm m}\leq1\}.\label{eq:example-full-family}
\end{align}
\item $0<\sigma_{\rmI}<\sigma_{\rmD}$, and the two endpoint test channels attain the full pointwise envelope
\begin{align}
\max_{0\leq\vartheta_{\rm m}\leq1}F^{\rm c}_{\vartheta_{\rm m}}(\ell)
=\begin{cases}
\Phi(\ell/\sigma_{\rmD}),&\ell<0,\\
\Phi(\ell/\sigma_{\rmI}),&\ell\geq0
\end{cases}.\label{eq:example-envelopes}
\end{align}
\item For every $L\geq0$, the best fixed-test-channel expression in~\eqref{eq:fixed-versus-adaptive} satisfies
\begin{align}
F_{\mathrm{f}}(L)=\Phi\left(\frac{L}{\sqrt{\sigma_{\rms}^2+\sigma_{\rmI}^2}}\right)<F_*(L).\label{eq:example-fixed}
\end{align}
\end{enumerate}
\end{lemma}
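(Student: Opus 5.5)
We will treat the quaternary example as a product of two decoupled subproblems: a \emph{prediction} part for ${X_{\rm p}}$, which is independent of $Y$, and a \emph{bit} part for ${X_{\rm b}}$, observed through a binary symmetric channel with crossover $p$. This works because $d$ in~\eqref{eq:example-distortion} is additive across the two coordinates, so for a product posterior $Q=Q^{\rm p}\otimes Q^{\rm b}$ both $\rho$ and $\delta$ split as sums.

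\emph{Claims (i) and (ii).} We exhibit an explicit dual triple in~\eqref{eq:posterior-dual} with $\lambda^*=1$ and verify the gap inequality~\eqref{eq:deficit} directly. By~\eqref{eq:prediction-cost-log-form}, $d_{\rm p}$ is a shifted log loss, and the log-loss rate--distortion slope equals one, so $\lambda^*=1$ is natural. For the prediction part, $\rho^{\rm p}(Q^{\rm p})+\delta^{\rm p}(Q^{\rm p})$ is the entropy minus the best log-loss over the three actions. Its lower convex envelope is supported by the chord through $Q_0^{\rm p}$ and $Q_1^{\rm p}$, which touches it exactly at $Q_0^{\rm p},Q_1^{\rm p}$ and lies strictly below $Q_*^{\rm p}$; this uses $q_{\rm p}<r$.

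For the bit part, the relevant one-dimensional function is $\omega_p$ plus the Hamming cost, and the definition of $w$ via~\eqref{eq:example-tangent} is exactly the condition for a single supporting line to touch it at three points: the uniform bit posterior (decoder uses $Y$), and the two posteriors with bit parameter $\alpha$ and $1-\alpha$ (decoder estimate fixed by the label). We then check that non-product posteriors have nonnegative gap. First we use~\eqref{eq:quadratic}-type curvature; if that fails, we reduce by conditioning on ${X_{\rm p}}$, since $Y$ depends only on ${X_{\rm b}}$ and $H_Q(X|Y)=H_Q(X_{\rm p}|X_{\rm b})+H_Q(X_{\rm b}|Y)$ yields a separable lower bound. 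This identifies $\calC$ as the six product contacts. Averaging the dual over $P_X$ then gives $R_{\mathrm{WZ}}(D)=R_0$, because the bit part contributes zero at the threshold placement~\eqref{eq:example-threshold}. Formula~\eqref{eq:common-tilted-from-dual} gives $\jW{x}$: it depends only on ${x_{\rm p}}$ through $-\log P_{X_{\rm p}}$, which yields the stated $V$.

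\emph{Claim (iii).} The set $\calA$ is parametrized by a mixing weight $\vartheta_{\rm m}$ between the bit-uniform contacts and the $\alpha$-tangent contacts, with prediction weights $7/24$ fixed by the mean constraint. The distortion constraint holds for every $\vartheta_{\rm m}$ because of the tangency. For the covariance, the prediction coordinate does not interact with $Y$, so it contributes nothing. The uniform-bit contact has constant information score and Hamming fluctuation $w(\bbI\{\cdot\})$, giving variance $w^2p(1-p)$. The $\alpha$-contact has deterministic distortion given $x$ and information score $\log(P_Y/Q_Y)$ with $Q_Y\in\{\varpi,1-\varpi\}$, giving $\sigma_{\rmI}^2$. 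Cross terms vanish, and averaging via~\eqref{eq:mixture-covariance} together with~\eqref{eq:covariance-equivalence} gives $\Sigma_{\vartheta_{\rm m}}$. In~\eqref{eq:example-envelope-definition} the factor $\lambda^*=1$ is already absorbed.

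\emph{Claims (iv) and (v).} The inequality $\sigma_{\rmI}<\sigma_{\rmD}$ is a numerical check using $\varpi$ and $w$. The envelope identity~\eqref{eq:example-envelopes} is the main obstacle. The endpoints give $\Phi(\ell/\sigma_{\rmI})$ (take $t=0$ when $\vartheta_{\rm m}=0$) and $\Phi(\ell/\sigma_{\rmD})$ (take $t=\ell$ when $\vartheta_{\rm m}=1$). For the reverse inequality with $\vartheta_{\rm m}\in(0,1)$, we bound $\Phi((\ell-t)/a)\Phi(t/b)$ with $a^2=(1-\vartheta_{\rm m})\sigma_{\rmI}^2$ and $b^2=\vartheta_{\rm m}\sigma_{\rmD}^2$. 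Here we would first try a log-concavity argument: compare with $\Pr\{aZ_1+bZ_2\le\ell\}$, which uses $\Phi(u)\Phi(v)\le\Phi\bigl((u\,a+v\,b)/\sqrt{a^2+b^2}\bigr)$ when $u,v$ have the appropriate sign, and then use $a^2+b^2\le\max\{\sigma_{\rmI}^2,\sigma_{\rmD}^2\}$ with the sign of $\ell$ choosing the correct endpoint. If this fails, we would analyze the first-order condition in $t$, in the style of the binary-example stationarity equation, as a function of $\vartheta_{\rm m}$ and show monotonicity.

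Finally, for $L\ge0$ we compute each fixed-$\vartheta_{\rm m}$ term $\bbE[F^{\rm c}_{\vartheta_{\rm m}}(L-\Lambda)]$ and show it is maximized at $\vartheta_{\rm m}=0$, where $\Phi(\cdot/\sigma_{\rmI})$ convolved with $\calN(0,\sigma_{\rms}^2)$ gives the displayed $\Phi$. Strict inequality with $F_*(L)$ follows because on the positive-probability event $\{\Lambda>L\}$ the pointwise maximum in~\eqref{eq:example-envelopes} equals $\Phi((L-\Lambda)/\sigma_{\rmD})>\Phi((L-\Lambda)/\sigma_{\rmI})$.
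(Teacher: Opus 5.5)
The gap is your treatment of the prediction coordinate, and it breaks Claim~(iii). By~\eqref{eq:prediction-cost-log-form}, for every $Q\in\calP(\{0,1\})$,
\[
-H(Q)+\delta_{X_{\rm p}}(Q)=\log(1-q_{\rm p})+\min_{z_{\rm p}\in\{0,1,*\}}\operatorname{KL}(Q\|Q^{\rm p}_{z_{\rm p}}).
\]
Your chord through $Q_0^{\rm p},Q_1^{\rm p}$ is the constant $\log(1-q_{\rm p})$. It does \emph{not} lie strictly below the function at $Q_*^{\rm p}=(1-r,r)$; it touches there as well. The reason is that the costs $d_{\rm p}(\cdot,*)=\log((1-q_{\rm p})/Q_*^{\rm p}(\cdot))$ are exactly the shifted log-loss of $Q_*^{\rm p}$. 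Hence $\calC$ consists of nine product contacts, $\{Q_0^{\rm p},Q_1^{\rm p},Q_*^{\rm p}\}\times\{\Ber(\alpha),\Ber(1/2),\Ber(1-\alpha)\}$, not six.

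With only your six contacts, the prediction weights are forced to $(17/24,7/24)$, with Bayes risk $h_2(q_{\rm p})+\log(1-q_{\rm p})$. Identity weight $\vartheta_{\rm m}$ then yields distortion $D-(1-\vartheta_{\rm m}){\omega}_p(\alpha)$ and rate $R_0+(1-\vartheta_{\rm m}){\omega}_p(\alpha)$. So your sentence ``the distortion constraint holds for every $\vartheta_{\rm m}$ because of the tangency'' is false. The equality in~\eqref{eq:weightset} forces $\vartheta_{\rm m}=1$, $\calK_D$ collapses to $\{\Sigma_1\}$, and Lemma~\ref{lem:comparison} would give $F_{\mathrm f}=F_*$, the opposite of Claim~(v).

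Tangency only keeps bit rate plus bit distortion constant, so lowering $\vartheta_{\rm m}$ moves $(1-\vartheta_{\rm m}){\omega}_p(\alpha)$ from distortion into rate. The missing idea is that the star contact moves it back at the same slope $-\lambda^*=-1$. Take prediction weights $((1-\vartheta_*)\tfrac{17}{24},(1-\vartheta_*)\tfrac{7}{24},\vartheta_*)$. These keep the mean $P_{X_{\rm p}}$, lower the prediction rate by $\vartheta_*R_0$, and raise the prediction distortion by $\vartheta_*R_0$. Choosing $\vartheta_*=(1-\vartheta_{\rm m}){\omega}_p(\alpha)/R_0$ therefore restores rate $R_0$ and distortion $D$, using ${\omega}_p(\alpha)=w(p-\alpha)$. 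Covering all $\vartheta_{\rm m}\in[0,1]$ also needs the quantitative bound ${\omega}_p(\alpha)\le R_0$, which your plan lacks; the paper proves ${\omega}_p(\alpha)<R_0/2$.

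Everything else survives this repair.
\begin{itemize}
\item Your supporting constants are the correct dual ($\lambda=1$, $\bbeta=\bzero_4$). So (i) follows from weak duality, the feasible $\vartheta_{\rm m}=1$ mixture, and the uniqueness in Lemma~\ref{lem:geometry}(i). Claim (ii) follows from~\eqref{eq:common-tilted-from-dual}.
\item Your per-contact covariances apply unchanged to the star contacts, because the prediction coordinate never interacts with $Y$.
\item Claims (iv)--(v) are the paper's rectangle-inclusion and convolution argument. Your product bound holds for all $u,v$ without sign conditions, and you need both ends of $a^2+b^2\in[\sigma_{\rmI}^2,\sigma_{\rmD}^2]$, not only the upper one.
\end{itemize}

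Two smaller points. First, \eqref{eq:quadratic} cannot help prove the gap is nonnegative, since it presupposes the dual and the contact set. Second, your conditioning route needs the fact that revealing $X_{\rm b}$ can only lower the prediction risk. Its equality case must force $Q_{X_{\rm p}|X_{\rm b}=0}=Q_{X_{\rm p}|X_{\rm b}=1}$: distinct prediction contacts have distinct unique Bayes actions, which cannot both be used at the same $y$.
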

The proof in Section~\ref{app:four-symbol-proof} covers dependent auxiliary labels and hence the entire optimal set. Theorem~\ref{thm:achievability} permits the singular endpoints. With $\lambda^*=1$,~\eqref{eq:example-envelopes} leads to, for every $L\in\bbR$,
\begin{align}
F_*(L)
&=\int_{-\infty}^{L/\sigma_{\rms}}
\Phi\left(\frac{L-\sigma_{\rms}\bar{\ell}_{\rm s}}{\sigma_{\rmI}}\right){\Phi'}(\bar{\ell}_{\rm s})\,d\bar{\ell}_{\rm s}
+\int_{L/\sigma_{\rms}}^\infty
\Phi\left(\frac{L-\sigma_{\rms}\bar{\ell}_{\rm s}}{\sigma_{\rmD}}\right){\Phi'}(\bar{\ell}_{\rm s})\,d\bar{\ell}_{\rm s},\label{eq:example-adaptive-integral}
\end{align}
where~\eqref{eq:example-envelopes} selects its two branches according to the sign of ${L-\sigma_{\rms}\bar{\ell}_{\rm s}}$, and $\Lambda=\sigma_{\rms}\bar{\ell}_{\rm s}$ is averaged over the standardized source-Gaussian coordinate $\bar{\ell}_{\rm s}\in\bbR$.
There is no remaining optimization over intermediate test channels. For a source block ${(X_{\rm p}^n,X_{\rm b}^n)}$, define the source count and write the source score in~\eqref{eq:source-block-score} as
\begin{align}
n_{\rm p}:=\sum_{i\in[n]}\bbI\{X_{{\rm p},i}=1\},\qquad \Lambda_n=\frac{(n_{\rm p}-n/3)\log2}{\sqrt n},\label{eq:quaternary-source-score}
\end{align}
where Claim~(ii) of Lemma~\ref{lem:nonunique-example} reduces each source score to ${(X_{{\rm p},i}-1/3)\log2}$.
At rate offset $L\in\bbR$, the encoder selects the endpoint ${\vartheta_{\rm m}=0}$ when $\Lambda_n\leq L$ and ${\vartheta_{\rm m}=1}$ otherwise, then applies the local conditional-type perturbation in Section~\ref{sec:achievability}. The choice uses no realization of $Y^n$. The numerical parameters are
\begin{align}
\alpha&\approx0.31,\qquad w\approx0.79,\qquad D\approx0.54,\\
R_0&\approx0.31,\qquad \sigma_{\rmI}^2\approx0.00583,\qquad \sigma_{\rmD}^2\approx0.150.
\end{align}
Recall the definition of $L_{\mathrm{f}}(\veps)$ in~\eqref{eq:fixed-quantile-definition}. For any real number $\veps\in(0,1/2)$,~\eqref{eq:example-fixed} leads to
\begin{align}
L_{\mathrm{f}}(\veps)=\sqrt{\sigma_{\rms}^2+\sigma_{\rmI}^2}\,\Phi^{-1}(1-\veps).\label{eq:example-fixed-quantile}
\end{align}
Table~\ref{tab:nonunique} compares this value with the root $F_*(L_*(\veps))=1-\veps$, computed by quadrature split at $L/\sigma_{\rms}$ and a bracketed scalar root search. An independent $50$-digit calculation evaluates the improvement integral in~\eqref{eq:example-strict} and leads to the same displayed coefficients.

\subsection{Proof of Lemma~\ref{lem:nonunique-example}}
\label{app:four-symbol-proof}
Recall the model with $(p,r,q_{\rm p})=(2/5,1/3,1/10)$ and distortion~\eqref{eq:example-distortion}; $w$, $\alpha\in(0,p)$, and the predictions are given in~\eqref{eq:example-tangent} and~\eqref{eq:prediction-distributions}. We first identify the contacts attaining the component support bounds. We next construct the optimal mixtures and compute their source and conditional fluctuations, proving Claims~(i)--(iii). Finally, we optimize their Gaussian envelopes and prove the fixed-channel comparison in Claims~(iv) and~(v).

\subsubsection{\texorpdfstring{Supporting Inequalities and Product Contacts}{Supporting Inequalities and Product Contacts}}
Recall ${\omega}_p$ from~\eqref{eq:example-phi}, and extend it continuously to $[0,1]$. For the binary posterior $\Ber(a)$, $a\in[0,1]$, the constant-$0$, identity, constant-$1$, and complement rules have risks $a,p,1-a,1-p$, respectively. Thus, define
\begin{align}
\delta_p(a):=\min\{a,p,1-a\},\qquad \rho_p(a):={\omega}_p(a)-h_2(p).\label{eq:binary-posterior-functions}
\end{align}
For $a\in(0,1)$, set $\varpi_a:=p+(1-2p)a$. Then
\begin{align}
{\omega}_p''(a)
&=\left(\frac{1}{a(1-a)}-
\frac{(1-2p)^2}{\varpi_a(1-\varpi_a)}\right)
=\frac{p(1-p)}{a(1-a)\varpi_a(1-\varpi_a)}>0,\label{eq:example-binary-curvature}
\end{align}
where~\eqref{eq:example-binary-curvature} follows by differentiating ${{\omega}_p(a)=h_2(\varpi_a)-h_2(a)}$ and using ${\varpi_a(1-\varpi_a)-(1-2p)^2a(1-a)=p(1-p)}$.
For every $a\in[0,1]$,
\begin{align}
{\omega}_p(a)+w\delta_p(a)\geq wp,\label{eq:example-binary-support}
\end{align}
where~\eqref{eq:example-binary-support} follows from~\eqref{eq:example-binary-curvature} and the tangent condition ${{\omega}_p(\alpha)+(p-\alpha){\omega}_p'(\alpha)=0}$, with the endpoint values covered by continuity. The latter has a unique root: its left side has derivative ${(p-a){\omega}_p''(a)>0}$ on $(0,p)$, limit $-\infty$ at zero, and positive value at $p$. Thus, ${w=-{\omega}_p'(\alpha)>0}$. The tangent bound on $[0,p]$, the unique minimum ${{\omega}_p(1/2)=0}$ on $[p,1-p]$, and symmetry imply equality exactly at $\alpha,1/2,1-\alpha$, with respective Bayes rules constant $0$, identity, and constant $1$.

For $R_0$ in~\eqref{eq:example-rate-output}, the elementary bounds $h_2(1/4)>(4/5)\log2$, $\log(11/9)<(1/3)\log2$, and $\log3>(3/2)\log2$ imply
\begin{align}
{\omega}_{2/5}(1/4)&<(1/5)\log2,\\
{\omega}_{2/5}'(1/4)&=\tfrac15\log(11/9)-\log3<-(43/30)\log2,\\
{\omega}_{2/5}(1/4)+(2/5-1/4){\omega}_{2/5}'(1/4)&<-(3/200)\log2<0,
\end{align}
where the first inequality follows from $h_2(p+(1-2p)/4)\leq\log2$, the second follows by differentiation, and the third follows by adding the two bounds.
These inequalities also lead to
\begin{align}
0<{\omega}_p(\alpha)<R_0/2,\label{eq:example-strict-feasibility}
\end{align}
where~\eqref{eq:example-strict-feasibility} follows from $\alpha>1/4$ and ${{\omega}_p(\alpha)<(1/5)\log2}$, by the preceding tangent bound, and from $R_0>(2/5)\log2$, since $h_2(1/3)>(9/10)\log2$ and $h_2(1/10)<(1/2)\log2$.

For the prediction loss in~\eqref{eq:prediction-cost-log-form} and any $Q\in\calP(\{0,1\})$, define
\begin{align}
\delta_{X_{\rm p}}(Q):=\min_{{z_{\rm p}}\in\{0,1,*\}}\sum_{{x_{\rm p}}\in\{0,1\}}Q({x_{\rm p}})d_{\rm p}({x_{\rm p}},{z_{\rm p}}).
\end{align}
Then
\begin{align}
-H(Q)+\delta_{X_{\rm p}}(Q)
=\log(1-q_{\rm p})+\min_{{z_{\rm p}}\in\{0,1,*\}}\operatorname{KL}(Q\|Q_{z_{\rm p}}^{\rm p})
\geq\log(1-q_{\rm p}),\label{eq:example-log-support}
\end{align}
where~\eqref{eq:example-log-support} follows from the entropy--cross-entropy identity and nonnegativity of relative entropy.
Equality holds precisely at ${Q_0^{\rm p}},{Q_1^{\rm p}},{Q_*^{\rm p}}$, each with its unique minimizing prediction.

For any posterior $Q_{X_{\rm p}X_{\rm b}}\in\calP(\{0,1\}^2)$, write $Q_{X_{\rm b}}$ and $Q_{{X_{\rm p}}|{X_{\rm b}}}$ for its marginal and conditional distributions, respectively. Under the joint distribution $Q_{X_{\rm p}X_{\rm b}}P_{Y|{X_{\rm b}}}$, let $Y={X_{\rm b}}\oplus {N}$, where ${N}\sim\Ber(p)$ is independent of $({X_{\rm p}},{X_{\rm b}})$. For this binary channel, $P_{Y|{X_{\rm b}}}(y|{x_{\rm b}})=(1-p)\bbI\{y={x_{\rm b}}\}+p\bbI\{y\ne {x_{\rm b}}\}$ for $({x_{\rm b}},y)\in\{0,1\}^2$. Define
\begin{align}
\delta_{X_{\rm p}}^Y(Q_{X_{\rm p}X_{\rm b}}):=\sum_{y\in\{0,1\}}\min_{{z_{\rm p}}\in\{0,1,*\}}
\sum_{({x_{\rm p}},{x_{\rm b}})\in\{0,1\}^2}Q_{X_{\rm p}X_{\rm b}}({x_{\rm p}},{x_{\rm b}})P_{Y|{X_{\rm b}}}(y|{x_{\rm b}})d_{\rm p}({x_{\rm p}},{z_{\rm p}}).\label{eq:prediction-risk-with-side-information}
\end{align}
Recall the definition of $\delta_p$ in~\eqref{eq:binary-posterior-functions}. The Cartesian reproduction alphabet leads to total risk ${\delta_{X_{\rm p}}^Y(Q_{X_{\rm p}X_{\rm b}})+w\delta_p(Q_{X_{\rm b}}(1))}$. Then
\begin{align}
&-H_Q({X_{\rm p}},{X_{\rm b}}|Y)+\delta_{X_{\rm p}}^Y(Q_{X_{\rm p}X_{\rm b}})+w\delta_p(Q_{X_{\rm b}}(1))\notag\\
&\quad\geq -H_Q({X_{\rm b}}|Y)+w\delta_p(Q_{X_{\rm b}}(1))
+\sum_{{x_{\rm b}}\in\supp(Q_{X_{\rm b}})}Q_{X_{\rm b}}({x_{\rm b}})\big[-H_Q({X_{\rm p}}|{X_{\rm b}}={x_{\rm b}})+\delta_{X_{\rm p}}(Q_{{X_{\rm p}}|{X_{\rm b}}={x_{\rm b}}})\big]\label{eq:v20-product-support-1}\\
&\quad\geq wp-h_2(p)+\log(1-q_{\rm p}),\label{eq:example-product-support}
\end{align}
where~\eqref{eq:v20-product-support-1} follows from the entropy chain rule, ${{X_{\rm p}}-{X_{\rm b}}-Y}$, and the fact that revealing ${{X_{\rm b}}}$ reduces prediction risk;~\eqref{eq:example-product-support} follows from~\eqref{eq:example-binary-support} and~\eqref{eq:example-log-support}, omitting zero-mass terms. Equality requires a binary contact and prediction contacts ${Q_{{X_{\rm p}}|{X_{\rm b}}={x_{\rm b}}}}$. Both binary masses and both channel probabilities are positive; differing prediction contacts would require different unique actions for the same $y$, making the first inequality strict. Thus, ${Q_{{X_{\rm p}}|{X_{\rm b}}=0}=Q_{{X_{\rm p}}|{X_{\rm b}}=1}}$.
The equality set is exactly the nine product contacts; mixture labels may still be dependent.

\subsubsection{\texorpdfstring{Optimal Mixtures and Fluctuations}{Optimal Mixtures and Fluctuations}}
The preceding bounds lead to prediction contacts $Q_0^{\rm p},Q_1^{\rm p},Q_*^{\rm p}$ and binary contacts $\Ber(\alpha),\Ber(1/2),\Ber(1-\alpha)$. Fix an identity-rule weight $\vartheta_{\rm m}\in[0,1]$ and assign binary weights $((1-\vartheta_{\rm m})/2,\vartheta_{\rm m},(1-\vartheta_{\rm m})/2)$. For $R_0$ in~\eqref{eq:example-rate-output}, set
\begin{align}
\vartheta_{\rm p}:=\frac{r-q_{\rm p}}{1-2q_{\rm p}},\qquad \vartheta_*(\vartheta_{\rm m}):=\frac{(1-\vartheta_{\rm m}){\omega}_p(\alpha)}{R_0}.\label{eq:example-face}
\end{align}
Assign prediction weights ${((1-\vartheta_*)(1-\vartheta_{\rm p}),(1-\vartheta_*)\vartheta_{\rm p},\vartheta_*)}$ with ${\vartheta_*=\vartheta_*(\vartheta_{\rm m})}$, and take the product mixture. Then
\begin{align}
\bbE[d(X,g(U,Y))]
&=h_2(q_{\rm p})+\log(1-q_{\rm p})+(1-\vartheta_{\rm m}){\omega}_p(\alpha)
+w[\alpha+(p-\alpha)\vartheta_{\rm m}]=D,\label{eq:example-mixture-values}\\
I(X;U|Y)&=(1-\vartheta_*(\vartheta_{\rm m}))R_0+(1-\vartheta_{\rm m}){\omega}_p(\alpha)=R_0,\label{eq:v20-example-mixture-2}
\end{align}
Equation~\eqref{eq:example-mixture-values} adds the prediction and binary distortions $h_2(q_{\rm p})+\log(1-q_{\rm p})+\vartheta_*R_0$ and $w[\alpha+(p-\alpha)\vartheta_{\rm m}]$; substituting ${\omega}_p(\alpha)=w(p-\alpha)$ leads to $D$. Equation~\eqref{eq:v20-example-mixture-2} adds their information quantities and substitutes~\eqref{eq:example-face}. These mixtures have marginals ${P_{X_{\rm p}}}$ and $\Ber(1/2)$, and ${0\leq \vartheta_*<1/2}$ follows from~\eqref{eq:example-strict-feasibility}, so the product mixture is feasible.
For any finite auxiliary $U$ satisfying $U-X-Y$ and reconstruction function $g$,
\begin{align}
I(X;U|Y)+\bbE[d(X,g(U,Y))]
\geq h_2(r)+wp+\log(1-q_{\rm p})=R_0+D,\label{eq:example-global-support}
\end{align}
where~\eqref{eq:example-global-support} follows by averaging~\eqref{eq:example-product-support}, using $H(X|Y)=h_2(r)+h_2(p)$, and lower bounding each reconstruction risk by its Bayes risk.
Together with~\eqref{eq:example-mixture-values}, we have $R_{\mathrm{WZ}}(D)=R_0$. At ${\vartheta_{\rm m}=0}$, ${0<\vartheta_*(0)<1}$; varying ${\vartheta_*}$ locally preserves the source mean and supporting equality while changing distortion at speed $R_0>0$. Thus, the rate-distortion function is locally affine with slope $-1$, so $\lambda^*=1$. This proves Claim~(i).

Every optimum attains the supporting equality, has distortion $D$, and uses the nine contacts with Bayes reconstruction. The fair binary marginal forces weights ${((1-\vartheta_{\rm m})/2,\vartheta_{\rm m},(1-\vartheta_{\rm m})/2)}$; the distortion constraint forces prediction star weight ${\vartheta_*(\vartheta_{\rm m})}$. Every ${\vartheta_{\rm m}\in[0,1]}$ is attainable, even though the labels may be dependent.

Both endpoints use at most $|\calX|+1=5$ labels. At ${\vartheta_{\rm m}=1}$, two suffice. At ${\vartheta_{\rm m}=0}$, assign each binary constant rule weight $1/2$, with conditional prediction star weights $0$ and ${2\vartheta_*(0)}$. Each prediction mixture has mean ${P_{X_{\rm p}}}$; ${2\vartheta_*(0)<1}$ permits five positive joint labels with the same distortion and rate.

At a product contact, the information density splits into the prediction and binary contributions. For any $({x_{\rm p}},{z_{\rm p}})\in\allowbreak\{0,1\}\times\allowbreak\{0,1,*\}$,
\begin{align}
\log\frac{Q_{z_{\rm p}}^{\rm p}({x_{\rm p}})}{P_{X_{\rm p}}({x_{\rm p}})}+d_{\rm p}({x_{\rm p}},{z_{\rm p}})=-\log P_{X_{\rm p}}({x_{\rm p}})+\log(1-q_{\rm p}),\label{eq:example-source-identity}
\end{align}
where substituting~\eqref{eq:prediction-cost-log-form} cancels ${\log Q_{z_{\rm p}}^{\rm p}({x_{\rm p}})}$.
The common source information is
\begin{align}
\jW{({x_{\rm p}},{x_{\rm b}})}&=-\log P_{X_{\rm p}}({x_{\rm p}})-h_2(q_{\rm p}),\label{eq:example-common-source}\\
\jW{({x_{\rm p}},{x_{\rm b}})}-R_{\mathrm{WZ}}(D)&=-\log P_{X_{\rm p}}({x_{\rm p}})-h_2(r),
\end{align}
where~\eqref{eq:example-common-source} follows from~\eqref{eq:example-source-identity} and~\eqref{eq:stationarity}: at the constant-$0$ binary contact, the conditional means of information plus weighted distortion differ by ${{\omega}_p'(\alpha)+w=0}$ and average to ${{\omega}_p(\alpha)+w\alpha=wp}$. Symmetry handles constant $1$; identity has zero information and mean distortion ${wp}$. The second identity follows by subtracting $R_0$ and equals ${({x_{\rm p}}-1/3)\log2}$ with variance $2(\log2)^2/9$ when $r=1/3$. This proves Claim~(ii).

Using $\varpi$ and the variances from~\eqref{eq:example-rate-output}--\eqref{eq:example-variances}, every optimum with identity weight $\vartheta_{\rm m}$ has
\begin{align}
\Sigma(P)=\diag((1-\vartheta_{\rm m})\sigma_{\rmI}^2,\vartheta_{\rm m}\sigma_{\rmD}^2)=\Sigma_{\vartheta_{\rm m}},\label{eq:example-covariance-calculation}
\end{align}
where~\eqref{eq:example-covariance-calculation} follows from~\eqref{eq:cov} and~\eqref{eq:example-variances}. Given $(X,U)$, prediction information and distortion are fixed. Binary constant contacts have fixed distortion and information variance $\sigma_{\rmI}^2$ from the observation gap ${\log((1-\varpi)/\varpi)}$; identity has zero information and distortion variance $\sigma_{\rmD}^2$. Thus, cross-covariances vanish, regardless of label dependence. Equation~\eqref{eq:example-full-family} follows from~\eqref{eq:example-covariance-calculation} and attainability for every ${\vartheta_{\rm m}\in[0,1]}$. This proves Claim~(iii).

Since ${\vartheta_*(\vartheta_{\rm m})<1}$, both ${Q_0^{\rm p}},{Q_1^{\rm p}}$ have positive weight. Their loss difference varies with ${{x_{\rm p}}}$ and cannot be canceled by a function of ${{x_{\rm b}}}$ on the full-support posteriors. Hence the variance in~\eqref{eq:existing-direction} is positive.

To optimize the covariance family, we next compare its endpoint variances. Since ${\alpha<p<\varpi<1/2}$, we have
\begin{align}
{\omega}_p(\alpha)=\int_\alpha^\varpi h_2'(a)\,da
>(p-\alpha)h_2'(\varpi)>0,\label{eq:v20-variance-order-1}
\end{align}
where~\eqref{eq:v20-variance-order-1} follows from the definition of ${{\omega}_p}$ and $h_2'>0$ decreasing on $(0,1/2)$, since ${\varpi-\alpha>p-\alpha}$.
Dividing by $p-\alpha$ shows ${w>h_2'(\varpi)>0}$, and consequently $0<\sigma_{\rmI}<\sigma_{\rmD}$.

\subsubsection{\texorpdfstring{Gaussian Envelopes and Fixed-Channel Comparison}{Gaussian Envelopes and Fixed-Channel Comparison}}
For the family~\eqref{eq:example-full-family}, fix $(\vartheta_{\rm m},\ell)\in[0,1]\times\bbR$. Let $(J_{\vartheta_{\rm m}},G_{\vartheta_{\rm m}})\sim\allowbreak \calN(\bzero_2,\Sigma_{\vartheta_{\rm m}})$. The envelope~\eqref{eq:example-envelope-definition} satisfies
\begin{align}
F^{\rm c}_{\vartheta_{\rm m}}(\ell)
\leq\Phi\left(\frac{\ell}{\sqrt{(1-\vartheta_{\rm m})\sigma_{\rmI}^2+\vartheta_{\rm m}\sigma_{\rmD}^2}}\right),\label{eq:example-halfspace}
\end{align}
where~\eqref{eq:example-halfspace} follows because each rectangle in~\eqref{eq:example-envelope-definition} lies in ${\{J_{\vartheta_{\rm m}}+G_{\vartheta_{\rm m}}\leq\ell\}}$. The endpoint identities ${F^{\rm c}_0(\ell)=\Phi(\ell/\sigma_{\rmI})}$ and ${F^{\rm c}_1(\ell)=\Phi(\ell/\sigma_{\rmD})}$ follow by choosing $t=0$ and $t=\ell$, respectively. The bound is largest at the smallest variance for $\ell\geq0$ and largest variance for $\ell<0$. Equation~\eqref{eq:example-envelopes} follows from~\eqref{eq:example-halfspace} and these endpoint equalities, and~\eqref{eq:example-adaptive-integral} follows by averaging over the source Gaussian. The positive threshold enlargement in Section~\ref{sec:achievability} covers these singular endpoints. This proves Claim~(iv). We next keep the test channel fixed while averaging over the source fluctuation.

For independent $\Lambda\sim \calN(0,V)$ and any $(\vartheta_{\rm m},L)\in[0,1]\times\bbR$, source averaging leads to
\begin{align}
\bbE[F^{\rm c}_{\vartheta_{\rm m}}(L-\Lambda)]
\leq\Phi\left(\frac{L}{\sqrt{\sigma_{\rms}^2+(1-\vartheta_{\rm m})\sigma_{\rmI}^2+\vartheta_{\rm m}\sigma_{\rmD}^2}}\right),\label{eq:example-fixed-upper}
\end{align}
where~\eqref{eq:example-fixed-upper} follows from~\eqref{eq:example-halfspace} and convolution of the independent Gaussian source and channel fluctuations.
For $L\geq0$, the upper bound is attained at $\vartheta_{\rm m}=0$, giving the equality in~\eqref{eq:example-fixed}; for $L<0$, it is attained at $\vartheta_{\rm m}=1$. Since $F_{\mathrm{f}}(0)=1/2$,~\eqref{eq:example-fixed-quantile} applies when $\veps<1/2$. It remains to prove the strict comparison in Claim~(v).

For every finite $L\geq0$,
\begin{align}
F_*(L)-F_{\mathrm{f}}(L)
=\int_{L/\sigma_{\rms}}^\infty
\Biggl[\Phi\left(\frac{L-\sigma_{\rms}\bar{\ell}_{\rm s}}{\sigma_{\rmD}}\right)
-\Phi\left(\frac{L-\sigma_{\rms}\bar{\ell}_{\rm s}}{\sigma_{\rmI}}\right)\Biggr]{\Phi'}(\bar{\ell}_{\rm s})\,d\bar{\ell}_{\rm s}>0,\label{eq:example-strict}
\end{align}
where~\eqref{eq:example-strict} follows by subtracting~\eqref{eq:example-fixed} from~\eqref{eq:example-adaptive-integral}. Its integrand is positive since ${L-\sigma_{\rms}\bar{\ell}_{\rm s}<0}$ and $\sigma_{\rmD}>\sigma_{\rmI}$. For $L<0$, integrate instead over ${\bar{\ell}_{\rm s}<L/\sigma_{\rms}}$. This proves Claim~(v) and completes Lemma~\ref{lem:nonunique-example}. Continuity and strict monotonicity imply strictly ordered quantiles.

\bibliographystyle{IEEEtran}
\bibliography{IEEEfull_lin}
\end{document}

%% file: preamble.tex
\usepackage{fix-cm}
\usepackage[mathscr]{eucal}
\usepackage[cmex10]{amsmath}
\usepackage{amssymb,amsthm,amsfonts,latexsym,mathtools}
\usepackage[utf8]{inputenc}
\usepackage[T1]{fontenc}
\usepackage{graphicx,bm,xcolor,url,array,verbatim,textcomp,mathrsfs}
\usepackage{booktabs,microtype,cite}
\usepackage{tikz,flafter,placeins,needspace}
\usetikzlibrary{arrows.meta,positioning}
\usepackage[colorlinks=true,linkcolor=blue,urlcolor=blue,citecolor=red,anchorcolor=green]{hyperref}

\newcommand{\openone}{\leavevmode\hbox{\small1\normalsize\kern-.33em1}}

\catcode`~=11 \def\UrlSpecials{\do\~{\kern -.15em\lower .7ex\hbox{~}\kern .04em}} \catcode`~=13

\newcommand{\calA}{\mathcal{A}}
\newcommand{\calB}{\mathcal{B}}
\newcommand{\calC}{\mathcal{C}}
\newcommand{\calD}{\mathcal{D}}
\newcommand{\calE}{\mathcal{E}}
\newcommand{\calF}{\mathcal{F}}
\newcommand{\calG}{\mathcal{G}}

\newcommand{\calI}{\mathcal{I}}
\newcommand{\calJ}{\mathcal{J}}
\newcommand{\calK}{\mathcal{K}}
\newcommand{\calL}{\mathcal{L}}
\newcommand{\calM}{\mathcal{M}}
\newcommand{\calN}{\mathcal{N}}

\newcommand{\calP}{\mathcal{P}}

\newcommand{\calR}{\mathcal{R}}
\newcommand{\calS}{\mathcal{S}}
\newcommand{\calT}{\mathcal{T}}
\newcommand{\calU}{\mathcal{U}}

\newcommand{\calX}{\mathcal{X}}
\newcommand{\calY}{\mathcal{Y}}
\newcommand{\calZ}{\mathcal{Z}}

\newcommand{\bd}{\mathbf{d}}

\newcommand{\be}{\mathbf{e}}

\newcommand{\rmD}{\mathrm{D}}

\newcommand{\rmi}{\mathrm{i}}
\newcommand{\rmI}{\mathrm{I}}

\newcommand{\rms}{\mathrm{s}}

\newcommand{\bbE}{\mathsf{E}}

\newcommand{\bbI}{\openone}

\newcommand{\bbN}{\mathbb{N}}

\newcommand{\bbP}{\mathbb{P}}

\newcommand{\bbR}{\mathbb{R}}

\DeclareMathAlphabet{\mathbsf}{OT1}{cmss}{bx}{n}
\DeclareMathAlphabet{\mathssf}{OT1}{cmss}{m}{sl}

\DeclareSymbolFont{bsfletters}{OT1}{cmss}{bx}{n}  
\DeclareSymbolFont{ssfletters}{OT1}{cmss}{m}{n}
\DeclareMathSymbol{\bsfGamma}{0}{bsfletters}{'000}
\DeclareMathSymbol{\ssfGamma}{0}{ssfletters}{'000}
\DeclareMathSymbol{\bsfDelta}{0}{bsfletters}{'001}
\DeclareMathSymbol{\ssfDelta}{0}{ssfletters}{'001}
\DeclareMathSymbol{\bsfTheta}{0}{bsfletters}{'002}
\DeclareMathSymbol{\ssfTheta}{0}{ssfletters}{'002}
\DeclareMathSymbol{\bsfLambda}{0}{bsfletters}{'003}
\DeclareMathSymbol{\ssfLambda}{0}{ssfletters}{'003}
\DeclareMathSymbol{\bsfXi}{0}{bsfletters}{'004}
\DeclareMathSymbol{\ssfXi}{0}{ssfletters}{'004}
\DeclareMathSymbol{\bsfPi}{0}{bsfletters}{'005}
\DeclareMathSymbol{\ssfPi}{0}{ssfletters}{'005}
\DeclareMathSymbol{\bsfSigma}{0}{bsfletters}{'006}
\DeclareMathSymbol{\ssfSigma}{0}{ssfletters}{'006}
\DeclareMathSymbol{\bsfUpsilon}{0}{bsfletters}{'007}
\DeclareMathSymbol{\ssfUpsilon}{0}{ssfletters}{'007}
\DeclareMathSymbol{\bsfPhi}{0}{bsfletters}{'010}
\DeclareMathSymbol{\ssfPhi}{0}{ssfletters}{'010}
\DeclareMathSymbol{\bsfPsi}{0}{bsfletters}{'011}
\DeclareMathSymbol{\ssfPsi}{0}{ssfletters}{'011}
\DeclareMathSymbol{\bsfOmega}{0}{bsfletters}{'012}
\DeclareMathSymbol{\ssfOmega}{0}{ssfletters}{'012}

\newcommand{\hatj}{\hat{j}}

\newcommand{\hatP}{\hat{P}}

\newcommand{\hatt}{\hat{t}}
\newcommand{\hatT}{\hat{T}}

\newcommand{\tilx}{\tilde{x}}
\newcommand{\tilX}{\tilde{X}}

\newcommand{\tily}{\tilde{y}}
\newcommand{\tilY}{\tilde{Y}}

\newcommand{\tilZ}{\tilde{Z}}

\newcommand{\bard}{\bar{d}}

\newcommand{\bart}{\bar{t}}

\newcommand{\bbeta}{\bm{\beta}}

\newcommand{\btheta}{\bm{\theta}}

\newcommand{\veps}{\varepsilon}

\newcommand{\bxi}{\bm{\xi}}

\newcommand{\bXi}{\bm{\Xi}}

\newcommand{\Pen}{P_{\mathrm{e}}^{(n)}}

\DeclareMathOperator*{\argmin}{arg\,min}

\DeclareMathOperator{\diag}{diag}

\DeclareMathOperator{\tr}{tr}

\DeclareMathOperator{\supp}{supp}

\DeclareMathOperator{\var}{\mathsf{Var}}

\DeclareMathOperator{\cov}{\mathsf{Cov}}

\newcommand{\Ber}{\mathrm{Bern}}

\newcommand{\bzero}{\mathbf{0}}
\newcommand{\bone}{\mathbf{1}}

\newtheorem{theorem}{Theorem} 
\newtheorem{lemma}{Lemma}

\newtheorem{proposition}{Proposition}
\newtheorem{corollary}{Corollary}

\newtheorem{definition}{Definition}

\DeclareMathOperator{\TC}{TC}
\DeclareMathOperator{\dist}{dist}
\newcommand{\jW}[1]{\jmath_{\mathrm{WZ}}(#1\mid D)}